\newtheorem{theorem}{Theorem}[section]
\newtheorem{observation}[theorem]{Observation}
\newtheorem{lemma}[theorem]{Lemma}
\newtheorem{claim}[theorem]{Claim}
\newtheorem{remark}[theorem]{Remark}
\newtheorem{definition}[theorem]{Definition}
\newtheorem{corollary}[theorem]{Corollary}
\DeclareRobustCommand{\mmc}{%
\begin{tikzpicture}%
\draw [opacity=0.3](0,0) -- (1ex,0);%
\draw [opacity=0.3](1ex,0) -- (1ex,0.8ex);%
\draw [opacity=0.3](-0.8ex,0.8ex) -- (1ex,0.8ex);

\fill (0,0) circle[radius=1pt];
\fill (-0.8ex,0.8ex) circle[radius=1pt];
\fill [blue] (1ex,0) circle[radius=1pt];
\fill [blue] (1ex,0.8ex) circle[radius=1pt];
\end{tikzpicture}%
}
\DeclareRobustCommand{\mfc}{%
\begin{tikzpicture}%
\draw [opacity=0.3](0,0) -- (1ex,0);%
\draw [opacity=0.3](1ex,0) -- (1ex,0.8ex);%
\draw [opacity=0.3](1.8ex,0.8ex) -- (1ex,0.8ex);

\fill (0,0) circle[radius=1pt];
\fill (1.8ex,0.8ex) circle[radius=1pt];
\fill [blue] (1ex,0) circle[radius=1pt];
\fill [blue] (1ex,0.8ex) circle[radius=1pt];
\end{tikzpicture}%
}
\DeclareRobustCommand{\interone}{%
\begin{tikzpicture}%
\draw (0,0) -- (1ex,0);%
\draw (1ex,0) -- (1ex,1ex);%
\draw (1ex,1ex) -- (0,1ex);
\draw (0ex,1ex) -- (0,0ex);

\draw (-0.5ex,0.5ex) -- (0.5ex,0.5ex);%
\draw (0.5ex,0.5ex) -- (0.5ex,-0.5ex);%
\draw (0.5ex,-0.5ex) -- (-0.5ex,-0.5ex);
\draw (-0.5ex,-0.5ex) -- (-0.5ex,0.5ex);

\end{tikzpicture}%
}
\DeclareRobustCommand{\intertwo}{%
\begin{tikzpicture}%
\draw (0,0) -- (1ex,0);%
\draw (1ex,0) -- (1ex,1.5ex);%
\draw (1ex,1.5ex) -- (0,1.5ex);
\draw (0ex,1.5ex) -- (0,0ex);

\draw (-0.33ex,1.2ex) -- (1.33ex,1.2ex);%
\draw (1.33ex,1.2ex) -- (1.33ex,0.3ex);%
\draw (1.33ex,0.3ex) -- (-0.25ex,0.3ex);
\draw (-0.33ex,0.3ex) -- (-0.33ex,1.2ex);

\end{tikzpicture}%
}
\DeclareRobustCommand{\interthree}{%
\begin{tikzpicture}%
\draw (0,0) -- (1ex,0);%
\draw (1ex,0) -- (1ex,1.5ex);%
\draw (1ex,1.5ex) -- (0,1.5ex);
\draw (0ex,1.5ex) -- (0,0ex);

\draw (0,0) -- (-0.5ex,0ex);%
\draw (-0.5ex,0ex) -- (-0.5ex,1ex);%
\draw (-0.5ex,1ex) -- (1ex,1ex);%
\end{tikzpicture}%
}
\DeclareRobustCommand{\interfour}{%
\begin{tikzpicture}%
\draw (0,0) -- (1.5ex,0);%
\draw (1.5ex,0) -- (1.5ex,0.7ex);%
\draw (1.5ex,0.7ex) -- (0,0.7ex);
\draw (0ex,0.7ex) -- (0,0ex);

\draw (0.7ex,0.7ex) -- (0.7ex,-0.5ex);
\draw (0.7ex,-0.5ex) -- (0,-0.5ex);
\draw  (0,-0.5ex) -- (0,0);

\end{tikzpicture}%
}
\newif\iflong
\newcommand{\Key}[1]{\ensuremath{#1.x}}
\newcommand{\Time}[1]{\ensuremath{#1.y}}
\newcommand{\RR}{\textsc{Pair}}
\newcommand{\OP}{\textsc{Op}}
\newcommand{\UP}{\textsc{Up}}
\newcommand{\GR}{\textsc{Good}(\MFC)}
\newcommand{\BR}{\textsc{Bad}(\MFC)}
\newcommand{\OPT}{\textsc{Opt}}
\newcommand{\TD}{\textsc{TreeDecomposition}}
\newcommand{\PP}{\textsc{Parent}}
\newcommand{\SB}{\textsc{Sibling}}
\newcommand{\Top}{\textsc{Top}}
\newcommand{\RT}{\textsc{TopBlock}}
\newcommand{\CP}{\textsc{Pair}}
\newcommand{\MMC}{\mmc}
\newcommand{\LMMC}{\textsc{L}(\MMC)}
\newcommand{\RMMC}{\textsc{R}(\MMC)}
\newcommand{\MFC}{\mfc}
\newcommand{\LMFC}{\textsc{L}(\BR)}
\newcommand{\RMFC}{\textsc{R}(\BR)}
\newcommand{\R}{\textsc{R}}
\newcommand{\LL}{\textsc{L}}
\newcommand{\UPB}{\textsc{UpperBox}}
\newcommand{\GRE}{\textsc{Greedy}}
\newcommand{\SGRE}{\textsc{SignedGreedy}}
\newcommand{\BOX}{\textsc{Box}}
\DeclareRobustCommand{\AMmc}{\textsc{F}}
\newcommand{\GG}{\mathcal{G}}
\newcommand{\XX}{\mathcal{X}}
\newcommand{\YY}{\mathcal{Y}}
\newcommand{\RG}{\textsc{Reg}}
\newcommand{\MAXT}{\textsc{MaxTime}}
\newcommand{\MINT}{\textsc{MinTime}}
\newcommand{\MAXK}{\textsc{MaxKey}}
\newcommand{\MINK}{\textsc{MinKey}}
\newcommand{\Left}{\textsc{Left}}
\newcommand{\Right}{\textsc{Right}}
\newcommand{\LeftRel}{\textsc{Left-Rel}}
\newcommand{\RightRel}{\textsc{Right-Rel}}
\newcommand{\PT}{\textsc{Partition}}
\newcommand{\BB}{\mathcal{B}}
\DeclareRobustCommand{\neswarrow}{%
  {\mathrel{\text{\ooalign{$\swarrow$\cr$\nearrow$}}}}%
}
\DeclareRobustCommand{\swnearrow}{%
  {\mathrel{\text{\ooalign{$\swarrow$\cr$\nearrow$}}}}%
}
\DeclareRobustCommand{\senwarrow}{%
  {\mathrel{\text{\ooalign{$\searrow$\cr$\nwarrow$}}}}%
}
\DeclareRobustCommand{\nwsearrow}{%
  {\mathrel{\text{\ooalign{$\searrow$\cr$\nwarrow$}}}}%
}
\DeclareRobustCommand{\abarrow}[2]{
	\overset{#1}{\underset{#2}{\updownarrow}}
}
\DeclareRobustCommand{\lrarrow}[2]{
	#1\!\leftrightarrow\!#2
}
\date{}                                           
\begin{document}

\title{Better Analysis of $\GRE$ Binary Search Tree on Decomposable Sequences}

\author{Navin Goyal\\
   Microsoft Research\\
   Bangalore, India\\
  {\small\texttt{navingo@microsoft.com}}
\and 
Manoj Gupta\\
  IIT Gandhinagar\\
   Gandhinagar, India\\
  {\small\texttt{gmanoj@iitgn.ac.in}}
}  
\maketitle

\begin{abstract}
In their seminal paper [Sleator and Tarjan, J.ACM, 1985], the authors conjectured that 
the {\em splay tree} is {\em dynamically optimal} binary search tree (BST). 
In spite of decades of intensive research, 
the problem remains open. Perhaps a more basic question, which has also attracted much attention, 
is if there exists \emph{any} dynamically 
optimal BST algorithm. 
One such candidate is $\GRE$ which is a simple and intuitive BST algorithm 
[Lucas, Rutgers Tech. Report, 1988; Munro, ESA, 2000; Demaine,
Harmon, Iacono, Kane and Patrascu, SODA, 2009].
[Demaine et al., SODA, 2009] showed a novel connection 
between a geometric problem and the binary search tree problem related to the above conjecture.
However, there has been little progress in solving this geometric problem too.

Since dynamic optimality conjecture in its most general form remains elusive despite much effort, 
researchers have studied this problem on special sequences. Recently, 
[Chalermsook, Goswami, Kozma, Mehlhorn and Saranurak, FOCS, 2015] studied a type of sequences known 
as $k$-{\em decomposable sequences} in this context, where $k$ parametrizes easiness of the sequence. 
Using tools from forbidden submatrix theory, they showed that 
$\GRE$ takes $n2^{O(k^2)}$ time on this sequence and explicitly raised the question of improving this bound. 

In this paper, we show that $\GRE$ takes $O(n \log{k})$ time on $k$-decomposable sequences. In contrast to
the previous approach, ours is based on first principles. 
One of the main ingredients of our
result is a new construction of a lower bound certificate on the performance of any algorithm.
This certificate is constructed using the execution of $\GRE$, and is more nuanced and possibly 
more flexible than the previous independent set certificate of Demaine et al. 
This result, which is applicable to all sequences, 
may be of independent interest and may lead to further progress in analyzing $\GRE$ 
on $k$-decomposable as well as general sequences.
\end{abstract}

\section{Introduction}
\label{sec:intro}
\iflong\else\vspace{-2mm}\fi
Binary search trees (BSTs) are a well-studied and fundamental data access model. 
We store keys from the universe $\{1,\ldots, n\}$ in a binary search tree and given an sequence of keys
$(p_1, \ldots, p_n)$ we would like to access these keys (and possibly any associated data) using the 
tree. We would like to minimize the total cost of accessing the keys in this sequence, where the cost
for one key search is the number of nodes touched for accessing that key. Various versions of this problem
have been studied and some of them are very well understood, e.g., when the sequence of keys is generated
according to some probability distribution with known access probabilities 
(and some additional restrictions) then optimal search trees are known (see the references in 
\cite{SleatorT85}). But what happens for general access sequences? 
The tree need not be static and can adapt to the access sequence.
Such self-adjusting BST algorithms were considered by Sleator and Tarjan in their seminal paper \cite{SleatorT85}, where they 
introduced splay trees. These trees change dynamically via \emph{rotations} after processing each 
access request.  Sleator and Tarjan showed that the amortized cost of the splay tree is $O(\log n)$.
They famously conjectured that splay trees have the much stronger and attractive property of \emph{dynamic
optimality}:
for any (sufficiently long) sequence the total cost of the splay tree algorithm is within a constant factor of the total cost of 
the optimum \emph{offline} binary search tree (i.e. the cost of the best
BST algorithm that is given the access sequence in advance, and thus can decide how to change the tree
based on this knowledge). Splay trees, by contrast, are \emph{online}: the decision of how to 
change the trees must be based on the current access request only and cannot depend on the future requests. 
The above conjecture is called the {\em dynamic optimality conjecture}.

  
There has been much work on this conjecture, 
as well as on the more fundamental question of whether there 
exists \emph{any} dynamically optimal BST algorithm (see \cite{Iacono13} for a recent review). 
In the past decade progress was made on this latter question and 
BST algorithms with better competitive ratio were discovered:  
Tango trees~\cite{DemaineHIP07} was the first $O(\log \log n)$-competitive BST; 
Multi-Splay trees~\cite{WangDS06} and Zipper trees~\cite{BoseDDF10} also 
have the same competitive ratio along with some additional properties. 
Analyses of these trees use lower bounds for the total cost to process a request sequence.
Wilber~\cite{Wilber89} gave two different lower bounds. 
Wilber's first lower bound is used in \cite{DemaineHIP07, WangDS06, BoseDDF10} 
to obtain $O(\log \log n)$-competitive ratio for the respective trees. 
These techniques 
based on Wilber's bound have so far failed to give $o(\log \log n)$-competitiveness.
Researchers have also proved  
conjectures implied by dynamic optimality; some of these can be interpreted as pertaining 
to {\em easy} sequences, e.g. for splay trees
the amortized access cost is logarithmic in the distance in the key space to the previous key accessed
(dynamic finger theorem) \cite{Cole00,ColeMSS00}, and amortized  accesses cost is logarithmic 
in the temporal distance to the previous access to the current key
(working set theorem) \cite{SleatorT85}.

It turns out that even the problem of designing offline optimal BST algorithm has seen limited progress.
Lucas~\cite{Lucas88} and Munro~\cite{Munro00}  
designed a simple offline greedy BST algorithm and conjectured its cost to be close to the
cost of optimum offline BST algorithm. 
Demaine et al.~\cite{DemaineHIP07}, using a novel geometric point of view of the problem,
showed that surprisingly the offline greedy algorithm can in fact be turned into an online algorithm
called $\GRE$ with only a constant factor loss in the competitive ratio (the ratio between the cost
of the online algorithm to the cost of the offline optimum for the worst case access sequence). 
Thus the conjecture about the the offline greedy would imply that $\GRE$ is dynamically optimal. 
The geometric view of the BST problem mentioned above leads to a completely different looking clean
problem about point sets in the plane. This raises the possibility
of new lines of attack that might be harder to conceive in the BST view. 
Unfortunately, so far success has been limited even with the geometric view. 
The current state of the art by Fox~\cite{Fox11} shows that $\GRE$ takes $O(n \log n)$ time
for any arbitrary sequence $\XX$.

Given this state of affairs, it has been suggested by several researchers to study the problem
for easy sequences (see \cite{ChalermsookG0MS15b} and references therein). 
Just as for splay trees, the question arises about the performance of $\GRE$ on  
{\em easy} sequences; in particular, does the geometric approach help? 
In this context, Chalermsook~et~al.
\cite{ChalermsookG0MS15b} initiated the study of $\GRE$ on {\em decomposable sequences}
and brought to bear techniques from \emph{forbidden submatrix theory} to this problem and some other problems.
(Some of the tools from forbidden submatrix theory have been used previously by 
Pettie~\cite{Pettie08,Pettie10a}
to achieve better bounds for splay trees on deque sequences and a new proof of the sequential access theorem 
for splay trees.) 
They showed that $\GRE$ takes $n 2^{O(k^2)}$ time on $k$-decomposable sequences.
Furthermore, they showed optimal offline cost for $k$-decomposable 
sequences is $\Theta(n \log k)$.  We quote from their paper:

``A question directly related to our work is to close the gap between $\OPT = O(n \log k)$ and $n 2^{O(k^2)}$
by $\GRE$ on $k$-decomposable sequences (when $k = \omega(1)$). Matching the optimum (if at all possible) likely
requires novel techniques: splay is not even known to be linear on preorder sequences with preprocessing,
and with forbidden-submatrix-arguments it seems impossible to obtain bounds beyond $O(nk)$." 

Though the authors mention that forbidden submatrix theory may give an $O(nk)$ bound,
it is not clear how to achieve this goal. 
We solve the above open problem 
by showing 
\iflong\else\vspace{-2mm}\fi
\begin{theorem} \label{thm:main}
$\GRE$ takes $O(n \log{k})$ time on $k$-decomposable sequences 
{\em (}with preprocessing\footnote{The preprocessing step is the 
same as in \cite{DemaineHIP07,ChalermsookG0MS15b}, that is, 
insert all elements into a {\em split tree} before processing any requests. Preprocessing is independent of the access sequence. See \cite{DemaineHIP07,ChalermsookG0MS15b} for more details}{\em )}. 
\end{theorem}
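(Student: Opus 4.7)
The plan is to work entirely in the geometric view of the BST problem introduced by Demaine et al., where the access sequence $\XX$ is encoded as $n$ points in the plane and any BST algorithm corresponds to augmenting $\XX$ into an \emph{arborally satisfied} superset; GREEDY is the online algorithm that, sweeping points in time order, adds the minimum number of points at each time step to maintain this property, and its running time on $\XX$ equals the total number of points it places. I would prove the theorem by constructing, directly from the execution of $\GRE$ on $\XX$, a combinatorial certificate $\mathcal{L}(\XX)$ with two properties: (i) $|\mathcal{L}(\XX)|$ is a lower bound on the cost of \emph{any} arborally satisfied superset of $\XX$ (hence on $\OPT$), and (ii) $|\mathcal{L}(\XX)|$ is within a constant factor of the number of points $\GRE$ places on $\XX$.

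For the certificate construction, I would classify each point placed by $\GRE$ according to the orientation of its ``witness rectangle'' relative to the access points it was placed to unblock (e.g.\ the $\MMC$/$\MFC$ corner patterns already being set up by the paper's macros). For each class I would exhibit, using only axis-aligned geometric arguments, a subset of placed points that is \emph{forced} in the sense that any arborally satisfied superset of $\XX$ must contain at least one point in each forced ``slot''; collecting these slots gives the lower bound certificate. The selection rule has to be sharper than the classical independent-rectangle bound of Demaine et al., since GREEDY typically places many points that are not independent in their sense; the novelty is to use GREEDY's own execution to identify slots that remain unavoidable even when rectangles are not independent.

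To convert this into the $O(n \log k)$ bound on $k$-decomposable sequences, I would exploit the recursive block structure of such sequences: the key range decomposes into blocks, within each block the restricted access sequence is again $(k/2)$-decomposable (up to constants), and the interleaving between blocks across the time axis is structurally limited. For each level of the decomposition I would charge the certificate points whose witness rectangles are first captured at that level, and show via a local counting argument that at most $O(n)$ certificate points are charged per level. Since the decomposition has depth $O(\log k)$, we get $|\mathcal{L}(\XX)| = O(n \log k)$. Combined with property (ii), this bounds GREEDY's cost by $O(n \log k)$, matching the known $\OPT = \Theta(n \log k)$.

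The main obstacle will be in property (ii): proving that the forced slots genuinely absorb a constant fraction of \emph{all} of GREEDY's placed points, not just those that happen to be independent. Any point GREEDY places that is not charged to the certificate must be accounted for through amortization against points that are, and setting up the right amortization scheme (likely a potential function on the partially-built arborally satisfied set, stratified by corner type) is the delicate technical core. A secondary obstacle is handling placed points whose witness rectangles straddle blocks of the $k$-decomposition: these must be charged to the coarser enclosing block so that the per-level $O(n)$ accounting is preserved, which in turn requires the certificate definition to be robust under restriction to sub-blocks.
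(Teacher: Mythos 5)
Your high-level plan agrees with the paper in spirit --- derive a lower-bound certificate from the execution of $\GRE$ --- but differs in a way that creates a genuine gap, and you correctly flag part of it yourself as the ``main obstacle.''

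Your property (ii), that the certificate $\mathcal{L}(\XX)$ absorbs a constant fraction of \emph{all} of $\GRE$'s placed points, is exactly what the paper does \emph{not} prove, and it is almost certainly harder than the theorem itself (for general $\XX$ it would give constant-factor competitiveness of $\GRE$, which remains open). The paper sidesteps this as follows. After mapping each marked point to a pair of original points via $\CP(\cdot)$ and splitting these pairs into $\MMC$ (``zig'') and $\MFC$ (``zag''), and further splitting $\MFC$ into good pairs $\GR$ and bad pairs $\BR$, the only part that is ever related to $\OPT$ through a certificate is $\GR$ (Theorem~\ref{lem:goodbound}: $|\XX \cup \OPT(\XX)| \ge |\GR|/2 + |\XX|$). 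The other two pieces, $\MMC$ and $\BR$, are bounded by $O(n\log k)$ \emph{directly}, by exploiting the block structure of $k$-decomposable sequences via $\RG(\cdot)$, $\Left(B)$, $\Right(B)$, key-live points, and the recursive-halving argument of Lemma~\ref{lem:partitionlemma}. The amortization you anticipate as delicate is thus avoided entirely, at the price of using $k$-decomposability for $\MMC$ and $\BR$ and only the certificate for $\GR$.

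A second issue is your accounting for $|\mathcal{L}(\XX)| = O(n\log k)$: you charge ``$O(n)$ per level of the decomposition'' and invoke ``depth $O(\log k)$,'' but the decomposition tree of a $k$-decomposable sequence has bounded \emph{fan-out} $k$, not bounded depth; its depth can be as large as $\Theta(n)$. Relatedly, a child block of a $k$-decomposable block is again $k$-decomposable, not $(k/2)$-decomposable. The $\log k$ in the paper comes from a different place: for a fixed block $B$ with $\ell \le k$ children, one recursively bisects the list of children (the tree $\PT(B)$), and Lemma~\ref{lem:partitionlemma} shows each bisection level contributes $O(\ell)$ points, giving $O(\ell\log\ell)$ charged to the $\ell-1$ children tops of $B$; summing over all $B$ and over all top points gives $O(n\log k)$. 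Without this per-block amortization your charging scheme does not close.
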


Our approach is based on first principles and does not use tools from forbidden submatrix theory.
We carefully analyze execution of $\GRE$ on $k$-decomposable sequences and
discover some new structural properties of $\GRE$. 
Our proof also uses the aforementioned general technique of constructing lower bound certificates 
on the cost of the optimum and relating
this lower bound to the cost of the algorithm being analyzed. One such lower bound, 
\emph{independent set} lower bound was provided by Demaine et al.~\cite{DemaineHIP07}. It was, however,
not clear how to relate it, or its close relatives, to the cost of $\GRE$. Our lower bound certificate is 
derived from the execution 
of $\GRE$; it builds upon the ideas of independent set lower bound, but provides a 
more nuanced and possibly more flexible certificate. Our certificate construction works for general sequences and
not just for $k$-decomposable sequences. 
We are hopeful that our approach will lead to further progress in understanding the performance of 
$\GRE$ on $k$-decomposable and general sequences. 


\section{The Geometric Problem}
\label{sec:problem}
\iflong\else\vspace{-2mm}\fi
Let $[n] = \{1,2, \dots, n\}$ denote the set of keys.
Let $(p_1, p_2, \dots, p_n)$ denote a permutation on $n$ keys, i.e.,
$p_i \neq p_j$ for $i \neq j$.
We can represent this permutation by a set of $n$ points in the plane: 
Key $p_i$ is represented by the point $(p_i,i)$.
For a point $p$, let $\Key{p}$ denote its 
$x$-coordinate and let $\Time{p}$ denote its $y$-coordinate. We will denote sets of points
obtained from permutations of keys in this way by $\XX$. 
There is exactly one point of $\XX$ on the line $x=i$, for $i \in [n]$; 
and there 
is exactly one point from $\XX$ on the line $y=i$, for $i \in [n]$. Clearly, there is a 
one-to-one correspondence between the permutations and sets of points as defined above. 
For the most part we will use the point set view.
 
In our paper, the positive $x$-axis (representing key space) moves from left to right and the 
positive $y$-axis (representing time) moves 
from \emph{top to bottom} (this latter convention is different from most previous papers in this area). 
 For a 
pair of points $p$ and $q$ not on the same horizontal or vertical line, the (closed)
axis-aligned rectangle formed by $p$ and $q$ is denoted by 
$^q\Box_p$ if  $\Time{q} < \Time{p}$ and $\Key{q} < \Key{p}$
and $_p\Box^q$ if  $\Time{q} < \Time{p}$ and $\Key{p} < \Key{q}$.
\iflong 
\begin{center}
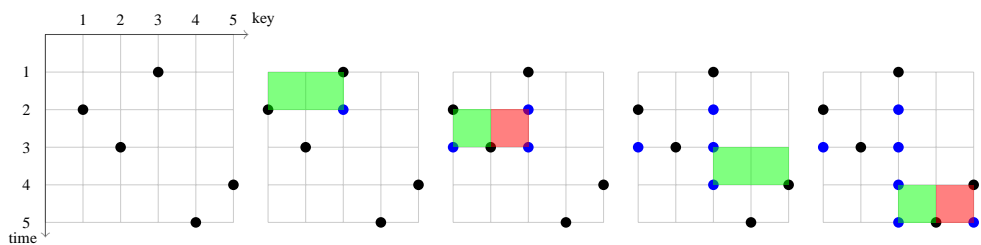
\begin{figure}[H]

\centering
\begin{tikzpicture}

\draw[step=0.5,black,opacity=0.2] (-0.5,0) grid (2,2.5);
\draw [->,opacity=0.4] (-0.5,2.5)--(2.2,2.5);
\draw [->,opacity=0.4] (-0.5,2.5)--(-0.5,-0.2);
\fill [black] (1,2) circle (2pt);
\fill [ black] (0,1.5) circle (2pt);
\fill [ black] (0.5,1) circle (2pt);
\fill [ black] (2,0.5) circle (2pt);
\fill [ black] (1.5,0) circle (2pt);
\draw node at (2.4,2.7) {\tiny{key}};
\draw node at (-0.8,-0.2) {\tiny{time}};
\draw node at (0,2.7) {\tiny{1}};
\draw node at (0.5,2.7) {\tiny{2}};
\draw node at (1,2.7) {\tiny{3}};
\draw node at (1.5,2.7) {\tiny{4}};
\draw node at (2,2.7) {\tiny{5}};

\draw node at (-0.75,2) {\tiny{1}};
\draw node at (-0.75,1.5) {\tiny{2}};
\draw node at (-0.75,1) {\tiny{3}};
\draw node at (-0.75,0.5) {\tiny{4}};
\draw node at (-0.75,0) {\tiny{5}};
\begin{scope}[xshift = 70]
\draw[step=0.5,black,opacity=0.2] (0,0) grid (2,2);

\fill [black] (1,2) circle (2pt);
\fill [ black] (0,1.5) circle (2pt);
\fill [ black] (0.5,1) circle (2pt);
\fill [ black] (2,0.5) circle (2pt);
\fill [ black] (1.5,0) circle (2pt);
\fill [ blue] (1,1.5) circle (2pt);
\fill [green, opacity=0.5] (0,1.5) rectangle (1,2);
\end{scope}

\begin{scope}[xshift = 140]
\draw[step=0.5,black,opacity=0.2] (0,0) grid (2,2);
\fill [black] (1,2) circle (2pt);
\fill [ black] (0,1.5) circle (2pt);
\fill [ black] (0.5,1) circle (2pt);
\fill [ black] (2,0.5) circle (2pt);
\fill [ black] (1.5,0) circle (2pt);
\fill [ blue] (1,1.5) circle (2pt);
\fill [ blue] (1,1) circle (2pt);
\fill [ blue] (0,1) circle (2pt);

\fill [green, opacity=0.5] (0.5,1) rectangle (0,1.5);
\fill [red, opacity=0.5] (0.5,1) rectangle (1,1.5);

\end{scope}

\begin{scope}[xshift = 210]
\draw[step=0.5,black,opacity=0.2] (0,0) grid (2,2);
\fill [black] (1,2) circle (2pt);
\fill [ black] (0,1.5) circle (2pt);
\fill [ black] (0.5,1) circle (2pt);
\fill [ black] (2,0.5) circle (2pt);
\fill [ black] (1.5,0) circle (2pt);
\fill [ blue] (1,1.5) circle (2pt);
\fill [ blue] (1,1) circle (2pt);
\fill [ blue] (0,1) circle (2pt);
\fill [ blue] (1,0.5) circle (2pt);
\fill [green, opacity=0.5] (2,0.5) rectangle (1,1);
\end{scope}

\begin{scope}[xshift = 280]
\draw[step=0.5,black,opacity=0.2] (0,0) grid (2,2);
\fill [black] (1,2) circle (2pt);
\fill [ black] (0,1.5) circle (2pt);
\fill [ black] (0.5,1) circle (2pt);
\fill [ black] (2,0.5) circle (2pt);
\fill [ black] (1.5,0) circle (2pt);
\fill [ blue] (1,1.5) circle (2pt);
\fill [ blue] (1,1) circle (2pt);
\fill [ blue] (0,1) circle (2pt);
\fill [ blue] (1,0.5) circle (2pt);
\fill [ blue] (1,0) circle (2pt);
\fill [ blue] (2,0) circle (2pt);
\fill [green, opacity=0.5] (1.5,0) rectangle (1,0.5);
\fill [red, opacity=0.5] (1.5,0) rectangle (2,0.5);
\end{scope}
\end{tikzpicture}
\caption{The first picture  shows the point set $X$=\{3,1,2,5,4\}. In the remainder of the paper 
we do not show the x-axis and the y-axis (along with the first row and column of the grid).  
The execution of $\GRE$ is shown from the second picture onwards.}
\label{fig:greedy-execution}
\end{figure}
\end{center}
\fi 


\begin{definition}
\label{def:rectangleDef}
A pair of points $(p,q)$ is said  to be arborally satisfied with respect to 
a point set $P$, if (1) $p$ and $q$ lie on the same horizontal or vertical line or,
(2) $\exists r \in P \setminus \{p,q\}$ such that $r$ lies in the interior  
or boundary of $^q\Box_p$ {\em(}or $_p\Box^q${\em )}.
\end{definition}
\iflong\else\vspace{-1mm}\fi


We say that the rectangle $^q\Box_p$ (or $_p\Box^q$) is arborally satisfied if condition (2) holds, otherwise
it is arborally unsatisfied.
Consider the following problem: \emph{Given a point set $\XX$, find a minimum 
cardinality  point set $\YY$ such that each pair of point in the set 
$\XX \cup \YY$ is arborally satisfied.}

If each pair of points in $\XX \cup \YY$ is arborally satisfied then we say that the set 
$\XX \cup \YY$ is an arborally satisfied set otherwise it is not.
In their remarkable paper, Demaine~et~al.~\cite{DemianeHIKP09} formulated the above problem.
They showed a novel connection between this {\em geometric} problem 
and binary search tree (BST) problem, and designed a simple algorithm, henceforth 
called $\GRE$, for the above geometric problem. Let $\GG$ be the set of points added by
this algorithm described as follows: 

\emph{ Sweep the point set $\XX$ with a horizontal line by increasing the y-coordinates. 
Let the point $p$ be processed at time $\Time{p}$. At time $\Time{p}$, place the minimal 
number of points on line $y=\Time{p}$ to satisfy the rectangle with $p$ as one 
endpoint and other endpoint in $\XX \cup \GG$ having their y-coordinate less than 
$\Time{p}$. This minimal set of points $M_{p}$ is uniquely defined:
for any arborally unsatisfied rectangle formed with
$p$ in one corner, add a point at the other corner at $y = \Time{p}$ in $\GG$.
Please see Figure \ref{fig:greedy-execution} \iflong\else (in the full version)~\fi for 
the execution of $\GRE$.\\}
\iflong\else\vspace{-3mm}\fi
One can show that the $\XX \cup \GG$ is an arborally satisfied set; see \cite{DemianeHIKP09} for details.
\iflong\else\vspace{-1mm}\fi

\section{Overview}
\label{sec:overview}
\iflong\else\vspace{-2mm}\fi
We give a brief overview of our techniques in this section.
Our goal is to prove $|\GG| = O(n \log{k})$ which immediately implies Theorem~\ref{thm:main}. 
The starting point of our approach was an attempt to construct an independent set of rectangles of original points
certifying a lower bound on the number of points that must be marked. We attempt to construct such a certificate
by analyzing the execution of $\GRE$. Our final certificate will not be an independent set however. 
 
In Sec.~\ref{sec:coupling}, for each point in $\GG$ we associate a tuple of points 
(which we also think of as a rectangle) from $\XX$ using a map 
called $\CP(\cdot)$. At a high level,
this can be thought of looking for a reason for why the point in $\GG$ was marked by $\GRE$. 
We partition $\CP(\GG)$ into two sets called $\MMC$ (pronounced zig) and $\MFC$ (pronounced zag). The visual notation
$\MMC$ and $\MFC$ depicts how $\CP(p)$ is located w.r.t. $p \in \GG$. 
In Sections \ref{sec:mmc}  we show that $|\MMC| = O(nk)$ and in Sec.~\ref{sec:mmc-betterbound} we improve it to 
$|\MMC| = O(n \log{k})$; this has a relatively short proof and uses properties
of $\MMC$ and $k$-decomposable sequences.

We then show $|\MFC| = O(nk)$. The proof of this is in two parts. First we construct the partition 
$\MFC = \GR \cup \BR$ (Sec.~\ref{subsubsec:properties}). 
The set $\GR$ consists of rectangles from $\MFC$ that do not have any original points in their interior (thus this
set can presumably be quite different from an independent set).
In Sections~\ref{subsubsec:interaction} and \ref{subsubsec:finaldetail}, we analyze $\GR$ and show that it provides
a lower bound certificate for $|\OPT(\XX)|$ similar to the independent set certificate of 
Demaine~et~al.~\cite{DemianeHIKP09}: 
$|\GR|/2 + |\XX| \le |\OPT(\XX) \cup \XX|$. We remark that this result 
holds for all 
$\XX$ and not just for $k$-decomposable sequences, and hence may be of use in future work on the general problem.
Chalermsook~et~al. have
proven $|\XX \cup \OPT(\XX)| \le  O(n \log k)$ for $k$-decomposable sequences, 
which implies $|\GR| = O(n \log k)$. 
Finally, in Sec.~\ref{sec:mfc} we show $|\BR| = O(nk)$ and then in Sec.~\ref{sec:mfc-better} improve it to 
$|\BR| = O(n \log{k})$.
For this, we use some structural properties of $\BR$, $\MMC$, and $k$-decomposable sequences. 
The above results together imply our desired bound $|\GG| = |\CP(\GG)| = O(n \log k)$ (Sec.~\ref{sec:main}).

\iflong\else\vspace{-2mm}\fi

\section{Basic Properties of $\GRE$}
\label{sec:prelims}
\iflong\else\vspace{-2mm}\fi
In all our diagrams, a point in $\XX$ is denoted by a black circle and
a point in $\GG$ is denoted by a blue circle. A point in $(\XX \cup \GG)$ 
is denoted by a gray circle. A point in $\XX$ will be called an \emph{original} point 
and a point in $\GG$ will be called a \emph{marked} point. 
When we refer to a point without specifying whether its marked or original, then it is a point 
from $(\XX \cup \GG)$. 
We use following notation:
\begin{enumerate}
\iflong\else\vspace{-2mm}\fi
\item $p$ is \emph{above} $q$ (or $q$ is below $p$) or $\abarrow{p}{q}$
 if $\Time{p} < \Time{q}$ and $\Key{p} = \Key{q}$.

\item $p$ is to the \emph{left} of $q$ (or $q$ is to the right of $p$) or $\lrarrow{p}{q}$ if $\Key{p} < \Key{q}$ and $\Time{p} = \Time{q}$.

\item $q$ is to the \emph{south-east} of $p$ (or $p$ is to the north-west of $q$) or $^p{\senwarrow}_q$ if 
$\Time{p} < \Time{q}$ and $\Key{p} < \Key{q}$.

\item $q$ is to the \emph{north-east} of $p$ (or $p$ is to the south-west of $q$) or $_p{\neswarrow}^q$ if 
$\Time{p} > \Time{q}$ and $\Key{p} < \Key{q}$.

\item $(_p\Box^q)^{\circ}$ denotes the interior of $_p\Box^q$.
	

\item While processing $p \in \XX$, $\GRE$ may put marked points on the line $y = \Time{p}$ (there is no other
original point on this line as $\XX$ comes from a permutation sequence). 
For any such marked point $q$ we define its original point $\OP(q)$ to be $p$. We also set $\OP(p) := p$.

\iflong
\item For $q \in \XX \cup \GG$, define $\UP(q):=q$ if $q \in \XX$, and 
$\UP(q):=p$ if $q \in \GG$, where $p \in \XX$ is the unique original point above $q$ (see the discussion 
before Observation~\ref{obs:nothing-above} below).
\fi
\end{enumerate}

\iflong 
	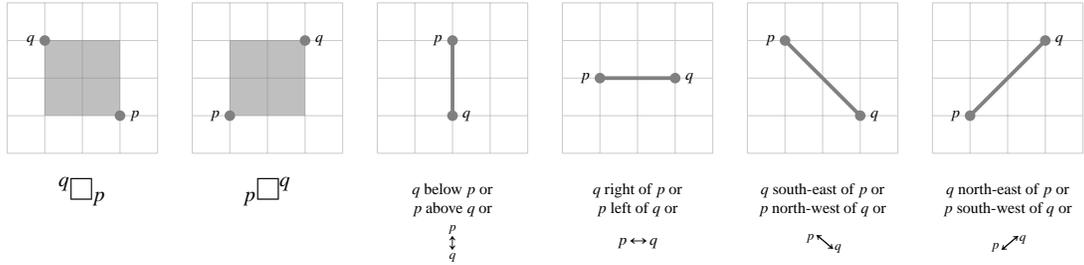
\begin{figure}[H]

\centering
\begin{tikzpicture}

\draw[step=0.5,black,opacity=0.2] (0,0) grid (2,2);

\fill [gray] (1.5,0.5) circle (2pt);

\fill [ gray] (0.5,1.5) circle (2pt);
\node at (0.5,1.5) [left] {\tiny{$q$}};
\node at (1.5,0.5) [right] {\tiny{$p$}};
\fill [gray, opacity=0.5] (0.5,0.5) rectangle (1.5,1.5);
\node at (1,-0.5) {$^q\Box_p$};

\begin{scope}[xshift = 70]
\draw[step=0.5,black,opacity=0.2] (0,0) grid (2,2);

\fill [gray] (0.5,0.5) circle (2pt);
\fill [ gray] (1.5,1.5) circle (2pt);

\node at (0.5,0.5) [left] {\tiny{$p$}};
\node at (1.5,1.5) [right] {\tiny{$q$}};

\fill [gray, opacity=0.5] (0.5,0.5) rectangle (1.5,1.5);
\node at (1,-0.5) {$_p\Box^q$};
\end{scope}

\begin{scope}[xshift = 140]
\draw[step=0.5,black,opacity=0.2] (0,0) grid (2,2);
\fill [gray] (1,1.5) circle (2pt);
\fill [ gray] (1,0.5) circle (2pt);

\node at (1,1.5) [left] {\tiny{$p$}};
\node at (1,0.5) [right] {\tiny{$q$}};
\draw [gray,line width=0.5mm] (1,1.5) -- (1,0.5); 
\node at (1,-0.5) {\tiny{$q$ below $p$ or}};
\node at (1,-0.75) {\tiny{$p$ above $q$ or }};
\node at (1,-1.2) {\tiny{$\abarrow{p}{q}$}};
\end{scope}

\begin{scope}[xshift = 210]
\draw[step=0.5,black,opacity=0.2] (0,0) grid (2,2);
\fill [gray] (0.5,1) circle (2pt);
\fill [ gray] (1.5,1) circle (2pt);

\node at (0.5,1) [left] {\tiny{$p$}};
\node at (1.5,1) [right] {\tiny{$q$}};
\draw [gray,line width=0.5mm] (0.5,1) -- (1.5,1); 
\node at (1,-0.5) {\tiny{$q$ right of $p$ or}};
\node at (1,-0.75) {\tiny{$p$ left of $q$ or }};
\node at (1,-1.2) {\tiny{$\lrarrow{p}{q}$}};
\end{scope}

\begin{scope}[xshift = 280]
\draw[step=0.5,black,opacity=0.2] (0,0) grid (2,2);
\fill [gray] (0.5,1.5) circle (2pt);
\fill [ gray] (1.5,0.5) circle (2pt);

\node at (0.5,1.5) [left] {\tiny{$p$}};
\node at (1.5,0.5) [right] {\tiny{$q$}};
\draw [gray,line width=0.5mm] (0.5,1.5) -- (1.5,0.5); 
\node at (1,-0.5) {\tiny{$q$ south-east of $p$ or}};
\node at (1,-0.75) {\tiny{$p$ north-west of $q$ or }};
\node at (1,-1.2) {\tiny{ $^p{\senwarrow}_q$}};
\end{scope}

\begin{scope}[xshift = 350]
\draw[step=0.5,black,opacity=0.2] (0,0) grid (2,2);
\fill [gray] (0.5,0.5) circle (2pt);
\fill [ gray] (1.5,1.5) circle (2pt);

\node at (0.5,0.5) [left] {\tiny{$p$}};
\node at (1.5,1.5) [right] {\tiny{$q$}};
\draw [gray,line width=0.5mm] (0.5,0.5) -- (1.5,1.5); 
\node at (1,-0.5) {\tiny{$q$ north-east of $p$ or}};
\node at (1,-0.75) {\tiny{$p$ south-west of $q$ or }};
\node at (1,-1.2) {\tiny{ $_p{\neswarrow}^q$ }};
\end{scope}

\end{tikzpicture}
\caption{Basic Notations}
\label{fig:notation}
\end{figure}
	We now show some basic properties and lemmas related to the execution of $\GRE$. 

	While preprocessing $p$, $\GRE$ adds a marked point at the bottom right (bottom left) corner of rectangle 
	$_p\Box^q$ $(^q\Box_p)$ only if it is an arborally unsatisfied rectangle.
	This implies that whenever $\GRE$ puts a marked point there exists another point (marked or original)
	above it. Using this property, we claim that the top point on the line $x=i$ ($1\le i \le n$) must
	be an original point, i.e., a point from $\XX$. The following observation follows: 

	\begin{observation}
	\label{obs:nothing-above}
	For any point $p \in \XX$, $\GRE$ does not put any marked point above $p$.
	\end{observation}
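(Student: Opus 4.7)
The plan is to proceed by contradiction and exploit the fact, already articulated in the paragraph preceding the observation, that $\GRE$ only deposits a marked point at an empty corner of an arborally unsatisfied rectangle whose \emph{diagonally opposite} corner already contains a point at an earlier time. Concretely, whenever $\GRE$ adds a marked point $m$ at position $(\Key{m},\Time{m})$, there must exist a point $r$ in $\XX \cup \GG$ with $\Key{r}=\Key{m}$ and $\Time{r}<\Time{m}$, namely the point that was the "witness" for the rectangle $m$ was added to satisfy. I will take this "witness above" property as the only tool.

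Suppose, toward contradiction, that there exists a marked point lying above some $p \in \XX$, i.e., a point $q \in \GG$ with $\Key{q}=\Key{p}$ and $\Time{q}<\Time{p}$. Among all such marked points on the vertical line $x=\Key{p}$, I would take the topmost one — the one minimizing $\Time{\cdot}$ — and call it $m$. By the witness-above property, there is some $r \in \XX \cup \GG$ with $\Key{r}=\Key{p}$ and $\Time{r}<\Time{m}$.

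Now I would split into two cases by the type of $r$. If $r \in \GG$, then $r$ is itself a marked point on $x=\Key{p}$ strictly above $m$ (and therefore also above $p$), contradicting the minimality of $\Time{m}$. If $r \in \XX$, then since $\XX$ comes from a permutation there is exactly one original point on the vertical line $x=\Key{p}$, namely $p$ itself; so $r=p$, giving $\Time{r}=\Time{p}>\Time{m}>\Time{r}$, which is absurd. Either case yields a contradiction, so no marked point can sit above $p$.

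There is no real obstacle here — the only subtle point is making sure the "witness above" fact is applied correctly, i.e., that the witness $r$ indeed sits on the same vertical line as the newly placed marked point $m$ (it is the endpoint of the unsatisfied rectangle that shares $m$'s $x$-coordinate, not the one sharing its $y$-coordinate). Once that is stated cleanly, the argument is a one-line extremal choice followed by a case split.
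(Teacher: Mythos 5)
Your proof is correct and is essentially the argument the paper itself sketches in the paragraph preceding the observation: every marked point has, by construction, a witness directly above it on the same vertical line, so the topmost point on each vertical line must be original, and since the permutation supplies exactly one original point per line that point is $p$ itself. Your contradiction-plus-extremal-choice phrasing is just a slightly more explicit packaging of the same idea.
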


\fi
We now prove some lemmas regarding the execution of $\GRE$:

\iflong
	\begin{lemma}
	\label{lem:greedy-property}
\else
	\newtheorem*{lem:greedy-property}{Lemma \ref{lem:greedy-property}}
	\begin{lem:greedy-property}
\fi
Consider a rectangle $_p\Box^q$ where $p,q \in (\XX \cup \GG)$. 
Then there exists a point $r \in (\XX \cup \GG) \setminus \{p,q\}$ such that
(1) $r \in$ $_p\Box^q$, and
(2) $\abarrow{r}{p}$ or $\lrarrow{p}{r}$.

\iflong
	\end{lemma}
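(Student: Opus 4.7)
The plan is to exhibit an explicit witness $r$ by choosing the ``leftmost, lowest'' point of $(\XX \cup \GG) \setminus \{p,q\}$ that lies inside $_p\Box^q$, and then showing via a minimality chase that such a point must sit on one of the two edges of the rectangle incident to $p$. The only ingredient needed is the global fact (established earlier) that $\XX \cup \GG$ is arborally satisfied, which, since $p$ and $q$ share neither a horizontal nor a vertical line, implies that the closed rectangle $_p\Box^q$ contains at least one point of $(\XX \cup \GG)\setminus\{p,q\}$.

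Concretely, let $r^*$ be a point of $(\XX \cup \GG) \cap {}_p\Box^q \setminus \{p,q\}$ minimizing $\Key{r^*}$, and among those, maximizing $\Time{r^*}$. If $\Key{r^*} = \Key{p}$, then $r^* \ne p$ forces $\Time{r^*} < \Time{p}$, giving $\abarrow{r^*}{p}$. If $\Key{r^*} > \Key{p}$ and $\Time{r^*} = \Time{p}$, then $r^*$ lies strictly to the right of $p$ on the bottom edge, giving $\lrarrow{p}{r^*}$. In either case the lemma holds. The only remaining possibility to rule out is $\Key{r^*} > \Key{p}$ together with $\Time{r^*} < \Time{p}$. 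Here $_p\Box^{r^*}$ is a non-degenerate sub-rectangle of $_p\Box^q$, so by the arboral satisfaction of $\XX \cup \GG$ there is some $r'' \in (\XX \cup \GG) \setminus \{p, r^*\}$ inside $_p\Box^{r^*}$. A short coordinate comparison rules out $r'' = q$: membership of $q$ in both $_p\Box^q$ and $_p\Box^{r^*}$ would force $\Key{q} = \Key{r^*}$ and $\Time{q} = \Time{r^*}$, i.e.\ $q = r^*$, contradicting $q \ne r^*$. Hence $r''$ is itself a valid candidate in $_p\Box^q \setminus \{p, q\}$, with $\Key{r''} \le \Key{r^*}$; minimality of $\Key{r^*}$ forces $\Key{r''} = \Key{r^*}$, and then the tie-break rule together with $\Time{r''} \ge \Time{r^*}$ forces $\Time{r''} = \Time{r^*}$, yielding $r'' = r^*$ --- a contradiction that rules out this case.

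The only real subtlety I anticipate is the corner case $r'' = q$ in the sub-rectangle argument, which is handled by the brief coordinate check sketched above; everything else is a routine minimality/tie-breaking chase. Importantly, the proof uses nothing specific to $k$-decomposable sequences --- only the universal arboral satisfaction of the final point set $\XX \cup \GG$ that $\GRE$ guarantees --- and so the lemma applies in the setting of arbitrary input sequences as stated.
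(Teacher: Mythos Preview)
Your proof is correct and follows essentially the same strategy as the paper's: pick an extremal point of $(\XX\cup\GG)\setminus\{p,q\}$ inside $_p\Box^q$, and if it is not already on one of the two edges incident to $p$, apply arboral satisfaction to the sub-rectangle $_p\Box^{r^*}$ to contradict extremality. The paper uses the Euclidean-closest point to $p$ as the extremal choice (which makes the sub-rectangle argument a one-liner), whereas you use the lexicographic leftmost-then-lowest point and handle the $r''=q$ corner case explicitly; the two arguments are interchangeable.
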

\else	
	\end{lem:greedy-property}
\fi
\iflong 
	\begin{remark}
	This lemma is a special case of Observation 2.1 in \cite{DemianeHIKP09}. The lemma
	is true for any point set $\YY$ for which $(\XX \cup \YY)$ is an arborally 
	satisfied set.
	\end{remark}
	\begin{figure}[H]

\centering
\begin{tikzpicture}

\draw[step=0.25,black,opacity=0.2] (0,0) grid (2,2);

\fill [gray , opacity=0.7] (0.5,0.5) circle (2pt);
\fill [ gray , opacity=0.7] (1.5,1.5) circle (2pt);
\node at (0.5,0.5) [left] {\tiny{$p$}};
\node at (1.5,1.5) [right] {\tiny{$q$}};

\fill [gray, opacity=0.2] (0.5,0.5) rectangle (1.5,1.5);

\draw [gray,line width=0.5mm] (0.5,0.5) -- (0.5,1.5); 
\draw [gray,line width=0.5mm] (0.5,0.5) -- (1.5,0.5); 

\node at (1,-0.5) {(i)};

\begin{scope}[xshift = 70]
\draw[step=0.25,black,opacity=0.2] (0,0) grid (2,2);

\fill [gray , opacity=0.7] (0.5,0.5) circle (2pt);
\fill [ gray , opacity=0.7] (1.5,1.5) circle (2pt);
\node at (0.5,0.5) [left] {\tiny{$p$}};
\node at (1.5,1.5) [right] {\tiny{$q$}};

\fill [gray, opacity=0.2] (0.5,0.5) rectangle (1.5,1.5);

\fill [gray , opacity=0.7] (0.75,0.75) circle (2pt) node[right,opacity=1,black] {\tiny{$s$}};
\node at (1,-0.5) {(ii)};

\end{scope}

\end{tikzpicture}
\caption{(1) Lemma \ref{lem:greedy-property} states that there exists a point
in $(\XX \cup \GG)\setminus \{p,q\}$ on the thick lines adjacent to $p$. 
(2) Illustration of the case argued in the proof where $s$ is the closest point 
to $p$ in $_p\Box^q$.}
\label{fig:greedy-peoperty}
\end{figure}
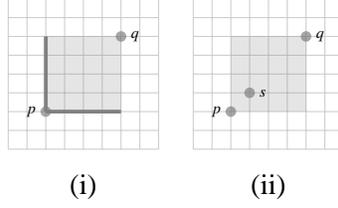
	\begin{proof}
	Since $(\XX \cup \GG)$ is an arborally satisfied set, there exists another point from $(\XX \cup \GG)\setminus \{p,q\}$ 
	in $_p\Box^q$.
	Let $s$ be the closest point to $p$ in $_p\Box^q$ (closest in Euclidean distance) (See Figure \ref{fig:greedy-peoperty}(ii)). 
	If $\abarrow{s}{p}$ or $\lrarrow{p}{s}$, then we are done. 
	Else, look at the rectangle $_p\Box^s$. Since $s$ is the closest point to $p$,
	$_p\Box^s$ does not contain any other point from $(\XX \cup \GG) \setminus \{p,s\}$. 
	This implies that $_p\Box^s$ is not arborally satisfied.
	This leads to a contradiction as $(\XX \cup \GG)$ is an arborally satisfied set.
	\end{proof}
\fi

\iflong\else\vspace{-2mm}\fi
\iflong
	Similarly, we can also prove a symmetric version of the above lemma:
	\begin{lemma}
	\label{lem:greedy-property-left}
	Consider a rectangle $^q\Box_p$ where $p,q \in (\XX \cup \GG)$. 
	Then there exists a point $r \in (\XX \cup \GG) \setminus \{p,q\}$ such that
	(1) $r \in$ $^q{\Box}_p$, and
	(2) $\abarrow{r}{p}$ or $\lrarrow{r}{p}$.
	\end{lemma}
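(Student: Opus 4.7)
The plan is to mirror the proof of Lemma~\ref{lem:greedy-property} with $p$ placed at the south-east corner of the rectangle instead of the north-west corner. The geometric picture is symmetric: in $^q\Box_p$, the two edges of the rectangle incident to $p$ go up (direction $\abarrow{\cdot}{p}$) and to the left (direction $\lrarrow{\cdot}{p}$), which are precisely the two cases we are asked to establish. So the statement is just Lemma~\ref{lem:greedy-property} reflected about a vertical axis through $p$, and the same argument structure should carry over.

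Concretely, I would proceed as follows. First, since $(\XX \cup \GG)$ is arborally satisfied and $p,q\in(\XX\cup\GG)$, the rectangle $^q\Box_p$ must contain some point $s\in(\XX\cup\GG)\setminus\{p,q\}$ in its interior or on its boundary. Among all such points, choose $s$ to be the one minimizing the Euclidean distance to $p$. If $\abarrow{s}{p}$ or $\lrarrow{s}{p}$ then $s$ itself is the desired $r$ and we are done.

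Otherwise, $s$ lies strictly to the north-west of $p$, i.e., $\Time{s}<\Time{p}$ and $\Key{s}<\Key{p}$, so the rectangle $^s\Box_p$ is well defined and is strictly contained in $^q\Box_p$. Now I would argue, exactly as in the proof of Lemma~\ref{lem:greedy-property}, that the minimality of $s$ (as the closest point to $p$) forces $^s\Box_p$ to contain no point of $(\XX\cup\GG)\setminus\{p,s\}$: any such point would sit inside $^q\Box_p$ and be closer to $p$ than $s$, contradicting the choice of $s$. But then $^s\Box_p$ is an arborally unsatisfied rectangle in the supposedly arborally satisfied set $(\XX\cup\GG)$, a contradiction. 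Hence the first case must have occurred, yielding the required $r$.

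There is no real obstacle here; the only thing to be careful about is that when we ``reflect'' the argument, the two edges of the rectangle incident to $p$ change from the north and east sides (as in Lemma~\ref{lem:greedy-property}) to the north and west sides, which matches the directions $\abarrow{r}{p}$ and $\lrarrow{r}{p}$ required by the statement. Everything else—the existence of the third point by arboral satisfaction, choosing the Euclidean-closest candidate, and deriving a contradiction from an arborally unsatisfied sub-rectangle—is identical to the previous proof.
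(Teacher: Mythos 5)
Your proof is correct and matches what the paper intends: the paper omits the proof of Lemma~\ref{lem:greedy-property-left}, stating only that it follows ``similarly'' to Lemma~\ref{lem:greedy-property}, and your mirrored argument (arboral satisfaction yields a witness, pick the Euclidean-closest witness $s$ to $p$, and derive a contradiction from the unsatisfied sub-rectangle $^s\Box_p$ if $s$ is not on an edge incident to $p$) is exactly that intended reflection.
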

\else
	Similarly, we can also prove a symmetric version of the above lemma (see the full version).
\fi

We now move on to another important property of $\GRE$.
\iflong\else\vspace{-2mm}\fi
\iflong
	\begin{definition}
	\label{def:hides}
\else
	\newtheorem*{def:hides}{Definition \ref{def:hides}}
	\begin{def:hides}
\fi
A point $r$ \emph{hides} $q$ from $p$ where $p,q,r \in \XX \cup \GG$ and
$^q\nwsearrow_p$, if either
(1) $\lrarrow{q}{r}$ 
  and $r \in$ $^q\Box_p$, or 
(2) $r \in (^q\Box_p)^{\circ}$. 

For $p, q$ such that $_p\neswarrow^q$ the definition is symmetric. 

\iflong
	\end{definition}
\else	
	\end{def:hides}
\fi
\iflong\else\vspace{-2mm}\fi
In other words, $r$ hides $q$ from $p$, if it's different from $p, q$ and 
lies in the union of the ``top-line'' and the interior of $^q\Box_p$ (or $_p\Box^q$). 
\iflong 
	\begin{figure}[H]

\centering
\begin{tikzpicture}

\draw[step=0.25,black,opacity=0.2] (0,0) grid (2,2);

\fill [gray] (1.5,0.5) circle (2pt);

\fill [ gray, opacity=0.7] (0.5,1.5) circle (2pt);
\node at (0.5,1.5) [left] {\tiny{$q$}};
\node at (1.5,0.5) [right] {\tiny{$p$}};
\fill [gray, opacity=0.2] (0.5,0.5) rectangle (1.5,1.5);

\fill [gray , opacity=0.7] (0.75,1.25) circle (2pt) node[right,opacity=1,black] {\tiny{$r$}};


\begin{scope}[xshift = 70]
\draw[step=0.25,black,opacity=0.2] (0,0) grid (2,2);

\fill [gray] (0.5,0.5) circle (2pt);
\fill [ gray , opacity=0.7] (1.5,1.5) circle (2pt);

\fill [gray , opacity=0.7] (1.25,1.5) circle (2pt) node[above,opacity=1,black] {\tiny{$r$}};
\node at (0.5,0.5) [left] {\tiny{$p$}};
\node at (1.5,1.5) [right] {\tiny{$q$}};

\fill [gray, opacity=0.2] (0.5,0.5) rectangle (1.5,1.5);
\end{scope}

\end{tikzpicture}
\caption{$r$ hides $q$ from $p \in \XX$}
\label{fig:hides}
\end{figure}
\fi

\iflong
	\begin{lemma}
	\label{lem:hidden}
\else
	\newtheorem*{lem:hidden}{Lemma \ref{lem:hidden}}
	\begin{lem:hidden}
\fi

Let $r$ hide $q$ from $p$ where $p,q,r \in \XX \cup \GG$ and $^q\nwsearrow_p$. 
Assume that there exists a point $s$ below $q$ such that $_{s}\swnearrow^{r}$. Moreover, let $s$ be the first 
point below $q$ with this property. 
Then $_{\OP(s)}\swnearrow^r$.
\iflong
	\end{lemma}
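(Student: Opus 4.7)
The goal reduces to showing $\Key{\OP(s)} < \Key{r}$, since $\Time{\OP(s)} = \Time{s} > \Time{r}$ holds automatically from $_s\swnearrow^r$. My plan is to analyze \emph{why} $\GRE$ placed $s$, and then use the minimality of $s$ to derive a contradiction if $\OP(s)$ were to lie weakly to the right of $r$.

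First I would dispatch the easy case where $s \in \XX$: then $\OP(s)=s$, so $\Key{\OP(s)} = \Key{s} < \Key{r}$ trivially. So assume $s \in \GG$. Because $s$ is a marked point placed at time $\Time{s}$ while $\GRE$ processes $\OP(s)$, there must exist a ``partner'' original or marked point $p''$ on the vertical line $x=\Key{s}$ with $\Time{p''} < \Time{s}$ such that the axis-aligned rectangle with opposite corners $p''$ and $\OP(s)$ was arborally unsatisfied just before $s$ was added. The sub-case $\Key{\OP(s)} < \Key{s}$ is again immediate: chaining with $\Key{s} < \Key{r}$ gives the conclusion. So the interesting sub-case is $\Key{\OP(s)} > \Key{s}$, where the unsatisfied rectangle is ${^{p''}}\Box_{\OP(s)}$.

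Now I would assume for contradiction that $\Key{\OP(s)} \geq \Key{r}$. I would first rule out $\Key{\OP(s)}=\Key{r}$: this would place $r$ strictly above the original point $\OP(s)$ in the same column, which is impossible either because two originals cannot share a column (if $r \in \XX$) or by Observation~\ref{obs:nothing-above} (if $r \in \GG$). Hence $\Key{\OP(s)} > \Key{r}$ strictly, and so $\Key{p''} = \Key{s} < \Key{r} < \Key{\OP(s)}$. If additionally $\Time{r} \geq \Time{p''}$, then $r$ lies in the closure of ${^{p''}}\Box_{\OP(s)}$, which would arborally satisfy it (by Definition~\ref{def:rectangleDef}) and contradict the unsatisfied premise that caused $s$ to be placed. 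Therefore $\Time{r} < \Time{p''}$ strictly.

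Finally, combining with $\Time{q} \leq \Time{r}$ (which holds since $r$ hides $q$ from $p$), I get $\Time{q} \leq \Time{r} < \Time{p''} < \Time{s}$, so $p''$ sits strictly below $q$ on the line $x=\Key{q}=\Key{s}$. Moreover $_{p''}\swnearrow^r$ holds, because $\Key{p''}<\Key{r}$ and $\Time{p''}>\Time{r}$. But then $p''$ is an earlier witness than $s$ of the property ``below $q$ and southwest of $r$'', contradicting the choice of $s$ as the \emph{first} such point. The main subtlety in turning this into a clean proof is the bookkeeping of the closed-rectangle convention in Definition~\ref{def:rectangleDef} (which forces the strict inequality $\Time{r}<\Time{p''}$ and is essential for constructing the earlier witness), together with the careful dismissal of the edge case $\Key{\OP(s)}=\Key{r}$ via Observation~\ref{obs:nothing-above}.
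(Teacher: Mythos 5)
Your proof is correct and follows essentially the same approach as the paper's: identify the unsatisfied rectangle $^{t}\Box_{\OP(s)}$ (your $p''$) that made $\GRE$ place $s$, show that $r$ would satisfy this rectangle unless its top-left corner lies strictly below $r$, and then observe this corner is a strictly earlier witness below $q$ southwest of $r$, contradicting the minimality of $s$. The only cosmetic difference is that the paper takes $t$ to be the first point above $s$ whereas you only require a corner of some unsatisfied rectangle, but the contradiction runs identically.
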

\else	
	\end{lem:hidden}
\fi

\iflong 
	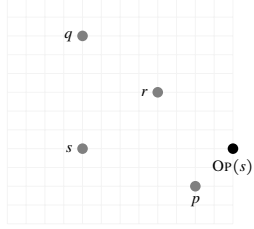
\begin{figure}[H]

\centering
\begin{tikzpicture}
\draw[step=0.25,black,opacity=0.05] (0,0) grid (3,3);

\fill [ gray] (2.5,0.5) circle (2pt) node[below,black,opacity=1] {\tiny{$p$}};

\fill [ gray] (1,2.5) circle (2pt) node[left,black,opacity=1] {\tiny{$q$}};
\fill [ gray] (2,1.75) circle (2pt) node[left,black,opacity=1] {\tiny{$r$}};
\fill [ black] (3,1) circle (2pt) node[below,black,opacity=1] {\tiny{$\OP(s)$}};
\fill [ gray] (1,1) circle (2pt) node[left,black,opacity=1] {\tiny{$s$}};




\end{tikzpicture}

\caption{ Illustration of bad case in the proof of Lemma \ref{lem:hidden} when $^r\nwsearrow_{\OP(s)}$}
\label{fig:hidden}
\end{figure}
	\begin{proof}
	Assume that $r$ lies in $(^q\Box_p)^{\circ}$ (the argument below applies even if $\lrarrow{q}{r}$).
	By Observation \ref{obs:nothing-above}, $\GRE$ does not put any marked 
	point above any original point. So $\OP(s)$ cannot lie below $r$.
	Assume then for contradiction that $^r\senwarrow_{\OP(s)}$ (See Figure \ref{fig:hidden}).
	Since $\GRE$ puts $s$ while processing $\OP(s)$, it must have encountered an unsatisfied rectangle
	$^{t}\Box_{\OP(s)}$ such that $t$ is the first point above $s$. 
	We claim that $t$ cannot lie to the north-west or left of $r$ 
	as then $^{t}\Box_{\OP(s)}$ is arborally satisfied by $r$. This implies that $_{t}\swnearrow^r$. 
	But then $t$ is the first point below 
	$q$ such that $_{t}\swnearrow^{r}$, which contradicts the assumption of the lemma. 
	So our assumption that $^r\senwarrow_{\OP(s)}$ must be false.
	\end{proof}
\fi

\iflong
	We also state the symmetric version of the above lemma:
	\begin{lemma}
	\label{lem:hidden-right}
	Let $r$ hide $q$ from $p$ where $p,q,r \in \XX \cup \GG$ and $_p\neswarrow^q$. 
	Assume that there exists a point $s$ 
	below $q$ 
	such that $^r\nwsearrow_{s}$.
	Moreover, let $s$ be the first 
	point below $q$ with this property. 
	Then $^r\nwsearrow_{\OP(s)}$.
	\end{lemma}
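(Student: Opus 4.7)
The plan is to run the proof of Lemma \ref{lem:hidden} in mirror image, swapping every occurrence of ``left'' with ``right'' and $^{\cdot}\Box_{\cdot}$ with $_{\cdot}\Box^{\cdot}$. I would first split the possible positions of $\OP(s)$ relative to $r$ into three cases: the desired conclusion $^r\nwsearrow_{\OP(s)}$, the vertical degeneracy $\abarrow{r}{\OP(s)}$, or the bad case $_{\OP(s)}\swnearrow^r$. The degenerate case is eliminated using Observation \ref{obs:nothing-above}: since $\Time{r} < \Time{s} = \Time{\OP(s)}$, $r$ cannot equal $\OP(s)$, so if $r$ shared a column with $\OP(s)$ it would have to be a marked point placed strictly above the original point $\OP(s)$, which $\GRE$ never produces.

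For the remaining case I would argue by contradiction, assuming $_{\OP(s)}\swnearrow^r$, which gives $\Key{\OP(s)} < \Key{r} < \Key{s}$. In particular $s$ lies strictly to the right of $\OP(s)$, so when $\GRE$ processed $\OP(s)$ the trigger for marking $s$ must have been an arborally unsatisfied rectangle $_{\OP(s)}\Box^t$, where $t$ is the first point above $s$ on the column $x = \Key{s}$. The crucial step is to place $t$ relative to $r$: I would show that $t$ cannot lie north-east of $r$ nor directly to the right of $r$, for otherwise $\Time{t} \le \Time{r}$, and combined with $\Key{\OP(s)} < \Key{r} < \Key{t}$ and $\Time{r} < \Time{\OP(s)}$ the point $r$ would already sit in $_{\OP(s)}\Box^t$, contradicting its unsatisfiedness.

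Consequently $^r\nwsearrow_t$ must hold. Since $\Key{t}=\Key{s}=\Key{q}$ and $\Time{t} > \Time{r} \ge \Time{q}$ (the latter because $r$ hides $q$ from $p$ and hence $\Time{r} \ge \Time{q}$), the point $t$ is itself a point below $q$ on the column of $q$ witnessing $^r\nwsearrow_t$. But $\Time{t} < \Time{s}$ by construction, contradicting the minimality of $s$. This closes the contradiction and establishes $^r\nwsearrow_{\OP(s)}$.

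I do not anticipate any real obstacle: every step is a direct mirror of the original argument, and the symmetric Lemma \ref{lem:greedy-property-left} can stand in for Lemma \ref{lem:greedy-property} wherever needed. The only thing to watch out for is directional bookkeeping, in particular locating the witness point $t$ on the correct side of $\OP(s)$; under the contradiction assumption $\Key{s}=\Key{q}$ sits to the \emph{right} of $\Key{\OP(s)}$, so the triggering rectangle is $_{\OP(s)}\Box^t$ rather than $^t\Box_{\OP(s)}$, which is the only place the notation genuinely differs from Lemma \ref{lem:hidden}.
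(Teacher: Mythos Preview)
Your proposal is correct and is exactly the mirror image of the paper's proof of Lemma~\ref{lem:hidden}; the paper itself does not spell out a separate argument for Lemma~\ref{lem:hidden-right} but simply declares it the symmetric version, so your explicit case split and contradiction via the triggering rectangle $_{\OP(s)}\Box^{t}$ is precisely what is intended. One harmless remark: neither Lemma~\ref{lem:greedy-property} nor Lemma~\ref{lem:greedy-property-left} is actually invoked in the proof of Lemma~\ref{lem:hidden}, so you will not need them here either.
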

\else
	Similarly, we can also prove a symmetric version of the above lemma (see the full version).
\fi
\iflong\else\vspace{-4mm}\fi

\section{Decomposable Sequences}
\label{subsec:decomposable-sequence}
\iflong\else\vspace{-2mm}\fi
Given a permutation of the keys $(p_1, p_2, \dots, p_n)$, represented by point set $\XX$ in the plane as described above, we call a set 
$[i, j] := \{i, i+1, \ldots, j\}$ a block of $\XX$
if $\{p_i, p_{i+1}, \dots, p_j\} = \{c, c+1, \dots, d\}$ for some $c, d \in [n]$.
In words, a block represents a contiguous time interval that is mapped to a contiguous 
key interval by the permutation. 
We say that $\XX$
is decomposable into $k$-blocks if there exist disjoint blocks $[a_1,b_1], \dots , [a_k,b_k]$ 
such that
$\cup_\ell [a_\ell,b_\ell] \cap  \mathbb{N} = [n]$. 

We can recursively decompose $\XX$
till singleton blocks are obtained. This recursive decomposition can be naturally 
represented as a rooted tree where each node represents a block. At the root of the tree is the block 
$\XX = (p_1, p_2, \dots, p_n)$. Let us call this tree a \emph{decomposition tree}
of $\XX$. 
We say that $\XX$ is \emph{$k$-decomposable}
if there exists a decomposition tree $\TD(\XX)$ such that the number of children 
of each internal node in this tree is at most $k$.
\iflong 
	\begin{center}
\begin{figure}[H]
\centering
\begin{tikzpicture}
[round/.style={rectangle, rounded corners=3mm, minimum size=20mm, draw=black!50}]
\draw[step=0.5,black,opacity=0.2] (0,0) grid (2,2);

\fill [black] (1,2) circle (2pt);
\fill [ black] (0,1.5) circle (2pt);
\fill [ black] (0.5,1) circle (2pt);
\fill [ black] (2,0.5) circle (2pt);
\fill [ black] (1.5,0) circle (2pt);

\begin{scope}[xshift = 100]
\draw[step=0.5,black,opacity=0.2] (0,0) grid (2,2);

\fill [black] (1,2) circle (2pt);
\fill [ black] (0,1.5) circle (2pt);
\fill [ black] (0.5,1) circle (2pt);
\fill [ black] (2,0.5) circle (2pt);
\fill [ black] (1.5,0) circle (2pt);
\draw [round,fill=gray,opacity=0.2] (-0.55,0.55) rectangle (1.25,2.25);
\draw [round,fill=gray,opacity=0.4] (-0.35,0.765) rectangle (0.85,1.85);
\draw [round,fill=gray,opacity=0.4] (0.7,1.7) rectangle (1.3,2.3);

\draw [round,fill=gray,opacity=0.6](-0.3,1.2) rectangle (0.3,1.8);
\draw [round,fill=gray,opacity=0.6](0.2,0.7) rectangle (0.8,1.3);

\draw [round,fill=gray,opacity=0.2] (1.2,-.35) rectangle (2.3,0.8);

\draw [round,fill=gray,opacity=0.4] (1.2,-.25) rectangle (1.8,0.3);
\draw [round,fill=gray,opacity=0.4] (1.7,.25) rectangle (2.3,0.8);
\end{scope}
\end{tikzpicture}
\caption{The point set $\XX$=\{3,1,2,5,4\} and its decomposition}
\label{fig:decompositionX}
\end{figure}
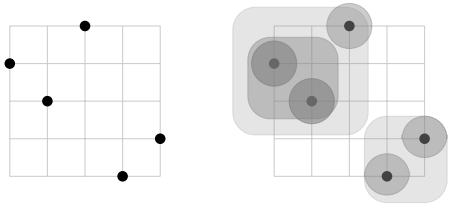

\end{center}


\fi
Let $\PP(B)$ denote the parent of $B$ in $\TD(\XX)$, and let $\SB(B)$ denote the set of children of
$\PP(B)$ except $B$. Let $\Top(B) := a$
if $a \in B$ and $\Time{a} \le \Time{p}$ for all $p \in B$. In words,
$\Top(B)$ denotes the first (in time) key in block $B$. 
Note that a key $a$ can be the first
key of multiple blocks (at different levels). In Fig.~\ref{fig:decompositionX}
\iflong
,
\else
(in the full version),
\fi
\Top($3,1,2,5,4) = \Top(3,1,2) = \Top(3) = 3$.
Let $\RT(a)$ be the block $B$ that is closest to the root and satisfies $\Top(B) = a$.
In Fig.~\ref{fig:decompositionX}, $\RT(3) = (3,1,2,5,4)$.

In the rest of this paper we deal with $k$-decomposable permutations $\XX$, or more precisely, with 
point sets $\XX$ representing such permutations. 
We fix some $\TD(\XX)$ such that every internal node has at most $k$ children. 
Henceforth when we talk about blocks, it will be with respect to this fixed $\TD(\XX)$.

\iflong 
	For a block $B$, let $\BOX(B) := \{ r \mid \exists p,q \in B$ s.t. $\Key{r}=\Key{p}$ and $\Time{r}=\Time{q}\}$.
	In words, $\BOX(B)$ contains those points $r$ in the plane such that both the 
	horizontal and vertical lines passing through $r$ have at least one point from $B$.

	Define 
	$\UPB(B) := \{ r~|~r.y < \Time{\Top(B)}$ and $\exists p \in B$ such that $\Key{p} = \Key{r} \}$.
	In words, $\UPB(B)$ is the set of points in the plane that come before all points in $B$ in time, but
	share the key with some point in $B$. 
	Please see the Figure \ref{fig:upperbox}.

	\begin{center}
\begin{figure}[H]
\centering
\begin{tikzpicture}
[round/.style={rectangle, rounded corners=3mm, minimum size=20mm, draw=black!50}]

\draw[step=0.5,black,opacity=0.2] (0,0) grid (2,2);

\fill [black] (1,2) circle (2pt);
\fill [ black] (0,1.5) circle (2pt);
\fill [ black] (0.5,1) circle (2pt);
\fill [ black] (2,0.5) circle (2pt);
\fill [ black] (1.5,0) circle (2pt);

\draw [round,fill=gray,opacity=0.6] (1.4,0.9) rectangle (2.1,2.1);

\draw [round,fill=gray,opacity=0.2] (1.2,-.35) rectangle (2.3,0.8);

\draw node at (1.7,-0.5) {\small{$\BOX(B)$}};
\draw node at (1.7,2.3) {\small{$\UPB(B)$}};

\end{tikzpicture}
\caption{ $\BOX(B)$ and $\UPB(B)$ contains the set of all the points 
in the two region shown in the figure.}
\label{fig:upperbox}
\end{figure}
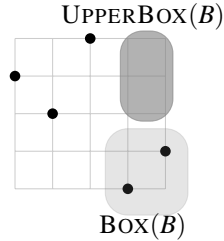

\end{center}

	By definition, a block represents a contiguous key interval.
	Hence, there exists no point $q \in \XX$ in 
	$\UPB(B)$. Also
	by Observation \ref{obs:nothing-above}, for any point $q \in \XX$, 
	GREEDY does not put any point above $q$, so we have

	\begin{lemma}
	\label{lem:upperbox}
	There is no point from $(\XX \cup \GG)$ in $\UPB(B)$.
	\end{lemma}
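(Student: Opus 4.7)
The plan is to prove the two claims---no original point and no marked point lies in $\UPB(B)$---separately, each via a short one-line argument using the structural definitions already laid out.

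First I would handle original points. Because $B$ is a block of $\XX$, its keys form a contiguous key interval, say $\{c, c+1, \dots, d\}$, and every key in this interval is realized by exactly one point of $B$ (recall $\XX$ comes from a permutation, so each vertical line contains exactly one original point). Hence if $q \in \XX$ satisfies $\Key{q} = \Key{p}$ for some $p \in B$, then $q = p$, and in particular $\Time{q} \ge \Time{\Top(B)}$. This contradicts the defining inequality $\Time{q} < \Time{\Top(B)}$ of $\UPB(B)$, ruling out $q \in \XX \cap \UPB(B)$.

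Second I would handle marked points. Suppose for contradiction that $r \in \GG \cap \UPB(B)$. By definition of $\UPB(B)$ there is some $p \in B \subseteq \XX$ with $\Key{r} = \Key{p}$, and $\Time{r} < \Time{\Top(B)} \le \Time{p}$. Thus $r$ is a marked point sitting on the vertical line through the original point $p$ and strictly above $p$. But Observation~\ref{obs:nothing-above} asserts that $\GRE$ never places a marked point above an original point, a contradiction.

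Combining the two cases gives $\UPB(B) \cap (\XX \cup \GG) = \emptyset$, as desired. There is no real obstacle here; the lemma is essentially a direct repackaging of Observation~\ref{obs:nothing-above} together with the contiguous-key-interval property of blocks, and I would keep the write-up to just a few lines.
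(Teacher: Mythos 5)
Your proof is correct and follows essentially the same approach as the paper: ruling out original points via the contiguous-key-interval property of blocks (each key in $B$'s key range belongs to a point in $B$, so any point of $\XX$ on the same vertical line as a point of $B$ is itself in $B$ and hence not earlier than $\Top(B)$), and ruling out marked points via Observation~\ref{obs:nothing-above}. The paper's version is terser but relies on precisely the same two facts; yours merely spells out the small details the paper leaves implicit.
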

\fi

\begin{remark}
For brevity, our definitions and theorem and lemma statements do not explicitly quantify over $\XX$, but there is always an underlying \lq\lq For a permutation
sequence $\XX$\rq\rq. In Sections \ref{sec:coupling} and \ref{sec:goodrectangles}, this quantification is over all $\XX$ of size $n$; and in 
Sections \ref{sec:mmc}, \ref{sec:mfc}, and \ref{sec:main}, it is over all $k$-decomposable $\XX$ of size $n$. 
\end{remark}
\iflong\else\vspace{-5mm}\fi

\section{Pairs}
\label{sec:coupling}
\iflong\else\vspace{-2mm}\fi
We define a map from $\GG$ to pairs of points in $\XX$.  
\begin{definition}

Let $p \in \GG$. The map $\CP : \GG \rightarrow \XX^2$ is defined as follows:
Let $q$ be a the first point above $p$, then 
$\CP(p) := (\OP(p),\OP(q))$.
\end{definition}
\iflong\else\vspace{-2mm}\fi

Note that if $\RR(p) = (\OP(p),\OP(q))$, then $\Time{\OP(p)} > \Time{\OP(q)}$.

Our goal is to upper bound $|\GG|$. We do this by connecting $|\GG|$ to the set of all \emph{pairs}
$\CP(\GG) := \{ \CP(p) ~|~ p \in \GG \}$, partitioning the set $\CP(\GG)$ into two sets $\MMC$ and 
$\MFC$, and then bounding $|\MMC|$ and $|\MFC|$ from above.

\iflong\else\vspace{-2mm}\fi
\begin{definition}[\MMC]
Subset $\MMC$ of $\CP(\GG)$, pronounced \emph{zig}, is defined as follows. 
For $p \in \GG$, let $q$ be the first point above $p$ and $\CP(p) = (\OP(p),\OP(q))$. We say 
that $\CP(p) \in \MMC$ if
(1) $\lrarrow{\OP(p)}{p}$ and $^{\OP(q)}\nwsearrow_{\OP(p)}$, or its symmetric version
(2) $\lrarrow{p}{\OP(p)}$ and $_{\OP(p)}\neswarrow^{\OP(q)}$.
\end{definition}
\iflong\else\vspace{-2mm}\fi
\begin{figure}[H]
\centering
\begin{tikzpicture}

\draw[step=0.25,black,opacity=0.1] (0,0) grid (3,2);

\fill [black] (1,0.5) circle (2pt) node[below,black,opacity=1] {\tiny{$\OP(p)$}};
\fill [ blue] (2,0.5) circle (2pt) node[below,black,opacity=1] {\tiny{$p$}};

\fill [ blue] (2,1) circle (2pt) node[above,black,opacity=1] {\tiny{$q$}};

\fill [ black] (0.5,1) circle (2pt) node[above,black,opacity=1] {\tiny{$\OP(q)$}};
\node at (1.5,-0.5) {(i)};

\begin{scope}[xshift=100]
\draw[step=0.25,black,opacity=0.1] (0,0) grid (3,2);
\fill [black] (2,0.5) circle (2pt) node[below,black,opacity=1] {\tiny{$\OP(p)$}};
\fill [ blue] (1,0.5) circle (2pt) node[below,black,opacity=1] {\tiny{$p$}};

\fill [ blue] (1,1) circle (2pt) node[above,black,opacity=1] {\tiny{$q$}};

\fill [ black] (2.5,1) circle (2pt) node[above,black,opacity=1] {\tiny{$\OP(q)$}};
\node at (1.5,-0.5) {(ii)};
\end{scope}

\end{tikzpicture}
\caption{$\CP(p) \in \MMC$. In Figure (i) $\CP(p) \in \RMMC$ and in Figure (ii) $\CP(p) \in \LMMC$}
\label{fig-coupling}
\end{figure}
\iflong\else\vspace{-2mm}\fi
Note that the symbol $\MMC$ mimics Fig.~\ref{fig-coupling}(i).
Let $\RMMC = \{ \CP(p)~|~\CP(p) \in \MMC$ and $\lrarrow{\OP(p)}{p} \}$. 
Similarly, $\LMMC = \{ \CP(p)~|~\CP(p) \in \MMC$ and $\lrarrow{p}{\OP(p)} \}$.

\begin{definition}[\MFC]
Subset $\MFC$ of $\CP(\GG)$, pronounced \emph{zag}, is defined as follows. 
For $p \in \GG$, let $q$ be the first point above $p$ and $\CP(p) = (\OP(p),\OP(q))$. We say 
that $\CP(p) \in \MFC$ if
(1) $\lrarrow{\OP(p)}{p}$ and ($\abarrow{\OP(q)}{p}$ or $_{p}\neswarrow^{\OP(q)}$), or its symmetric version
(2) $\lrarrow{p}{\OP(p)}$ and ($\abarrow{\OP(q)}{p}$ or $^{\OP(q)}\nwsearrow_{p}$).
\end{definition}
\iflong\else\vspace{-2mm}\fi
\begin{figure}[H]
\centering
\begin{tikzpicture}

\draw[step=0.25,black,opacity=0.1] (0,0) grid (2,1.5);

\fill [black] (0.5,0.5) circle (2pt) node[left,black,opacity=1] {\tiny{$\OP(p)$}};
\fill [ blue] (1.5,0.5) circle (2pt) node[below,black,opacity=1] {\tiny{$p$}};

\fill [ black] (1.5,1) circle (2pt) node[above,black,opacity=1] {\tiny{$q$}};
\node at (1,-0.5) {(i)};

\begin{scope}[xshift=70]
\draw[step=0.25,black,opacity=0.1] (0,0) grid (2,1.5);
\fill [black] (0.5,0.5) circle (2pt) node[left,black,opacity=1] {\tiny{$\OP(p)$}};
\fill [ blue] (1.5,0.5) circle (2pt) node[below,black,opacity=1] {\tiny{$p$}};
\fill [ blue] (1.5,1) circle (2pt) node[above,black,opacity=1] {\tiny{$q$}};
\fill [ black] (2,1) circle (2pt) node[above,black,opacity=1] {\tiny{$\OP(q)$}};
\node at (1,-0.5) {(ii)};
\end{scope}

\begin{scope}[xshift=140]
\draw[step=0.25,black,opacity=0.1] (0,0) grid (2,1.5);
\fill [black] (1.5,0.5) circle (2pt) node[below,black,opacity=1] {\tiny{$\OP(p)$}};
\fill [ blue] (0.5,0.5) circle (2pt) node[below,black,opacity=1] {\tiny{$p$}};

\fill [ black] (0.5,1) circle (2pt) node[above,black,opacity=1] {\tiny{$q$}};
\node at (1,-0.5) {(iii)};
\end{scope}

\begin{scope}[xshift=210]
\draw[step=0.25,black,opacity=0.1] (0,0) grid (2,1.5);
\fill [black] (1.5,0.5) circle (2pt) node[below,black,opacity=1] {\tiny{$\OP(p)$}};
\fill [ blue] (0.5,0.5) circle (2pt) node[below,black,opacity=1] {\tiny{$p$}};

\fill [ blue] (0.5,1) circle (2pt) node[above,black,opacity=1] {\tiny{$q$}};

\fill [ black] (0,1) circle (2pt) node[above,black,opacity=1] {\tiny{$\OP(q)$}};
\node at (1,-0.5) {(iv)};
\end{scope}

\end{tikzpicture}
\caption{$\CP(p) \in \MFC$.}
\label{fig:mfc}
\end{figure}
\iflong\else\vspace{-2mm}\fi
Note that the symbol $\MFC$ mimics Fig.~\ref{fig:mfc}(ii).

\iflong 
	The following observation follows from the definition of $\MFC$:
	\begin{observation}
	\label{obs:pointinside}
	If $\CP(p) = (\OP(p),\OP(q))$ and $\CP(p) \in \MFC$, then $p$ lies to the 
	left {\em(}right{\em)} of $\OP(p)$ in $^{\OP(q)}\Box_{\OP(p)} (_{\OP(p)}\Box^{\OP(q)})$.
	\end{observation}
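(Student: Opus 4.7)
The plan is a straightforward case analysis on the definition of $\MFC$, essentially unpacking what each condition says about the relative positions of $\OP(p)$, $p$, and $\OP(q)$, and then reading off that $p$ sits inside the claimed rectangle on the correct side of $\OP(p)$. I would split into the two symmetric cases and within each split further on the two sub-cases of the disjunction.

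First, I would handle Case~(1) of the $\MFC$ definition. Here $\lrarrow{\OP(p)}{p}$, so $\Time{\OP(p)} = \Time{p}$ and $\Key{\OP(p)} < \Key{p}$; in particular $p$ lies strictly to the right of $\OP(p)$. Since $q$ is the first point above $p$ we have $\Time{\OP(q)} < \Time{p} = \Time{\OP(p)}$. In the sub-case $\abarrow{\OP(q)}{p}$, one has $\Key{\OP(q)} = \Key{p} > \Key{\OP(p)}$, and in the sub-case $_p\neswarrow^{\OP(q)}$ one has $\Key{\OP(q)} > \Key{p} > \Key{\OP(p)}$. Either way $\OP(q)$ lies strictly to the north-east of $\OP(p)$, so the rectangle $_{\OP(p)}\Box^{\OP(q)}$ is well defined. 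Checking the four inequalities $\Key{\OP(p)} \le \Key{p} \le \Key{\OP(q)}$ and $\Time{\OP(q)} \le \Time{p} \le \Time{\OP(p)}$ then places $p$ inside $_{\OP(p)}\Box^{\OP(q)}$, on its bottom edge and strictly to the right of $\OP(p)$.

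Next, I would handle Case~(2), which is entirely symmetric. Here $\lrarrow{p}{\OP(p)}$ puts $p$ strictly to the left of $\OP(p)$ at the same time, and either $\abarrow{\OP(q)}{p}$ or $^{\OP(q)}\nwsearrow_{p}$ forces $\Key{\OP(q)} \le \Key{p} < \Key{\OP(p)}$, together with $\Time{\OP(q)} < \Time{\OP(p)}$. Thus $\OP(q)$ lies north-west of $\OP(p)$, the rectangle $^{\OP(q)}\Box_{\OP(p)}$ is well defined, and the same four-inequality check places $p$ on its bottom edge to the left of $\OP(p)$.

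There is really no obstacle here: the statement is essentially a restatement of the definition of $\MFC$ in the language of the two rectangle notations introduced earlier, and the only care needed is to verify in each of the four sub-cases that the relevant rectangle is non-degenerate (i.e.\ that $\OP(p)$ and $\OP(q)$ genuinely lie in a NE/NW configuration), which is immediate from the strict key inequality $\Key{\OP(p)} \ne \Key{\OP(q)}$ forced by $\OP(p), \OP(q) \in \XX$ coming from a permutation.
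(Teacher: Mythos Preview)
Your proposal is correct and takes essentially the same approach as the paper: the paper simply states that the observation follows from the definition of $\MFC$ without further argument, and your case analysis is precisely the explicit unpacking of that definition.
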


	Henceforth, we will
	abuse notation and use $\CP(p_1)$ (with $\CP(p_1) \in \MFC$) as an 
	ordered pair $(p,q)$ as well as $^q\Box_p$ (or $_p\Box^q$). 
	This makes sense as there is a one-to-one correspondence between 
	the tuples from $\XX$ and rectangles with endpoints 
in $\XX$ as $\XX$ comes from a permutation sequence. 

	We now show some properties of $\CP(\cdot)$.
	Let $\R_\GG := \{ p \in \GG~|~\lrarrow{\OP(p)}{p} \}$;
	define $\LL_\GG$ similarly.

	\begin{lemma}
	\label{lem:unique-coupling}
	Let $p_1,p_2 \in \R_\GG$. If $p_1 \ne p_2$ then $\CP(p_1) \ne \CP(p_2)$.
	\end{lemma}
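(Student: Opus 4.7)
The plan is to argue by contradiction, using the minimality of the set of points that $\GRE$ places while processing any single original point. Suppose there exist distinct $p_1,p_2\in\R_\GG$ with $\CP(p_1)=\CP(p_2)=(a,b)$. Since $\OP(p_1)=\OP(p_2)=a$, both $p_1$ and $p_2$ are marked points lying on the horizontal line $y=\Time{a}$, and $p_1,p_2\in\R_\GG$ gives $\Key{a}<\Key{p_1}$ and $\Key{a}<\Key{p_2}$. Without loss of generality, assume $\Key{p_1}<\Key{p_2}$. Let $q_i$ be the first point above $p_i$; then $\OP(q_i)=b$ implies $\Time{q_1}=\Time{q_2}=\Time{b}<\Time{a}$, while $\Key{q_i}=\Key{p_i}$.

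Next I will show that, at the moment $\GRE$ processes $a$, no rectangle with $a$ as a corner and with its other corner in column $\Key{p_2}$ is arborally unsatisfied; by the minimality/uniqueness clause in the description of $\GRE$, this forces the conclusion that $p_2$ would never have been added to $\GG$, contradicting $p_2\in\R_\GG$. Concretely, any candidate rectangle has the form $_a\Box^r$ with $\Key{r}=\Key{p_2}$ and $\Time{r}<\Time{a}$. Since $q_2$ is the first point above $p_2$, every such $r$ satisfies $\Time{r}\le\Time{q_2}=\Time{b}$.

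The key geometric observation is that $q_1$ always witnesses satisfaction of $_a\Box^{q_2}$. Indeed, $\Key{a}<\Key{q_1}=\Key{p_1}<\Key{p_2}=\Key{q_2}$ and $\Time{q_1}=\Time{b}=\Time{q_2}$, so $q_1$ lies on the top edge of $_a\Box^{q_2}$ strictly between its two corners, and in particular $q_1\notin\{a,q_2\}$. By Definition~\ref{def:rectangleDef}, $_a\Box^{q_2}$ is arborally satisfied. For any other candidate $r\ne q_2$, we have $\Time{r}<\Time{b}$, so $q_2$ itself lies on the right edge of $_a\Box^{r}$ (between its corners), and thus $_a\Box^{r}$ is already satisfied by $q_2$. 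Crucially, $q_1$ is indeed present in $\XX\cup\GG$ when $\GRE$ processes $a$, because $\Time{q_1}=\Time{b}<\Time{a}$.

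I do not anticipate a hard step here; the only points to double-check are (i) that $q_1\neq a,q_2$, handled by the strict key inequalities above, and (ii) that there is no point in column $\Key{p_2}$ at a time strictly between $\Time{b}$ and $\Time{a}$, which would otherwise give a candidate rectangle not covered by the case analysis — but this is precisely ruled out by the defining property that $q_2$ is the \emph{first} point above $p_2$. With these two observations in place, the minimality of the set of points added by $\GRE$ at time $\Time{a}$ yields the required contradiction.
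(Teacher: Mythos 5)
Your proof is correct and rests on the same key observation as the paper's: if $\CP(p_1)=\CP(p_2)$ forces $q_1$ and $q_2$ onto the same horizontal line, then $q_1$ lies on the top edge of $_a\Box^{q_2}$ strictly between its corners, so $_a\Box^{q_2}$ was already arborally satisfied when $\GRE$ processed $a$. The paper phrases the contradiction by deducing $\Time{q_1}\ne\Time{q_2}$ directly (hence $\OP(q_1)\ne\OP(q_2)$), while you deduce $p_2\notin\GG$; the two are the same argument with the contradiction landing in a different place, and your extra case for candidates $r\neq q_2$ is a harmless redundancy (it just re-derives that $q_2$ being the first point above $p_2$ makes $_a\Box^{q_2}$ the only rectangle in that column that could be unsatisfied).
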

	\begin{proof}

	Assume for contradiction that $\CP(p_1) = \CP(p_2)$. Then $p_1$ and $p_2$ are on the 
	same horizontal line. Assume w.l.o.g. that $\lrarrow{p_1}{p_2}$ and $\OP(p_1) = \OP(p_2) = p$. 
	Assume that while processing 
	$p \in \XX$, $\GRE$ marks point $p_1$ and $p_2$ due to 
	unsatisfied rectangle $_p\Box^{q_1}$ and $_p\Box^{q_2}$.
	This implies that $^{q_1}\senwarrow_{q_2}$, as otherwise $_p\Box^{q_2}$ is satisfied by $q_1$ 
	(since $\lrarrow{p_1}{p_2}$, $\lrarrow{q_1}{q_2}$). 
	Also note that $\CP(p_1) = (p, \OP(q_1))$ 
	and $\CP(p_2) = (p,\OP(q_2))$. 
	Since $\Time{q_1} \ne \Time{q_2}$, $\OP(q_1) \ne \OP(q_2)$.
	So $\CP(p_1) \ne \CP(p_2)$, contradicting our assumption. 
	\end{proof}
	This implies that $|\R_\GG| \le |\CP(\GG)|$. 
	By symmetry, we also have $|\LL_\GG| \le |\CP(\GG)|$. 
	This gives 
	\begin{corollary}
	\label{cor:couplingY}
	$|\GG| = |L_\GG| + |R_\GG| \le  2 |\CP(\GG)|$.
	\end{corollary}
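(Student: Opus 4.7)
I would proceed by contradiction. Assume $p_1 \ne p_2$ but $\CP(p_1) = \CP(p_2)$. Since the first coordinate of $\CP(p_i)$ is $\OP(p_i)$, this forces $\OP(p_1) = \OP(p_2) =: p$. Recall that $\GRE$ only places marked points on the horizontal line $y = \Time{p}$ while processing $p$; hence $p_1$ and $p_2$ lie on this common horizontal line, and without loss of generality $\lrarrow{p_1}{p_2}$. Since $p_1, p_2 \in \R_\GG$, we also have $\lrarrow{p}{p_1}$ and $\lrarrow{p}{p_2}$, so $\Key{p} < \Key{p_1} < \Key{p_2}$.

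The next step is to identify the unsatisfied rectangles that caused $\GRE$ to mark $p_1$ and $p_2$. Since each $p_i$ sits to the right of its original point $p$, the rectangle witnessing its addition must be of the form $_p\Box^{q_i}$, where $q_i$ is a point above $p_i$ with $\Key{q_i} = \Key{p_i}$ (by the corner-placement rule of $\GRE$). Moreover, by minimality of the added set, $q_i$ is in fact the first point above $p_i$, so $\CP(p_i) = (p, \OP(q_i))$. Thus the task reduces to showing $\OP(q_1) \ne \OP(q_2)$.

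The heart of the argument is a geometric claim: $^{q_1}\senwarrow_{q_2}$, i.e., $\Time{q_1} < \Time{q_2}$. If this failed, then $\Time{q_2} \le \Time{q_1} < \Time{p}$ together with the already-established $\Key{p} < \Key{q_1} < \Key{q_2}$ would place $q_1$ inside the closed rectangle $_p\Box^{q_2}$; but then $q_1$ would arborally satisfy $_p\Box^{q_2}$ at the moment $\GRE$ considered it, contradicting the fact that $\GRE$ marked $p_2$ precisely because $_p\Box^{q_2}$ was unsatisfied. This is the main (and only really delicate) step; the rest is bookkeeping.

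Given $\Time{q_1} < \Time{q_2}$, the conclusion is immediate: for any point $r$ (marked or original), $\Time{\OP(r)} = \Time{r}$ by the definition of $\OP$, so $\Time{\OP(q_1)} \ne \Time{\OP(q_2)}$ and therefore $\OP(q_1) \ne \OP(q_2)$. This contradicts $\CP(p_1) = \CP(p_2)$, finishing the proof. A symmetric statement for $\LL_\GG$ follows by mirroring the argument, which (combined with $\GG = \LL_\GG \cup \R_\GG$) yields the corollary $|\GG| \le 2|\CP(\GG)|$.
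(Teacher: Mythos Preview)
Your proof is correct and follows essentially the same route as the paper: you reduce the corollary to the injectivity of $\CP$ on $\R_\GG$ (the paper's Lemma~\ref{lem:unique-coupling}), and your argument for that injectivity---forcing a common $\OP$, identifying the witnessing rectangles $_p\Box^{q_i}$, showing $^{q_1}\senwarrow_{q_2}$ via the satisfaction constraint, and concluding $\OP(q_1)\ne\OP(q_2)$ from distinct times---matches the paper's proof step for step.
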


	We now show that for any $p \in \GG$, $\CP(p)$ is either in $\MMC$ or in $\MFC$ (thus 
	the third possibility of pairs in Fig.~\ref{fig:mfc-or-mmc}, or its symmetric version, does not arise):

	\begin{lemma}
	\label{lem:coupling-mmc-or-mfc}
	For $p \in \GG$, either $\CP(p) \in \MMC$ or $\CP(p) \in \MFC$. 
	\end{lemma}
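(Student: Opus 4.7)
The plan is a case analysis on the horizontal position of $\OP(q)$ relative to $\OP(p)$ and $p$, where $q$ is the first point above $p$ (as in the definition of $\CP$). By the left/right symmetry in the definitions of $\MMC$ and $\MFC$, it suffices to treat the case $\lrarrow{\OP(p)}{p}$. Since $p \in \GG$, $\GRE$ marked $p$ while processing $\OP(p)$, which requires that some rectangle $_{\OP(p)}\Box^{q'}$ with $q' \in \XX \cup \GG$ on the line $x = \Key{p}$ above $\OP(p)$ be arborally unsatisfied at that moment; in particular, at least one such $q'$ exists, so the first point $q$ above $p$ is well-defined.

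If $q \in \XX$, then $\OP(q) = q$ is directly above $p$, so $\CP(p) \in \MFC$ via the $\abarrow{\OP(q)}{p}$ clause. Otherwise $q \in \GG$, so $\OP(q) \neq q$ lies on the horizontal line $y = \Time{q}$ at a column distinct from $\Key{q} = \Key{p}$ (else $\OP(q) = q$) and from $\Key{\OP(p)}$ (since $\XX$ is a permutation, and $\OP(q) \neq \OP(p)$ because $\Time{\OP(q)} = \Time{q} < \Time{p} = \Time{\OP(p)}$). Hence $\Key{\OP(q)}$ lies in one of three open intervals cut out by $\Key{\OP(p)}$ and $\Key{p}$: if $\Key{\OP(q)} < \Key{\OP(p)}$ then $^{\OP(q)}\nwsearrow_{\OP(p)}$ and $\CP(p) \in \MMC$, while if $\Key{\OP(q)} > \Key{p}$ then $_p\neswarrow^{\OP(q)}$ and $\CP(p) \in \MFC$. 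So the argument reduces to ruling out the middle interval $\Key{\OP(p)} < \Key{\OP(q)} < \Key{p}$.

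This last case is the only real obstacle, and I would dispose of it by showing that if it held then no rectangle on the line $x = \Key{p}$ could have forced $\GRE$ to mark $p$, contradicting $p \in \GG$. The verification is purely geometric: $\OP(q)$ sits strictly in the interior of the top edge of $_{\OP(p)}\Box^q$ (since $\Key{\OP(p)} < \Key{\OP(q)} < \Key{p}$ and $\Time{\OP(q)} = \Time{q}$), so this rectangle is already satisfied by $\OP(q)$; and for every other candidate $q'$ on $x = \Key{p}$ with $\Time{q'} < \Time{q}$, the point $q$ lies strictly in the interior of the right edge of $_{\OP(p)}\Box^{q'}$ (not at a corner, as $q \neq q'$), so that rectangle is satisfied by $q$. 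Thus no rectangle on $x = \Key{p}$ is unsatisfied when $\OP(p)$ is processed, contradicting the fact that $\GRE$ placed the marked point $p$, and completing the case analysis.
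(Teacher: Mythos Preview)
Your proof is correct and takes a slightly different (and more elementary) route than the paper's. Both proofs isolate the same ``bad'' configuration $\Key{\OP(p)} < \Key{\OP(q)} < \Key{p}$ (with $\lrarrow{\OP(p)}{p}$), but they dispose of it differently. The paper invokes Lemma~\ref{lem:hidden-right}: since $\OP(q)$ hides $q$ from $\OP(p)$ and $p$ is the first point below $q$, the lemma forces $^{\OP(q)}\nwsearrow_{\OP(p)}$, contradicting $_{\OP(p)}\neswarrow^{\OP(q)}$. You instead argue directly that $\OP(q)$ lies on the top edge of $_{\OP(p)}\Box^{q}$ (strictly between its corners), so this rectangle is already arborally satisfied when $\OP(p)$ is processed, and hence $\GRE$ would not have marked $p$. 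Your argument is self-contained and avoids the auxiliary hiding lemma; the paper's route has the advantage of reusing a structural lemma that is needed elsewhere. One minor remark: your second bullet about candidates $q'$ with $\Time{q'} < \Time{q}$ is redundant, since $q$ being the first point above $p$ already forces the unsatisfied rectangle responsible for $p$ to be $_{\OP(p)}\Box^{q}$; but it does no harm.
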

	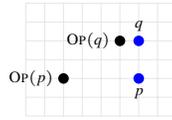
\begin{figure}[H]
\centering
\begin{tikzpicture}

\draw[step=0.25,black,opacity=0.1] (0,0) grid (2,1.5);

\fill [black] (0.5,0.5) circle (2pt) node[left,black,opacity=1] {\tiny{$\OP(p)$}};
\fill [ blue] (1.5,0.5) circle (2pt) node[below,black,opacity=1] {\tiny{$p$}};

\fill [ blue] (1.5,1) circle (2pt) node[above,black,opacity=1] {\tiny{$q$}};
\fill [ black] (1.25,1) circle (2pt) node[left,black,opacity=1] {\tiny{$\OP(q)$}};

\end{tikzpicture}
\caption{Illustration of the bad case which arises in case (3) of the proof of Lemma \ref{lem:coupling-mmc-or-mfc}}
\label{fig:mfc-or-mmc}
\end{figure}
	\begin{proof}
	W.l.o.g., let $\lrarrow{\OP(p)}{p}$ and let $q$ be the first point above $p$. 
	So, $\CP(p) = (\OP(p),\OP(q))$. 
	We have $\CP(p) \notin \MMC$ and $\CP(p) \notin \MFC$ 
	only if $_{\OP(p)}\neswarrow^{\OP(q)}$ and $^{\OP(q)}\nwsearrow_{p}$ (see Fig.~\ref{fig:mfc-or-mmc}).
	So, $\OP(q)$ hides $q$ from $\OP(p)$ (as $\lrarrow{\OP(q)}{q}$ in $_{\OP(p)}\Box^q$) 
	and $p$ is the first point below $q$ such that $^{\OP(q)}\nwsearrow_{p}$ (this 
	follows from the definition of $\CP(p)$ where we said that $q$ is the first point above $p$).
	Then by Lemma~\ref{lem:hidden-right},
	$^{\OP(q)}\nwsearrow_{\OP(p)}$. This contradicts our assumption that  $_{\OP(p)}\neswarrow^{\OP(q)}$.
	So our assumption on the position of $\OP(q)$ must be false.
	\end{proof}

	This shows that $\CP(\GG)$ is partitioned by $\MMC$ and $\MFC$:
	\begin{corollary}
	\label{lem:copulingsamecardinality}
	$|\MMC| + |\MFC| = |\CP(\GG)|$.
	\end{corollary}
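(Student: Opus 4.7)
The plan is to deduce this corollary directly from Lemma~\ref{lem:coupling-mmc-or-mfc} by showing that $\MMC$ and $\MFC$ partition $\CP(\GG)$, so their cardinalities add to $|\CP(\GG)|$. Since both $\MMC$ and $\MFC$ are defined as subsets of $\CP(\GG)$, the inclusion $\MMC \cup \MFC \subseteq \CP(\GG)$ is automatic; the reverse inclusion $\CP(\GG) \subseteq \MMC \cup \MFC$ is exactly the statement of Lemma~\ref{lem:coupling-mmc-or-mfc}, which guarantees that every pair $\CP(p)$ lies in at least one of the two classes. Hence $\CP(\GG) = \MMC \cup \MFC$.

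The remaining work is to verify disjointness, $\MMC \cap \MFC = \emptyset$. I would split on whether $p$ lies to the right of or left of $\OP(p)$ and then compare the geometric conditions defining $\MMC$ and $\MFC$. Under $\lrarrow{\OP(p)}{p}$, the $\MMC$ condition $^{\OP(q)}\nwsearrow_{\OP(p)}$ forces $\Key{\OP(q)} < \Key{\OP(p)}$, whereas each of the $\MFC$ alternatives ($\abarrow{\OP(q)}{p}$ or $_p\neswarrow^{\OP(q)}$) forces $\Key{\OP(q)} \ge \Key{p} > \Key{\OP(p)}$. These two key inequalities are incompatible, so the two memberships cannot hold simultaneously. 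The case $\lrarrow{p}{\OP(p)}$ is symmetric, giving the disjointness.

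The main subtlety I anticipate is that $\MMC$ and $\MFC$ are defined as sets of \emph{pairs} rather than sets of marked points, so one should check that the $\MMC$-vs-$\MFC$ label attached to a pair $(P,Q) \in \CP(\GG)$ does not depend on which $p \in \GG$ realizes it. Lemma~\ref{lem:unique-coupling} (and its symmetric analogue) already ensures that $\CP$ is injective on $\R_\GG$ and on $\LL_\GG$ separately, so the only residual worry is that some $p_1 \in \R_\GG$ and $p_2 \in \LL_\GG$ produce the same pair but opposite labels. This would need to be resolved by the same key-coordinate comparison used above, so that the classification comes out consistently from either side. Once this is checked, combining $\CP(\GG) = \MMC \cup \MFC$ with $\MMC \cap \MFC = \emptyset$ yields the equality $|\MMC| + |\MFC| = |\CP(\GG)|$.
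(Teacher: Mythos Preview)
Your approach matches the paper's: the corollary is stated immediately after Lemma~\ref{lem:coupling-mmc-or-mfc} with the one-line justification that $\MMC$ and $\MFC$ partition $\CP(\GG)$, and your explicit key-coordinate check that the $\MMC$ and $\MFC$ conditions for a fixed $p$ are mutually exclusive is exactly the verification the paper leaves implicit. The well-definedness subtlety you flag---whether a pair realized simultaneously by some $p_1\in\R_\GG$ and $p_2\in\LL_\GG$ could land in $\MMC$ via one and $\MFC$ via the other---is not addressed in the paper either; note, however, that only the inequality $|\CP(\GG)|\le|\MMC|+|\MFC|$ is used downstream, so the issue is harmless for the main result even if left unresolved.
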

\fi

We now show some properties of $\CP(\cdot)$ when the sequence $\XX$ is decomposable.
The following property is independent of whether $\CP(\cdot)$ is in 
$\MMC$ or $\MFC$.

\iflong
	\begin{lemma}
	\label{lem:notoutsideblock}
\else
	\newtheorem*{lem:notoutsideblock}{Lemma \ref{lem:notoutsideblock}}
	\begin{lem:notoutsideblock}	
\fi
Let $p_1 \in \GG$ and $p,q \in \XX$ and $\CP(p_1) = (p,q)$.
If $p \in B$ but $p \ne \Top(B)$, then $q \in B$.
\iflong
	\end{lemma}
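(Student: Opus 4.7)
The plan is to argue by contradiction: assume $q \notin B$. Since $q \in \XX$ and $B$ is a block with contiguous key interval $[c,d]$ and contiguous time interval $[a,b]$, and since $q_1$ is above $p_1$ while $\Time{p_1} = \Time{p} \le b$, the assumption $q \notin B$ forces both $\Key{q} \notin [c,d]$ and $\Time{q} = \Time{q_1} < a = \Time{\Top(B)}$ (using that an original whose time lies in $[a,b]$ must already belong to $B$). By symmetry I will treat only the case $\lrarrow{p}{p_1}$, in which $\Key{p_1} > \Key{p}$.

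The sub-case $\Key{p_1} \in [c,d]$ is dispatched immediately: then $q_1$ sits at $(\Key{p_1}, \Time{q_1})$ with $\Key{q_1} \in [c,d]$ and $\Time{q_1} < a$, placing $q_1$ in $\UPB(B)$ and contradicting Lemma \ref{lem:upperbox}. Hence $\Key{p_1} \notin [c,d]$; combined with $\Key{p_1} > \Key{p} \ge c$, this forces $\Key{p_1} > d$.

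Next I would pin down why $\GRE$ actually marked $p_1$: there must be an unsatisfied rectangle $_p\Box^s$ with $s$ on the vertical line $x = \Key{p_1}$ and $\Time{s} < \Time{p}$. Using that $q_1$ is the first point above $p_1$, any candidate $s$ strictly above $q_1$ would place $q_1$ inside $_p\Box^s$ and satisfy it, so $s = q_1$ and $_p\Box^{q_1}$ is unsatisfied at time $\Time{p}$. Letting $t = \Top(B) = (\Key{t},a)$ with $\Key{t} \in [c,d] \subseteq [c,\Key{p_1})$ and $\Time{t} = a \in (\Time{q_1},\Time{p})$, the unsatisfiedness forces $\Key{t} < \Key{p}$ (else $t$ lies inside $_p\Box^{q_1}$).

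The main obstacle is closing the contradiction from this configuration. The plan is to track $q_1$ forward to the moment $\GRE$ processes $t$: if $_t\Box^{q_1}$ were unsatisfied at time $a$, $\GRE$ would mark $(\Key{p_1},a)$, but this lies strictly between $q_1$ and $p_1$ on the vertical $x = \Key{p_1}$, contradicting the minimality of $q_1$. So $_t\Box^{q_1}$ is arborally satisfied in the final $\XX \cup \GG$, and Lemma \ref{lem:greedy-property} yields a witness $r \ne t, q_1$ either directly above $t$ (ruled out by Observation \ref{obs:nothing-above} and uniqueness of originals at key $\Key{t}$) or directly to the right of $t$ on the line $y = a$. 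In the latter case $r$ is a marked point inserted while processing $t$; analyzing the earlier point $s'$ on $x = \Key{r}$ whose presence forced $\GRE$ to mark $r$, and then applying Lemma \ref{lem:hidden-right} to peel back the resulting chain of hiding points, yields either an immediate contradiction with Lemma \ref{lem:upperbox} (when $\Key{r} \in [c,d]$, so $s' \in \UPB(B)$) or the conclusion $\Key{q} \in [c,d]$ after a careful $\MMC$ versus $\MFC$ case split. Managing this final case split, and ensuring the hiding chain terminates inside $B$ rather than escaping further to the right, is the hardest part of the proof.
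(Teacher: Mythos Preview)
Your argument is on the right track through the point where you place $t = \Top(B)$ to the north-west of $p$ and observe that if $_t\Box^{q_1}$ were unsatisfied at time $a$, then $\GRE$ would mark $(\Key{p_1},a)$, contradicting that $q_1$ is the first point above $p_1$. The gap is what you do next: you treat ``$_t\Box^{q_1}$ is satisfied'' as a live case and launch into a chain-of-hiding analysis that you yourself flag as incomplete. That case is in fact vacuous, and recognizing this is the whole proof.

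Concretely, any point $w \ne t, q_1$ in $_t\Box^{q_1}$ existing before time $a$ has $\Key{w} \in (\Key{t},\Key{q_1}]$ and $\Time{w} < a$. If $\Key{w} \le d$ then $\Key{w} \in [c,d]$ and $\Time{w} < a$, so $w \in \UPB(B)$, contradicting Lemma~\ref{lem:upperbox}. If $\Key{w} > d \ge \Key{p}$ then $w$ lies in $_p\Box^{q_1}$ with $w \ne p, q_1$, contradicting that $\GRE$ found $_p\Box^{q_1}$ unsatisfied when processing $p$. Hence $_t\Box^{q_1}$ \emph{is} unsatisfied at time $a$, $\GRE$ marks a point $r'$ at $(\Key{p_1},a)$, and this $r'$ satisfies $_p\Box^{q_1}$ --- contradiction. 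This is exactly the paper's proof: it invokes Lemma~\ref{lem:upperbox} to conclude $_{\Top(B)}\Box^{q_1}$ is unsatisfied and finishes in one line. Your separate dispatch of the sub-case $\Key{p_1} \in [c,d]$ via $\UPB(B)$ is fine but unnecessary once you make the direct argument above; drop the entire ``satisfied'' branch and the hiding-chain plan.
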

\else	
	\end{lem:notoutsideblock}
\fi

\iflong 
	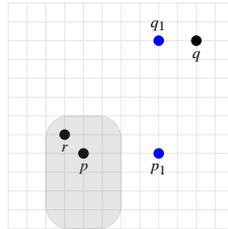
\begin{figure}[H]
\centering
\begin{tikzpicture}
[round/.style={rectangle, rounded corners=3mm, minimum size=20mm, draw=black!50}]

\draw[step=0.25,black,opacity=0.1] (0,0) grid (3,3);

\fill [black] (1,1) circle (2pt) node[below,black,opacity=1] {\tiny{$p$}};
\fill [black] (0.75,1.25) circle (2pt) node[below,black,opacity=1] {\tiny{$r$}};
\fill [ blue] (2,1) circle (2pt) node[below,black,opacity=1] {\tiny{$p_1$}};

\fill [ blue] (2,2.5) circle (2pt) node[above,black,opacity=1] {\tiny{$q_1$}};
\fill [ black] (2.5,2.5) circle (2pt) node[below,black,opacity=1] {\tiny{$q$}};

\draw [round,fill=gray,opacity=0.2] (0.5,0) rectangle (1.5,1.5);






\end{tikzpicture}
\caption{Lemma \ref{lem:notoutsideblock} shows that the scenario depicted in the figure cannot occur, i.e., $p \ne \Top(B)$, $\CP(p_1) = (p,q)$ and $q \notin B$ }
\label{fig:notoutsideblock}
\end{figure}
	\begin{proof}
	Assume for contradiction that $\CP(p_1) = (p,q)$ such that 
	$q \notin B$ (see Fig.~\ref{fig:notoutsideblock} for an illustration).
	W.l.o.g., let $\lrarrow{p}{p_1}$.
	Let $_p\Box^{q_1}$ be the unsatisfied rectangle, that made GREEDY mark $p_1$ while processing $p \in \XX$.
	Note that $\OP(q_1)=q$ can lie to the right or left of $q_1$ or it can be same as $q_1$ (though 
	in Fig.~\ref{fig:notoutsideblock} it lies to the right of $q_1$).

	Let $\Top(B) = r$. Note that $r$ cannot lie above $p$ (as points from $\XX$ come from a 
	permutation sequence) or to the north-east of 
	$p$ as then $_p\Box^{q_1}$ is already arborally satisfied 
	due to $r$. So $^r\nwsearrow_p$ as shown in Fig.~\ref{fig:notoutsideblock}.
	By Lemma \ref{lem:upperbox}, 
	no point lies in $\UPB(B)$. This implies that
	$\GRE$ should find even $_r\Box^{q_1}$ unsatisfied. In that case,
	$\GRE$ should put another marked point, say $r'$ to the right 
	of $r$ below $q_1$. However, this contradicts our deduction 
	that while processing $p$, $\GRE$ found $_p\Box^{q_1}$ unsatisfied (because then $_p\Box^{q_1}$
	is already satisfied by $r'$). So we arrive at a contradiction and
	our assumption that $q \notin B$ must be false.

	\end{proof}
\fi
In words, the above lemma says that for a block $B$ all the points marked 
by $\GRE$ for $p \in B$ have pairs \emph{local} to $B$
if $\Top(B) \ne p$. However, if $\Top(B) = p$, the above 
claim does not hold. In Section \ref{subsec:decomposable-sequence}, 
we saw that there can be many blocks for which $p = \Top(\cdot)$.
Let $\RT(p) = B$. For such $p, B$, we now show that 
all points marked by $\GRE$ will have {\em non-local} pairs.

\iflong
	\begin{lemma} 
	\label{lem:reversetop}
\else
	\newtheorem*{lem:reversetop}{Lemma \ref{lem:reversetop}}
	\begin{lem:reversetop}	
\fi
Let $p,q \in \XX$, $p_1 \in \GG$ and $\RT(p) = B$. If $\CP(p_1) = (p,q)$, then 
$q \in B'$ where $B' \in \SB(B)$.
\iflong
	\end{lemma}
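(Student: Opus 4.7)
The plan is to deduce this lemma as an almost immediate consequence of Lemma~\ref{lem:notoutsideblock}, together with the definitions of $\RT(\cdot)$, $\Top(\cdot)$, and $\CP(\cdot)$. The key observation is that $\RT(p) = B$ means $B$ is the block \emph{closest to the root} on which $p$ is the top, so in particular $\Top(\PP(B)) \ne p$. This gives us the handle we need to invoke Lemma~\ref{lem:notoutsideblock} at the block $\PP(B)$.

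First I would argue that $\PP(B)$ exists. Since $\CP(p_1) = (p,q)$, the definition of $\CP(\cdot)$ requires a point above $p_1$ on the line $y = \Time{p_1} = \Time{p}$, so $p$ is not the globally earliest key in time. Hence $B$ cannot be the root of $\TD(\XX)$, and $\PP(B)$ is well defined. Next, because $\RT(p)=B$ is the block closest to the root with $\Top(\cdot)=p$, we have $\Top(\PP(B)) \ne p$, so $p \in \PP(B)$ but $p \ne \Top(\PP(B))$. Applying Lemma~\ref{lem:notoutsideblock} to the block $\PP(B)$ then yields $q \in \PP(B)$.

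It remains to rule out $q \in B$, which would then force $q$ to lie in some sibling block $B' \in \SB(B)$ by the definition of decomposition. For this I would use the time ordering built into $\CP(\cdot)$: as noted right after the definition of $\CP$, if $\CP(p_1) = (p, q)$ then $\Time{q} < \Time{p}$. But $p = \Top(B)$ means $p$ has the minimum $y$-coordinate among all keys in $B$, so no point of $B$ can have time strictly less than $\Time{p}$. Therefore $q \notin B$, and combined with $q \in \PP(B)$ we conclude $q \in B'$ for some $B' \in \SB(B)$.

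I do not anticipate any real obstacle; the proof is a short bookkeeping argument that glues together the definition of $\RT$, the time monotonicity in the definition of $\CP$, and Lemma~\ref{lem:notoutsideblock}. The only subtlety is making sure $\PP(B)$ exists, which is where the existence of the point $q$ in the pair $\CP(p_1)$ is used.
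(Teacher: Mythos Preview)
Your proposal is correct and follows essentially the same route as the paper: use $\Time{q} < \Time{p}$ together with $p = \Top(B)$ to rule out $q \in B$, use $\RT(p)=B$ to get $p \ne \Top(\PP(B))$, and then invoke Lemma~\ref{lem:notoutsideblock} at $\PP(B)$ to conclude $q \in \PP(B)$ and hence $q$ lies in some sibling $B' \in \SB(B)$. The only addition you make is the explicit check that $\PP(B)$ exists, which the paper leaves implicit.
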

\else	
	\end{lem:reversetop}
\fi
\iflong 
	\begin{proof}
	By definition, if $\CP(p_1) = (p,q)$ then $\Time{q} < \Time{p}$.
	Since $p$ is the first original point in $B$ with the least $y$-coordinate, 
	$q \notin B$. Since $\RT(p)=B$, $p \ne \Top(\PP(B))$. 
	Hence, by Lemma~\ref{lem:notoutsideblock}, $q \in \PP(B)$.
	This implies that $q \in B'$ such that $B' \in \SB(B)$.
	\end{proof}
\fi

\iflong\else\vspace{-5mm}\fi

\section{Upper bounding $|\MMC|$}
\label{sec:mmc}
\iflong\else\vspace{-3mm}\fi
In this section, we prove 
\iflong\else\vspace{-2mm}\fi
\begin{theorem}
\label{thm:mmcfinalbound}
$|\MMC| \leq 2n(k-1)$.
\end{theorem}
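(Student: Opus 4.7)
The plan is to show $|\RMMC|\le n(k-1)$; a symmetric argument gives $|\LMMC|\le n(k-1)$, and since $\RMMC$ and $\LMMC$ are disjoint, we then conclude $|\MMC|=|\RMMC|+|\LMMC|\le 2n(k-1)$.

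By Lemma~\ref{lem:reversetop}, for every pair $(p,q)\in\CP(\GG)$ the second coordinate $q$ lies in some sibling block $B'\in\SB(\RT(p))$, and by $k$-decomposability $|\SB(\RT(p))|\le k-1$. Partitioning $\RMMC$ by the sibling block containing $q$, the bound $|\RMMC|\le n(k-1)$ reduces to the following uniqueness claim: \emph{for every $p\in\XX$ and every $B'\in\SB(\RT(p))$, at most one pair $(p,q)\in\RMMC$ has $q\in B'$}. Granting this, $|\RMMC|\le\sum_{p\in\XX}|\SB(\RT(p))|\le n(k-1)$.

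To prove the claim, suppose for contradiction that two distinct pairs $(p,q_1),(p,q_2)\in\RMMC$ both satisfy $q_i\in B'$; WLOG $\Time{q_1}<\Time{q_2}$. By Lemma~\ref{lem:unique-coupling}, there are unique marked points $p_1^{(1)},p_1^{(2)}\in\R_\GG$ with $\CP(p_1^{(i)})=(p,q_i)$; both lie on the row $y=\Time{p}$ strictly right of $p$. Let $q'_i$ be the first point above $p_1^{(i)}$, so $\OP(q'_i)=q_i$ and $\Time{q'_i}=\Time{q_i}$; in particular $\Time{q'_1}<\Time{q'_2}$. Consider the axis-aligned rectangle with $p_1^{(1)}$ and $q'_2$ as opposite corners. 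Assuming WLOG that $\Key{p_1^{(1)}}<\Key{p_1^{(2)}}$ (the symmetric sub-case uses Lemma~\ref{lem:greedy-property-left}), the rectangle is $_{p_1^{(1)}}\Box^{q'_2}$, and Lemma~\ref{lem:greedy-property} supplies a witness $r\in(\XX\cup\GG)\setminus\{p_1^{(1)},q'_2\}$ in this rectangle with $\abarrow{r}{p_1^{(1)}}$ or $\lrarrow{p_1^{(1)}}{r}$. The column alternative is ruled out immediately: such $r$ would lie in the column of $p_1^{(1)}$ at time $\Time{r}\ge\Time{q'_2}>\Time{q'_1}$, contradicting $q'_1$ being the first point above $p_1^{(1)}$. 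In the row alternative, $r$ must itself be a marked point $p_1^{(j)}\in\R_\GG$ with $\OP(p_1^{(j)})=p$ and $\Key{p_1^{(j)}}\in(\Key{p_1^{(1)}},\Key{p_1^{(2)}}]$; we then iterate with $p_1^{(j)}$ in place of $p_1^{(1)}$. Since the horizontal key-gap between the two chosen marked points strictly shrinks at each step, and only finitely many marked points lie on $y=\Time{p}$, the iteration must eventually terminate in the column alternative, yielding the desired contradiction.

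The main obstacle is to verify, across iterations, that the first point above the current intermediate marked point still has time strictly below $\Time{q_2}$; only then does the column alternative continue to contradict minimality. We expect this subsidiary fact to follow from the contiguity of the time range of $B'$ as a sibling block of $\RT(p)$ together with Lemma~\ref{lem:hidden}, which constrains the first-above candidates in the intermediate columns to avoid the time range $[\Time{q_2},\Time{p})$. Carrying out this uniform control across all intermediate marked points is the crux of the argument.
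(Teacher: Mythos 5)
Your reduction to the per-block uniqueness claim (``for every $p\in\XX$ and every $B'\in\SB(\RT(p))$, at most one pair $(p,q)\in\RMMC$ has $q\in B'$'') matches the paper's structure: this is exactly what the paper's Lemma~\ref{lem:finalmmc} asserts, and summing over $p$ and its at most $k-1$ siblings gives the bound. The gap is in your attempted proof of the uniqueness claim.

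Your iteration never uses the hypothesis $q_1,q_2\in B'$ anywhere in its mechanics. If it were sound, it would prove that for each $p$ there is at most \emph{one} pair $(p,q)\in\RMMC$ in total, which is false (there can be up to $k-1$). This is a signal that a contradiction cannot actually be extracted from the iteration as designed, and indeed it cannot: the row alternative of Lemma~\ref{lem:greedy-property} is always satisfiable. Concretely, the point $p_1^{(2)}$ sits at the bottom-right corner of every rectangle $_{r}\Box^{q'_2}$ you form (it is in the closed rectangle and on the row through $r$), so at every step $r=p_1^{(2)}$ is an admissible row witness, and once the current marked point is the one key-adjacent to $p_1^{(2)}$ the only available row witness \emph{is} $p_1^{(2)}$ while the rectangle $_{p_1^{(2)}}\Box^{q'_2}$ is degenerate. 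Lemma~\ref{lem:greedy-property} guarantees a witness exists, not that it must be a column witness; so the iteration terminates without forcing the column alternative and you obtain no contradiction. (Your ``subsidiary fact'' that first-above points of intermediate marked points have time strictly less than $\Time{q'_2}$ is actually fine --- it follows from the staircase structure of $\GRE$'s unsatisfied rectangles at time $\Time{p}$ --- but it does not rescue the argument, because it only controls the column alternative, not the row one.)

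The paper's proof of Lemma~\ref{lem:finalmmc} takes a different route. Via Lemma~\ref{lem:mmc-witness} (using $\UP(r_2)$, the hiding relation, and Lemmas~\ref{lem:hidden} / \ref{lem:hidden-right}), it exhibits an \emph{original} point $s\in\XX$ to the north-east of $p$ whose time lies strictly between $\Time{\OP(r_1)}$ and $\Time{\OP(r_2)}$. Because $s$ is to the right of $p$ while $\OP(r_1),\OP(r_2)$ are to its left, key-contiguity of the block $B_1$ containing $\OP(r_1)$ forces $s\notin B_1$; then time-contiguity of $B_1$ forces $\OP(r_2)\notin B_1$. This is where the block structure enters, and it is precisely the ingredient missing from your approach.
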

\iflong\else\vspace{-2mm}\fi
Consider a point $p \in \XX$. 
In Section \ref{sec:coupling}, we proved Lemma~\ref{lem:reversetop} which says that 
if $\CP(p_1) = (p,q_1)$ and $\RT(p)=B$, then $q_1 \in B_1$ where $B_1 \in \SB(B)$. We will now prove a certain
strengthening of this lemma.  
Let $\RMMC_p = \{ \CP(p_i) \in \RMMC~|~ \CP(p_i) = (p,q_i)\text{ for some }q_i \in \XX \}$.
In \iflong the next \else this \fi 
section, we prove Lemma~\ref{lem:finalmmc} which essentially says that if $\CP(p_1), \ldots, \CP(p_\ell) \in \RMMC_p$
with $\CP(p_i)=(p, q_i)$, then blocks $B_i$ containing $q_i$ are pairwise distinct. 
By the $k$-decomposability of $\XX$, there are at most $k-1$ siblings of $B$, hence we have $\ell \le k-1$, 
which gives $|\RMMC_p|\le k-1$. 
The same reasoning holds for $\LMMC_p$. And this immediately gives the bound on $|\MMC|$.






The following lemma will be used in proving Lemma~\ref{lem:finalmmc}.

\iflong
	\begin{lemma}
	\label{lem:mmc-witness}
	For $p \in \XX$, let $p_1, p_{2} \in \GG$ be two points 
	such that $p \leftrightarrow p_1 \leftrightarrow p_{2}$ and $\CP(p_1), \CP(p_{2}) \in \RMMC_p$. 
	Let $_p\Box^{r_1}$ and $_p\Box^{r_{2}}$ be the two unsatisfied rectangles that lead
	$\GRE$ to mark $p_1$ and $p_{2}$, respectively, while processing $p$ (thus $^{r_1}\nwsearrow_{r_{2}}$).
	Then there is a point $s \in \XX$ such that $^{r_1}\nwsearrow_{s}$ and 
	($\abarrow{s}{r_{2}}$ or $^{s}\nwsearrow_{r_{2}}$ or $_{r_{2}}\neswarrow^{s}$).   
	\end{lemma}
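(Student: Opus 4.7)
The plan is to find $s$ by walking up the vertical line $x=\Key{r_2}$, starting from $r_2$.

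First, I would observe that $r_2\in\GG$. Since $\CP(p_2)=(p,\OP(r_2))$ lies in $\RMMC_p$, we have $^{\OP(r_2)}\nwsearrow_p$, which forces $\Key{\OP(r_2)}<\Key{p}$; but $\Key{r_2}>\Key{p}$, so $\OP(r_2)\neq r_2$ and therefore $r_2\in\GG$. Consequently $\OP(r_2)\in\XX$ sits at time $\Time{r_2}$ with $\Key{\OP(r_2)}<\Key{p}<\Key{r_1}<\Key{r_2}$. Let $t$ be the closest point of $\XX\cup\GG$ strictly above $r_2$ on the vertical line $x=\Key{r_2}$; such a $t$ exists because $\GRE$ marked $r_2$ while processing $\OP(r_2)$ via an unsatisfied rectangle $_{\OP(r_2)}\Box^t$. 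Since $\Key{r_1}$ is strictly inside the key range $[\Key{\OP(r_2)},\Key{r_2}]$, for this rectangle to remain unsatisfied $r_1$ must sit strictly above it; combined with $\Time{r_1}<\Time{r_2}$ this yields $\Time{r_1}<\Time{t}<\Time{r_2}$.

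Next I would iterate. The invariant maintained is that the current point $t\in\XX\cup\GG$ lies on the line $x=\Key{r_2}$ and satisfies $\Time{r_1}<\Time{t}<\Time{r_2}$. I split into four sub-cases. If $t\in\XX$, set $s:=t$, which gives $^{r_1}\nwsearrow_s$ and $\abarrow{s}{r_2}$. Otherwise $t\in\GG$, so $\OP(t)$ lies on the horizontal line $y=\Time{t}$ with $\Key{\OP(t)}\neq\Key{t}$. If $\OP(t)$ is strictly right of $t$, then $\Key{\OP(t)}>\Key{r_2}>\Key{r_1}$, and $s:=\OP(t)$ yields $^{r_1}\nwsearrow_s$ together with $_{r_2}\neswarrow^s$. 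If $\OP(t)$ is strictly left of $t$ with $\Key{\OP(t)}>\Key{r_1}$, then $s:=\OP(t)$ yields $^{r_1}\nwsearrow_s$ together with $^s\nwsearrow_{r_2}$. Otherwise $\OP(t)$ is left of $t$ with $\Key{\OP(t)}\le\Key{r_1}$; then $\GRE$ marked $t$ via an unsatisfied rectangle $_{\OP(t)}\Box^{t'}$, where $t'$ is the closest point of $\XX\cup\GG$ strictly above $t$ on the line $x=\Key{r_2}$. Its key range $[\Key{\OP(t)},\Key{r_2}]$ contains $\Key{r_1}$, and $\Time{r_1}<\Time{t}$, so the very same rectangle-is-unsatisfied argument forces $\Time{r_1}<\Time{t'}$. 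I then replace $t$ by $t'$ and repeat.

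Each iteration strictly decreases $\Time{t}$, and the line $x=\Key{r_2}$ carries only finitely many points of $\XX\cup\GG$, so the procedure terminates. If it never stops in one of the first three sub-cases, then it eventually reaches the unique original point $w\in\XX$ with $\Key{w}=\Key{r_2}$ (which lies strictly above $r_2$ by Observation~\ref{obs:nothing-above} applied to $r_2\in\GG$); by the invariant $\Time{r_1}<\Time{w}<\Time{r_2}$, and the first sub-case applies with $s:=w$. The main obstacle is the correct bookkeeping of the invariant $\Time{r_1}<\Time{t}$ along the chain of marked points on $x=\Key{r_2}$; the crucial observation is that $\Key{\OP(t)}\le\Key{r_1}<\Key{r_2}$ places $\Key{r_1}$ inside the key strip of the unsatisfied rectangle witnessing $t$'s marking, so the rectangle being unsatisfied propagates the time bound upward and allows the recursion to continue.
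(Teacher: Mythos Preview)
Your argument is correct and complete. It differs from the paper's proof in a pleasant way, so let me briefly compare.

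The paper branches on the position of $\UP(r_2)$, the unique original point on the line $x=\Key{r_2}$, relative to $r_1$. If $\UP(r_2)$ already sits south-east of $r_1$ the result is immediate; if it sits north-east of $r_1$, the paper jumps straight to the first point $t$ above $r_2$ that lies north-east of $r_1$, observes that $r_1$ \emph{hides} $t$ from $p$, and invokes Lemma~\ref{lem:hidden-right} once to conclude that $\OP(t')$, for $t'$ the first point below $t$, is to the south-east of $r_1$; this $\OP(t')$ is the required $s$.

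Your proof instead walks up the column $x=\Key{r_2}$ one point at a time, maintaining the invariant $\Time{r_1}<\Time{t}<\Time{r_2}$ by repeatedly using that the unsatisfied rectangle $_{\OP(t)}\Box^{t'}$ witnessing $t$'s marking has its key strip straddling $\Key{r_1}$ and hence cannot contain $r_1$. In effect you are proving, inline and from first principles, the special case of the hiding lemma that the paper invokes as a black box. What you gain is a self-contained argument that does not rely on Lemma~\ref{lem:hidden-right}; what the paper gains is brevity, since the hiding machinery is set up once and reused here and in several later proofs (e.g.\ Lemma~\ref{lem:coupling-mmc-or-mfc}, Lemma~\ref{lem:property-good-rectangle}). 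Both approaches ultimately exploit the same phenomenon: an unsatisfied rectangle whose key range covers $\Key{r_1}$ forces the upper corner to lie strictly above $r_1$ in time.
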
 


	\begin{figure}[H]
\centering
\begin{tikzpicture}
[round/.style={rectangle, rounded corners=3mm, minimum size=20mm, draw=black!50}]

\draw[step=0.25,black,opacity=0.1] (0,0) grid (3,3);

\fill [black] (1,1) circle (2pt) node[below,black,opacity=1] {\tiny{$p$}};

\fill [ blue] (1.75,1) circle (2pt) node[below,black,opacity=1] {\tiny{$p_1$}};
\fill [ blue] (2.25,1) circle (2pt) node[below,black,opacity=1] {\tiny{$p_{2}$}};
\fill [ blue] (2.25,1.75) circle (2pt) node[below,black,opacity=1] {\tiny{$r_{2}$}};
\fill [ black] (0,1.75) circle (2pt) node[below,black,opacity=1] {\tiny{$\OP(r_{2})$}};

\fill [ blue] (1.75,2.5) circle (2pt) node[above,black,opacity=1] {\tiny{$r_1$}};
\fill [ black] (0.25,2.5) circle (2pt) node[above,black,opacity=1] {\tiny{$\OP(r_1)$}};

\draw [round,fill=gray,opacity=0.2] (0.5,0) rectangle (1.5,1.5);
\node at (1.5,-0.5) {(i)};

\begin{scope}[xshift = 100]
\draw[step=0.25,black,opacity=0.1] (0,0) grid (3,3);

\fill [black] (1,1) circle (2pt) node[below,black,opacity=1] {\tiny{$p$}};

\fill [ blue] (1.75,1) circle (2pt) node[below,black,opacity=1] {\tiny{$p_1$}};
\fill [ blue] (2.25,1) circle (2pt) node[below,black,opacity=1] {\tiny{$p_{2}$}};
\fill [ blue] (2.25,1.75) circle (2pt) node[below,black,opacity=1] {\tiny{$r_{2}$}};
\fill [ black] (0,1.75) circle (2pt) node[below,black,opacity=1] {\tiny{$\OP(r_{2})$}};

\fill [ gray] (2.25,3) circle (2pt) node[below,black,opacity=1] {\tiny{$t$}};
\fill [ blue] (1.75,2.5) circle (2pt) node[above,black,opacity=1] {\tiny{$r_1$}};
\fill [ black] (0.25,2.5) circle (2pt) node[above,black,opacity=1] {\tiny{$\OP(r_1)$}};

\draw [round,fill=gray,opacity=0.2] (0.5,0) rectangle (1.5,1.5);
\node at (1.5,-0.5) {(ii)};
\end{scope}

\end{tikzpicture}
\caption{(i) The setting of Lemma \ref{lem:mmc-witness}: (1) $\CP(p_1), \CP(p_{2}) \in \RMMC_p$ and (2) $_p\Box^{r_1}$ and $_p\Box^{r_{2}}$
are two unsatisfied rectangle encountered while processing $p$. (ii) Illustration of the case (4) in the proof when $_{r_1}\swnearrow^{\UP(r_{2})}$}.
\label{fig:witness-mmc}
\end{figure}
	\begin{proof}
	Note that $r_1, r_{2} \notin \XX$ since $\CP(p_1), \CP(p_{2}) \in \MMC$.
	We consider four cases depending on the position of $\UP(r_{2})$:
	\begin{enumerate}
	\item $\UP(r_{2}) = r_{2}$.

	Since $\CP(p_{2}) \in \MMC$, $r_{2} \notin \XX$. 
	So $\UP(r_{2}) \ne r_{2}$.

	\item $^{r_1}\senwarrow_{\UP(r_{2})}$.

	In this case we can set $s := \UP(r_{2})$. So, we have found an original point $s$ 
	such that ($^{r_1}\senwarrow_{s}$ and $\abarrow{s}{r_{2}}$).

	\item $\lrarrow{r_1}{\UP(r_{2})}$.

	Since $\CP(p_1) \in \MMC$ and $r_1$ is the first point above $p_1$,
	hence $\lrarrow{\OP(r_1)}{r_1}$.
	So $\Time{\UP(r_{2})} \neq \Time{r_1}$; in particular $\lrarrow{r_1}{\UP(r_{2})}$ cannot happen.

	\item $_{r_1}\swnearrow^{\UP(r_{2})}$. 

	Let $t$ be the first point above $r_{2}$ such that $_{r_1}\swnearrow^{t}$
	 (see Fig.~\ref{fig:witness-mmc}(ii)). Since we assumed that $_{r_1}\swnearrow^{\UP(r_{2})}$,  
	$\UP(r_{2})$ is one such candidate. 
	This implies that $r_1$ hides $t$ from $p$(as $r_1 \in~ (_p\Box^t)^{\circ}$). Let $t'$ be the first point below $t$ such that
	$^{r_1}\senwarrow_{t'}$. Such a point exists as $r_{2}$ is one such candidate. By Lemma \ref{lem:hidden-right},
	$^{r_1}\senwarrow_{\OP(t')}$.
	In that case,  $t' \ne r_{2}$ since $^{\OP(r_{2})}\nwsearrow_p$ (since $\CP(p_{2}) \in \RMMC$).
	So $\abarrow{t}{t'}$ and $\abarrow{t'}{r_{2}}$. 
	This implies that either $^{\OP(t')}\nwsearrow_{r_{2}}$ or $_{r_{2}}\neswarrow^{\OP(t')}$.
	So we have found an original point $s := \OP(t')$ such that 
	$^{r_1}\nwsearrow_{s}$ and ($^{s}\nwsearrow_{r_{2}}$ or $_{r_{2}}\neswarrow^{s}$).
	\end{enumerate}

	\end{proof}
\else
	\iflong\else\vspace{-2mm}\fi
	\begin{lemma} (Informal Version)
	\label{lem:mmc-witness}
	For $p \in \XX$, let $p_1, p_{2} \in \GG$ be two points 
	such that $p \leftrightarrow p_1 \leftrightarrow p_{2}$ and $\CP(p_1), \CP(p_{2}) \in \RMMC_p$. 
	Let $_p\Box^{r_1}$ and $_p\Box^{r_{2}}$ be the two unsatisfied rectangles that lead
	$\GRE$ to mark $p_1$ and $p_{2}$, respectively, while processing $p$ (thus $^{r_1}\nwsearrow_{r_{2}}$).
	Then there is a point $s \in \XX$ to the north-east of $p$ that lies between $\OP(r_1)$ and 
	$\OP(r_{2})$ (see Fig.~\ref{fig:witness-mmc-overview}).
	\end{lemma}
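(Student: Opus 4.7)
The plan is to find $s$ by a short case analysis on where the original point $\UP(r_2)$ sits relative to $r_1$. Note first that $\UP(r_2)$ is well-defined and distinct from $r_2$, because $r_2 \notin \XX$: if $r_2$ were original, then $\CP(p_2) \in \MMC$ would force $\OP(r_2) = r_2$ to lie upper-left of $p$, contradicting the fact that $r_2$ lies directly above $p_2$, which is right of $p$. The same reasoning gives $r_1 \notin \XX$. Observe also that $\OP(r_1)$ is the unique original point on the horizontal line $y = \Time{r_1}$, and it lies strictly left of $p$ (since $\CP(p_1) \in \RMMC$).

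In the easy case $^{r_1}\senwarrow_{\UP(r_2)}$, simply set $s := \UP(r_2)$: then $^{r_1}\nwsearrow_s$ by hypothesis and $\abarrow{s}{r_2}$ by definition of $\UP$. The alignment case $\lrarrow{r_1}{\UP(r_2)}$ is impossible, since it would force $\UP(r_2) = \OP(r_1)$, but $\OP(r_1)$ sits left of $p$ while $\UP(r_2)$ shares the $x$-coordinate of $r_2$, which is right of $p$.

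The main work lies in the remaining case $_{r_1}\swnearrow^{\UP(r_2)}$. Here I would let $t$ be the first point above $r_2$ satisfying $_{r_1}\swnearrow^{t}$; such $t$ exists because $\UP(r_2)$ is a candidate. Then $r_1$ lies in the interior of $_p\Box^t$, so $r_1$ hides $t$ from $p$. Now take $t'$ to be the first point below $t$ with $^{r_1}\senwarrow_{t'}$; such a $t'$ exists because $r_2$ qualifies. Applying the symmetric hiding lemma (Lemma~\ref{lem:hidden-right}) to this configuration yields $^{r_1}\senwarrow_{\OP(t')}$. It remains to check $t' \neq r_2$: if $t' = r_2$ then $\OP(t') = \OP(r_2)$, which is upper-left of $p$ by the $\RMMC$ hypothesis and hence cannot be lower-right of $r_1$, a contradiction. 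So $t'$ is strictly above $r_2$, and setting $s := \OP(t')$ gives $^{r_1}\nwsearrow_s$ together with one of $\abarrow{s}{r_2}$, $^{s}\nwsearrow_{r_2}$, or $_{r_2}\neswarrow^{s}$, as required.

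The hard step will be the final case: one must carefully invoke Lemma~\ref{lem:hidden-right} on the right pair of points to convert the hidden-point $t$ into an \emph{original} witness $s$, and then rule out $s = \OP(r_2)$ using the $\RMMC$-membership of $\CP(p_2)$. Everything else is bookkeeping driven by the relative positions of $r_1$, $r_2$, $\OP(r_1)$, and $\OP(r_2)$ that are already forced by $\CP(p_1), \CP(p_2) \in \RMMC_p$.
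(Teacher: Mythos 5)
Your proof is correct and follows essentially the same route as the paper's: you enumerate the same four positions of $\UP(r_2)$ relative to $r_1$, dispatch the first three by the same elementary observations, and in the main case you introduce the same auxiliary points $t$ and $t'$, invoke Lemma~\ref{lem:hidden-right}, and rule out $t' = r_2$ via the $\RMMC$-membership of $\CP(p_2)$ exactly as the paper does.
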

	\iflong\else\vspace{-2mm}\fi
	The above lemma and properties of a $k$-decomposable sequences are 
	then used to prove that $\OP(r_1)$ and $\OP(r_{2})$ 
	cannot lie in the same block. In fact, we then prove the following generalization 
	of the above statement.
	\begin{figure}[H]
\centering
\begin{tikzpicture}
[round/.style={rectangle, rounded corners=3mm, minimum size=20mm, draw=black!50}]

\draw[step=0.25,black,opacity=0.1] (0,0) grid (3,3);

\fill [black] (1,1) circle (2pt) node[below,black,opacity=1] {\tiny{$p$}};

\fill [ blue] (1.75,1) circle (2pt) node[below,black,opacity=1] {\tiny{$p_1$}};
\fill [ blue] (2.25,1) circle (2pt) node[below,black,opacity=1] {\tiny{$p_{2}$}};
\fill [ blue] (2.25,1.75) circle (2pt) node[below,black,opacity=1] {\tiny{$r_{2}$}};
\fill [ black] (0,1.75) circle (2pt) node[below,black,opacity=1] {\tiny{$\OP(r_{2})$}};

\fill [ blue] (1.75,2.5) circle (2pt) node[above,black,opacity=1] {\tiny{$r_1$}};
\fill [ black] (0.25,2.5) circle (2pt) node[above,black,opacity=1] {\tiny{$\OP(r_1)$}};
\fill [ black] (2.75,2.25) circle (2pt) node[above,black,opacity=1] {\tiny{$s$}};

\draw [round,fill=gray,opacity=0.2] (0.5,0) rectangle (1.5,1.5);

\end{tikzpicture}
\caption{ Lemma \ref{lem:mmc-witness} states that there exists a point $s \in \XX$ to the 
north-east of $p$ that lies between $\OP(r_1)$ and 
$\OP(r_{2})$ }
\label{fig:witness-mmc-overview}
\end{figure}

\fi

\iflong
	The following lemma will allow us to use the $k$-decomposability of $\XX$ to get at upper bound on 
	$|\MMC|$. 
\fi


\begin{lemma}
\label{lem:finalmmc}
Let $p \in \XX$ and $\RT(p) = B$.
Let $p_1 \leftrightarrow p_2 \leftrightarrow \dots \leftrightarrow p_\ell$ be
such that $\CP(p_i) \in \RMMC_p$ for all $1 \le i \le \ell$. 
Assume that $\GRE$ marks points 
$p_1 \leftrightarrow p_2 \leftrightarrow \dots \leftrightarrow p_\ell$
due to unsatisfied rectangles $_p\Box^{r_1},_p\Box^{r_2}, \dots, _p\Box^{r_\ell}$, respectively.
By Lemma~\ref{lem:reversetop}, $\OP(r_i) \in B_i$ for $B_i \in \SB(B)$. 
Then $B_i \ne B_j$ for $i \neq j$. 
\end{lemma}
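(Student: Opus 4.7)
The plan is to derive a contradiction by exploiting the fact that, in a decomposable sequence, every block is contiguous in \emph{both} the time coordinate and the key coordinate. Relabel the $p_a$'s so that they appear in left-to-right order along $y=\Time{p}$. One first observes that the $r_a$'s form a staircase, $^{r_a}\nwsearrow_{r_{a+1}}$: otherwise $r_a$ would lie inside $_p\Box^{r_{a+1}}$ and that rectangle would already be arborally satisfied, contradicting that it triggered $\GRE$ to mark $p_{a+1}$. Since $\Time{\OP(r_a)}=\Time{r_a}$ by definition of $\OP$, the times $\Time{\OP(r_a)}$ are strictly increasing in $a$. Combined with the time-contiguity of any block, this reduces the claim to showing $B_i\neq B_{i+1}$ for every $i$: if $B_i=B_j$ with $j>i+1$, then every $\OP(r_k)$ with $i\le k\le j$ lies in that common block.

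So assume for contradiction that $B_i = B_{i+1} = B'$. Apply Lemma~\ref{lem:mmc-witness} to the pair $p_i,p_{i+1}$ to obtain an original point $s$ with $^{r_i}\nwsearrow_s$ and one of $\abarrow{s}{r_{i+1}}$, $^{s}\nwsearrow_{r_{i+1}}$, $_{r_{i+1}}\neswarrow^{s}$. Each sub-case forces $\Time{s}<\Time{r_{i+1}}$, while $^{r_i}\nwsearrow_s$ forces $\Time{s}>\Time{r_i}$ and $\Key{s}>\Key{r_i}$. Hence $\Time{s}$ lies strictly between $\Time{\OP(r_i)}$ and $\Time{\OP(r_{i+1})}$, both of which lie inside the (contiguous) time interval of $B'$, so $s\in B'$.

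Now examine the keys. From $\CP(p_i)\in\RMMC_p$ we have $^{\OP(r_i)}\nwsearrow_p$, so $\Key{\OP(r_i)}<\Key{p}$. On the other hand $\Key{s}>\Key{r_i}=\Key{p_i}>\Key{p}$, because $p_i$ lies strictly to the right of $p$ on the line $y=\Time{p}$. So $\Key{\OP(r_i)}$ and $\Key{s}$ straddle $\Key{p}$, and both lie in the contiguous key interval of $B'$, forcing $\Key{p}$ into this interval; hence $p\in B'$. But $p\in B=\RT(p)$ and $B'\in\SB(B)$ is a sibling (hence disjoint) block, a contradiction.

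The main obstacle is spotting how to squeeze both time- and key-information out of a single application of Lemma~\ref{lem:mmc-witness}: its witness $s$ is simultaneously strictly between $\OP(r_i)$ and $\OP(r_{i+1})$ in time while being strictly on the opposite side of $p$ from $\OP(r_i)$ in key. Pairing this one-point witness with the \emph{two-dimensional} contiguity of blocks is exactly what traps $p$ inside the hypothetical shared block $B'$ and produces the contradiction.
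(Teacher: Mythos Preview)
Your proof is correct and follows essentially the same approach as the paper: invoke Lemma~\ref{lem:mmc-witness} to obtain the witness $s$, then exploit both the key-contiguity and the time-contiguity of a block together with the fact that $\OP(r_i)$ lies on the opposite side of $p$ from $r_i$ (since $\CP(p_i)\in\RMMC_p$). The only structural differences are cosmetic: you first reduce to consecutive indices (unnecessary, since Lemma~\ref{lem:mmc-witness} applies to any pair $p_i,p_j$, but harmless), and you argue by contradiction---first placing $s$ in the hypothetical common block via time-contiguity, then trapping $p$ there via key-contiguity---whereas the paper runs the two contiguity arguments in the opposite order, directly showing $s\notin B_i$ (key) and hence $\OP(r_j)\notin B_i$ (time).
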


\iflong
	\begin{proof}

	Consider $\lrarrow{p_i}{p_j}$. One can check that $\Time{r_i} < \Time{r_j}$.
	By Lemma \ref{lem:mmc-witness},
	there exists an original point $s$ such that  $^{r_i}\nwsearrow_{s}$ and 
	($\abarrow{s}{r_{j}}$ or $^{s}\nwsearrow_{r_{j}}$ or $_{r_{j}}\neswarrow^{s}$). 

	Since $^{\OP(r_{i})}\nwsearrow_p$ and $^{\OP(r_{j})}\nwsearrow_p$ (as $\CP(p_i), \CP(p_{j}) \in \MMC$), 
	this implies that  
	 $^{\OP(r_i)}\nwsearrow_{s}$ and  $_{\OP(r_{j})}\neswarrow^{s}$. 
	Also, 
	we have $p \in B$ with $\Key{\OP(r_i)} < \Key{p} < \Key{s}$ and $p \notin B_i$
	(since $B_i$ is a siblings of $B$).
	By the definition of a block, all the points in $B_i$ 
	should be contiguous on the $x$-axis. 
	So, $s \notin B_i$. 

	Thus we have $s \notin B_i$ and $\Time{\OP(r_i)} < \Time{s} < \Time{\OP(r_{j})}$.
	Again, by the definition of a block, all the points in $B_i$  
	should have contiguous time interval. 
	So, $\OP(r_{j})$ 
	does not lie in $B_i$ and $B_i \ne B_{j}$.






	\end{proof}
\fi

Since there are at most $k-1$ sibling of $B$, by Lemma~\ref{lem:finalmmc}, $|\RMMC_p| \le k-1$.
So, $|\RMMC| = \sum_{p \in \XX} |\RMMC_p| \le n(k-1)$.
Similarly, we can show that $|\LMMC| \leq n(k-1)$. Since $|\MMC| = |\RMMC| + |\LMMC|$, 
we have proved Theorem~\ref{thm:mmcfinalbound}.
\iflong\else\vspace{-5mm}\fi
\section{Improving the bound on $|\mmc|$}
\label{sec:mmc-betterbound}
In this section we prove:
\begin{theorem}
\label{thm:boundmmc}
$|\MMC| \le 14 n \log k$.
\end{theorem}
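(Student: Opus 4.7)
The plan is to replace the per-point bound $|\RMMC_p| \le k - 1$ (which sums trivially to $O(nk)$) with a global charging scheme on the decomposition tree $\TD(\XX)$ that averages $O(\log k)$ pairs per point. The core new ingredient to exploit is Lemma~\ref{lem:mmc-witness}: between any two consecutive pairs of $\RMMC_p$, there is an original witness point satisfying strong positional constraints. This constraint, absent in Section~\ref{sec:mmc}, is what will let us extract a logarithmic factor rather than a linear one.

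Concretely, I will charge each pair $(p,q)\in\RMMC$ to the internal node $B':=\PP(\RT(p))$ of $\TD(\XX)$. By Lemma~\ref{lem:reversetop}, $p$ and $q$ lie in distinct children of $B'$, so the charge is well-defined, and only the $k'-1$ non-top children of $B'$ contribute a ``source'' point $p_i=\Top(C_i)$ with $\RT(p_i)=C_i$. The pairs charged to $B'$ are exactly $\bigcup_i \RMMC_{p_i}$, and the naive bound per node is $(k'-1)^2$. The target is to sharpen this to $O(k'\log k')$, after which summing over internal nodes and using $\sum_{B'} k'_{B'}\le 2(n-1)$ for a tree with $n$ leaves yields $|\RMMC|=O(n\log k)$; the symmetric argument for $\LMMC$ then completes Theorem~\ref{thm:boundmmc}.

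The per-node bound is the heart of the argument. For a fixed $B'$ with children $C_1,\dots,C_{k'}$, think of the contribution to $B'$ as a $0/1$ matrix whose rows are indexed by non-top children $C_i$ and whose columns are indexed by all children, with entry $(i,j)$ equal to $1$ iff there is a pair $(p_i,q)\in\RMMC_{p_i}$ with $q\in C_j$. The plan is to order the columns by the time of their Tops and run a divide-and-conquer that splits the column set at the median. Within each row, Lemma~\ref{lem:mmc-witness} shows that between any two consecutive $1$'s there is an original point that, by the same case analysis as in the proof of Lemma~\ref{lem:finalmmc}, must lie in a \emph{distinct} child of $B'$. This forbids certain long-range patterns and lets the splits behave additively: one can show that the total number of $1$'s ``crossing'' each median split is only $O(k')$, giving the recurrence $T(k') \le 2\,T(k'/2) + O(k')$ and hence $T(k')=O(k'\log k')$.

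The main obstacle is making the divide-and-conquer quantitatively tight: specifically, showing that the witness guaranteed by Lemma~\ref{lem:mmc-witness} always lives in a child of $B'$ between the two endpoint blocks in the time order used for splitting, and then bounding the ``crossing contribution'' at each level by $O(k')$ uniformly. Once that boundary control is in place, the remaining calculation is mechanical, and the constant $14$ in the bound $|\MMC|\le 14\,n\log k$ absorbs the factor~$2$ for summing $\RMMC$ and $\LMMC$ together with the constants hidden in the recurrence and in $\sum_{B'} k'_{B'}\le 2(n-1)$.
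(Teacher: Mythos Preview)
Your high-level skeleton matches the paper's: both localize to an internal node $B'=\PP(\RT(p))$, run a divide-and-conquer on the $k'$ children of $B'$, aim for an $O(k'\log k')$ per-node bound, and then sum over internal nodes to get $O(n\log k)$.

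The gap is in the crossing bound, which you correctly flag as the crux but do not actually establish. You propose to extract it from Lemma~\ref{lem:mmc-witness}, but that lemma is a \emph{per-row} statement: for a single fixed source $p$, between any two consecutive $\RMMC_p$-targets there is an original witness in a distinct sibling. That is exactly what already drove Lemma~\ref{lem:finalmmc} (the target blocks in one row are pairwise distinct); it carries no cross-row information. To show that the total number of $1$'s crossing a median split is $O(k')$ you need a reason why many \emph{different} sources $p_i$ cannot simultaneously straddle the split, and the witness lemma does not supply one. Your recurrence $T(k')\le 2T(k'/2)+O(k')$ is therefore unsupported at its only nontrivial step.

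The paper's per-node argument uses Lemma~\ref{lem:mmc-witness} nowhere. It orders the children by \emph{key} (not time), introduces the region $\RG(B'')$ below each child block, and separates marked points there into \emph{key-new} and \emph{key-old}. The structural fact doing the work is that every key-new point placed in $\RG(B'')$ by a left- or right-relative of $B''$ has $\RR(\cdot)\in\MFC$ (Lemmas~\ref{lem:key-newinmfc} and~\ref{lem:key-newinmfcsymmetric}); consequently every $\MMC$-point in $\RG$ is key-old. The count of key-old points is then controlled by a potential $C_t$ counting \emph{key-live} keys in $\RG$, and a telescoping sum over the children bounds the crossing contribution by $12m$ per half of size $m$ (Lemma~\ref{lem:partitionlemma}). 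That potential-plus-classification argument, not the witness lemma, is what makes the $O(k')$ crossing term go through.
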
 
To improve the the bound on $|\MMC|$, we need to show some more properties of a 
block in $\TD(\XX)$.
\subsection{Properties of a Block}
Let $\MAXT(B) := argmax_{z \in B} \{ \Time{z} \}$. That is, $\MAXT(B)$ is an original point
in $B$ that has the maximum y-coordinate. We can similarly define $\MINT(B)$. Note that 
$\Top(B) = \MINT(B)$. Let $\MAXK(B) := argmax_{z \in B} \{ \Key{z} \}$ and 
$\MINK(B) := argmin_{z \in B} \{ \Key{z} \}$.

\begin{center}
\begin{figure}[H]
\centering
\begin{tikzpicture}
[round/.style={rectangle, rounded corners=3mm, minimum size=20mm, draw=black!50}]

\fill [black] (1,2) circle (2pt);
\fill [ black] (0,1.5) circle (2pt);
\fill [ black] (0.5,1) circle (2pt);
\fill [ black] (2,0.5) circle (2pt);
\fill [ black] (1.5,0) circle (2pt);
\draw [round,fill=gray,opacity=0.2] (-0.30,2.3) rectangle (2.2,-.35);
\node at (-.7,1.5) {\tiny{\MINK(B)}};
\node at  (2.7,0.5)  {\tiny{\MAXK(B)}};
\node at  (1.5,-.2)  {\tiny{\MAXT(B)}};

\node at  (1,-.6)  {\tiny{Block $B$}};

\end{tikzpicture}
\caption{Pictorial representation of $\MAXT(B), \MAXK(B)$ and $\MINK(B)$}
\label{fig:blockdef}
\end{figure}
\end{center}

\begin{definition}(Left and Right points of block $B$)
Let $\RT(p) = B$. Define $\Left(B)$ to be the first point to the left of $p$ marked 
by $\GRE$ while processing $p$. Define $\Right(B)$ in a symmetric fashion.
\end{definition}
Note that $\Left(B)$, if it exists, satisfies $\Key{\Left(B)} < \Key{\MINK(B)}$ since by Observation \ref{obs:nothing-above},
$\GRE$ cannot put any point above any other original point in $B$. Similarly
$\Key{\Right(B)} > \Key{\MAXK(B)}$. This implies that $\GRE$ does not put any point 
in $\BOX(B)$ while processing $\Top(B)$.

\begin{observation} 
\label{obs:topnotinbox}
$\GRE$ does not put any marked points in $\BOX(B)$ while processing $\Top(B)$.
\end{observation}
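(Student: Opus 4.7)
The plan is to derive an immediate contradiction with Observation \ref{obs:nothing-above}. Recall that $\BOX(B)$ consists of all points $r$ in the plane for which there exist $p,q\in B$ with $\Key{r}=\Key{p}$ and $\Time{r}=\Time{q}$; in particular every $r\in\BOX(B)$ sits on a vertical line that passes through some original point of $B$. Also recall that every point $\GRE$ marks while processing $\Top(B)$ lies on the horizontal line $y=\Time{\Top(B)}$.

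Suppose for contradiction that $\GRE$ places a marked point $r\in\BOX(B)$ while processing $\Top(B)$. Then $\Time{r}=\Time{\Top(B)}$, and by the definition of $\BOX(B)$ there is an original point $p\in B$ with $\Key{p}=\Key{r}$. Since $\Top(B)=\MINT(B)$ is the (unique) time-minimum point of $B$, every other point of $B$ has strictly larger $y$-coordinate. We cannot have $p=\Top(B)$: the horizontal line $y=\Time{\Top(B)}$ carries exactly one original point, namely $\Top(B)$ itself, and $\GRE$ would never ``mark'' a location already occupied by an original point (equivalently, if $\Key{r}=\Key{\Top(B)}$ then $r=\Top(B)\in\XX$, contradicting that $r$ is marked). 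Hence $p\neq\Top(B)$ and $\Time{p}>\Time{\Top(B)}=\Time{r}$, so $r$ lies strictly above $p$, which contradicts Observation \ref{obs:nothing-above}.

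There is really no main obstacle here: the statement is essentially a one-line consequence of the fact that $B$ is a block (so its key range is contiguous, forcing any column inside $\BOX(B)$ to contain an original point of $B$) together with $\Top(B)$ being the time-minimum of $B$. The only tiny edge case to dispatch is ensuring the hypothetical marked point $r$ is not secretly $\Top(B)$ itself, which is handled in the argument above.
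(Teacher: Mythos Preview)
Your proof is correct and follows essentially the same reasoning as the paper. The paper justifies the observation in the sentence immediately preceding it: since any marked point placed on the line $y=\Time{\Top(B)}$ with key in the range $[\Key{\MINK(B)},\Key{\MAXK(B)}]$ would sit above an original point of $B$, Observation~\ref{obs:nothing-above} forbids it; your argument unpacks this same idea with a bit more detail on the edge case $p=\Top(B)$.
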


\begin{lemma}
\label{lem:blockleftright} 
Let $\RT(r) = B'$ and $\PP(B') = B$.
So $r \ne \Top(B)$.
While processing $r$, $\GRE$ can put marked point (1)Below $\Left(B)$ or/and (2)Below $\Right(B)$ or/and (3) In $\BOX(B)$.  
\end{lemma}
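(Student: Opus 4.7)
My plan is to take an arbitrary mark $s$ that $\GRE$ places while processing $r$ and show that $\Key{s}$ falls into one of the three advertised slots. By the left-right symmetry of $\GRE$ I focus on $\lrarrow{r}{s}$; then $s=(\Key{q},\Time{r})$ where $q$ is some point with $\Time{q}<\Time{r}$ and $_r\Box^q$ is unsatisfied at the moment $r$ is processed. If $\Key{q}\le\Key{\MAXK(B)}$, then $\Key{r}$ also lies in $[\Key{\MINK(B)},\Key{\MAXK(B)}]$ (because $r\in B$) and $\Time{r}$ lies in $B$'s time range, so $s\in\BOX(B)$. If $\Key{q}=\Key{\Right(B)}$, then $s$ sits directly below $\Right(B)$. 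If $\Key{q}>\Key{\Right(B)}$, the point $\Right(B)$ (at time $\Time{\Top(B)}<\Time{r}$ and key strictly between $\Key{r}$ and $\Key{q}$) lies in the interior of $_r\Box^q$, contradicting unsatisfiedness.

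The remaining possibility $\Key{\MAXK(B)}<\Key{q}<\Key{\Right(B)}$ I rule out by induction on $\Time{r}$, strengthening the claim so that the three-region containment holds for every block $B$ with $r\in B$ and $r\neq\Top(B)$. Suppose first $\Time{q}\ge\Time{\Top(B)}$. Then $q$ cannot be original, since its time would force $q\in B$ while its key lies outside $B$'s key range. So $q\in\GG$ and $\OP(q)\in B$. If $\OP(q)=\Top(B)$, then $q$ is a right-mark placed while processing $\Top(B)$, forcing $\Key{q}\ge\Key{\Right(B)}$ by minimality of $\Right(B)$, contradiction. If $\OP(q)\neq\Top(B)$, apply the inductive hypothesis to $\OP(q)$ with the same $B$: the mark $q$ must lie in $\BOX(B)$, below $\Left(B)$, or below $\Right(B)$, each incompatible with $\Key{\MAXK(B)}<\Key{q}<\Key{\Right(B)}$.

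Suppose instead $\Time{q}<\Time{\Top(B)}$. At time $\Time{\Top(B)}$ the rectangle $_{\Top(B)}\Box^q$ already existed, and since $\Right(B)$ is the leftmost right-mark produced while processing $\Top(B)$ and $\Key{q}<\Key{\Right(B)}$, $\GRE$ did not mark $(\Key{q},\Time{\Top(B)})$, so $_{\Top(B)}\Box^q$ must have been arborally satisfied by some $t\ne\Top(B),q$ with $\Key{\Top(B)}\le\Key{t}\le\Key{q}$ and $\Time{q}\le\Time{t}\le\Time{\Top(B)}$. The heart of the argument is to show $\Key{t}>\Key{\MAXK(B)}$: if $\Key{t}\in[\Key{\MINK(B)},\Key{\MAXK(B)}]$ and $t$ is original, then $\Time{t}<\Time{\Top(B)}$ places $t\in\UPB(B)$ (forbidden by Lemma~\ref{lem:upperbox}), while $\Time{t}=\Time{\Top(B)}$ forces $t=\Top(B)$, excluded; if $t$ is marked, then either $\OP(t)=\Top(B)$ and $t\in\BOX(B)$ (forbidden by Observation~\ref{obs:topnotinbox}), or $\OP(t)$ is processed strictly before $\Top(B)$, hence lies outside $B$, and the mark $t$ again sits in $\UPB(B)$ (forbidden). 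Once $\Key{t}>\Key{\MAXK(B)}\ge\Key{r}$ is established, the point $t$ lies inside $_r\Box^q$ at time $\Time{r}$, contradicting unsatisfiedness.

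The most delicate step is the witness-location argument above: showing $\Key{t}>\Key{\MAXK(B)}$ requires exhausting every configuration for $t$ and invoking both invariants pre-established at $\Top(B)$, namely emptiness of $\UPB(B)$ (Lemma~\ref{lem:upperbox}) and of $\BOX(B)$ (Observation~\ref{obs:topnotinbox}). Everything else is essentially bookkeeping that flows from the definitions of $\Left(B)$ and $\Right(B)$ as the leftmost and rightmost outside-marks produced at $\Top(B)$.
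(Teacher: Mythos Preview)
Your inductive approach is genuinely different from the paper's and, modulo one omission, it works. The paper does \emph{not} induct on time: for the region strictly beyond $\Left(B)$ (resp.\ $\Right(B)$) it tracks $\UP(r_1)$ and invokes the hiding lemma (Lemma~\ref{lem:hidden}) to force an original point whose key lies outside $B$'s range yet whose time lies inside it; for the gap between $\Left(B)$ and $\MINK(B)$ it climbs up the column to find a mark whose $\CP$ pair would exit $B$, contradicting the pair-locality Lemma~\ref{lem:notoutsideblock}. Your argument replaces both tools with a single induction plus the emptiness of $\UPB(B)$ (Lemma~\ref{lem:upperbox}) and of $\BOX(B)$ at time $\Top(B)$ (Observation~\ref{obs:topnotinbox}). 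This is more self-contained---it does not rely on the $\CP$ machinery---and the strengthening to all enclosing blocks $B$ with $r\neq\Top(B)$ is exactly what the induction needs.

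The omission: your treatment of the case $\Key{q}>\Key{\Right(B)}$ asserts that $\Right(B)$ lies in $_r\Box^q$, but this requires $\Time{q}\le\Time{\Top(B)}$, which you never establish. If instead $\Time{q}>\Time{\Top(B)}$ (which can occur---nothing rules out a mark in column $\Key{q}$ placed while processing some later point of $B$), then $\Right(B)$ sits strictly above the rectangle and the argument as written fails. The fix is immediate and already present in your toolkit: in this sub-case $\OP(q)\in B\setminus\{\Top(B)\}$ with $\Time{\OP(q)}<\Time{r}$, so the inductive hypothesis applied to $(\OP(q),B)$ forces $q$ into one of the three regions, and $\Key{q}>\Key{\Right(B)}$ contradicts all three. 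You should either fold the case $\Key{q}>\Key{\Right(B)}$ into the induction from the start, or split it on $\Time{q}$ versus $\Time{\Top(B)}$ exactly as you do for the gap case.
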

\begin{center}
\begin{figure}[H]
\centering

\begin{tikzpicture}
[round/.style={rectangle, rounded corners=3mm, minimum size=20mm, draw=black!50}]

\fill [black] (1,2) circle (2pt);

\fill [blue] (-0.5,2) circle (2pt) node[above,black]{\tiny{\Left(B)}};
\fill [blue] (-1,2) circle (2pt);

\fill [blue] (3,2) circle (2pt) node[above,black]{\tiny{\Right(B)}};
\fill [blue] (-2,2) circle (2pt);
\fill [blue] (3.5,2) circle (2pt);

\fill [ black] (0,1.5) circle (2pt);
\fill [ black] (0.5,1) circle (2pt) node[above,black]{\tiny{$r$}};
\draw  (-0.6,0.9) rectangle (-.4,1.1);
\draw  (-0.3,0.9) rectangle (2.2,1.1);
\draw  (2.9,0.9) rectangle (3.1,1.1);

\fill [ black] (2,0.5) circle (2pt);
\fill [ black] (1.5,0) circle (2pt);
\draw [round,fill=gray,opacity=0.2] (-0.30,2.3) rectangle (2.2,-.35);

\node at  (1,-.6)  {\tiny{Block $B$}};

\end{tikzpicture}
\caption{$\Left(B)$ and $\Right(B)$ of a block $B$. Note that $\GRE$ cannot put any point in $\BOX(B)$ while
processing $\Top(B)$. Lemma \ref{lem:blockleftright} shows that while processing any point $r \ne \Top(B)$, $\GRE$
can put points only in the rectangles shown in the figure.}
\label{fig:blockleftright}
\end{figure}
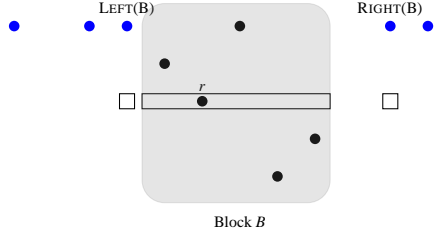
\end{center}

\begin{proof}
Assume that $\GRE$ puts a point $r_1$ while processing $r$.
Consider the following two cases for contradiction:
\begin{enumerate}
\item $\Key{r_1} < \Key{\Left(B)}$

$\UP(r_1)$ cannot satisfy the property $_{\UP(r_1)}\swnearrow^{\Left(B)}$ as then $\Key{\UP(r_1)} < \Key{\MINK(B)}$.
By definition of decomposability, all the original points with time between $[ \Time{\Top(B)}, \Time{\MAXT(B)} ]$
should have their key value $\ge \Key{\MINK(B)}$. However, $\Key{\UP(r_1)} < \Key{\Left(B)} <  \Key{\MINK(B)}$.

So assume that $^{\UP(r_1)}\senwarrow_{\Left(B)}$. Let $r'$ be the first point above $r_1$ with the property 
that ($^{r'}\senwarrow_{\Left(B)}$ or $\lrarrow{r'}{\Left(B)}$). Such a point exists as $\UP(r_1)$ is one such candidate. 
This implies that $\Left(B)$ hides $r'$ from $r$ as $\Left(B)$ lies in $^{r'}\Box_r$.
Let $r''$ be the first point below $r'$. Such a point exists as $r_1$ itself is one such candidate.
Also, note that due to the way $r'$ is defined, $_{r''}\swnearrow^{\Left(B)}$. 
By Lemma \ref{lem:hidden}, $_{\UP(r'')}\swnearrow^{\Left(B)}$. This again leads to contradiction as $\Time{\UP(r'')}$ is between 
$[ \Time{\Top(B)}, \Time{\MAXT(B)} ]$ and its key value $\Key{\UP(r'')}< \Key{\Left(B)} <  \Key{\MINK(B)}$.

\item $\Key{\Left(B)} < \Key{r_1} < \Key{\MINK(B)}$

Let $r'$ be a point above $r_1$ with least $y$ co-ordinate and the property that  $^{\Left(B)}\nwsearrow_{r'}$. 
If no such point exists then define $r' := r_1$. Let $\GRE$ mark point $r'$ while processing $\OP(r') \in B$ due to an unsatisfied 
rectangle $^{r''}\Box_{\OP(r')}$. By definition of $\Left(B)$, $r''$ cannot lie to the right of $\Left(B)$. 
So, $_{\Left(B)}\neswarrow^{r''}$. This implies that there exists a non-top element of $B$, $\OP(r')$
with $\RR(r') = (\OP(r'),\OP(r''))$ and $\OP(r'') \notin B$. This contradict Lemma \ref{lem:notoutsideblock}
\end{enumerate}
The case when $\Key{r_1} > \Key{\Right(B)}$ and $\Key{\MAXK(B)} >\Key{r_1} > \Key{\Right(B)} $
are symmetric to above two cases respectively.
\end{proof}

We now need to calculate the number of points added by $\GRE$ in $\BOX(B)$ while processing $r$.
To this end, we define some notations.
\begin{definition}
Let $\RG(B) := \{(a,b) | \UP(a) \in  B$ \text{and} $b> \Time{\MAXT(B)}$ \text{and} $b < \Time{\MAXT(\PP(B))} \}$
\end{definition}
In words, $\RG(B)$ denote the region below $\BOX(B)$ till the last original point in $\PP(B)$.
If $\BB$ is the set of blocks, then define $\RG(\BB) := \cup_{B \in \BB} \RG(B)$.

\begin{lemma}
Let $\RT(r) = B'$ and $\PP(B') = B$. Then while processing $r$, $\GRE$ can put
marked point only (1)below $\Left(B)$ and/or (2)Below $\Right(B)$ and/or (3) in $\RG(B'')$ where $B'' \in \SB(B')$.
\end{lemma}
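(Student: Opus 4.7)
The plan is to deduce the claim by applying the preceding Lemma~\ref{lem:blockleftright} to the parent block $B$ and then refining the $\BOX(B)$ outcome using the block structure within $B$. Since $\RT(r) = B'$ and $\PP(B')=B$, we have $r = \Top(B')$ but $r \ne \Top(B)$ (otherwise $\RT(r)$ would be an ancestor of $B'$). Thus Lemma~\ref{lem:blockleftright} applied to $r$ with respect to $B$ says that any point $q$ that $\GRE$ marks while processing $r$ either lies below $\Left(B)$, lies below $\Right(B)$, or lies in $\BOX(B)$. The first two possibilities already match options (1) and (2) of the claim, so the remaining task is to show that any marked $q \in \BOX(B)$ lies in $\RG(B'')$ for some $B'' \in \SB(B')$.

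To that end, I would first rule out $q \in \BOX(B')$. Because $r = \Top(B')$, Observation~\ref{obs:topnotinbox} forbids $\GRE$ from marking any point in $\BOX(B')$ while processing $r$, so $\Key{q}$ cannot lie in the key range of $B'$. The key range of $B$ is partitioned by the key ranges of the children of $B$ in $\TD(\XX)$, so $\Key{q}$ lies in the key range of some sibling $B'' \in \SB(B')$. Since $B''$ is a block (a contiguous key interval mapped from a contiguous time interval), the unique original point on the vertical line $x = \Key{q}$ belongs to $B''$, and this original point is precisely $\UP(q)$; hence $\UP(q) \in B''$.

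It then remains to check the two time bounds in the definition of $\RG(B'')$, namely $\Time{\MAXT(B'')} < \Time{q} < \Time{\MAXT(B)}$. Since $B'$ and $B''$ are distinct children of $B$, their time intervals are disjoint contiguous intervals inside the time interval of $B$. The point $\UP(q) \in B''$ satisfies $\Time{\UP(q)} < \Time{q} = \Time{\Top(B')}$, so $B''$'s interval contains a time strictly less than $\Time{\Top(B')}$; by disjointness and contiguity, the whole interval of $B''$ precedes $\Top(B')$, giving $\Time{\MAXT(B'')} < \Time{q}$. The upper bound $\Time{q} < \Time{\MAXT(B)}$ follows from $r \in B$, with the degenerate boundary case $r = \MAXT(B)$ (in which $B'$ is a singleton) being handled separately via the observation that no $q$ marked at time $\Time{r}$ can have $\Key{q}$ in any sibling's key range without contradicting either Observation~\ref{obs:topnotinbox} or $\XX \cup \GG$ being arborally satisfied.

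The main obstacle I expect is the time-bound verification in the last step: the crucial ingredient is that sibling blocks in $\TD(\XX)$ have disjoint contiguous time intervals, so the single point $\UP(q) \in B''$ with $\Time{\UP(q)} < \Time{r}$ is enough to force the entire time interval of $B''$ to precede $\Time{r}$. Everything else reduces to bookkeeping on the block decomposition together with the already-established Lemma~\ref{lem:blockleftright} and Observation~\ref{obs:topnotinbox}.
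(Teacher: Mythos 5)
Your main argument matches the paper's route closely: apply Lemma~\ref{lem:blockleftright} to the parent block $B$, exclude $\BOX(B')$ using Observation~\ref{obs:topnotinbox}, locate the marked point's key in a sibling's key strip via the key-range partition of $B$, and check the two time bounds. The paper reaches the lower time bound by invoking Lemma~\ref{lem:upperbox} and then noting that $\BOX(B'')$ is likewise excluded, while you argue it directly from the time-disjointness and contiguity of sibling blocks; these are the same underlying facts, and this portion of your proof is correct.

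The step that does not hold is your handling of the boundary case $r = \MAXT(B)$. You assert that when $r = \MAXT(B)$ (so $B' = \{r\}$) no marked point at time $\Time{r}$ can have its key in a sibling's key range, but this is false. Consider the permutation $(2,1,3)$, i.e., points $(2,1), (1,2), (3,3)$, with $B = [3]$, children $B_1$ spanning times $\{1,2\}$ and $B_2 = \{3\}$, and $r = (3,3) = \MAXT(B)$. GREEDY marks $(2,2)$ at time $2$ and then marks $(2,3)$ at time $3$, since the rectangle with corners $(3,3)$ and $(2,2)$ is unsatisfied; the key $2$ of the marked point lies in $B_1$'s key range, yet its time equals $\Time{\MAXT(B)} = 3$, which violates the strict inequality $b < \Time{\MAXT(\PP(B_1))}$ in the definition of $\RG(B_1)$. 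What you have actually found is a small imprecision in the paper's definition of $\RG$: the upper time bound should be read as $\le$ (consistent with the informal gloss ``till the last original point in $\PP(B)$''). Once $\RG$ is read that way, the lemma holds with no special case and the paper's proof goes through; your proposed contradiction argument, by contrast, cannot be made to work because the marked point in question genuinely exists.
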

\begin{proof}
By Lemma \ref{lem:blockleftright}, while processing $r$,
$\GRE$ can put marked point in $\BOX(B)$, below $\Left(B)$ and below $\Right(B))$. 
However by observation \ref{obs:topnotinbox}, $\GRE$ cannot 
put any marked point in $\BOX(B')$ while processing $r$. 

Let $B'' \in \SB(B')$. 
By Lemma \ref{lem:upperbox}, $\GRE$ cannot put any marked point in
$\UPB(B'')$ while processing $r$. Also, $\GRE$ cannot put any marked point in $\BOX(B'')$ while processing $r$. 
This implies that while processing $r$,
all the points marked by $\GRE$ in $\BOX(B)$ are in $\RG(B'')$ where $B'' \in \SB(B')$.
\end{proof}

\begin{center}
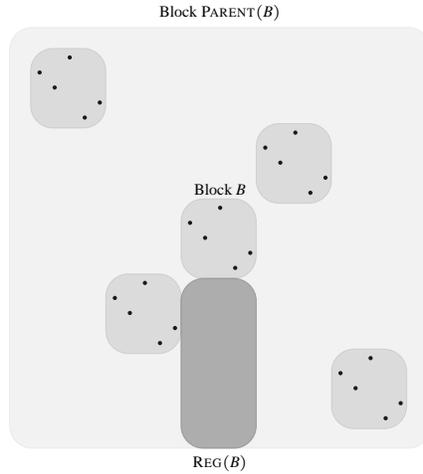
\begin{figure}[H]
\centering
\begin{tikzpicture}[scale=0.4,round/.style={rectangle, rounded corners=3mm, minimum size=20mm, draw=black!50}]

\draw [round,fill=gray,opacity=0.1] (-6,8) rectangle (8,-6);

\fill [black] (1,2) circle (2pt);
\fill [ black] (0,1.5) circle (2pt);
\fill [ black] (0.5,1) circle (2pt);
\fill [ black] (2,0.5) circle (2pt);
\fill [ black] (1.5,0) circle (2pt);
\draw [round,fill=gray,opacity=0.2] (-0.30,2.3) rectangle (2.2,-.35);

\node at  (1,2.6)  {\tiny{Block $B$}};
\node at  (1,8.5)  {\tiny{Block $\PP(B)$}};
\node at  (1,-6.5)  {\tiny{$\RG(B)$}};

\begin{scope}[shift={(2.5,2.5)}]
\fill [black] (1,2) circle (2pt);
\fill [ black] (0,1.5) circle (2pt);
\fill [ black] (0.5,1) circle (2pt);
\fill [ black] (2,0.5) circle (2pt);
\fill [ black] (1.5,0) circle (2pt);
\draw [round,fill=gray,opacity=0.2] (-0.30,2.3) rectangle (2.2,-.35);
\end{scope}

\begin{scope}[shift={(-2.5,-2.5)}]
\fill [black] (1,2) circle (2pt);
\fill [ black] (0,1.5) circle (2pt);
\fill [ black] (0.5,1) circle (2pt);
\fill [ black] (2,0.5) circle (2pt);
\fill [ black] (1.5,0) circle (2pt);
\draw [round,fill=gray,opacity=0.2] (-0.30,2.3) rectangle (2.2,-.35);
\end{scope}

\begin{scope}[shift={(-5,5)}]
\fill [black] (1,2) circle (2pt);
\fill [ black] (0,1.5) circle (2pt);
\fill [ black] (0.5,1) circle (2pt);
\fill [ black] (2,0.5) circle (2pt);
\fill [ black] (1.5,0) circle (2pt);
\draw [round,fill=gray,opacity=0.2] (-0.30,2.3) rectangle (2.2,-.35);
\end{scope}

\begin{scope}[shift={(5,-5)}]
\fill [black] (1,2) circle (2pt);
\fill [ black] (0,1.5) circle (2pt);
\fill [ black] (0.5,1) circle (2pt);
\fill [ black] (2,0.5) circle (2pt);
\fill [ black] (1.5,0) circle (2pt);
\draw [round,fill=gray,opacity=0.2] (-0.30,2.3) rectangle (2.2,-.35);
\end{scope}

\draw [round,fill=gray,opacity=0.6] (-0.30,-6) rectangle (2.2,-.35);
\end{tikzpicture}
\caption{$B$ has four siblings. $\RG(B)$ is the region shaded below $\BOX(B)$}
\label{fig:blockregion}
\end{figure}
\end{center}
We will now describe properties of the points that are added in $\RG(B)$.
\begin{definition}
A point $p$ is a key-new point in $\RG(B)$ 
if there is no point $q \in \RG(B)$ such that (1)$\Key{p} = \Key{q}$
and $\Time{q} < \Time{p}$, else it is called key-old. 
\end{definition}

In other words, if no point exists in $\RG(B)$ with the same key as $\Key{p}$ before
time $\Time{p}$, then $p$ is key-new. 

\begin{definition}
Let $p \in \XX$ be an original point with least $\Time{p}$ and the property 
(1) $\Time{p} > \Time{\MAXT(B)}$ and (2)$\Key{p} < \Left(B)$ or $\Left(B) < \Key{p} < \MINK(B) $. 
Then we say that $p$ is the left-relative of $B$ or $p \in \LeftRel(B)$.
Similarly define $\RightRel(B)$ in a symmetric fashion.
\end{definition}

In words, $\LeftRel(B)$ contains the first original point in the 
sequence $\XX$ that comes after all the points in $B$ and has key 
strictly than  $\Key{\Left(B)}$ or in the range $[\Key{\Left(B)}, \Key{\MINK(B)}]$.
Note that there are at most 2 original points in the set $\LeftRel(B)$.
\begin{lemma}
\label{lem:pointotherthanrel} 
Let $q$ be an original point that satisfies $\Key{q} < \Left(B)$
and $\Time{q} > \Time{p}$ where $p \in \LeftRel(B)$ and $\Key{p} < \Key{\Left(B)}$. Then 
$\GRE$ cannot mark any key-new point in $\RG(B)$ while processing $q$.
\end{lemma}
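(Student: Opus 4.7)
The plan is to argue by contradiction. Suppose $\GRE$ marks a key-new point $s = (k, \Time{q})$ in $\RG(B)$ when processing $q$. Key-newness says no mark at column $k$ exists in $\RG(B)$ before $\Time{q}$, so the closest point $r_k$ above $q$ at column $k$ lies in $\BOX(B)$, and in particular $\Time{r_k} \le \Time{\MAXT(B)} < \Time{p} < \Time{q}$ and $\Key{r_k} = k \ge \Key{\MINK(B)} > \Key{\Left(B)} > \Key{p}$. For $s$ to be placed, the rectangle $_q\Box^{r_k}$ must be arborally unsatisfied just before time $\Time{q}$. I will derive a contradiction by exhibiting a point of $\XX \cup \GG$ inside this rectangle.

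The natural witness candidate is $p$ itself, which has $\Time{p} \in (\Time{r_k}, \Time{q})$, so the question is purely about keys: $p$ lies in $_q\Box^{r_k}$ iff $\Key{p} > \Key{q}$. If $\Key{p} > \Key{q}$, $p$ witnesses and we are done. Otherwise $\Key{p} < \Key{q}$, and I look back to the state at $\Time{p}$. Because $s$ is key-new, $r_k$ is also the closest point above $p$ at column $k$. So if $_p\Box^{r_k}$ were unsatisfied at the moment $\GRE$ processes $p$, then $\GRE$ would place a mark at $(k, \Time{p}) \in \RG(B)$, destroying key-newness of $s$. Hence some $t \in \XX \cup \GG$ already witnesses $_p\Box^{r_k}$ with $\Key{p} < \Key{t} < \Key{r_k}$ and $\Time{r_k} < \Time{t} < \Time{p}$. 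Since $\Time{t} < \Time{p} < \Time{q}$, the only way for $t$ not to witness $_q\Box^{r_k}$ too is $\Key{p} < \Key{t} \le \Key{q} < \Key{\Left(B)}$, placing $t$ inside the strip $S := (\Key{p}, \Key{\Left(B)}) \times (\Time{\Top(B)}, \Time{p})$.

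It remains to show that $S$ contains no point of $\XX \cup \GG$. No original lies in $S$: during $(\Time{\Top(B)}, \Time{\MAXT(B)}]$ originals belong to $B$ by $k$-decomposability and have key $\ge \Key{\MINK(B)}$; during $(\Time{\MAXT(B)}, \Time{p})$ every original has key $> \Key{\MAXK(B)}$, because $p$ is by definition the first original after $\MAXT(B)$ whose key falls into $(-\infty, \Key{\Left(B)})$ or $(\Key{\Left(B)}, \Key{\MINK(B)})$, and $B$'s own key range is forbidden since $\XX$ is a permutation. For marked points inside $S$ I combine two observations. First, marks placed while processing $\Top(B)$ lie on the line $y = \Time{\Top(B)}$ and hence outside $S$'s open time range; marks placed while processing a non-top $r \in B$ go only into columns $\Key{\Left(B)}, \Key{\Right(B)}$ or into $\BOX(B)$ by Lemma~\ref{lem:blockleftright}, so none have key in $(\Key{p}, \Key{\Left(B)})$. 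Second, for a mark placed while processing an original $q''$ with $\Time{q''} \in (\Time{\MAXT(B)}, \Time{p})$ and therefore $\Key{q''} > \Key{\MAXK(B)}$, I take the earliest such mark $t^*$ at a column $c \in (\Key{p}, \Key{\Left(B)})$. Minimality forces the closest point $r'$ above $q''$ at column $c$ to satisfy $\Time{r'} \le \Time{\Top(B)}$, since otherwise $r'$ would itself be an earlier mark in $S$ at column $c$. But then $\Left(B)$ lies in ${}^{r'}\Box_{q''}$ because $c < \Key{\Left(B)} < \Key{q''}$ and $\Time{r'} \le \Time{\Top(B)} = \Time{\Left(B)} \le \Time{q''}$, so the rectangle is already arborally satisfied and $\GRE$ places no mark at $(c, \Time{q''})$, contradicting the existence of $t^*$. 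The main obstacle is precisely this earliest-mark sub-argument; its cleanness rests on $\Left(B)$ serving as a universal witness for rectangles originating from originals to the right of $B$ and reaching leftward past column $\Key{\Left(B)}$, and some care is needed when a mark placed while processing $q''$ is justified via an ancestor block of $q''$ in $\TD(\XX)$ whose own left frontier could differ from $\Key{\Left(B)}$.
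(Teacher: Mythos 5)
Your argument is essentially the paper's: both derive the contradiction by showing that $_p\Box^{r_k}$ must already be arborally unsatisfied when $\GRE$ processes $p$ (which would force a mark at $(k,\Time{p})\in\RG(B)$, contradicting key-newness of $s$), and both establish emptiness of the leftward strip via Lemma~\ref{lem:blockleftright} together with a minimality argument on the earliest offending point (the paper uses $\MAXT(B)$ as the auxiliary witness where you use $\Left(B)$, but the mechanism is the same). One real fix is needed: you should observe $\OP(r_k)\ne\Top(B)$, as the paper does, by noting that otherwise $\Left(B)$ already satisfies $_q\Box^{r_k}$; without this your assertions $\Time{r_k}<\Time{t}$ and $\Time{t}>\Time{\Top(B)}$ (needed to put $t$ in the \emph{open} time range of $S$) are not justified. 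A further small inaccuracy is the intermediate claim that originals at times in $(\Time{\MAXT(B)},\Time{p})$ have key $>\Key{\MAXK(B)}$: the other member of $\LeftRel(B)$, with key in $(\Key{\Left(B)},\Key{\MINK(B)})$, may precede $p$ in time, so the correct statement is only $\Key{q''}>\Key{\Left(B)}$ --- which is fortunately all that your $\Left(B)$-witness step actually uses.
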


\begin{center}
\begin{figure}[H]
\centering
\begin{tikzpicture}
[round/.style={rectangle, rounded corners=3mm, minimum size=20mm, draw=black!50}]

\fill [black] (1,2) circle (2pt);

\fill [blue] (-1,2) circle (2pt) node[above,black]{\tiny{\Left(B)}};
\fill [blue] (-3,2) circle (2pt);

\fill [ black] (0,1.5) circle (2pt);
\fill [ black] (0.5,1) circle (2pt);
\fill [ black] (2,0.5) circle (2pt);
\fill [ black] (1.5,0) circle (2pt);
\draw [round,fill=gray,opacity=0.2] (-0.30,2.3) rectangle (2.2,-.35);

\fill [black] (-2,-1) circle (2pt) node[above,black]{\tiny{p}};
\fill [black] (-1.5,-1.5) circle (2pt) node[above,black]{\tiny{q}};
\node at  (1,-.6)  {\tiny{Block $B$}};

\end{tikzpicture}
\caption{$p \in \LeftRel(B)$. Lemma \ref{lem:pointotherthanrel} states that while processing 
any other point $q$ that arrives after $p$ and is to the left of $\Left(B)$, $\GRE$ cannot 
mark any key-new point in $\RG(B)$.}
\label{fig:pointotherthanrel}
\end{figure}
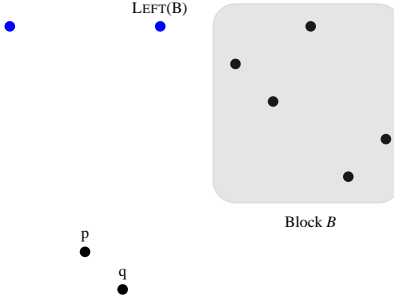
\end{center}

\begin{proof}
Assume for contradiction that $\GRE$ put a key-new point $s$ while processing $q$
in $\RG(B)$ due to unsatisfied rectangle $_q\Box^{s'}$. This implies that 
$\RR(s) = (q, \OP(s'))$. Since we have assumed that $s$ is a key-new point 
$s'$ cannot lie in $\RG(B)$, so it lies above $\RG(B)$. By Lemma \ref{lem:upperbox}, no point can lie above $\BOX(B)$.
So $s'$ lies in $\BOX(B)$. We first note that $\OP(s') \ne \Top(B)$ as then $_q\Box^{s'}$ is satisfied by 
$\Left(B)$. 

Let us assume that $\OP(s') \ne \Top(B)$.
Since $_q\Box^{s'}$ was an unsatisfied rectangle, there exists no 
point in it when $\GRE$ processes $q$. Consider the point $p \in \LeftRel(B)$ with $\Key{p} < \Key{\Left(B)}$.
By definition, $\Time{p} < \Time{q}$. 
Consider the rectangle $_p\Box^{s'}$. We claim that $_p\Box^{s'}$ is an unsatisfied rectangle as (1) there are no point 
in $_q\Box^{s'}$ when $\GRE$ processes $p$ and (2) By Lemma \ref{lem:blockleftright}, 
for any point $r \ne \Top(B)$, $\GRE$ does not put any 
point to the left of $\Left(B)$ and (3) There is no point $r$ with the property that $\Time{r} > \Time{\MAXT(B)}$ and 
$_p\neswarrow^r$ and $^r\senwarrow_q$. Indeed, if such a point exists, then define $r$ be a point which satisfy the above property and has 
the least $y$-coordinate. If $^{\OP(r)}\senwarrow_q$, then $\OP(r) \in \LeftRel(b)$ and not $p$ which contradicts our assumption.
$\OP(r)$ cannot lie in  $_q\Box^{s'}$ and in $\RG(B)$. So $^{\MAXT(B)}\senwarrow_{\OP(r)}$ and $\GRE$ markes the point
$r$ while processing $\OP(r)$ due to unsatisfied rectangle, say $^{r'}\Box_{\OP(r)}$. Note that $_{r'}\neswarrow^{\MAXT(B)}$, as otherwise
$^{r'}\Box_{\OP(r)}$ is satisfied by $\MAXT(B)$. However, this implies that $r'$ is the point with least $y$-coordinate and the property 
that $\Time{r'} > \Time{\MAXT(B)}$ 
$_p\neswarrow^{r'}$ and $^{r'}\senwarrow_q$. This contradicts the definition of $r$.

This implies that $_p\Box^{s'}$ is unsatisfied when $\GRE$ processes $p$. So $\GRE$ should put a point, say $s''$ below $s'$. However, this 
implies that $s''$ is a key-new point and not $s$ contradicting the assumption of the lemma.
\end{proof}

Similar to the above lemma one can also prove the following lemma.
\begin{lemma}
\label{lem:pointotherthanrelsymmetric}
Let $q$ be an original point that satisfies $\Left(B) < \Key{q} < \MINK(B)$
and $\Time{q} > \Time{p}$ where $p \in \LeftRel(B)$ and $\Left(B) < \Key{p} < \MINK(B)$. Then $\GRE$ cannot mark any key-new point in $\RG(B)$ while 
processing $q$.
\end{lemma}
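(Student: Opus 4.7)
The plan is to mirror the proof of Lemma~\ref{lem:pointotherthanrel}, changing only the geometric bookkeeping to account for the fact that $p$ and $q$ now lie in the strip between $\Left(B)$ and $\MINK(B)$ instead of strictly to the left of $\Left(B)$. I would assume for contradiction that, while processing $q$, $\GRE$ marks a key-new point $s \in \RG(B)$ because of an unsatisfied rectangle $_q\Box^{s'}$, where $s'$ is the unique point of $\XX \cup \GG$ above $s$ forming the unsatisfied rectangle. Key-newness of $s$ combined with Lemma~\ref{lem:upperbox} force $s'$ to lie in $\BOX(B)$. Moreover $\OP(s') \ne \Top(B)$: otherwise the original point $\MINK(B)$, whose key lies strictly between $\Key{q}$ and $\Key{s'}$ and whose time lies in $[\Time{\Top(B)}, \Time{\MAXT(B)}] \subseteq [\Time{s'}, \Time{q}]$, would already satisfy $_q\Box^{s'}$, a contradiction.

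The crux is to show that $_p\Box^{s'}$ is itself unsatisfied when $\GRE$ processes $p$. Once this is done, $\GRE$ will mark a point at $(\Key{s'}, \Time{p})$; since $\Time{p} > \Time{\MAXT(B)}$ this mark lies in $\RG(B)$ and shares the key of $s$, contradicting the key-newness of $s$. The case $\Key{p} \ge \Key{q}$ is immediate because then $_p\Box^{s'} \subseteq {}_q\Box^{s'}$, and unsatisfiedness of the latter at time $\Time{q} > \Time{p}$ transfers to the former at time $\Time{p}$. The case $\Key{p} < \Key{q}$ requires ruling out every point of $\XX \cup \GG$ with time strictly below $\Time{p}$ inside the extra slab $[\Key{p}, \Key{q}] \times [\Time{s'}, \Time{p}]$, whose $x$-range is wholly contained in $(\Key{\Left(B)}, \Key{\MINK(B)})$.

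To eliminate points from that slab I would proceed as follows. No point of $\XX$ with time in $[\Time{\Top(B)}, \Time{\MAXT(B)}]$ can have key outside $[\Key{\MINK(B)}, \Key{\MAXK(B)}]$ by the block decomposition, and a point of $\XX$ with time in $(\Time{\MAXT(B)}, \Time{p})$ and key in $(\Key{\Left(B)}, \Key{\MINK(B)})$ would contradict the defining minimality of $p \in \LeftRel(B)$. For marked points I invoke Lemma~\ref{lem:blockleftright}: any mark placed while processing a non-top element of $B$ lands below $\Left(B)$, below $\Right(B)$, or in $\BOX(B)$, and none of these regions meets the slab; marks placed while processing $\Top(B)$ lie on the line $y = \Time{\Top(B)}$ and, by the definition of $\Left(B)$ as the nearest mark to the left of $\Top(B)$, have $x$-coordinate either at most $\Key{\Left(B)}$ or at least $\Key{\Top(B)}$, again missing the slab. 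The remaining delicate case is a marked point $r$ in the slab whose $\OP(r)$ is an original point strictly below $\BOX(B)$ and strictly above $p$ in time; this is dispatched by choosing $r$ with least $y$-coordinate and rerunning the minimality argument from the proof of Lemma~\ref{lem:pointotherthanrel}, with the strip $(\Key{\Left(B)}, \Key{\MINK(B)})$ playing the role of the half-plane $\{x < \Key{\Left(B)}\}$. This last step is the main obstacle; the rest is a mechanical symmetric translation.
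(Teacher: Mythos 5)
Your proposal follows the same basic route the paper intends (the paper only says the symmetric proof is ``similar'' to that of Lemma~\ref{lem:pointotherthanrel} and does not spell it out), and the structure you use --- force $s'$ into $\BOX(B)$, show $_p\Box^{s'}$ was already unsatisfied at time $\Time{p}$, and contradict key-newness of $s$ via the mark $\GRE$ must then have placed at $(\Key{s'},\Time{p})\in\RG(B)$ --- is correct. Your identification of the ``extra slab'' with key range inside $(\Key{\Left(B)},\Key{\MINK(B)})$, the use of the block decomposition to rule out original points there, and the use of Lemma~\ref{lem:blockleftright} together with the definition of $\Left(B)$, $\Right(B)$ and Observation~\ref{obs:topnotinbox} to rule out marks from $B$'s processing are all the right adaptations of the paper's argument.

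Two small caveats. First, your early exit ``$\OP(s')\ne\Top(B)$'' has a hole in the degenerate case $\Top(B)=\MINK(B)$: then, by Observation~\ref{obs:topnotinbox}, $s'$ must equal the original point $\Top(B)$ itself, so $\MINK(B)=s'$ is a corner of the rectangle rather than a third witness, and your ``lies strictly between $\Key{q}$ and $\Key{s'}$'' claim fails. This does not break the proof --- the main argument (containment when $\Key{p}>\Key{q}$, and the slab-clearing argument when $\Key{p}<\Key{q}$) does not rely on $\OP(s')\ne\Top(B)$ --- but you should either drop the early exit or handle the corner case. Second, you label the minimality argument ``the main obstacle'' but only wave at it; it does in fact go through, with one genuine new wrinkle you should spell out: in the original lemma's proof $\OP(r)$ with $\Key{\OP(r)}<\Key{q}$ immediately contradicts $p$'s minimality, but here the strip $(\Key{\Left(B)},\Key{\MINK(B)})$ has \emph{two} sides, and $\OP(r)$ may instead have $\Key{\OP(r)}<\Key{\Left(B)}$ (it could be the other element of $\LeftRel(B)$) or $\Key{\OP(r)}>\Key{\MAXK(B)}$. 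In either case the unsatisfied rectangle that created $r$ has its upper corner $r'$ on the line $x=\Key{r}$, and the witnesses $\Left(B)$ (respectively $\MAXT(B)$) force $\Time{r'}>\Time{\Top(B)}$ (respectively $\Time{r'}>\Time{\MAXT(B)}$), after which Lemma~\ref{lem:blockleftright} pushes $\Time{r'}$ past $\Time{\MAXT(B)}$ and $r'$ contradicts the minimality of $r$. With those two points filled in, your proof is sound.
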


Similarly one can prove the symmetric versions of the above two lemmas.
We now describe the implications of the above two lemmas. The lemma suggest that at most 
four points ( 2 in $\LeftRel(B)$ and 2 in $\RightRel(B)$) can be responsible for adding 
key-new points in $\RG(B)$.

\begin{corollary}
\label{cor:pointotherthanrelimplication}
$\GRE$ can mark key-new points in $\RG(B)$ while processing points in $\LeftRel(B) \cup \RightRel(B)$ only.
\end{corollary}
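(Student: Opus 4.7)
The corollary is essentially a straightforward deduction from Lemmas \ref{lem:pointotherthanrel}, \ref{lem:pointotherthanrelsymmetric}, and their right-side symmetric counterparts, organized by a case analysis on the key of the point being processed. My plan is to show the contrapositive: any original point $q$ whose processing causes $\GRE$ to mark a key-new point in $\RG(B)$ must in fact lie in $\LeftRel(B)\cup\RightRel(B)$.

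First I would restrict the set of candidate $q$'s. Since any marked point in $\RG(B)$ has $y$-coordinate strictly between $\Time{\MAXT(B)}$ and $\Time{\MAXT(\PP(B))}$, and keys of points in $\RG(B)$ lie in $[\Key{\MINK(B)},\Key{\MAXK(B)}]$, the point $q$ must satisfy $\Time{q}\in(\Time{\MAXT(B)},\Time{\MAXT(\PP(B))})$ (otherwise the horizontal line $y=\Time{q}$ misses $\RG(B)$ entirely). Moreover, by $k$-decomposability, any original point with $\Time{q}$ in this time interval and $\Key{q}\in[\Key{\MINK(B)},\Key{\MAXK(B)}]$ would have to belong to $B$ itself, contradicting $\Time{q}>\Time{\MAXT(B)}$. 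Hence $\Key{q}\notin[\Key{\MINK(B)},\Key{\MAXK(B)}]$.

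Next I would split into four key-ranges for $q$: (a) $\Key{q}<\Key{\Left(B)}$; (b) $\Key{\Left(B)}<\Key{q}<\Key{\MINK(B)}$; (c) $\Key{\MAXK(B)}<\Key{q}<\Key{\Right(B)}$; (d) $\Key{q}>\Key{\Right(B)}$. (When $\Left(B)$ or $\Right(B)$ does not exist the relevant subcases degenerate and cause no trouble.) In each case, consider the original point $p$ of \emph{least} time satisfying $\Time{p}>\Time{\MAXT(B)}$ together with the same range condition as $\Key{q}$. By the definition of $\LeftRel(B)$ in cases (a),(b) and of $\RightRel(B)$ in cases (c),(d), this $p$ lies in $\LeftRel(B)\cup\RightRel(B)$. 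If $q=p$, the claim holds trivially. Otherwise $\Time{q}>\Time{p}$, and we are exactly in the hypothesis of Lemma \ref{lem:pointotherthanrel} (case a), Lemma \ref{lem:pointotherthanrelsymmetric} (case b), or their right-side symmetric versions (cases c, d). Each such invocation yields that no key-new point of $\RG(B)$ can be marked while processing $q$, contradicting our assumption on $q$.

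Since the case analysis is exhaustive and each case is dispatched by a previously proved lemma, the corollary follows. There is no real obstacle here; the only thing to check carefully is that the four key-ranges above really cover every $q$ whose processing can mark a point in $\RG(B)$, which is handled by the initial reduction step together with the block structure guaranteeing that no original point lies inside $\RG(B)$.
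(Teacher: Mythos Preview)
Your proposal is correct and follows essentially the same route the paper intends: the paper states the corollary immediately after Lemmas~\ref{lem:pointotherthanrel} and~\ref{lem:pointotherthanrelsymmetric} (and their right-side analogues) as a direct implication, without spelling out the case analysis, and your argument is exactly the case analysis on $\Key{q}$ that makes this implication explicit. One tiny point you might add for completeness: the boundary values $\Key{q}=\Key{\Left(B)}$ and $\Key{q}=\Key{\Right(B)}$ cannot occur for an original $q$ with $\Time{q}>\Time{\MAXT(B)}$, since $\UP(\Left(B))$ (respectively $\UP(\Right(B))$) is the unique original point with that key and it lies above $\Top(B)$; this ensures your four key-ranges really are exhaustive.
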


We define some more properties of the points added to $\RG(B)$.
\begin{definition}
A key $b$ is live in $\RG(B)$ at time $t$ if there is a point $p \in \RG(B)$ such that $\Key{p}=b$ and $\Time{p} < t$ and 
there exists no point $p' \in \RG(B)$ such that
(1) $\Key{p'} < \Key{p}$ or $\Key{p'} > \Key{p}$ and (2) $t > \Time{p'} \ge \Time{p}$. We say
that key $b$ is not-live in $\RG(B)$ if such a point $p'$ exists. 
A point $p$ is said to be key-live at time $t$ in $\RG(B)$ if key $\Key{p}$ is live in $\RG(B)$ at time $t$, other-wise it is key-not-live.
\end{definition}

\begin{center}
\begin{figure}[H]
\centering
\begin{tikzpicture}[round/.style={rectangle, rounded corners=3mm, minimum size=20mm, draw=black!50}]
\fill [black] (1,2) circle (2pt);
\fill [ black] (0,1.5) circle (2pt);
\fill [ black] (0.5,1) circle (2pt);
\fill [ black] (2,0.5) circle (2pt);
\fill [ black] (1.5,0) circle (2pt);
\node at (1,2.5) {\tiny{Block $B$}};
\draw [round,fill=gray,opacity=0.2] (-0.30,2.3) rectangle (2.2,-.35);

\draw [round,fill=gray,opacity=0.6] (-0.30,-4) rectangle (2.2,-.35);
\node at (1,-4.2) {\tiny{$\RG(B)$}};

\fill [ blue ] (0,-1.5) circle (2pt) node[black,above] {\tiny{$p_1$}};
\fill [ blue ] (0.5,-2) circle (2pt) node[black,above] {\tiny{$p_2$}};
\fill [ blue ] (1,-1) circle (2pt) node[black,above] {\tiny{$p_3$}};
\fill [ blue ] (1.5,-1.5) circle (2pt) node[black,above] {\tiny{$p_4$}};
\fill [ blue ] (2,-2) circle (2pt) node[black,above] {\tiny{$p_5$}};

\draw (-2,-2.5) -- (4,-2.5) node[above]{\tiny{Time $t$}};
\end{tikzpicture}
\caption{Key $\Key{p_1}, \Key{p_2}$ and $\Key{p_5}$ are live at time $t$ and Key $\Key{p_3}$ and $\Key{p_4}$ 
are not live. Similarly point $p_1,p_2$ and $p_5$ are key-live and point $p_3$ and $p_4$ are 
key-not-live.}
\label{fig:blockleftright}
\end{figure}
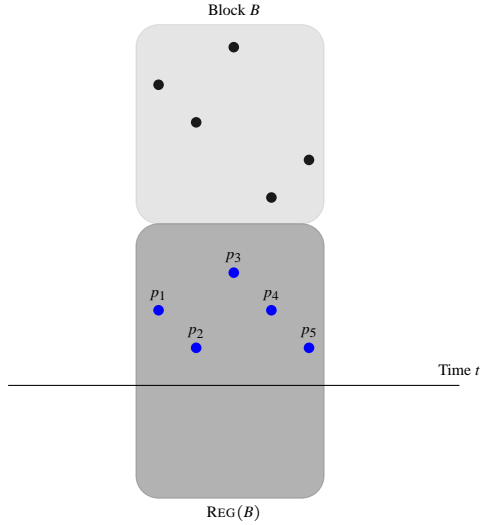
\end{center}


By Lemma \ref{lem:pointotherthanrel}, only points in $\LeftRel(B) \cup \RightRel(B)$ can add key-new points to $\RG(B)$.
Let $p \in \LeftRel(B)$ such that $\GRE$ adds a key-new point $q$ in $\RG(B)$ while processing $p$
due to unsatisfied rectangle $_p\Box^{q'}$. Since $q$ is key-new in $\RG(B)$, $q'$ must lie in $\BOX(B)$.
This implies that $\OP(q') \in B$. Since $p$ is a left-relative of $B$, it lies to the left of $\MINK(B)$.
This implies that $\RR(q) \in \MFC$. So, we have proved the following lemma:

\begin{lemma}
\label{lem:key-newinmfc} 
All key-new points added by original points in $\LeftRel(B)$ in $\RG(B)$ have $\RR(\cdot) \in \MFC$.
\end{lemma}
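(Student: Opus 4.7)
The plan is to unpack the definitions of $\CP$ and $\MFC$ and verify the latter's conditions directly for every such key-new marked point. Fix an arbitrary key-new point $z \in \RG(B)$ that $\GRE$ marks while processing some $p \in \LeftRel(B)$, and identify the unsatisfied rectangle that forced $\GRE$ to place $z$. Since $p \in \LeftRel(B)$ satisfies $\Key{p} < \Key{\MINK(B)}$, and $z$ shares its column with some original point of $B$ (because $z \in \RG(B)$ means $\Key{\MINK(B)} \le \Key{z} \le \Key{\MAXK(B)}$), we have $\Key{p} < \Key{z}$. Thus the unsatisfied rectangle takes the form $_p\Box^{z'}$, where $z' \in \XX \cup \GG$ sits on the column $x = \Key{z}$ and lies the appropriate amount above $z$.

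Next I would localize $z'$ inside $\BOX(B)$. Because $z$ is key-new in $\RG(B)$, no earlier point of $\RG(B)$ shares the column $\Key{z}$ with $z$; so $z'$ lies outside $\RG(B)$, forcing $\Time{z'} \le \Time{\MAXT(B)}$. If instead $z'$ were above $\BOX(B)$ it would lie in $\UPB(B)$, contradicting Lemma~\ref{lem:upperbox}. Therefore $z' \in \BOX(B)$. Now the unique original point on the column $\Key{z}$ (recall $\XX$ is a permutation) is the point of $B$ at that column, so $\OP(z') \in B$ regardless of whether $z' \in \XX$ or $z' \in \GG$.

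Finally I would verify the $\MFC$ conditions on $\CP(z)$. The rectangle $_p\Box^{z'}$ being unsatisfied at the moment of marking means nothing of $\XX \cup \GG$ lay strictly between $z$ and $z'$ on their shared column, so $z'$ is the first point above $z$, giving $\CP(z) = (\OP(z),\OP(z')) = (p, \OP(z'))$. On the horizontal line $y = \Time{p}$ we have $\lrarrow{p}{z}$ since $\Key{p} < \Key{\MINK(B)} \le \Key{z}$; on the vertical line $x = \Key{z}$ we have $\abarrow{\OP(z')}{z}$ since $\OP(z') \in B$ sits above $z$ on the same column. This is exactly case~(1) of the definition of $\MFC$, so $\CP(z) \in \MFC$.

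The main obstacle, as I see it, is the column analysis in the middle step: one must rule out that $z'$ escapes $\BOX(B)$ either upward (handled by Lemma~\ref{lem:upperbox} applied to $\UPB(B)$, since the column $\Key{z}$ is in the key range of $B$) or downward (handled by the key-new hypothesis), and one must handle the case $z' \in \GG$ where $\OP(z') \ne z'$ by appealing to the fact that the unique original point on this column lies in $B$. Once this is pinned down, the membership in $\MFC$ is an immediate check against the definition.
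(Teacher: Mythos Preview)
Your localization of $z'$ in $\BOX(B)$ and the conclusion $\OP(z')\in B$ match the paper's argument. The error is in the last step, where you verify the $\MFC$ condition. You claim $\abarrow{\OP(z')}{z}$ on the grounds that ``$\OP(z')\in B$ sits above $z$ on the same column''. But $\OP(z')$ is by definition the original point on the \emph{row} $y=\Time{z'}$, not on the column $x=\Key{z}$; you have conflated $\OP$ with $\UP$. When $z'\in\GG$ we have $\OP(z')\ne z'$, so $\Key{\OP(z')}\ne\Key{z}$, and all you know is $\Key{\MINK(B)}\le\Key{\OP(z')}\le\Key{\MAXK(B)}$. In particular $\OP(z')$ could lie to the north-\emph{west} of $z$, which is neither the $\MMC$ pattern nor the $\MFC$ pattern.

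The fix is short and is what the paper's (terse) argument is really using: from $\OP(z')\in B$ you get $\Key{\OP(z')}\ge\Key{\MINK(B)}>\Key{p}$, so $\OP(z')$ is not to the north-west of $p=\OP(z)$, hence $\CP(z)\notin\MMC$. Now Lemma~\ref{lem:coupling-mmc-or-mfc} (which rules out exactly the ``$\OP(z')$ north-west of $z$'' configuration) forces $\CP(z)\in\MFC$. Your direct verification of case~(1) of the $\MFC$ definition cannot be completed without that lemma.
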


One can also prove the symmetric version of the above lemma in a similar way.
\begin{lemma}
\label{lem:key-newinmfcsymmetric}
All key-new points added by original points in $\RightRel(B)$ in $\RG(B)$ have $\RR(\cdot) \in \MFC$.
\end{lemma}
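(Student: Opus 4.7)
The plan is to mirror the argument used for Lemma~\ref{lem:key-newinmfc}, swapping the roles of left and right throughout. Concretely, let $p \in \RightRel(B)$ and suppose $\GRE$ marks a key-new point $q \in \RG(B)$ while processing $p$. Because $p$ lies to the right of $\MAXK(B)$ (by definition of $\RightRel(B)$), the unsatisfied rectangle that causes $q$ to be marked has the form $^{q'}\Box_p$ with $q'$ the first point above $q$ on the column of $\OP(q') \in \XX \cup \GG$ to the left of $p$.

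The first key step is to argue that $q'$ must lie in $\BOX(B)$. If $q'$ were in $\RG(B)$, then there would already be a point at key $\Key{q'}$ in $\RG(B)$ before time $\Time{p}$; but then $q$ would share a key with some earlier point of $\RG(B)$ (namely $\OP(q')$'s column restricted to $\RG(B)$), contradicting the assumption that $q$ is key-new. By Lemma~\ref{lem:upperbox} $q'$ cannot lie in $\UPB(B)$, and by the geometry ($q'$ is directly above $q \in \RG(B)$, to the left of $p$, and above $\MAXT(B)$ would contradict key-newness or placement restrictions established in Lemma~\ref{lem:blockleftright}), the only remaining possibility is $q' \in \BOX(B)$. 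Hence $\OP(q') \in B$.

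The second step is to identify the type of the pair. We have $\CP(q) = (\OP(p), \OP(q')) = (p, \OP(q'))$, with $p$ to the right of $\OP(q')$ (since $\OP(q') \in B$ has $\Key{\OP(q')} \le \Key{\MAXK(B)} < \Key{p}$) and $\Time{\OP(q')} < \Time{p}$. Thus $\lrarrow{\OP(q')}{\cdot}$ fails and instead we are in the symmetric branch $\lrarrow{p}{\OP(p)}$\,---\,wait, $\OP(p) = p$ since $p \in \XX$. Rewriting in terms of the definition of $\MFC$: $q$ lies on the row of $p$, to the left of $p$ (since $q \in \RG(B)$ is below $\BOX(B)$, hence with key at most $\Key{\MAXK(B)} < \Key{p}$), and $\OP(q')$ lies in the upper region required, so $\CP(q)$ falls squarely in case (2) of the $\MFC$ definition (or case (1) depending on the exact side of $q'$), which gives $\CP(q) \in \MFC$.

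I do not expect a genuine obstacle here, as the statement is a direct mirror of Lemma~\ref{lem:key-newinmfc} and the preceding lemmas (Lemmas~\ref{lem:blockleftright}, \ref{lem:pointotherthanrelsymmetric}, \ref{lem:upperbox}) have already been proved in both orientations. The only subtle point is carefully tracking which corner of $\MFC$ the pair falls into, but once $\OP(q') \in B$ is established the classification is forced by the fact that $\Key{p} > \Key{\MAXK(B)} \geq \Key{\OP(q')}$ and $\Time{q} = \Time{p} > \Time{\OP(q')}$.
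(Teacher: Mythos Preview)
Your approach is correct and mirrors the paper's, which simply states that the symmetric version follows ``in a similar way'' from the argument for Lemma~\ref{lem:key-newinmfc}. The one point to tighten: once you establish $\OP(q') \in B$ and hence $\Key{\OP(q')} \le \Key{\MAXK(B)} < \Key{p}$, the clean way to conclude is to note this rules out $\MMC$ (which would require $_{p}\neswarrow^{\OP(q')}$) and then invoke Lemma~\ref{lem:coupling-mmc-or-mfc}, rather than trying to verify the $\MFC$ clauses directly---your hedging ``case (2) \ldots\ or case (1) depending on the exact side of $q'$'' is unnecessary, since $\lrarrow{q}{p}$ forces the case-(2) side, and the intermediate possibility $\Key{q} < \Key{\OP(q')} < \Key{p}$ is excluded by that partition lemma, not by the $\MFC$ definition alone.
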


\subsection{Improving the bound on $|\MMC|$ to $O(n \log k)$}
Let $B$ be any block of $\TD(\XX)$ that has at most $\ell (\ell < k)$ children.
Let these children be $B_1, B_2, \dots, B_{\ell}$ where $\Key{\MAXK(B_i)} < \Key{\MINK(B_{i+1})}$, or 
in words, each key of the block $B_i$ is smaller than each key of block $B_{i+1}$.
We recursively partition $\ell$ blocks into two half till a singleton block is obtained. This partition can be 
represented as a tree called the $\PT(B)$. At the root of $\PT(B)$ is the 
region $\RG({B_1,B_2,\dots,B_{\ell}})$. We then divide this region into two half\footnote{we drop the floor notation for readability} :
$\RG(B_1,B_2,\dots, B_{\ell/2})$ and $\RG(B_{\ell/2+1}, B_{\ell/2+2}, \dots, B_{\ell})$.
In Lemma \ref{lem:partitionlemma}, we will show that the total number of points (with $\RR(\cdot) \in \MMC$) added by $\GRE$ 
in $\RG(B_{\ell/2+1}, B_{\ell/2+2}, \dots, B_{\ell})$ while processing points in\\ 
$\{\Top(B_1), \Top(B_2),\dots,\Top(B_{\ell/2})\} \setminus \Top(B)$ is $O(\ell)$. 
Similarly, the total number of points (with $\RR(\cdot) \in \MMC$) added by $\GRE$ in $\RG({B_1,B_2,\dots,B_{\ell}})$ while processing 
original points in \\ $\{ \Top(B_{\ell/2+1}), \Top(B_{\ell/2+2}),\dots,\Top(B_{\ell}) \} \setminus \Top(B)$ is $O(\ell)$.
Note that the top element of the block $B$ is exempted as it cannot add any point in the 
$\RG({B_1,B_2,\dots,B_{\ell}})$.
In general we have to show the following lemma:

\begin{lemma}
\label{lem:partitionlemma}
Let $B_i,B_{i+1},\dots, B_{i+2m-1}$ be the set of consecutive children blocks of $B$ in $\TD(\XX)$.
The number of points added by $\{ \Top(B_{i}), \Top(B_{i+1}), \dots, \Top(B_{i+m-1}) \} \setminus \Top(B)$ 
in $\RG(B_{i+m}, B_{i+m+1}, \dots, B_{i+2m-1})$ with $\RR(\cdot) \in \MMC$ is $\le 12m$.
\end{lemma}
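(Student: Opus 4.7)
The plan is to fix a right-half block $B_j$ with $j\in[i+m,i+2m-1]$ and bound the number of $\MMC$-paired marked points in $\RG(B_j)$ whose $\OP(\cdot)$ lies in $\{\Top(B_i),\ldots,\Top(B_{i+m-1})\}\setminus\{\Top(B)\}$ by $12$; summing over the $m$ choices of $j$ then yields the claimed bound $12m$. First I observe that any counted point $p_1$ must satisfy $\RR(p_1)\in\RMMC$ rather than $\LMMC$: since $\OP(p_1)=\Top(B_\ell)$ for some left-half $\ell$, while $\Key{p_1}$ lies in the key range of the right-half block $B_j$ (which by $k$-decomposability strictly exceeds every key in $B_\ell$), we have $\lrarrow{\OP(p_1)}{p_1}$, matching only the $\RMMC$ side. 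Next I decompose the points of $\RG(B_j)$ into monochromatic stretches via the key-new/key-live framework: by Corollary~\ref{cor:pointotherthanrelimplication} only originals in $\LeftRel(B_j)\cup\RightRel(B_j)$ (a set of at most four points) can add key-new points to $\RG(B_j)$, yielding at most four stretches, and by Lemmas~\ref{lem:key-newinmfc} and \ref{lem:key-newinmfcsymmetric} every key-new point has pair in $\MFC$; hence every counted $\RMMC$ point is key-old and occupies a non-initial position in some stretch.

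I then analyze the local structure within a single stretch $p_0,p_1,\ldots,p_s$ at some key $b$. Since $\UPB(B_j)$ is empty by Lemma~\ref{lem:upperbox} and the marked key-$b$ points in $\BOX(B_j)$ all precede the stretch, the first point above $p_\alpha$ on the line $x=b$ is precisely $p_{\alpha-1}$ for every $\alpha\ge 1$, so $\CP(p_\alpha)=(\OP(p_\alpha),\OP(p_{\alpha-1}))$. Writing $\OP(p_\alpha)=\Top(B_{\ell_\alpha})$ (a left-half child-top, by the counting restriction), the $\RMMC$ condition combined with Lemma~\ref{lem:notoutsideblock} forces $\OP(p_{\alpha-1})\in B$ to lie in a child $B_{\ell_{\alpha-1}}$ with $\ell_{\alpha-1}<\ell_\alpha$; and the time ordering along the stretch gives $\Time{\OP(p_{\alpha-1})}<\Time{\OP(p_\alpha)}$. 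Hence the $\RMMC$ positions in a single stretch encode a chain of left-half child-blocks strictly increasing in block-index, each contributing an original strictly increasing in processing time.

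The main technical step --- which I expect to be the hardest part --- is to show that any such chain has length at most $3$ inside a single stretch. My plan is to iterate the witness construction of Lemma~\ref{lem:mmc-witness}: each consecutive $\RMMC$ link produces an original witness $s\in\XX$ sandwiched between the two pairs (SE of one $r$ and above/NW/NE of the next), and together with the partner-locality of Lemma~\ref{lem:notoutsideblock} and the key-disjointness of the child blocks, this forces $s$ to live in a specific sibling block of $B_{\ell_\alpha}$. A chain of length $\ge 4$ would then generate witness points whose forced block memberships collide or would sit in $\UPB$ of some block (violating Lemma~\ref{lem:upperbox}); the delicate case analysis on whether each $\UP(r_\alpha)$ lies inside $\BOX(B_j)$ or further south is where I expect the argument to require the most care. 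Combining the resulting per-stretch bound of $3$ with at most $4$ stretches per $\RG(B_j)$ and $m$ right-half blocks yields the claimed bound $12m$.
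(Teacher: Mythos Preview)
Your approach has a concrete gap where you assert that $\RG(B_j)$ contains ``at most four stretches''. Corollary~\ref{cor:pointotherthanrelimplication} says only that the (at most four) originals in $\LeftRel(B_j)\cup\RightRel(B_j)$ can create key-new points in $\RG(B_j)$; it does \emph{not} say that the total number of key-new points is at most four. When $\GRE$ processes a single relative $p$, it may mark many points in $\RG(B_j)$ --- one for each key whose lowest point in $\BOX(B_j)$ forms an unsatisfied rectangle with $p$ --- and every one of these is key-new. Since a stretch in your sense is indexed by a key $b$, the number of stretches equals the number of key-new points, which is in general not bounded by any constant independent of $|B_j|$. The product ``$4$ stretches $\times$ $3$ per stretch $= 12$'' therefore does not go through.

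There is also a secondary issue in the per-stretch analysis. The sequence $p_0,p_1,\dots,p_s$ at key $b$ need not consist solely of points whose $\OP(\cdot)$ is a left-half child-top; points placed by other child-tops of $B$, or placed below $\Left(\cdot)/\Right(\cdot)$ by non-top elements via Lemma~\ref{lem:blockleftright}, can interleave, so the chain you build may be broken by points outside the counted set. Moreover, Lemma~\ref{lem:mmc-witness} concerns two $\RMMC$ points sharing the \emph{same} $\OP$, whereas the links in your chain have pairwise different $\OP$'s, so it cannot be invoked as stated. The paper avoids all of this with a global potential argument rather than a per-block one: it tracks $C_t$, the number of key-live points in the entire region $\RG(\BB_r)$, obtains telescoping inequalities across all children of $B$ via Lemmas~\ref{lem:key-liveleftblocks}--\ref{lem:key-livesameblock}, deduces $\sum_{B_j\in\BB_\ell}|R_{\Top(B_j)}|\le 12m$ directly, and then uses (as you correctly did) Lemma~\ref{lem:key-newinmfc} to conclude that every $\MMC$-paired point contributed by a left-half top is key-old and hence already counted by some $R_{\Top(B_j)}$.
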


By symmetry one can also show that the number of points added by \\ $\{ \Top(B_{i+m}), \Top(B_{i+m+1}), \dots, \Top(B_{i+2m-1}) \} \setminus \Top(B)$ 
in $\RG(B_{i}, B_{i+1}, \dots, B_{i+m-1})$ with $\RR(\cdot) \in \MMC$ is $\le 12m$.

Let $Y(B) := T(B_1,B_2,\dots,B_{\ell})$ be the total number of points (with $\RR(\cdot) \in \MMC$) added by 
$\GRE$ while processing points in $\{ \Top(B_1), \Top(B_2),\dots, \Top(B_{\ell}) \} \setminus \Top(B)$ in $\RG(B_1,B_2,\dots, B_{\ell})$. 
It can be calculated as follows:
$Y(B) = T(B_1,B_2,\dots,B_{\ell})$ = $T(B_1,B_2,\dots,B_{\ell/2})$ + $T(B_{\ell/2+1},B_{\ell/2+2},\dots,B_{\ell})$ + $12 \ell$.
This would imply that $Y(B) = T(B_1,B_2,\dots,B_{\ell}) \le 12 \ell \log \ell$.

We would charge these $12 \ell \log \ell$ points 
to the following $\ell-1$ original points: \\ $\{ \Top(B_1), \Top(B_2), \dots, \Top(B_{\ell}) \}\setminus {\Top(B)}$.
That is, each top point of children of block $B$ except one gets $\le 13 \log \ell$ charge.

We are now ready to prove Theorem~\ref{thm:boundmmc}
\begin{proof}
Let $\RT(r) = B'$ and $\PP(B') = B$. By Lemma \ref{lem:blockleftright}, $\GRE$ can put at most two point below $\Left(B)$
and $\Right(B)$. And by the analysis above, the amortized number of points (with $\RR(\cdot) \in \MMC$) added by $\GRE$ in 
$\BOX(B)$ while processing $r$ with $\RR(\cdot) \in \MMC$ is $13 \log \ell \le 13 \log k$ where $\ell$ is the number of children of $B$. So, 
amortized number of points added by $r = 2 +  13 \log k$. So $|\MMC| \le 14 n \log k$.
\end{proof}

The rest of the section is devoted in proving Lemma \ref{lem:partitionlemma}.

\subsection{Proof of Lemma \ref{lem:partitionlemma}}
\label{sec:proofpartitionlemma}
Let $B_1, B_2, \dots, B_l$ be the children of $B$
in $\TD(\XX)$. Let $B_i, B_{i+1},\dots,B_{i+2m-1}$ be the consecutive 
$2m$ children of $B$. 
Let $\BB_{\ell} = \{B_i,B_{i+1},\dots,B_{i+m-1}\}$ and $\BB_r = \{B_{i+m}, B_{i+m+1}, \dots, B_{i+2m-1}\}$.
Let $\BB_{ll} = \{B_1,B_2,\dots,B_{i-1}\}$ and 
$\BB_{rr} = \{ B_{i+2m}, B_{i+2m+1}, \dots, B_{\ell} \}$.
Let $C_t$ denote the set of key-live points in $\RG(\BB_r)$ after 
$\GRE$ finished processing all the points till step $t$.

\begin{lemma}
\label{lem:key-liveleftblocks} 
Let $B_j \in \{ \BB_{\ell}, \BB_{\ell\ell} \}$. 
Let $N_{\Top(B_j)}, R_{\Top(B_j)}$ be the set of key-new and key-old points added by 
$\GRE$ in $\RG(\BB_r)$ while processing $\Top(B_j)$. Then $C_{\Time{\MAXT(B_j)}+1} \le C_{\Time{\MINT(B_j)}} - \max\{0,(|R_{\Top(B)}|-2)\} + 2|\{B_k \in \BB_r | {\Top(B)} \in \LeftRel(B_k) \}| $. 
\end{lemma}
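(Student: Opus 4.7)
Reading the lemma with $\Top(B_j)$ in place of $\Top(B)$ (which appears to be a typographical inconsistency with the definition of $N_{\Top(B_j)}$ and $R_{\Top(B_j)}$), the proof has three steps: (i) reduce the change in $C_t$ during $B_j$'s processing to what happens at the single instant $\tau := \Time{\Top(B_j)} = \Time{\MINT(B_j)}$; (ii) bound $|N_{\Top(B_j)}|$ in terms of the left-relative count; and (iii) do the liveness bookkeeping around $\tau$. For (i), I invoke Lemma \ref{lem:blockleftright}: for any $r \in B_j$ with $r \ne \Top(B_j)$, $\RT(r) \subsetneq B_j$, so $\PP(\RT(r)) \subseteq B_j$, and every mark $\GRE$ places while processing $r$ sits in $\BOX(\PP(\RT(r))) \subseteq \BOX(B_j)$ or directly below the corresponding $\Left/\Right$ points. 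Since $\RG(\BB_r)$'s key range is disjoint from $B_j$'s (blocks are contiguous on the key axis), none of these marks falls into $\RG(\BB_r)$. Consequently the only time in $[\Time{\MINT(B_j)}, \Time{\MAXT(B_j)}]$ at which $C_t$ can change is $t = \tau + 1$.

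For (ii), I apply Corollary \ref{cor:pointotherthanrelimplication} to each $B_k \in \BB_r$: a key-new point in $\RG(B_k)$ can be marked only while $\GRE$ processes an original in $\LeftRel(B_k) \cup \RightRel(B_k)$. Because $B_j \in \BB_\ell \cup \BB_{\ell\ell}$ lies strictly to the left of every $B_k \in \BB_r$ on the key axis, $\Key{\Top(B_j)} < \Key{\MINK(B_k)}$, so $\Top(B_j) \notin \RightRel(B_k)$. Since $|\LeftRel(B_k)| \le 2$ by definition, processing $\Top(B_j)$ contributes at most two key-new marks to $\RG(B_k)$, and only when $\Top(B_j) \in \LeftRel(B_k)$; summing across $B_k$ gives $|N_{\Top(B_j)}| \le 2\,|\{B_k \in \BB_r : \Top(B_j) \in \LeftRel(B_k)\}|$, which is exactly the positive term in the target inequality.

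For (iii), all $|N_{\Top(B_j)}| + |R_{\Top(B_j)}|$ new marks land on the single horizontal line $y=\tau$ in distinct columns of $\RG(\BB_r)$. A column $b$ that was key-live at time $\tau$ survives to $\tau+1$ only if it coincides with one of these new columns; otherwise the new marks (different key, weakly later time) trigger the killer clause in the definition of live and destroy $b$'s liveness. Each key-new mark can introduce at most one brand-new live column, accounting for the $+2\,|\{\cdots\}|$ term via step (ii). Among the key-old marks, at most two can coincide with columns that were key-live at $\tau$ (these correspond to the two extremal $\LeftRel$-style boundary positions around $\Key{\Left(B_k)}$ and $(\Key{\Left(B_k)}, \Key{\MINK(B_k)})$ inherited from some $B_k$), so these two do not decrease $C$; each of the remaining $|R_{\Top(B_j)}| - 2$ key-old marks, however, revives a previously not-live column while necessarily consuming the liveness of at least one previously-live column, strictly decreasing $C$ by one. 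Summing contributions yields the claimed inequality. The main obstacle is step (iii): formalizing precisely why at most two key-old marks can be ``non-decreasing'' for $C$, which relies on the boundary structure of $\Left(B_k), \Right(B_k)$ and $\LeftRel(B_k), \RightRel(B_k)$ established earlier, combined with the fact that all new marks at $\tau$ are simultaneous; pinning down the exact matching between key-old additions and previously live columns is the technically delicate part.
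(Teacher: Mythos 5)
Your step (i) contains a genuine error. You claim that while $\GRE$ processes a non-top element $r$ of $B_j$, no marks land inside $\RG(\BB_r)$, with the justification that ``$\RG(\BB_r)$'s key range is disjoint from $B_j$'s.'' But Lemma~\ref{lem:blockleftright} explicitly allows marks \emph{below} $\Right(B_j)$, and $\Right(B_j)$ has key $> \Key{\MAXK(B_j)}$, i.e., it lies \emph{outside} $B_j$'s key range and can sit in the key range of $\BB_r$. So such marks \emph{can} land in $\RG(\BB_r)$. The paper's proof does not try to rule this out; instead it observes that any mark placed below $\Right(B_j)$ is in a column (namely $\Key{\Right(B_j)}$) that is already live at that time, so these marks change neither the set nor the number of key-live columns. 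That bookkeeping step is needed and cannot be dismissed by a disjointness claim.

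Your steps (ii) and (iii) also take a route that diverges from, and in places contradicts, what can actually be established. In (ii) you assert $|N_{\Top(B_j)}| \le 2\,|\{B_k \in \BB_r : \Top(B_j) \in \LeftRel(B_k)\}|$. Corollary~\ref{cor:pointotherthanrelimplication} controls \emph{which original points} may add key-new marks, not \emph{how many} key-new marks they add; nothing bounds the raw count $|N_{\Top(B_j)}|$. What the paper bounds is the number of key-new marks that become \emph{key-live} after time $\Time{\Top(B_j)}$: since all marks placed at that step lie on one horizontal line $p_1 \leftrightarrow \cdots \leftrightarrow p_n$, only the two extremes $p_1, p_n$ can be live at $\Time{\Top(B_j)}+1$ (every interior $p_i$ is sandwiched by $p_{i-1}, p_{i+1}$). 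In (iii) you claim ``at most two key-old marks can coincide with columns that were key-live at $\tau$,'' which is essentially the opposite of what the paper uses: the paper argues that \emph{every} key in $R_{\Top(B_j)}$ was key-live at $\Time{\Top(B_j)}$ (if the column had a blocker, $\GRE$ would have found the relevant rectangle already arborally satisfied and would not have placed the mark), and that at most two of them remain live afterwards, giving the $-\max\{0, |R_{\Top(B_j)}|-2\}$ term. Your per-mark ``revives one, kills one'' accounting does not recover that bound and rests on assertions that are not established.
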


\begin{proof}
Let $\lrarrow{\lrarrow{\lrarrow{p_1}{p_2}}{\dots}}{p_n}$ be the set of points added by $\GRE$ in $\RG(\BB_r)$
while processing $\Top(B_j)$. Note that the key $\{\Key{p_1},  \Key{p_2}, \dots, \Key{p_n}\}$ 
are live in $\RG(\BB_r)$ at time $\Time{\Top(B_j)}$.
By definition, only $p_1$ and $p_n$ can be live at time $\Time{\Top(B_j)}+1$ as 
each other point $p_i$ is hidden by the point $p_{i-1}$ and $p_{i+1}$.
$p_1$ and $p_n$ can be key-new or key-old point. Consider only the points in 
$R_{\Top(B_j)}$. All the keys in $R_{\Top(B_j)}$ are live at time $\Time{\Top(B_j)}$ and only 2 of these
($p_1$ and $p_n$) can be live at time $\Time{\Top(B_j)}+1$. So, the number of key-live points 
decrease at least by $\max\{0,|R_{\Top(B_j)}|-2\}$ after processing $\Top(B_j)$.

Consider the points in $N_{\Top(B_j)}$.
By Lemma \ref{lem:pointotherthanrel}, if $\Top(B_j) \in \LeftRel(B_k)$ for $B_k \in \BB_r$, then $\GRE$ can add 
key-new points out of which only only at most two (possibly $p_1$ and $p_n$) are live.
So the total number of key-live points added by $\GRE$ while processing $p$
is at most $2|\{B_k \in \BB_r | \Top(B_j) \in \LeftRel(B_k) \}|$. This implies 
$C_{\Time{\Top(B_j)}} \le C_{\Time{\Top(B_j)}-1} - \max\{0,(|R_{\Top(B_j)}|-2)\} + 2|\{B_k \in \BB_r | {\Top(B_j)} \in \LeftRel(B_k) \}|$.

Consider a point $q \in B_j$ such that $q \ne \Top(B_j)$.
By Lemma \ref{lem:blockleftright}, all the points added by $\GRE$ 
while processing $q$ are either in $\BOX(B_j)$ or below $\Left(B_j)$ and $\Right(B_j)$.
Note that only $\Right(B_j)$ can lie in $\RG(\BB_r)$. Whenever $\GRE$ adds a point below $\Right(B_j)$, 
key $\Key{\Right(B_j)}$ is live at time $\Time{q}$. So $\GRE$ does not increase or decrease key-live points while processing $q$.
This implies  that  $C_{\Time{\MAXT(B_j)}+1} \le C_{\Time{\Top(B_j)}} - \max\{0,(|R_{\Top(B_j)}|-2)\} + 2|\{B_j \in \BB_r | {\Top(B_j)} \in \LeftRel(B_j) \}| $
\end{proof}

Similar, we can prove the following lemma.
\begin{lemma}
\label{lem:key-liverightblock} 
Let $B_j \in \{ \BB_{rr} \}$. 
Let $N_{\Top(B_j)}, R_{\Top(B_j)}$ be the set of key-new and key-old points added by 
$\GRE$ in $\RG(\BB_r)$ while processing $\Top(B_j)$. Then $C_{\Time{\MAXT(B_j)}+1} \le C_{\Time{\Top(B_j)}} - \max\{0,(|R_{\Top(B_j)}|-2)\} + 2|\{B_j \in \BB_r | \Top(B_j) \in \RightRel(B_j) \}| $. 
\end{lemma}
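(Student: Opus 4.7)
The plan is to mirror the proof of Lemma \ref{lem:key-liveleftblocks}, reversing the roles of left and right. Since $B_j \in \BB_{rr}$, block $B_j$ lies to the right of every block in $\BB_r$ in the key dimension, whereas in the previous lemma $B_j$ lay to the left. The whole argument goes through symmetrically once one swaps \textsc{Left} $\leftrightarrow$ \textsc{Right} and $\LeftRel \leftrightarrow \RightRel$; the key structural facts used (Lemma~\ref{lem:blockleftright} on where $\GRE$ may place marked points while processing points of $B_j$, the decrease of key-live points due to a horizontal chain of marks, and Lemma~\ref{lem:pointotherthanrelsymmetric} bounding the generators of key-new points) all have the required symmetric forms.

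First, I would handle $\Top(B_j)$. Let $\lrarrow{\lrarrow{\lrarrow{p_1}{p_2}}{\dots}}{p_n}$ be the points that $\GRE$ places on the line $y=\Time{\Top(B_j)}$ inside $\RG(\BB_r)$. Exactly as in the previous proof, every key $\Key{p_i}$ is live at that instant, but at time $\Time{\Top(B_j)}+1$ only the two extreme points $p_1$ and $p_n$ can still be key-live, because each interior $p_i$ is killed by its horizontal neighbours on the same line. Restricting attention to the key-old points $R_{\Top(B_j)}$ (all of whose keys were already live before step $\Time{\Top(B_j)}$), at most two of them survive, so the count of key-live points drops by at least $\max\{0,|R_{\Top(B_j)}|-2\}$ as a result of processing these key-old marks.

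Next I bound the key-new contribution. By the symmetric version Lemma~\ref{lem:pointotherthanrelsymmetric} (and its analogue for the range between $\Right(B_k)$ and $\MAXK(B_k)$), a key-new point can be created inside $\RG(B_k)$ for $B_k \in \BB_r$ only while processing some element of $\LeftRel(B_k)\cup\RightRel(B_k)$. Since $B_j$ is to the right of every $B_k\in\BB_r$, the only way $\Top(B_j)$ can be eligible is as a right-relative of $B_k$, which gives at most $2|\{B_k\in\BB_r\mid \Top(B_j)\in\RightRel(B_k)\}|$ eligible blocks; each contributes at most two surviving key-live points (again, only the horizontal extremes among the newly marked points inside the corresponding slab can stay live). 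Combining the two effects yields $C_{\Time{\Top(B_j)}+1}\le C_{\Time{\Top(B_j)}}-\max\{0,|R_{\Top(B_j)}|-2\}+2|\{B_k\in\BB_r\mid \Top(B_j)\in\RightRel(B_k)\}|$.

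Finally I extend from $\Top(B_j)$ to $\MAXT(B_j)$. For any $q\in B_j$ with $q\ne\Top(B_j)$, Lemma~\ref{lem:blockleftright} says that marks made while processing $q$ lie in $\BOX(B_j)$, below $\Left(B_j)$, or below $\Right(B_j)$. Since $B_j\in\BB_{rr}$, the region $\RG(\BB_r)$ lies strictly to the left of $B_j$, so the only one of these three regions that can intersect $\RG(\BB_r)$ is the column below $\Left(B_j)$. Whenever $\GRE$ marks a point there, the key $\Key{\Left(B_j)}$ is already live at time $\Time{q}$ (witnessed by $\Left(B_j)$ itself), and no key-new point in $\RG(\BB_r)$ is created; hence $C_t$ is non-increasing for $t$ ranging over $(\Time{\Top(B_j)},\Time{\MAXT(B_j)}]$. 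Chaining this with the previous inequality gives the statement of the lemma. The main thing to be careful about is verifying that the only region of Lemma~\ref{lem:blockleftright} relevant for non-top points of $B_j$ that touches $\RG(\BB_r)$ is indeed the left column, and that marks placed there carry a key that was already live; everything else is a faithful mirror of the preceding argument.
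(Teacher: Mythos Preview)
Your proposal is correct and matches the paper's own treatment: the paper does not give a separate proof of Lemma~\ref{lem:key-liverightblock} but simply states that it is proved \emph{similarly} to Lemma~\ref{lem:key-liveleftblocks}, and your write-up is precisely that symmetric argument with the correct left/right swaps (using $\RightRel$ in place of $\LeftRel$, and observing that for non-top elements of $B_j\in\BB_{rr}$ only the column below $\Left(B_j)$ can meet $\RG(\BB_r)$). Your use of $B_k$ for the inner quantified block also cleans up a variable clash present in the paper's statement.
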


\begin{lemma}
\label{lem:key-livesameblock} 
Let $B_j \in \BB_r$. Then $C_{\Time{\MAXT(B_j)}+1} \le C_{\Time{\MINT(B_j)}} + 4$.
\end{lemma}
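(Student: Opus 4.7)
Let $t_0 := \Time{\MINT(B_j)}$ and $t_1 := \Time{\MAXT(B_j)} + 1$. The plan is to show that at most four keys become newly live in $\RG(\BB_r)$ during $[t_0, t_1)$, which immediately yields $C_{t_1} \le C_{t_0} + 4$.

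First I would observe that a key $b$ which is live at $t_1$ but not live at $t_0$ must arise from a point placed into $\RG(\BB_r)$ at some time in $[t_0, t_1)$: any point whose key was already not live at $t_0$ is witnessed as sandwiched by earlier sandwichers, and those same sandwichers continue to witness non-liveness at $t_1 > t_0$, since the sandwiching condition is monotone in the query time. So I only need to examine the points $\GRE$ adds into $\RG(\BB_r)$ during $[t_0, t_1)$ and bound how many of their distinct keys can end up live at $t_1$.

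For any $r \in B_j$ with $r \ne \Top(B_j)$, Lemma \ref{lem:blockleftright} confines new marks placed by $\GRE$ to $\BOX(B_j)$, the vertical strip at key $\Key{\Left(B_j)}$, or the vertical strip at key $\Key{\Right(B_j)}$. Since $\BOX(B_j)$ is disjoint in key from every other block of $\BB_r$ and disjoint in time from $\RG(B_j)$, none of its marks enter $\RG(\BB_r)$. Consequently all additions to $\RG(\BB_r)$ at times strictly greater than $t_0$ occupy at most the two keys $\Key{\Left(B_j)}$ and $\Key{\Right(B_j)}$, and hence contribute at most two candidate new live keys.

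Finally I examine the points $\GRE$ marks at time $t_0$ while processing $\Top(B_j)$. Let those that fall in $\RG(\BB_r)$ be $q_1 \leftrightarrow q_2 \leftrightarrow \cdots \leftrightarrow q_N$ in increasing key order. For every interior index $1 < i < N$, the neighbors $q_{i-1}$ and $q_{i+1}$ both lie in $\RG(\BB_r)$, share the time $t_0$ with $q_i$, and straddle $\Key{q_i}$ on opposite sides; so $q_i$ is sandwiched at every $t > t_0$, and no later point with key $\Key{q_i}$ enters $\RG(\BB_r)$ unless $\Key{q_i}$ already equals $\Key{\Left(B_j)}$ or $\Key{\Right(B_j)}$, in which case it has already been counted in the previous paragraph. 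Thus only $\Key{q_1}$ and $\Key{q_N}$ contribute new live keys from the $t_0$ additions, for a grand total of at most $4$. The main obstacle I anticipate is fixing the interpretation of liveness consistently: the sandwich argument on the $q_i$'s crucially requires neighbors potentially living in different $\RG(B_k)$'s to act as killers, so $C_t$ must be read as the number of keys live in the combined region $\RG(\BB_r)$, consistently with the global leftmost/rightmost argument used in the proofs of Lemmas \ref{lem:key-liveleftblocks} and \ref{lem:key-liverightblock}.
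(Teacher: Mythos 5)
Your proof is correct and follows essentially the same approach as the paper: both invoke Lemma~\ref{lem:blockleftright} to restrict post-$\Top(B_j)$ additions to $\RG(\BB_r)$ to the two keys $\Key{\Left(B_j)}$ and $\Key{\Right(B_j)}$, and both observe that among the simultaneous marks placed at time $t_0$, only the two extreme ones can contribute new live keys since interior ones are mutually sandwiched. Your version is somewhat more careful about spelling out why not-liveness is monotone (persisting sandwichers) and about why a key killed at $t_0$ cannot be revived later except at the two boundary keys, which the paper leaves implicit.
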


\begin{proof}
Let $p = \Top(B_j)$. Let $\lrarrow{\lrarrow{\lrarrow{p_1}{p_2}}{\dots}}{p_n}$ 
be the set of points added by $\GRE$ in $\RG(\BB_r)$
while processing $p$. By definition, only $p_1$ and $p_n$ can shift their status 
from key-not-live to key-live at time $\Time{p}+1$ as 
each other point $p_i$ is hidden by the point $p_{i-1}$ and $p_{i+1}$.
Let $q \in B_j$ such that $q \ne \Top(B_j)$. By Lemma \ref{lem:blockleftright}, all the points added by $\GRE$ 
while processing $q$ are either in $\BOX(B_j)$ or below $\Left(B_j)$ and $\Right(B_j)$.
Note that all the points that are added in $\BOX(B_j)$ are neither key-live or key-not-live 
as they do not lie in $\RG(\BB_r)$. 
So, they cannot change their status from key-not-live to key-live.
So after processing point $\MAXT(B_j)$, at most 4 keys $\Key{p_1}, \Key{p_n}, \Key{\Left(B_j)}, \Key{\Right(B_j)}$
can shift their status from key-live to key-not-live. This implies that 
$C_{\Time{\MAXT(B_j)}+1} \le C_{\Time{\MINT(B_j)}} + 4$.
\end{proof}

By Lemma \ref{lem:key-liveleftblocks}, if $B_j \in \BB_{\ell}$, $C_{\Time{\MAXT(B_j)}+1} \le C_{\Time{\Top(B_j)}} - (|R_{\Top(B_j)}|-2) + 2|\{B_k \in \BB_r | \Top(B_j) \in \LeftRel(B_k) \}| $.
By Lemma \ref{lem:key-liveleftblocks}, if $B_j \in \BB_{\ell\ell}$, $C_{\Time{\MAXT(B_j)}+1} \le C_{\Time{\Top(B_j)}} + 2|\{B_k \in \BB_r | \Top(B_j) \in \LeftRel(B_k) \}| $.
By Lemma \ref{lem:key-liverightblock}, if $B_j \in \BB_{rr}$, $C_{\Time{\MAXT(B_j)}+1} \le C_{\Time{\Top(B_j)}} + 2|\{B_k \in \BB_r | \Top(B_j) \in \RightRel(B_k) \}| $

The above two sums  along with the inequality in Lemma \ref{lem:key-livesameblock} are telescoping sums that starts at the top of the parent block B and ends at $\MAXT(B)$. Adding them gives the following expression.

$C_{\Time{\MAXT(B)}+1} - C_{\Time{\Top(B)}} \le -\sum_{B_j\in \BB_{\ell}} 
		(|R_{\Top(B_j)|}-2)+ \sum_{B_j \in \{\BB_{\ell},\BB_{\ell\ell}\}} 2|\{B_k \in \BB_r | \Top(B_j) \in \LeftRel(B_k	) \}|
			+ \sum_{B_j \in \BB_{rr}} 2|\{B_k \in \BB_r | \Top(B_j) \in \RightRel(B_k) \}|
			+ \sum_{B_j \in \BB_{r}} 4 $.
Since, there are no points in $\RG(\BB_r)$ at time $\Time{\Top(B)}$, $C_{\Time{\Top(B)}} = 0$.
Also there can at most be two points in $\LeftRel(B_k)$ and $\RightRel(B_k)$, so $2|\{B_k \in \BB_r | \Top(B_j) \in \LeftRel(B_k) \}| + 2|\{B_k \in \BB_r | \Top(B_j) \in \RightRel(B_k) \}| \le 4 \sum_{B_k \in \BB_r} 2 = 8m$. 
This leads to the following inequality:
$0 \le -\sum_{B_j\in B_{\ell}} (|R_{\Top(B_j)}|) + 8m + 4m$. So, $\sum_{B_j\in B_{\ell}} (|R_{\Top(B_j)}|) \le 12m$.
While processing $\Top(B_j)$, all the points added by $\GRE$ other than $R_{\Top(B_j)}$ are key-new. By Lemma \ref{lem:key-newinmfc}, 
these points are in $\MFC$. Note that even points in $R_{\Top(B_j)}$ can be in $\MFC$. However,
we have shown that number of such points is $\le 12m$. Thus we have proved Lemma \ref{lem:partitionlemma}.

\section{Upper Bounding $|\MFC|$: Good Pairs }
\label{sec:goodrectangles}
\iflong\else\vspace{-2mm}\fi

To prove that $\GRE$ has small competitive ratio we need to show that $|\GG|/|\XX \cup \OPT(\XX)|$ is small.
But our understanding of this ratio remains quite limited with the best upper bound being $O(\log n)$. 
One way to prove that $\GRE$ has small competitive ratio, sidestepping the above issue,
would be to prove a 
\emph{lower} bound on $\OPT(\XX)$ (the minimum cardinality point set 
that must be added to $\XX$ to make $(\XX \cup \OPT(\XX))$ arborally satisfied) by constructing a 
\emph{certificate} of the lower bound and then show that
the the ratio between $|\GG|$ and the lower bound is small. While working with the lower bound might
give a worse guarantee, but it might also allow more flexibility in the proof. This is a standard approach
in proving approximation ratios of algorithms. For the BST problem, there are many lower bounds known
(see \cite{DemianeHIKP09}). One lower bound, called the independent set lower bound from 
Demaine et al.~\cite{DemianeHIKP09} subsumes the previous ones. It is defined as follows:
rectangles $(p, q)$ and $(r, s)$ 
with $p,q,r,s \in \XX$, 
are \emph{independent} (in $\XX$) if they 
are not arborally satisfied and
no corner of either rectangle is strictly inside the other.
They showed that the cardinality of any independent set of rectangles 
provides a lower bound on $|\XX \cup \OPT(\XX)|$ as follows:

\begin{claim}[Claim 4.1, \cite{DemianeHIKP09}]
\label{claim:independentset}
Let $\XX$ contain an independent set of rectangles $I$ and 
let $\OPT(\XX)$ be a minimum cardinality point set 
that must be added to $\XX$ to make $(\XX \cup \OPT(\XX))$ arborally satisfied, then 
$|\XX \cup \OPT(\XX)| \ge |I|/2+|\XX|$.  
\end{claim}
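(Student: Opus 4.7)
The plan is to reduce the claim to showing $|\OPT(\XX)| \ge |I|/2$. One may assume $\OPT(\XX) \cap \XX = \emptyset$, since any element $w \in \OPT(\XX) \cap \XX$ can be deleted from $\OPT(\XX)$ without altering the union $\XX \cup \OPT(\XX)$, so a minimum-cardinality $\OPT(\XX)$ is disjoint from $\XX$. Under this assumption $|\XX \cup \OPT(\XX)| = |\XX| + |\OPT(\XX)|$, and the claim is equivalent to $|\OPT(\XX)| \ge |I|/2$.

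First I would verify that every $R = (p,q) \in I$ has a \emph{witness} in $\OPT(\XX)$. Since $R \in I$ is by definition not arborally satisfied with respect to $\XX$, no point of $\XX \setminus \{p, q\}$ lies in $R$; but $\XX \cup \OPT(\XX)$ is arborally satisfied, so some $w \in (\XX \cup \OPT(\XX)) \setminus \{p, q\}$ lies in $R$, and this $w$ must therefore be in $\OPT(\XX)$. The plan is to build a map $\phi : I \to \OPT(\XX)$ sending each $R$ to such a witness and prove that every $w \in \OPT(\XX)$ satisfies $|\phi^{-1}(w)| \le 2$. This gives $|I| \le 2 |\OPT(\XX)|$ and finishes the proof.

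The main obstacle is the $2$-to-$1$ bound on $\phi$. A naive witness choice fails: three independent rectangles of the same diagonal type can form a nested ``crosses'' configuration sharing a common interior point, so a single point of $\OPT(\XX)$ may lie in arbitrarily many rectangles of $I$. The intended fix is to split $I$ into its two diagonal types, rectangles of the form $^q\Box_p$ and rectangles of the form $_p\Box^q$, and within each type pick $\phi(R)$ \emph{canonically}---for instance, the point of $\OPT(\XX) \cap R$ with minimum $\Time{\cdot}$ (closest in time to the upper data corner of $R$), with ties broken by $\Key{\cdot}$. I then expect to show that if $\phi(R_1) = \phi(R_2) = w$ for two rectangles of the same type, the extremality of $w$ combined with the independence hypothesis forces a corner of $R_1$ to lie strictly inside $R_2$ or vice versa, contradicting membership in $I$. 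The proof of this final step is a short case analysis on the relative positions of the four data corners of $R_1, R_2$ around $w$, leveraging that the sub-region of $R_1 \cap R_2$ strictly above $w$ contains no $\OPT$-point by extremality and no $\XX$-point by the non-satisfiedness of $R_1$ and $R_2$. Injectivity within each of the two diagonal types then yields $|\phi^{-1}(w)| \le 2$, completing the argument.
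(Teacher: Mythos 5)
Your overall scaffolding---split $I$ by diagonal type and show that each type alone gives a lower bound of its own cardinality, then average---is the same first move that Demaine et al.~make, and the observation that every $R\in I$ must contain a witness from $\OPT(\XX)$ is correct. But your plan from there diverges from theirs, and the divergence is exactly where the gap is: the ``canonical topmost witness'' map $\phi$ is \emph{not} injective within a diagonal type, even for independent rectangles. The case your sketch misses is when the shared witness $w$ lies on the boundary of $R_1 \cap R_2$, so that ``the sub-region of $R_1\cap R_2$ strictly above $w$'' is empty and yields no contradiction. Concretely, take $\XX=\{(3,1),(2,2),(1,3),(6,4),(7,5),(5,6),(4,7)\}$, $R_1={}^{(3,1)}\Box_{(5,6)}$, $R_2={}^{(2,2)}\Box_{(4,7)}$. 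Both are $\boxbackslash$-type, both are unsatisfied by $\XX$, and they are independent (each corner of one lies strictly outside the other). The point $w=(3,2)$ lies on the left edge of $R_1$ and the top edge of $R_2$, it is the unique added point at time $2$ in any reasonable satisfied superset (e.g.\ $\GRE$ adds exactly $(3,2)$ at time $2$ and nothing at time $1$), and neither rectangle contains any earlier added point. So $\phi(R_1)=\phi(R_2)=w$ under your min-time/min-key rule, yet no corner of one rectangle is strictly inside the other. Your argument never invokes minimality of $\OPT$ beyond what holds for \emph{any} arborally satisfied superset, so it cannot exclude this configuration.

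This boundary-sharing failure is precisely why Demaine et al.\ do not charge each rectangle to a single canonical point. Their proof (reflected in the paper's Lemmas~\ref{lem:demiane1}--\ref{lem:demiane3}) instead picks, among the $\boxbackslash$-rectangles, the one whose right corner $q$ is rightmost (and among those the widest), shows there is a vertical line $\ell$ through its interior that does not cross the interior of any \emph{other} remaining $\boxbackslash$-rectangle, finds a horizontally adjacent pair $(a,b)$ of $\XX\cup\YY$ points straddling $\ell$ inside that rectangle, charges the rectangle to the pair, and then deletes the rectangle and iterates. They then argue each horizontally adjacent pair is charged at most once, and since the rows of $\XX$ are distinct at most one element of each pair is in $\XX$, giving $|\YY|\ge|I_{\boxbackslash}|$ and symmetrically $|\YY|\ge|I_{\boxslash}|$. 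Charging to a \emph{pair} straddling an isolated line, rather than to a single extremal point, is what sidesteps the degenerate shared-boundary case; your proposal would need some analogue of this isolation step to be repaired.
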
 

Though the above lemma gives a lower bound, it is not clear how 
to construct the set $I$ (lower bound certificate), or to relate it to $\GG$. 
Demaine et al. provided an alternative lower bound that is efficiently 
computable by a procedure called $\SGRE$ which is very similar to $\GRE$ and is within a 
constant factor of the best 
independent set lower bound. However, it is not known how to relate this lower bound to $|\GG|$, or
how to relate the executions of $\SGRE$ and $\GRE$ despite their close similarity. 

Our work provides a lower bound that can be related to $|\GG|$ on $k$-decomposable sequences. 
The lower bound certificate is constructed by directly looking at the execution of $\GRE$. Our construction
builds upon the idea of the independent set lower bound \cite{DemianeHIKP09}, but the final 
construction does not provide an independent set, but a more nuanced certificate. 
The above features of our technique give 
us hope that our techniques can be refined to better understand the performance of $\GRE$. 

\iflong
	In Sec.~\ref{sec:coupling}, Observation \ref{obs:pointinside} associated 
	each pair in $\MFC$ with a rectangle. 
\else
As in Sec.~\ref{sec:coupling}, one can use the definition of $\MFC$ to associate 
each pair in $\MFC$ with a rectangle (See Observation \ref{obs:pointinside} in full version for detail).
\fi
The set of rectangles thus formed
(associated with the pairs in $\MFC$) 
is tightly coupled with the execution of $\GRE$. We
will first partition $\MFC$ into two parts: (1) $\GR$ and 
(2) $\BR$. While the set $\GR$ can be quite different from independent sets, we show that $\GR$ behaves like 
an independent set in the following sense: 

\iflong
	\begin{theorem}
	\label{lem:goodbound}
\else
	\newtheorem*{lem:goodbound}{Theorem \ref{lem:goodbound}}
	\begin{lem:goodbound}	
\fi
Let $\XX$ be the original point set and let  $\OPT(\XX)$ be the minimum number of points that 
must be added to $\XX$ to make it arborally satisfied. Then $|\XX \cup \OPT(\XX)| \ge \frac{|\GR|}{2} + |\XX|$.
\iflong
	\end{theorem}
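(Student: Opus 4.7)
The plan is to mimic the independent-set lower bound of Demaine et al.\ (Claim~\ref{claim:independentset}) while exploiting the particular structure of rectangles in $\GR$. Since we may assume $\OPT(\XX)\cap\XX=\emptyset$, the inequality $|\XX\cup\OPT(\XX)|\ge|\GR|/2+|\XX|$ is equivalent to $|\OPT(\XX)|\ge|\GR|/2$, and that is what I would aim to show.

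The first step is to observe that every $R\in\GR$ must contain a point of $\OPT(\XX)$ in its closed region, other than its two corners. By definition of $\GR$, $R$ has no points of $\XX$ in its interior; and since $\XX$ comes from a permutation sequence, each horizontal and vertical line contains exactly one point of $\XX$, so the only points of $\XX$ on the boundary of $R$ are the two corners themselves. Because $\XX\cup\OPT(\XX)$ is arborally satisfied, the pair of corners of $R$ must therefore be satisfied by some point in $(\XX\cup\OPT(\XX))\setminus\{\text{corners of }R\}$, which is necessarily a point of $\OPT(\XX)$.

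This allows me to define a charging $w:\GR\to\OPT(\XX)$ that sends each $R\in\GR$ to one such witness $w(R)$. The theorem will then follow once I establish
\[
\bigl|w^{-1}(s)\bigr|\le 2\qquad\text{for every }s\in\OPT(\XX).
\]
In the classical independent-set argument, this \lq\lq at most $2$\rq\rq\ bound is immediate: because no corner of an independent rectangle lies inside another, two rectangles sharing an interior point are forced to have opposite diagonal orientations ($\nwsearrow$ versus $\neswarrow$), and at most one rectangle of each orientation can contain a given point. The difficulty for $\GR$ is that good rectangles need not be mutually independent, so a more delicate argument is required.

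Recovering the \lq\lq at most $2$\rq\rq\ bound for $\GR$ is the main obstacle. The plan is to split $\GR$ into its two symmetric subfamilies (the left-type and right-type cases in the definition of $\MFC$) and show that at most one rectangle of each type can have a given $s\in\OPT(\XX)$ as witness. The leverage comes from the extra information that each $R\in\GR$ carries: it is born from a marked point $p_1\in\GG$ with $\CP(p_1)=(\OP(p),\OP(q))$, where $p_1$ is pinned to the bottom edge of $R$ and the first point above $p_1$ lies either on the top edge of $R$ or strictly inside $R$. If two distinct rectangles $R_1,R_2\in\GR$ of the same type shared a witness $s$, then I would propagate $s$ backwards through the execution of $\GRE$ using Lemmas~\ref{lem:greedy-property} and~\ref{lem:hidden} (together with their symmetric counterparts) to exhibit a point that should have arborally satisfied one of the unsatisfied rectangles that forced $\GRE$ to mark $p_1$ or $p_2$, contradicting the very definition of $\CP(p_1)$ or $\CP(p_2)$ belonging to $\MFC$. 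The technical heart of the proof is the resulting case analysis, broken down according to the position of $s$ relative to $R_1$ and $R_2$ and to the associated marked points.
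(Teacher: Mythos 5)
Your plan diverges substantially from the paper's proof, and the central claim it needs is not established -- and in fact the supporting rationale you give for it is incorrect even in the independent-set case. You assert that \lq\lq in the classical independent-set argument, two rectangles sharing an interior point are forced to have opposite diagonal orientations,\rq\rq\ and therefore that a witness point can serve at most one rectangle per orientation. This is false: two independent rectangles of the \emph{same} orientation can cross in a plus shape (one tall-and-narrow, the other short-and-wide), with every corner of each lying strictly outside the other; any point in the central overlap then sits in the interior of both. Consequently the \lq\lq at most $2$\rq\rq\ charging bound you aim for cannot be read off from independence, and the problem only gets harder for $\GR$, where rectangles need not even be independent.

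The paper's proof of Theorem~\ref{lem:goodbound} does \emph{not} charge rectangles to witness points of $\OPT(\XX)$ at all. Instead it adapts the Demaine et al.\ pair-marking argument: it splits $\GR$ into $\GR_{\boxbackslash}$ and $\GR_{\boxslash}$, and for each orientation shows (Lemma~\ref{lem:demiane1}, an analogue of their Lemma 4.4) that one can always find a rectangle together with a vertical line $\ell$ through its interior not meeting the interior of any other rectangle in that family; by Lemma~\ref{lem:demiane2} this line determines a horizontally adjacent pair $(a,b)$ of points of $\XX\cup\YY$, and Lemma~\ref{lem:demiane3} iterates this to mark each rectangle with a distinct pair. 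The count then uses that the total number of horizontally adjacent pairs on the $n$ lines is exactly $|\YY|$ (not a per-point degree bound), yielding $|\YY|\ge|\GR_{\boxbackslash}|$ and symmetrically $|\YY|\ge|\GR_{\boxslash}|$, hence the $1/2$ factor upon averaging. The technical heart is Lemma~\ref{lem:demiane1}, which replaces the independence hypothesis with the interaction Lemmas~\ref{lem:interthree}, \ref{lem:interfour}, \ref{lem:interone}, \ref{lem:intertwo} about where the marked point $q_1$ associated with a good pair can sit relative to an overlapping good rectangle. Your sketch mentions \lq\lq propagating $s$ backwards through the execution\rq\rq\ using Lemmas~\ref{lem:greedy-property} and~\ref{lem:hidden}, which is the right toolkit, but it needs to be aimed at establishing a line-avoidance property of the family of rectangles (as the paper does), not a degree bound on witnesses -- the latter is the wrong target, and there is no indication it holds.
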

\else	
	\end{lem:goodbound}
\fi


The rest of this section is devoted in proving Theorem \ref{lem:goodbound}.
\iflong\else\vspace{-2mm}\fi
\subsection{Good Pairs and their properties}
\label{subsubsec:properties}
\iflong\else\vspace{-2mm}\fi
We will now partition $\MFC$ into two parts:
\iflong\else\vspace{-2mm}\fi
\iflong
	\begin{definition} (Good and bad pairs)
	\label{recdef}	
\else
	\newtheorem*{recdef}{Definition \ref{recdef} (Good and bad pairs)}
	\begin{recdef}	
\fi
$\CP(p_1)=(p,q)$ with $\CP(p_1) \in \MFC$ is called a \emph{good} 
pair if $r \notin (^q\Box_p)^\circ$ (or $(_p\Box^q)^\circ$) 
for all $r \in \XX$. Otherwise,
$\CP(p_1)=(p,q)$ is called a \emph{bad} pair. 
Let $\GR := |\{p \in \GG \mid \RR(p)~ \text{is a good pair} \}|$ and
$\BR := |\{p \in \GG \mid \RR(p)~ \text{is a bad pair} \}|$.
\iflong
	\end{definition}
\else	
	\end{recdef}
\fi

Towards the end of proving Theorem~\ref{lem:goodbound}, we will show some properties of good pairs in 
the next lemma.
The definition of a good pair forbids any point from $\XX$ being in the interior of the rectangle. 
The following lemma proves that points from $\GG$ also cannot lie in the interior of the rectangle. 

\iflong
	\begin{lemma}
	\label{lem:property-good-rectangle}
\else
	\newtheorem*{lem:property-good-rectangle}{Lemma \ref{lem:property-good-rectangle}}
	\begin{lem:property-good-rectangle}	
\fi

Let $\RR(p_1) = {^p\Box_q}$ (or $_p\Box^q$) be a good pair. 
Then there are no points in $(^p\Box_q)^\circ$ (or $(_p\Box^q)^\circ$).
\iflong
	\end{lemma}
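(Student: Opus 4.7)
The plan is to suppose for contradiction that some point $s\in \XX\cup \GG$ lies in the strict interior of the rectangle associated with $\RR(p_1)$; we treat the case $_p\Box^q$ (the other case is symmetric). Since $\RR(p_1)$ is good, no original point lies in the interior, so $s\in \GG$. Let $q_1$ be the first point above $p_1$, so that $\CP(p_1)=(\OP(p_1),\OP(q_1))=(p,q)$; by inspection of the $\MFC$ cases in Figure~\ref{fig:mfc}, $p_1$ lies on the bottom edge $y=\Time{p}$ at some key $\Key{p_1}\in (\Key{p},\Key{q}]$, and $q_1$ lies on the top edge $y=\Time{q}$ at the same key $\Key{p_1}$. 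Choose $s$ to minimize $\Time{s}$ among points of $\XX\cup \GG$ in the strict interior.

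First observe $\Key{s}\neq \Key{p_1}$: otherwise $s$ would be a point of $\XX\cup \GG$ on $x=\Key{p_1}$ strictly between $q_1$ and $p_1$, contradicting that $q_1$ is the first point above $p_1$. So the argument splits by the sign of $\Key{s}-\Key{p_1}$. If $\Key{s}<\Key{p_1}$, then $s$ lies in the strict interior of the sub-rectangle $_p\Box^{q_1}$. But $_p\Box^{q_1}$ was arborally unsatisfied at the moment $\GRE$ processed $p$ (this is precisely what caused $\GRE$ to mark $p_1$ at the opposite corner), and since $\Time{s}<\Time{p}$ the point $s$ already existed at that moment, a contradiction.

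The harder case is $\Key{s}>\Key{p_1}$. Goodness forces $\Key{\OP(s)}<\Key{p}$ or $\Key{\OP(s)}>\Key{q}$. Let $w_s$ be the first point above $s$ on the vertical $x=\Key{s}$, so the unsatisfied rectangle that led $\GRE$ to mark $s$ is $_{\OP(s)}\Box^{w_s}$ in the first subcase and $^{w_s}\Box_{\OP(s)}$ in the second. If $\Time{w_s}>\Time{q}$, then $w_s$ itself lies in the strict interior of $_p\Box^q$ with $\Time{w_s}<\Time{s}$, contradicting the minimality of $s$. Hence $\Time{w_s}\leq \Time{q}$, and a direct position check shows that $q_1$ (in the first subcase, using $\Key{\OP(s)}<\Key{p}<\Key{p_1}<\Key{s}$) or $q$ itself (in the second subcase, using $\Key{s}<\Key{q}<\Key{\OP(s)}$) lies strictly inside the unsatisfied rectangle when $\Time{w_s}<\Time{q}$, or on its top edge strictly between the corners when $\Time{w_s}=\Time{q}$; either way this witness contradicts the rectangle being unsatisfied.

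\textbf{Main obstacle.} The rightward subcase is the delicate one: the natural unsatisfied rectangle $_p\Box^{q_1}$ only extends in key up to $\Key{p_1}$, so it does not directly exclude interior points with $\Key{s}>\Key{p_1}$. The fix is to pick $s$ of minimum $\Time{s}$ and exploit instead the unsatisfied rectangle witnessed by $s$'s own placement, with $q_1$ or $q$ supplying the forbidden interior/boundary point.
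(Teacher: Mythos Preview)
Your proof is correct and takes a somewhat different route from the paper's. Both arguments handle the easy sub-cases ($\Key{s}\le \Key{p_1}$ in your orientation) identically, using that $_p\Box^{q_1}$ was the unsatisfied rectangle that caused $p_1$ to be marked and that $q_1$ is the first point above $p_1$. The divergence is in the hard sub-case $\Key{s}>\Key{p_1}$.

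The paper (working in the mirror orientation $^q\Box_p$) chooses $s$ closest to $q$ inside the sub-region $(^q\Box_{p_1})^\circ$, lets $t$ be the first point above $s$, and then applies the ``hiding'' machinery of Section~\ref{sec:prelims} twice: Lemma~\ref{lem:hidden} with $q_1$ hiding $t$ from $p$ forces $_{\OP(s)}\swnearrow^{q_1}$, and then Lemma~\ref{lem:hidden-right} with $q$ hiding $t$ from $\OP(s)$ forces $^q\senwarrow_{\OP(s)}$; the two conclusions are incompatible.

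Your argument sidesteps the hiding lemmas entirely. You pick $s$ of globally minimal $\Time{s}$, use that minimality to push $w_s$ (your $t$) out of the interior so that $\Time{w_s}\le \Time{q}$, and then for each of the two positions of $\OP(s)$ allowed by goodness you exhibit an explicit pre-existing witness ($q_1$ when $\Key{\OP(s)}<\Key{p}$, and $q$ when $\Key{\OP(s)}>\Key{q}$) lying in the closed rectangle $_{\OP(s)}\Box^{w_s}$ or $^{w_s}\Box_{\OP(s)}$ and distinct from its corners, contradicting that this rectangle was unsatisfied when $\OP(s)$ was processed. This is more elementary and self-contained; the paper's version is shorter once Lemmas~\ref{lem:hidden} and~\ref{lem:hidden-right} are in hand and keeps the argument uniform with the rest of the toolkit.
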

\else	
	\end{lem:property-good-rectangle}
\fi
\iflong
	\begin{figure}[H]

\centering
\begin{tikzpicture}

\draw[step=0.25,black,opacity=0.05]  grid (3,3);

\fill [ black] (2.5,0.5) circle (2pt) node[below,black,opacity=1] {\tiny{$p$}};
\fill [blue] (2,0.5) circle (2pt) node[below,black,opacity=1] {\tiny{$p_1$}};

\fill [ black] (0.5,2.5) circle (2pt) node[left,black,opacity=1] {\tiny{$q$}};
\fill [ blue] (2.,2.5) circle (2pt) node[above,black,opacity=1] {\tiny{$q_1$}};

\fill [ blue] (1.25,2) circle (2pt) node[above,black,opacity=1] {\tiny{$s$}};
\fill [ blue] (1.25,2.75) circle (2pt) node[above,black,opacity=1] {\tiny{$t$}};

\fill [ black] (0.25,2) circle (2pt) node[below,black,opacity=1] {\tiny{$\OP(s)$}};

\fill [gray, opacity=0.2] (0.5,2.5) rectangle (2,0.5);

\end{tikzpicture}
\caption{ Illustration of the case dealt in the proof of Lemma \ref{lem:property-good-rectangle} 
when $s$ lies in $^q\Box_{p_1}$.}
\label{fig:good-rectangle}
\end{figure}
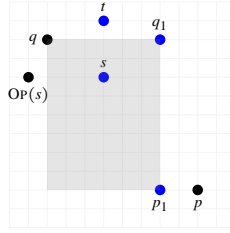
	\begin{proof}
	By symmetry, it suffices to prove the lemma for rectangles of type $^q\Box_p$.
	Note that no original point can be in $(^q\Box_p)^{\circ}$ as $\RR(p_1)$ is a good pair.
	Let $q_1$ be the first point above $p_1$. This implies that
	$\GRE$ marks point $p_1$ to the left of $p$ due to the unsatisfied rectangle $^{q_1}\Box_p$.

	If $q_1 = \OP(q_1) (= q)$, then there are no points in 
	$(^{q}\Box_p)^{\circ}$ as $\GRE$ found $^{q_1}\Box_p$ arborally 
	unsatisfied. So let us assume that 
	$\lrarrow{q}{q_1}$.

	If there is a point $s \in (^{q}\Box_p)^{\circ}$ then it must satisfy one of the following three cases 
	(recall that we must have $s \in \GG$):

	\begin{enumerate}
	\item $s \in (^{q_1}\Box_p)^\circ$.

	Again, there exists no point in 
	$(^{q_1}\Box_p)^{\circ}$ as $\GRE$ found $^{q_1}\Box_p$ arborally 
	unsatisfied. So this case is not possible. 

	\item $s$ lies above $p_1$ in $(^q\Box_p)^{\circ}$.

	By the definition of $\CP(\cdot)$, $q_1$ is the first point above $p_1$. So, such
	a point $s$ cannot exist.

	\item $s \in (^q\Box_{p_1})^{\circ}$ (see Fig. \ref{fig:good-rectangle}).

	Let us assume that 
	$s$ is the closest point to $q$ in $(^q\Box_{p_1})^{\circ}$. 
	Therefore, 
	no point lies above $s$ in $(^q\Box_{p_1})^{\circ}$. However, there exists a point above
	$s$ that does not lie in $(^q\Box_{p_1})^{\circ}$. $\UP(s)$ is one such candidate.
	Let $t$ be the first point above $s$. Note that $t$ can lie to the right or 
	north-east of $q$ (though in Fig. \ref{fig:good-rectangle} it lies to the north-east of $q$).

	Note that $q_1$ hides $t$ from $p$ (as either $q_1$ lie to the right of $t$ in $^t\Box_p$
	 or $q_1 \in (^t\Box_p)^{\circ}$) and 
	$s$ is the first point below $t$ such that $_s\swnearrow^{q_1}$. 
	Then, by Lemma \ref{lem:hidden}, $_{\OP(s)}\swnearrow^{q_1}$.
	$\OP(s)$ cannot lie in $(^q\Box_{p_1})^{\circ}$ as $^q\Box_p$ is 
	a good pair; this implies $_{\OP(s)}\swnearrow^q$.
	In that case, $q$ hides $t$ from $\OP(s)$ (as either $q$ lies to the left of $t$ in $_{\OP(s)}\Box^t$ 
	or $q \in (_{\OP(s)}\Box^t)^{\circ}$)
	and $s$ is the first point below $t$
	such that $^q\senwarrow_s$. 
	Again by Lemma \ref{lem:hidden-right}, $^q\senwarrow_{\OP(s)}$. 
	This leads to a contradiction as we have already deduced that $_{\OP(s)}\swnearrow^q$.
	So our assumption
	that $s \in (^q\Box_{p_1})^{\circ}$  must be false.
	\end{enumerate}
	\end{proof}
\fi
\iflong\else\vspace{-4mm}\fi
\subsection{Interaction between Good Pairs}
\label{subsubsec:interaction}
\iflong\else\vspace{-2mm}\fi
The definition of {\em independent sets} specifies 
how two rectangles can intersect each other. The set of good pairs need not be independent in general. 
However, there are constraints on the way good pairs interact. 
Specifically, we will show that if two good pairs intersect
then the point associated with one of them cannot lie in the interior 
of the rectangle associated with the other pair:
\iflong\else\vspace{-2mm}\fi
\iflong
	\begin{lemma}
	\label{lem:interthree}
\else
	\newtheorem*{lem:interthree}{Lemma \ref{lem:interthree}}
	\begin{lem:interthree}	
\fi

Let $\RR(q_1) = {^s\Box_q}$ and $\RR(q_2) = {^p\Box_q}$ be two good pairs  such that $_s\neswarrow^p$.
Assume that the intersection between the two rectangles is of type $\interthree$. 
Then $q_1$ cannot lie to the left of $q$ in $^p\Box_q$ except at the 
bottom-left corner
of $^p\Box_q$.
\iflong
	\end{lemma}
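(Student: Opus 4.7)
The plan is a short contradiction argument that exploits Lemma~\ref{lem:property-good-rectangle}. I will assume, towards contradiction, that $q_1$ lies strictly inside $^p\Box_q$ on its bottom edge away from the corner, i.e., $\Key{p} < \Key{q_1} < \Key{q}$. Since $\RR(q_1) = {^s\Box_q}$, I have $\OP(q_1) = q$ (so $q_1$ sits on the horizontal line $y = \Time{q}$), and the first point $r_1$ above $q_1$ satisfies $\OP(r_1) = s$.

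The next step is to locate $r_1$ precisely. If $r_1 = s$, then $\Key{s} = \Key{q_1}$, contradicting $\Key{s} < \Key{p} < \Key{q_1}$ (the first inequality from $_s\neswarrow^p$, the second from the contradiction hypothesis). Hence $r_1$ must be a marked point created by $\GRE$ while processing $s$, which pins it to $(\Key{q_1}, \Time{s})$. From $_s\neswarrow^p$ one has $\Time{p} < \Time{s}$, and since $^s\Box_q$ is a valid rectangle $\Time{s} < \Time{q}$. Combined with the hypothesis $\Key{p} < \Key{q_1} < \Key{q}$, this places $r_1 \in (^p\Box_q)^\circ$. But $^p\Box_q$ is a good pair, so Lemma~\ref{lem:property-good-rectangle} forbids any point of $\XX \cup \GG$ from lying in $(^p\Box_q)^\circ$, a contradiction.

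The main conceptual step is recognizing that $q_1$ itself lies only on the boundary of $^p\Box_q$ (so Lemma~\ref{lem:property-good-rectangle} does not directly rule it out); the needed interior witness is obtained by climbing one level up in the $\GRE$ construction to $r_1$, which is forced onto the strictly interior horizontal slice $y = \Time{s}$. This is why the exemption at the bottom-left corner is needed: when $\Key{q_1} = \Key{p}$, the same $r_1$ slides onto the left edge of $^p\Box_q$ and is no longer in its interior, so the argument correctly breaks down exactly at that corner. Once this observation is in place, the rest is bookkeeping of coordinates, and I do not foresee further obstacles.
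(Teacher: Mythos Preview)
Your proof is correct and follows essentially the same approach as the paper's own proof: both assume $q_1$ sits on the bottom edge of $^p\Box_q$ strictly away from the corner, pass to the first point above $q_1$ (your $r_1$, the paper's $s_1$), pin it to the horizontal slice $y=\Time{s}$, and then invoke Lemma~\ref{lem:property-good-rectangle} on the good pair $^p\Box_q$ to obtain the contradiction. Your write-up is in fact slightly more explicit than the paper's, spelling out the case $r_1=s$ and the coordinate inequalities, and your remark explaining why the bottom-left corner must be exempt is a nice touch that the paper leaves implicit.
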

\else	
	\end{lem:interthree}
\fi
\iflong\else\vspace{-2mm}\fi
\iflong
	\begin{figure}[H]

\centering
\begin{tikzpicture}

\draw[step=0.25,black,opacity=0.2] (1,0.25) grid (3,2.5);

\draw [black,opacity=0.2] (1,0.25) -- (1,2.5);
\fill [ black] (1.75,2) circle (2pt) node[above,black,opacity=1] {\tiny{$p$}};
\fill [ black] (1.25,1.5) circle (2pt) node[above,black,opacity=1] {\tiny{$s$}};

\fill [ black] (2.75,0.5) circle (2pt) node[below,black,opacity=1] {\tiny{$q$}};

\fill [ blue] (2,0.5) circle (2pt) node[below,black,opacity=1] {\tiny{$q_1$}};
\fill [ blue] (2.25,0.5) circle (2pt) node[below,black,opacity=1] {\tiny{$q_2$}};

\fill [ blue] (2,1.5) circle (2pt) node[above,black,opacity=1] {\tiny{$s_1$}};

\fill [gray, opacity=0.2] (1.75,2) rectangle (2.75,0.5);
\fill [gray, opacity=0.2] (1.25,1.5) rectangle (2.75,0.5);
\end{tikzpicture}

\caption{ Illustration of the bad case in the proof of \ref{lem:interthree} when $q_1$ lies in 
to the left of $q$ in $^p\Box_q$ (but not at its bottom left corner). }
\label{fig:interthree}
\end{figure}
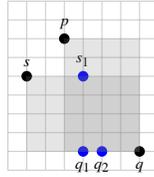
	\begin{proof}
	The intersection of $^p\Box_q$ and $^s\Box_q$ is of type $\interthree$.
	Let $s_1$ be the first point above $q_1$. If $q_1$ lies 
	to the left of $q$ in $^p\Box_q$ (but not at the 
	bottom-left corner of $^p\Box_q$), 
	then it lies to the left or right of $q_2$ in $^p\Box_q$.
	This implies that $s_1$ lies in $(^p\Box_q)^{\circ}$.
	However, Lemma \ref{lem:property-good-rectangle} forbids such a situation.

	\end{proof}
\fi

\iflong
	\begin{lemma}
	\label{lem:interfour}
\else
	\newtheorem*{lem:interfour}{Lemma \ref{lem:interfour}}
	\begin{lem:interfour}	
\fi
Let $\RR(q_1) = {^s\Box_q}$ and $\RR(r_1) = {^s\Box_r}$ be two good pairs  such that $_r\neswarrow^q$.
Assume that the intersection between the two rectangles is of type $\interfour$.
Then $q_1 \notin (^s\Box_r)^{\circ}$ and $q_1$ cannot be at the bottom-left corner of $^s\Box_q$.
\iflong
	\end{lemma}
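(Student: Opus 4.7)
The statement has two parts, and I plan to handle them separately. Part~(1) should follow immediately from Lemma~\ref{lem:property-good-rectangle} applied to the good pair $^s\Box_r$: no point of $\XX \cup \GG$ lies in $(^s\Box_r)^\circ$, and $q_1 \in \GG$, so $q_1 \notin (^s\Box_r)^\circ$.

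For Part~(2) I would argue by contradiction, supposing $q_1 = (\Key{s}, \Time{q})$. Then the first point above $q_1$ on the line $x = \Key{s}$ is $s$ itself, which is consistent with $\RR(q_1) = {}^s\Box_q$. Now I would split on where $r_1$ sits on the bottom edge of $^s\Box_r$. If $r_1$ is also at the bottom-left corner, i.e.\ $\Key{r_1} = \Key{s}$, then $s$, $q_1$, $r_1$ lie in that order on the vertical line $x = \Key{s}$ with $\Time{s} < \Time{q} < \Time{r}$, making the first point above $r_1$ equal $q_1$; since $\OP(q_1) = q \ne s$, this contradicts $\RR(r_1) = {}^s\Box_r$.

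Otherwise $\Key{s} < \Key{r_1} < \Key{r}$, and the first point above $r_1$ must be a marked point $s''$ at $(\Key{r_1}, \Time{s})$ with $\OP(s'') = s$; the pair condition also forbids any point on $\{\Key{r_1}\} \times (\Time{s}, \Time{r})$. I would then consider $^{s''}\Box_q$, a subrectangle of the good pair $^s\Box_q$: by Lemma~\ref{lem:property-good-rectangle} its interior is empty, Observation~\ref{obs:nothing-above} clears its right edge, and the above constraint clears its left edge. If $^{s''}\Box_q$ is arborally unsatisfied at time $\Time{q}$, then $\GRE$ would mark $(\Key{r_1}, \Time{q})$, contradicting the pair condition on $r_1$. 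Otherwise some witness sits on the top or bottom edge; the unsatisfiedness of $^{s''}\Box_r$ at time $\Time{r}$ (which is what caused $\GRE$ to mark $r_1$) together with Observation~\ref{obs:nothing-above} forces any such witness to have key strictly greater than $\Key{r}$. I plan to iterate Lemma~\ref{lem:greedy-property-left} on $^{s''}\Box_q$ and progressively smaller subrectangles, using the good-pair property to push the would-be witness leftward until it is forced onto the left edge of $^{s''}\Box_q$, producing a marked point on $\{\Key{r_1}\} \times (\Time{s}, \Time{q})$ and hence the desired contradiction.

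The main obstacle I foresee is exactly this last iteration: witnesses can a priori escape into the stripe $x \in (\Key{r}, \Key{q})$ that lies outside $^s\Box_r$ but inside $^s\Box_q$, so care is needed to ensure the iteration terminates in a left-edge witness rather than drifting rightward into this safe zone. This is where the interplay of the two good pairs together with the $\interfour$ overlap type really does the work, and where I would invest most effort in a full proof.
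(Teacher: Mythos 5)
Your part (1) is correct and matches the paper exactly: $q_1 \notin (^s\Box_r)^\circ$ follows from Lemma~\ref{lem:property-good-rectangle} applied to the good pair $\RR(r_1) = {}^s\Box_r$.

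For part (2) you correctly set up the contradiction and correctly identify the key object: forcing $\Key{s} < \Key{r_1} < \Key{r}$ (since $r_1$ below $q_1$ would give $\CP(r_1)=(r,q)$), so that the first point above $r_1$ is a marked point $s''$ at $(\Key{r_1},\Time{s})$ with $\OP(s'')=s$. The paper calls this point $s_1$. But your closing argument has a genuine gap, and you name it yourself: the iteration on sub\-rectangles $^{s''}\Box_q$ is not shown to terminate, because witnesses on the top and bottom edges with key in $(\Key{r},\Key{q})$ are not excluded by any of the tools you invoke (the good-pair property only clears the \emph{open interior} of $^s\Box_q$, not its boundary, and the unsatisfiedness of $^{s''}\Box_r$ only controls keys up to $\Key{r}$). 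As written the proof does not close.

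The paper closes the argument in a single step using Lemma~\ref{lem:hidden}, which is precisely the tool built for this situation and which your proposal never invokes. Take $s_1 = s''$: it hides $s$ from $q$ (it lies to the right of $s$ on the top edge of $^s\Box_q$, so case~(1) of Definition~\ref{def:hides} applies). By hypothesis $q_1$ is the first point below $s$, and it satisfies $_{q_1}\swnearrow^{s_1}$ since $\Key{q_1}=\Key{s}<\Key{r_1}=\Key{s_1}$ and $\Time{q_1}>\Time{s_1}$. Lemma~\ref{lem:hidden} then forces $_{\OP(q_1)}\swnearrow^{s_1}$, i.e.\ $q$ southwest of $s_1$; but $\Key{q}>\Key{r}>\Key{r_1}=\Key{s_1}$, so $q$ is in fact southeast of $s_1$ --- contradiction. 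No sub\-rectangle, no iteration, and the ``safe zone'' never arises: Lemma~\ref{lem:hidden} moves the contradiction to $\OP(q_1)$ itself rather than trying to hunt down a witness. That one lemma is the ingredient your proposal is missing.
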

\else	
	\end{lem:interfour}
\fi

\iflong
	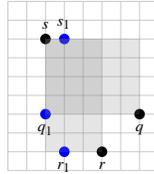
\begin{figure}[H]

\centering
\begin{tikzpicture}

\draw[step=0.25,black,opacity=0.2] (1,0.25) grid (3,2.5);

\draw [black,opacity=0.2] (1,0.25) -- (1,2.5);
\fill [ black] (2.25,0.5) circle (2pt) node[below,black,opacity=1] {\tiny{$r$}};
\fill [ black] (1.5,2) circle (2pt) node[above,black,opacity=1] {\tiny{$s$}};

\fill [ black] (2.75,1) circle (2pt) node[below,black,opacity=1] {\tiny{$q$}};

\fill [ blue] (1.5,1) circle (2pt) node[below,black,opacity=1] {\tiny{$q_1$}};

\fill [ blue] (1.75,0.5) circle (2pt) node[below,black,opacity=1] {\tiny{$r_1$}};
\fill [ blue] (1.75,2) circle (2pt) node[above,black,opacity=1] {\tiny{$s_1$}};

\fill [gray, opacity=0.2] (1.5,2) rectangle (2.75,1);
\fill [gray, opacity=0.2] (1.5,2) rectangle (2.25,0.5);
\end{tikzpicture}

\caption{Illustration of the bad case in the proof of \ref{lem:interfour}
when $q_1$ lies at the bottom-left corner of $^s\Box_q$.}
\label{fig:interthree}
\end{figure}
	\begin{proof}
	By Lemma \ref{lem:property-good-rectangle}, $q_1 \notin (^s\Box_r)^{\circ}$.\

	Assume for contradiction that $q_1$ lies at the bottom-left corner of $^s\Box_q$, 
	that is, it is the first point below $s$ (as $\CP(q_1) = (q,s)$). This implies that 
	$r_1$ cannot lie below $q_1$ (as then $\CP(r_1) \ne (r,s))$.
	So $^{q_1}\senwarrow_{r_1}$. Let $s_1$ be the first point above $r_1$.
	Since $\CP(r_1) = (r,s)$, $s_1$ lies to the right of $s$. 
	This implies that $s_1$ hides $s$ from $q$ (as $s_1$ lies to the right
	of $s$ in $^s\Box_q$) and $q_1$ is the first point below $s$. 
	By Lemma \ref{lem:hidden}, $_{\OP(q_1)}\neswarrow^{s_1}$.
	But $\OP(q_1)=q$ lies to the south-east of $s_1$. 
	So, we arrive at a contradiction. So 
	$q_1$ cannot lie at the bottom-left corner of $^s\Box_q$

	\end{proof}

The following two lemmas are also a direct consequence of  Lemma \ref{lem:property-good-rectangle}.
\fi

\iflong
	\begin{lemma}
	\label{lem:interone}
\else
	\newtheorem*{lem:interone}{Lemma \ref{lem:interone}}
	\begin{lem:interone}	
\fi
Let $\RR(q_1) = {^s\Box_q}$ and $\RR(r_1) = {^p\Box_r}$ be two good pairs  such that $_p\neswarrow^s$,$_r\neswarrow^q$.
Assume that the intersection between the two rectangles is of type $\interone$.
Then $q_1 \notin (^p\Box_r)^{\circ}$.
\iflong
	\end{lemma}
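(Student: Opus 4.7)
The plan is to invoke Lemma~\ref{lem:property-good-rectangle} directly. Since $\RR(r_1) = {^p\Box_r}$ is a good pair, that lemma guarantees that the open interior $(^p\Box_r)^{\circ}$ contains no point of $\XX \cup \GG$, whether original or marked. The point $q_1$ is a marked point lying in $\GG$, and is distinct from the corner points $p, r$ of the rectangle (which are original points in $\XX$, and $\XX \cap \GG = \emptyset$). Hence $q_1$ is a candidate point that is forbidden from the interior, and we conclude immediately that $q_1 \notin (^p\Box_r)^{\circ}$.

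I do not foresee any real obstacle. In particular, neither the specific intersection shape $\interone$ between the two rectangles, nor the orientation hypotheses $_p\neswarrow^s$ and $_r\neswarrow^q$, are actually needed to rule out $q_1 \in (^p\Box_r)^{\circ}$: the no-interior-points property of good pairs alone is enough. These geometric hypotheses would become essential if the lemma's conclusion also excluded boundary positions of $q_1$ on $^p\Box_r$ (as was the case in Lemmas~\ref{lem:interthree} and~\ref{lem:interfour}, where the intersection type $\interthree$ or $\interfour$ was used to pin down \emph{which} boundary corner of $q$'s rectangle could or could not be approached by $q_1$). Here, however, we are only asked about the open interior, so the proof reduces to a one-line corollary of Lemma~\ref{lem:property-good-rectangle}, consistent with the author's preceding remark that ``the following two lemmas are also a direct consequence of Lemma~\ref{lem:property-good-rectangle}.''
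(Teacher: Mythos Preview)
Your proposal is correct and matches the paper's approach exactly: the paper does not write out a separate proof for this lemma, stating only that it is ``a direct consequence of Lemma~\ref{lem:property-good-rectangle},'' which is precisely the one-line argument you give.
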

\else	
	\end{lem:interone}
\fi

\iflong
	\begin{lemma}
	\label{lem:intertwo}
\else
	\newtheorem*{lem:intertwo}{Lemma \ref{lem:intertwo}}
	\begin{lem:intertwo}	
\fi
Let $\RR(q_1) = {^s\Box_q}$ and $\RR(r_1) = {^p\Box_r}$ be two good pairs such that $_s\neswarrow^p$,$_r\neswarrow^q$.
Assume that the intersection between the two rectangles is of type $\intertwo$.
Then $q_1 \notin (^p\Box_r)^{\circ}$.
\iflong
	\end{lemma}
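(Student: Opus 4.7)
The plan is to invoke Lemma \ref{lem:property-good-rectangle} directly. That lemma states that whenever a pair $\RR(p_1) = {^p\Box_q}$ (or $_p\Box^q$) is good, the open interior of the associated rectangle contains no point of $\XX \cup \GG$ whatsoever. Since in the present statement $\RR(r_1) = {^p\Box_r}$ is assumed to be a good pair, the interior $(^p\Box_r)^{\circ}$ contains no point of $\XX \cup \GG$. As $q_1 \in \GG$, the conclusion $q_1 \notin (^p\Box_r)^{\circ}$ is immediate.

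Before writing down this one-liner I would verify that the combinatorial hypotheses are consistent with the $\intertwo$ picture, so that the statement is nonvacuous. The conditions $_s\neswarrow^p$ and $_r\neswarrow^q$, together with $^s\Box_q$ and $^p\Box_r$ being well-formed, force $\Time{p} < \Time{s} < \Time{q} < \Time{r}$ and $\Key{s} < \Key{p} < \Key{r} < \Key{q}$. Thus $^p\Box_r$ sits horizontally strictly inside $^s\Box_q$ while extending above and below it, which is precisely the $\intertwo$ shape. In this configuration $q_1$ lies on the line $y = \Time{q}$ strictly to the left of $q$, hence on the bottom edge of $^s\Box_q$; without the good-pair hypothesis it could a priori also satisfy $\Key{p} < \Key{q_1} < \Key{r}$ and thereby land in $(^p\Box_r)^{\circ}$. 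Lemma \ref{lem:property-good-rectangle} is what rules this out.

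There is no real obstacle here. All the substantive work has already been carried out in the proof of Lemma \ref{lem:property-good-rectangle}; indeed, the paper explicitly flags the present lemma together with Lemma \ref{lem:interone} as direct consequences of that result, and the two-step argument above (invoke the good-pair interior property, then observe $q_1 \in \GG \subseteq \XX \cup \GG$) exhausts the proof.
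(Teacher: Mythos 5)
Your proof is correct and takes essentially the same approach as the paper, which explicitly states that Lemma~\ref{lem:intertwo} (together with Lemma~\ref{lem:interone}) is a direct consequence of Lemma~\ref{lem:property-good-rectangle}; the additional consistency check you performed on the hypotheses is sound but not needed for the argument.
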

\else	
	\end{lem:intertwo}
\fi

\iflong\else\vspace{-4mm}\fi
\subsection{Putting it together}
\label{subsubsec:finaldetail}

Let $\GR_{\boxbackslash} := \{^q\Box_p \mid {^q\Box_p} \in \GR \}$. Similarly we can define
$\GR_{\boxslash}$. 
Demaine et al.~\cite{DemianeHIKP09} proved that the size of an {\em independent set} of rectangles
provides a lower bound on $|\OPT(\XX)|$ (Claim~\ref{claim:independentset}).
Instead of independent sets we have to argue about $\GR$. Our argument would follow that of Demaine et al. at a high
level, but with some important changes. 
We state three lemmas below which are adaptations of lemmas from \cite{DemianeHIKP09} to $\GR$.
\iflong
	\begin{lemma} (compare Lemma 4.4, \cite{DemianeHIKP09})
	\label{lem:demiane1}
\else
	\newtheorem*{lem:demiane1}{Lemma \ref{lem:demiane1} (compare Lemma 4.4, \cite{DemianeHIKP09})}
	\begin{lem:demiane1}	
\fi
Let $q$ be the point in $\XX$ with the maximum $x$-coordinate such that there 
exists a rectangle with $q$ in $\GR_{\boxbackslash}$ as one of its diagonal point.
Let  $^s\Box_q \in \GR_{\boxbackslash}$ be the widest rectangle with $q$ as one of its diagonal point. 
Then we can find a vertical line $\ell$ passing through the interior of $^s\Box_q$, 
such that inside $^s\Box_q$, $\ell$ does not intersect the interior 
of any other rectangle in $\GR_{\boxbackslash} \setminus \{^s\Box_q \}$.  
\iflong
	\end{lemma}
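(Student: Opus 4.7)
The plan is to find a vertical line $\ell$ at an $x$-coordinate $c \in (\Key{s}, \Key{q})$ such that $c$ lies strictly to the right of every right-edge of rectangles in $\GR_{\boxbackslash}\setminus\{^s\Box_q\}$ whose interiors meet $(^s\Box_q)^\circ$, and strictly to the left of every left-edge of those among them that share the corner $q$. Formally, let $\mathcal{R} := \{R^* \in \GR_{\boxbackslash} \setminus \{^s\Box_q\} : (R^*)^\circ \cap (^s\Box_q)^\circ \neq \emptyset\}$ and partition $\mathcal{R} = \mathcal{R}_q \sqcup \mathcal{R}_o$, where $\mathcal{R}_q$ consists of the rectangles of the form $^{s^*}\Box_q$. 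If $\mathcal{R} = \emptyset$, any vertical line through $(^s\Box_q)^\circ$ works.

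For $R^* = {^{s^*}\Box_q} \in \mathcal{R}_q$, the widest-ness of $^s\Box_q$ forces $\Key{s^*} > \Key{s}$, and Lemma~\ref{lem:property-good-rectangle} applied to $^s\Box_q$ forces $\Time{s^*} < \Time{s}$ (otherwise the original point $s^*$ would lie in $(^s\Box_q)^\circ$). For $R^* = {^{s^*}\Box_{q^*}} \in \mathcal{R}_o$, the maximality of $\Key{q}$ among diagonal points of $\GR_{\boxbackslash}$-rectangles gives $\Key{q^*} < \Key{q}$. Set $m := \min_{R^* \in \mathcal{R}_q} \Key{s^*}$ (or $\Key{q}$ if $\mathcal{R}_q = \emptyset$) and $M := \max_{R^* \in \mathcal{R}_o} \Key{q^*}$ (or $\Key{s}$ if $\mathcal{R}_o = \emptyset$). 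I claim $M < m$; any $c$ with $M < c < m$ (which lies in $(\Key{s}, \Key{q})$ since $m > \Key{s}$ and $M < \Key{q}$) then yields a vertical line $\ell$ whose restriction to $^s\Box_q$ avoids the open $x$-range, and hence the interior, of every other rectangle in $\GR_{\boxbackslash}$ meeting $(^s\Box_q)^\circ$.

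The hard step is establishing $M < m$. Suppose, for contradiction, that some $R^* \in \mathcal{R}_o$ has $\Key{q^*} \ge m$, and let $R_0 = {^{s_0}\Box_q} \in \mathcal{R}_q$ with $\Key{s_0} = m$. If $\Key{q^*} = m$, then $q^* = s_0$ and $\Time{q^*} = \Time{s_0} < \Time{s}$, so $R^*$'s entire $y$-range lies above $\Time{s}$, contradicting $R^* \in \mathcal{R}$. Otherwise $\Key{q^*} > m$, so $q^*$'s $x$-coordinate lies in the $x$-range of $R_0$, and Lemma~\ref{lem:property-good-rectangle} applied to $R_0$ prevents $q^* \in (R_0)^\circ$; the option $\Time{q^*} < \Time{s_0}$ again places $R^*$ entirely above $R$, while $\Time{q^*} > \Time{q}$ places $q^*$ below $^s\Box_q$. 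In this last sub-case, I apply Lemma~\ref{lem:property-good-rectangle} to $R^*$, using the marked points associated with $^s\Box_q$ and $R_0$ (each sitting at $y = \Time{q}$, hence inside $R^*$'s $y$-range), to confine the position of $s^*$; combining this with the intersection type between $R_0$ and $R^*$ (only Case II, $\Time{s^*} \in (\Time{s_0}, \Time{s})$, survives, since $\Time{s^*} < \Time{s_0}$ would place $s_0 \in (R^*)^\circ$, violating Lemma~\ref{lem:property-good-rectangle} on $R^*$) and invoking the appropriate lemma among~\ref{lem:interone}--\ref{lem:interfour} yields the required contradiction.

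The main obstacle is the final contradiction in the surviving sub-case, where the intersection type of $R_0$ and $R^*$ and the exact positions of the two marked points at $y = \Time{q}$ must be tracked carefully to invoke the matching intersection lemma. The hope is that the anti-diagonal overlap pattern forced by the configuration fits one of the four types under the supplementary $\MFC$-structure, so that Lemma~\ref{lem:interone}, \ref{lem:intertwo}, \ref{lem:interthree}, or \ref{lem:interfour} together with Lemma~\ref{lem:property-good-rectangle} closes the argument.
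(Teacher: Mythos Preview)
Your approach aims to locate a specific gap $(M, m)$ where $M = \max_{\mathcal{R}_o}\Key{q^*}$ and $m = \min_{\mathcal{R}_q}\Key{s^*}$, which is stronger than what the lemma asks and, as stated, is not established. The gap is in your handling of the final sub-case $\Time{q^*} > \Time{q}$: you assert that $\Time{s^*} < \Time{s_0}$ would force $s_0 \in (R^*)^\circ$, but this requires $\Key{s^*} < \Key{s_0} = m$, which you never verify. In fact nothing in your argument rules out $\Key{s^*} > m$ together with $\Time{s^*} < \Time{s_0}$; in that configuration $s_0$ lies strictly to the left of $R^*$, $s^*$ lies above $R_0$, and the intersection of $R_0$ and $R^*$ is of type~$\intertwo$, from which Lemma~\ref{lem:intertwo} only tells you the marked point of $R_0$ avoids $(R^*)^\circ$ --- not a contradiction. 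So $M < m$ remains unproved (and may well be false in general), even though the lemma itself still holds because a gap can exist elsewhere, e.g.\ near $\Key{s}$ in this very configuration.

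The paper's proof avoids this difficulty by never trying to pin down where the gap is. It argues by contradiction: if the rectangles in $\GR_{\boxbackslash}\setminus\{^s\Box_q\}$ spanned the whole horizontal section of $^s\Box_q$, then the marked point $q_1$ with $\CP(q_1) = {^s\Box_q}$, which by Observation~\ref{obs:pointinside} must sit somewhere on the bottom edge of $^s\Box_q$, would be excluded from every position by Lemmas~\ref{lem:interthree}, \ref{lem:interfour}, \ref{lem:interone}, \ref{lem:intertwo} (with the bottom-left corner ruled out because spanning forces at least one rectangle of type~$\interone$ or~$\interfour$). The single marked point $q_1$ used as a probe is the missing idea; you never invoke it, and without it your direct positional argument on the other rectangles does not close.
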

\else	
	\end{lem:demiane1}
\fi
\iflong
	\begin{proof}
	Let $\CP(q_1) =~ ^s\Box_q$. 
	Let $S$ denote the set of all rectangles in 
	$\GR_{\boxbackslash}\setminus \{^s\Box_q\}$ overlapping $^s\Box_q$.
	Let us assume for contradiction that $S$ spans the horizontal section of 
	$^s\Box_q$.
	Let $^p\Box_r$ be a rectangle in $S$ with $\CP(r_1) =~ ^p\Box_r$. 

	Note that any intersection between $^s\Box_q$ and $^p\Box_r$
	can be of type $\interone,\intertwo,\interthree$ or $\interfour$.
	This is due to the fact that any other intersection 
	will either force two points from $\{p,q,r,s\}$ to be on the same horizontal or 
	vertical line (which is forbidden as these points come from a permutation sequence),
	or it will force a diagonal point (in $\XX$) of one rectangle 
	to be in the interior of the other.
	Also, by the construction of $^s\Box_q$, $^p\Box_r$ is the left rectangle
	in the intersection type $\interone$ and the narrower rectangle 
	in intersection type $\intertwo,\interthree$ or $\interfour$.

	The right edge of any rectangle in $\GR_{\boxbackslash}$
	does not intersect even partially with the left edge of any other 
	rectangle (as points in $\XX$ come from a permutation sequence).
	So, if the rectangles in $S$ span the horizontal section of $^s\Box_q$, 
	then each boundary point to the left of $q$ in $^s\Box_q$ (except maybe its bottom-left corner)
	lies either in the interior of a rectangle $^p\Box_r$ (intersection type $\interone,\intertwo,\interfour$)
	or to the left of $q$ in $^p\Box_q$ except at the bottom-left corner of $^p\Box_q$
	(intersection type $\interthree$ and $r=q$ ). 

	Then by Lemmas \ref{lem:interthree}, \ref{lem:interfour}, \ref{lem:interone}, and \ref{lem:intertwo}, point
	$q_1$ cannot lie to the left of $q$ in $^s\Box_q$,
	except possibly its bottom-left corner. 

	As rectangles of intersection type $\intertwo$ and $\interthree$ 
	are narrower than $^s\Box_q$ (by the construction of $^s\Box_q$), 
	these rectangles cannot span horizontal section of $^s\Box_q$. 
	Since we assumed that rectangles in $S$ span the horizontal section of $^s\Box_q$, 
	it implies that $S$ contains a rectangle of type $\interone$ or $\interfour$.
	Using Lemmas  \ref{lem:interone} or \ref{lem:interfour} respectively, 
	$q_1$ cannot even lie at the bottom-left corner of $^s\Box_q$.
	This implies that $q_1$ does not lie to the left of $q$ in $^s\Box_q$.
	However, this contradicts Observation \ref{obs:pointinside} that states  
	if $\CP(q_1) =~ ^s\Box_q$, then $q_1$ lies to the left of $q$ in $^s\Box_q$.
	So we arrive at a contradiction, hence our assumption that ``$S$ spans the horizontal section of 
	$^s\Box_q$" must be false. So, we can find a vertical line $\ell$ through the interior $^s\Box_q$, 
	such that inside $^s\Box_q$, $\ell$ does not intersect the interior 
	of any other rectangle in $\GR_{\boxbackslash} \setminus \{^s\Box_q \}$.  

	\end{proof}
\fi

\iflong
	\begin{lemma} (compare Lemma 4.3, \cite{DemianeHIKP09})
	\label{lem:demiane2}
\else
	\newtheorem*{lem:demiane2}{Lemma \ref{lem:demiane2} (compare Lemma 4.3, \cite{DemianeHIKP09})}
	\begin{lem:demiane2}	
\fi
Given $^p\Box_q \in \GR_{\boxbackslash}$  
and a vertical line $\ell$ at a non-integer $x$-coordinate intersecting $(^p\Box_q)^{\circ}$ and 
a set of points $\YY$ such that each pair of points in $\XX \cup \YY$ is arborally satisfied.
We can find two points $a,b \in (\XX \cup \YY)$
in $^p\Box_q$ such that $\lrarrow{a}{b}$, $a$ is to the left of $\ell$, $b$
is to the right of $\ell$, and there are no points in $\XX \cup \YY$ on the horizontal
segment connecting $a$ to $b$.
\iflong
	\end{lemma}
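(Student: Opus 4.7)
Write $x_\ell$ for the $x$-coordinate of $\ell$, and split the points of $\XX\cup\YY$ lying inside $^p\Box_q$ into $L := \{x : \Key{x} < x_\ell\}$ and $R := \{x : \Key{x} > x_\ell\}$. Since $x_\ell$ is non-integer every such point lies in exactly one of $L$ or $R$, and since $\ell$ crosses the interior of $^p\Box_q$ we have $p \in L$ and $q \in R$. The aim is to exhibit a horizontal line $y=c$ that meets both $L$ and $R$; call this a \emph{mixed row}. Once a mixed row $c$ is produced, take $a$ to be the rightmost point of $L$ on row $c$ and $b$ to be the leftmost point of $R$ on row $c$. Then $a,b \in {^p\Box_q}$ with $\lrarrow{a}{b}$, $a$ strictly left of $\ell$, and $b$ strictly right of $\ell$. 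Any candidate point $s \in \XX\cup\YY$ strictly between $a$ and $b$ on row $c$ would itself lie in $^p\Box_q$ and hence belong to $L$ or $R$, contradicting the extremal choice of $a$ or $b$; so $(a,b)$ is the required pair.

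The remaining task is to produce a mixed row, which I would do by contradiction. Suppose every occupied row within $^p\Box_q$ is either \emph{pure-$L$} (all its points in $L$) or \emph{pure-$R$}. The row of $p$ is pure-$L$ and the row of $q$ is pure-$R$, so scanning top-down through the finitely many occupied rows we find consecutive occupied rows $c_1 < c_2$ with $c_1$ pure-$L$, $c_2$ pure-$R$, and no points of $\XX\cup\YY$ in $^p\Box_q$ on any row strictly between. Let $a$ be the rightmost point of $L$ on $c_1$ and $b$ be the leftmost point of $R$ on $c_2$; since $a,b \in {^p\Box_q}$ we get $^a\Box_b \subseteq {^p\Box_q}$. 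Arborally satisfiedness of $\XX\cup\YY$ forces the existence of some $r \in (\XX\cup\YY)\setminus\{a,b\}$ with $r \in {^a\Box_b}$. The three cases $\Time{r}=c_1$ (then $r\in L$ on row $c_1$ with $\Key{r}>\Key{a}$, contradicting extremality of $a$), $\Time{r}=c_2$ (symmetrically contradicts the choice of $b$ on the pure-$R$ row $c_2$), and $c_1<\Time{r}<c_2$ (a point of $\XX\cup\YY$ in $^p\Box_q$ on a supposedly empty intermediate row) are all contradictory.

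The only non-routine step is the reduction via the auxiliary rectangle $^a\Box_b$: the extremal choice of $a,b$ guarantees $^a\Box_b$ is empty apart from $a$ and $b$, while its containment in $^p\Box_q$ lets us invoke arborally satisfiedness to force a point that cannot exist. No property specific to $\GR$ or of the pair $^p\Box_q$ being ``good'' is invoked, so the conclusion is a general fact about arborally satisfied sets and a rectangle spanned by two points of $\XX$; this is exactly the form in which the analogous Lemma~4.3 of \cite{DemianeHIKP09} is used in the independent-set lower bound argument.
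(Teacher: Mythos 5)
Your proof is correct and follows essentially the same approach as the paper: in both arguments one locates an extremal pair of points of $\XX\cup\YY$ inside $^p\Box_q$ lying on opposite sides of $\ell$ so that the axis-aligned rectangle they span is empty, and then arboral satisfaction forces the two points onto a common horizontal line. The paper condenses this by simply taking a ``closest'' pair across $\ell$ and asserting the rectangle is empty, whereas you unpack the same extremality argument more explicitly via the mixed-row scan; the underlying mechanism is identical.
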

\else	
	\end{lem:demiane2}
\fi
\iflong
\begin{proof}
	Let $a$ and $b$  in $(\XX \cup \YY)$ be two closest points in $^p\Box_q$ such 
	that $a$ is the left of line $\ell$ and $b$ is to the right of $\ell$.
	Note that $a$ exists ($p$ is one such candidate). Similarly $b$
	exists ($q$ is one such candidate).
	By construction,
	there are no points in $^a\Box_b$ (or $_a\Box^b$).
	If $a$ and $b$ do not lie on the same horizontal line, then 
	$^a\Box_b$ (or $_a\Box^b$) is an arborally unsatisfied rectangle,
	contradicting our assumption that $(\XX \cup \YY)$ is an arborally satisfied set. 
	\end{proof}
\fi

\iflong
	\begin{lemma} (compare Lemma 4.5, \cite{DemianeHIKP09})
	\label{lem:demiane3}
\else
	\newtheorem*{lem:demiane3}{Lemma \ref{lem:demiane3} (compare Lemma 4.5, \cite{DemianeHIKP09})}
	\begin{lem:demiane3}	
\fi
Given a point set $\XX$ and another point set $\YY$ such that $\XX \cup \YY$ is 
arborally satisfied, then $|\XX \cup \YY| \ge |\GR_{\boxbackslash}| + |\XX|$.
\iflong
	\end{lemma}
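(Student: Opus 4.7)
The plan is to mimic the proof of Lemma~4.5 in Demaine et al.~\cite{DemianeHIKP09}, arguing by induction on $|\GR_{\boxbackslash}|$. The base case $|\GR_{\boxbackslash}|=0$ is immediate because $\XX\subseteq\XX\cup\YY$ gives $|\XX\cup\YY|\ge|\XX|$.

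For the inductive step, first apply Lemma~\ref{lem:demiane1} to isolate the widest good rectangle $R={}^s\Box_q$ anchored at the rightmost good-corner $q$, paired with a vertical line $\ell$ through $R^\circ$ that, inside $R$, meets the interior of no other rectangle in $\GR_{\boxbackslash}$. Then feed $(R,\ell)$ and the arborally satisfied set $\XX\cup\YY$ into Lemma~\ref{lem:demiane2} to obtain two points $a,b\in\XX\cup\YY$ sitting inside $R$ on a common horizontal line, with $a$ left of $\ell$, $b$ right of $\ell$, and no point of $\XX\cup\YY$ on the open segment between them. Since $\XX$ is a permutation sequence, at most one of $a,b$ can belong to $\XX$, so at least one of them --- call it $y$ --- must belong to $\YY$. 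I would then charge $R$ to $y$, remove $R$ from $\GR_{\boxbackslash}$ and $y$ from $\YY$, and invoke the inductive hypothesis on the smaller collection $\GR_{\boxbackslash}\setminus\{R\}$ together with an arborally satisfied superset of $\XX\cup(\YY\setminus\{y\})$ of size at most $|\XX\cup\YY|-1$. Provided such a superset exists, the induction closes and yields $|\XX\cup\YY|\ge|\GR_{\boxbackslash}|+|\XX|$.

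The main obstacle lies in validating this reduction: removing $y$ from $\YY$ may destroy the arborally satisfied property, and one has to guarantee that any subsequent ``repair'' introduces no net extra points and does not consume any of the $\YY$-budget reserved for future charges in the recursion. The key leverage I intend to exploit is the extremality of $q$ combined with Lemma~\ref{lem:demiane1}'s exclusivity of $\ell$ inside $R$: together these force $y$'s role as a pair-witness to be confined to $R$ and its immediate neighborhood, so any unsatisfied rectangle created by deleting $y$ lives inside $R$ and can be patched locally, away from the interiors of the surviving rectangles of $\GR_{\boxbackslash}\setminus\{R\}$. This localization is what lets the inductive accounting go through cleanly, and is where I expect the bulk of the technical work to reside.
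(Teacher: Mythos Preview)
Your induction scheme has a real gap precisely where you flag it: you propose to delete the witness $y$ from $\YY$ and then ``patch locally'' so as to recover an arborally satisfied superset of $\XX\cup(\YY\setminus\{y\})$ of size at most $|\XX\cup\YY|-1$. You give no construction of such a patch, and there is no reason one should exist. A single point of $\YY$ can simultaneously satisfy many rectangles with corners scattered far outside $R$; nothing about the extremality of $q$ or the exclusivity of $\ell$ inside $R$ localizes those obligations. In the worst case the only way to restore arboral satisfaction after deleting $y$ is to put $y$ (or a point playing the same role) right back, and the induction stalls. The claim that ``any unsatisfied rectangle created by deleting $y$ lives inside $R$'' is simply not supported by the lemmas you cite.

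The paper's proof avoids this issue entirely by never touching $\XX\cup\YY$. It keeps the satisfied set fixed and instead charges each rectangle to the \emph{ordered adjacent pair} $(a,b)$ produced by Lemma~\ref{lem:demiane2}, then removes only the rectangle $R$ from $\GR_{\boxbackslash}$ and repeats. The work is in showing that no horizontally adjacent pair in $\XX\cup\YY$ gets charged twice: if $(a,b)$ are strictly inside $R$ this follows from the $\ell$-exclusivity of Lemma~\ref{lem:demiane1}; if they lie on the top or bottom edge of $R$ one uses that distinct $\XX$-coordinates prevent top/bottom edges of different good rectangles from overlapping. Once the charges are injective into adjacent pairs, a per-row count (each row carries at most one $\XX$-point, so the number of $\YY$-points on a row is at least the number of adjacent pairs on that row) yields $|\YY|\ge|\GR_{\boxbackslash}|$. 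You should drop the delete-and-repair plan and instead argue injectivity of the rectangle-to-pair charging; that is where Lemmas~\ref{lem:demiane1} and~\ref{lem:demiane2} are actually doing their work.
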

\else	
	\end{lem:demiane3}
\fi
\iflong
	\begin{proof} This proof is essentially verbatim from \cite{DemianeHIKP09}. The property of 
	independent set of rectangles used in the proof of Lemma 4.5 in \cite{DemianeHIKP09} is the
	existence of the line $\ell$ (proved in Lemma 4.4 there). Our Lemma~\ref{lem:demiane1} proves
	the existence of line $\ell$ for $\GR_{\boxbackslash}$. Given this, the proofs for independent set 
	and for $\GR_{\boxbackslash}$ are the same. 

	We apply Lemma \ref{lem:demiane1} to find a rectangle $^s\Box_q$ in $\GR_{\boxbackslash}$ and
	a vertical line $\ell$ piercing $^s\Box_q$ with the property that no other
	rectangle in $\GR_{\boxbackslash}$ 
	intersects $\ell$ in the interior of $^s\Box_q$. Then we apply
	Lemma~\ref{lem:demiane2} to find two points $a,b$ horizontally adjacent in 
	$\XX \cup \YY$
	and on opposite sides of $\ell$ in $^s\Box_q$. We mark this pair $(a,b)$
	with rectangle $^s\Box_q$. Then we remove 
	$^s\Box_q$  from $\GR_{\boxbackslash}$ and repeat
	the process, until there are no rectangles left in $\GR_{\boxbackslash}$.
	Whenever we remove a rectangle $^s\Box_q$ 
	 from $\GR_{\boxbackslash}$, if $a$ and
	$b$ are not on the top or bottom sides of $^s\Box_q$, then $a$ and $b$
	do not simultaneously belong to any other rectangle in 
	$\GR_{\boxbackslash} \setminus \{^s\Box_q \}$,
	so they will never be marked again. On the other hand, if
	$a$ and $b$ are on the top (bottom) side of $^s\Box_q$, then $a$ and $b$
	are neither in the interior nor on the top (bottom) side of any other
	rectangle in $\GR_{\boxbackslash}$. Furthermore, since coordinates in $\XX$ are distinct, the top side
	of no rectangle in $\GR_{\boxbackslash}$ coincides even partially with the bottom
	side of a rectangle in $\GR_{\boxbackslash}$. Thus, each pair of horizontally
	adjacent points in $\XX \cup \YY$ can be marked at most once.
	Finally, by distinctness of $y$-coordinates in $\XX$, at most
	one point in a pair  can
	belong to $\XX$. Therefore the number of points in $\YY$ is at
	least $|\GR_{\boxbackslash}|$, proving the lemma.

	\end{proof}
\fi

Similar to Lemma \ref{lem:demiane3}, we can also show that 
any arborally satisfied set $\YY$ satisfies $|\XX \cup \YY| \ge |\GR_{\boxslash}| + |\XX|$.
So $|\XX \cup \YY| \ge \frac{|\GR_{\boxbackslash}| + |\GR_{\boxslash}|}{2} + |\XX| = \frac{|\GR|}{2} + |\XX|$.
We have thus proved Theorem~\ref{lem:goodbound}.

\section{Upper bounding $|\MFC|$: Bad Pairs}
\label{sec:mfc}
\iflong\else\vspace{-2mm}\fi
In this section we prove 
\iflong\else\vspace{-2mm}\fi
\begin{lemma}\label{lem:bad-mfc-bound}
$|\BR| \le  10 n(k-1)$.
\end{lemma}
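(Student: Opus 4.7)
The plan is to follow the blueprint of the $|\MMC|$ bound from Section~\ref{sec:mmc}, doing a per-point analysis. Define $\BR_p^R := \{\CP(p_1) \in \BR : \OP(p_1) = p,\ \lrarrow{p}{p_1}\}$ and $\BR_p^L$ symmetrically, so that $|\BR| = \sum_{p \in \XX} (|\BR_p^R| + |\BR_p^L|)$. It therefore suffices to prove $|\BR_p^R|, |\BR_p^L| \le 5(k-1)$, which yields $|\BR| \le 10n(k-1)$. By Lemma~\ref{lem:reversetop}, setting $B = \RT(p)$, every $(p,q) \in \BR_p^R$ has $q \in B'$ for some $B' \in \SB(B)$; by $k$-decomposability, $|\SB(B)| \le k-1$. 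The proof thus reduces to the claim that for each sibling block $B'$, at most five bad pairs $(p, q) \in \BR_p^R$ can have $q \in B'$.

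To prove this per-sibling-block claim, I would list the bad pairs $\CP(p_1), \ldots, \CP(p_m) \in \BR_p^R$ with $q_i \in B'$ in order of increasing $\Key{p_i}$. As in the opening of the proof of Lemma~\ref{lem:mmc-witness}, the corresponding first-above marked points $r_i$ satisfy $\Time{r_1} < \Time{r_2} < \cdots < \Time{r_m}$. Each bad pair carries an original witness $s_i \in \XX \cap {}^{q_i}\Box_p{}^{\circ}$; because $p = \Top(B)$ and $q_i \in B'$, the inequalities $\Time{q_i} < \Time{s_i} < \Time{p}$ and $\Key{p} < \Key{s_i} < \Key{q_i}$ force each $s_i$ into a child of $\PP(B)$ different from $B$, with $\Key{s_i}$ lying either in a sibling strictly between $B$ and $B'$ (in key order) or inside $B'$ itself. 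I would then prove an analog of Lemma~\ref{lem:mmc-witness} for the bad case: for each adjacent pair $(\CP(p_i), \CP(p_{i+1}))$, combining the hidden-point Lemmas~\ref{lem:hidden} and~\ref{lem:hidden-right} with the definition of $\CP$ produces a witness original point lying in a sibling of $B$ strictly between $B$ and $B'$. Since the $k$-decomposability gives at most $k-2$ such strictly-intermediate siblings, and each intermediate sibling can host only one such witness (by an argument paralleling Lemma~\ref{lem:finalmmc}), only a constant number of bad pairs escape this witness argument via the ``extremal'' positions (the first/last $p_i$, and a handful of configurations where $s_i \in B'$). This forces $m \le 5$.

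The principal obstacle is the per-sibling-block bound itself. Unlike the $\MMC$ situation, where Lemma~\ref{lem:finalmmc} cleanly places each $q_i$ in a distinct sibling block and immediately gives $|\RMMC_p|\le k-1$, here many $q_i$'s may share the single sibling $B'$. The leverage must come from the interior witnesses $s_i$, whose existence is precisely what makes a pair \emph{bad}. The delicate step is showing that the arborally-satisfied property of $\XX \cup \GG$ together with the definition of $\CP(p_i)$ forces the $s_i$'s (aside from a bounded number of extremal cases) to occupy distinct sibling blocks strictly between $B$ and $B'$; formalising this while tracking whether $s_i$ sits to the north-west, directly above, or to the north-east of the marked point $p_i$ is where the three symmetric cases of Lemma~\ref{lem:mmc-witness} must be adapted and slightly extended. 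Once the per-sibling bound is established, summing over the at most $k-1$ siblings of $B$ and combining with the symmetric bound on $\BR_p^L$ and over all $p \in \XX$ completes the proof.
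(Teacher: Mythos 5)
Your proposal diverges from the paper's proof in a way that leaves a real gap. The paper does \emph{not} establish a per-sibling-block constant bound on bad pairs. Instead, it splits $\BR$ into two populations: points in the domain of a partial map $\AMmc$ (sending each bad point to a nearby $\MMC$ point, either the next point to its right or the first point above that), and the remaining \emph{observable} points. The mapped population is bounded \emph{globally} by $4|\MMC|$ (each $\MMC$ point is hit at most twice from each side), and only the observable points get a distinct-siblings argument (Lemmas~\ref{lem:consecutive-observable} and~\ref{lem:finalmfc}), giving $\le 2(k-1)$ of them per $p$. Combining $4|\MMC| + 2n(k-1)$ with $|\MMC|\le 2n(k-1)$ gives $10n(k-1)$. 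Crucially, the map's range may involve $\MMC$ pairs whose first coordinate is \emph{not} $p$ (when $\AMmc(p_1)=s_1'$, the point $\OP(s_1')$ is a different original point), so the accounting is inherently global, not local to $(p, B')$.

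The gap in your approach sits exactly where you flagged the delicate step. You claim that for each adjacent pair of bad pairs $(\CP(p_i),\CP(p_{i+1}))$ with both $q_i,q_{i+1}\in B'$, a witness original point lands in a sibling strictly between $B$ and $B'$, and that each such intermediate sibling can host only one witness. Neither half of this holds. The interior witnesses $s_i$ guaranteed by badness can all sit inside $B'$ itself (this is forced when $B'$ is key-adjacent to $B$ and there simply are no intermediate siblings), and nothing in Lemmas~\ref{lem:hidden}, \ref{lem:hidden-right}, or \ref{lem:mmc-witness} forces two different witnesses into two different intermediate blocks. Moreover, Lemma~\ref{lem:mmc-witness} is specific to the $\MMC$ geometry (where $\OP(r_i)$ is to the north-west of $p$); the paper's $\MFC$ analogue, Lemma~\ref{lem:consecutive-observable}, is proved using Lemma~\ref{lem:t2location}, whose hypotheses are exactly that $p_1$ is \emph{unmapped} by $\AMmc$ (so that $\CP(p_1')\in\MFC$ and $\CP(s_1')\in\MFC$). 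Without that hypothesis the witness $t$ is not available. So the witness argument you invoke only applies to the observable subset, which is precisely why the paper needs the map $\AMmc$ to handle the rest. As written, your per-$(p,B')$ bound of $5$ is unsupported, and the route to $|\BR_p^R|\le 5(k-1)$ does not close.
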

\iflong\else\vspace{-2mm}\fi

For a point $p \in \XX$, similar to $\RMMC_p$, define
$\RMFC_p := \{ \CP(p_1) \in \BR \mid \lrarrow{p}{p_1} \text{ and } \CP(p_1) = (p,q),  
\text{ where } q \in \XX \}$. Let $\CP(p_1) \in \RMFC_p$. 
To upper bound $|\BR|$, we construct $\AMmc(\cdot)$ that maps a point $u$ (with $\CP \in \BR$)
to a point $v$ (with $\CP(v) \in \MMC$). 
$\AMmc(\cdot)$ takes at most $4$ points (with $\CP(\cdot) \in \BR$) 
to a point (with $\CP(\cdot) \in \MMC$), and we already proved $|\MMC| = O(nk)$. 
This would provide an upper bound on $|\BR|$. Unfortunately,
$\AMmc(\cdot)$ is a partial map, and does not map every point (with $\CP(\cdot) \in \BR$). We show, by an argument 
similar to the
one used for bounding $|\MMC|$, that the number of unmapped points is $O(nk)$. This gives our desired
bound $|\BR| = O(nk)$. 

The rest of this section is devoted to proving Lemma~\ref{lem:bad-mfc-bound}. 
\iflong
	In Sec.~\ref{sec:construction-map} 
	we construct
	the map mentioned above; 
	in Sections~\ref{sec:bound-points-mapped}, \ref{sec:bound-points-not-mapped} we provide upper bounds on the size of the sets
	of mapped and unmapped points; 
	finally, we put these two bounds together to complete the proof. 
\else
	In Sec.~\ref{sec:construction-map} 
	we construct
	the map mentioned above and bound its size; 
	in Section~\ref{sec:bound-points-not-mapped} we provide an upper bound on the size of the set
	of unmapped points; 
	finally, we put these two bounds together to complete the proof. 
\fi







\iflong\else\vspace{-2mm}\fi
\subsection{Construction of $\AMmc(\cdot)$ and its properties}
\label{sec:construction-map}
\iflong\else\vspace{-2mm}\fi

\iflong
We begin by describing the setting used in the proof.
\fi 
\paragraph{The setting used in the proof.}
Unless mentioned otherwise, we will assume in this section that $\RT(B)=p$ and $\CP(p_1) \in \RMFC_p$.
Let $q_1$ be the first point above $p_1$. 
Since $\GRE$ has put a point $p_1$ below $q_1$ while processing $p$, 
it found $_p\Box^{q_1}$ arborally unsatisfied ($\lrarrow{p}{p_1}$ as $\CP(p_1) \in \RMFC_p$).
This implies that there are no points in $_p\Box^{q_1}$ except $p,q_1$ and $p_1$.
If $\OP(q_1)=q_1$, then 
$\RR(p_1) \in \GR$. So let us assume that $\OP(q_1)=q$ and $\lrarrow{q_1}{q}$ 
(since $\RR(p_1) \in \RMFC_p$ we do not have $\lrarrow{q}{q_1}$). 

Since $\RR(p_1) \in \BR$, there
is another point in $\XX \setminus \{p,q\}$ and in $(_p\Box^q)^{\circ}$. 
Let $r \in \XX$ be a point in $(_p\Box^q)^{\circ}$ having the smallest $x$-distance from $p$.
Let $p_1'$ be the next point to the right of $p_1$; that $p_1'$ exists is proven next:

\iflong
	\begin{lemma}
	\label{lem:p_2exists}
\else
	\newtheorem*{lem:p_2exists}{Lemma \ref{lem:p_2exists}}
	\begin{lem:p_2exists}	
\fi
There is a point $p_1'$ to the right of $p$  in $_{p_1}\Box^r$
but not at the bottom-right corner of $_{p_1}\Box^r$.
\iflong
	\end{lemma}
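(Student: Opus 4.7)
My plan is to apply Lemma~\ref{lem:greedy-property} to two different rectangles: first to $_{p_1}\Box^r$ itself, which establishes that $p_1'$ exists and lies in $_{p_1}\Box^r$, and then to a smaller rectangle to show that $p_1'$ is not at the bottom-right corner. The setup first requires noting that $\Key{r} > \Key{p_1}$. This follows because $_p\Box^{q_1}$ is arborally unsatisfied when $p$ is processed, so $r$ cannot sit in its interior; and $\Key{r} = \Key{p_1}$ would place $r$ strictly between $q_1$ and $p_1$ on the vertical line $x = \Key{p_1}$, contradicting that $q_1$ is the first point above $p_1$. Lemma~\ref{lem:greedy-property} applied to $_{p_1}\Box^r$ then yields a point horizontally adjacent to $p_1$ inside it, since a vertically adjacent candidate would again violate the ``$q_1$ first above $p_1$'' property ($\Time{q_1} = \Time{q} < \Time{r}$ puts $q_1$ outside $_{p_1}\Box^r$). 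Hence $p_1'$ exists with $\Key{p_1} < \Key{p_1'} \leq \Key{r}$.

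To rule out $\Key{p_1'} = \Key{r}$ I would argue by contradiction, producing a witness $u^{*}$ with $\Key{p_1} < \Key{u^{*}} < \Key{r}$ and $\Time{u^{*}} > \Time{q_1}$ on which to apply Lemma~\ref{lem:greedy-property} to $_{p_1}\Box^{u^{*}}$. The source of $u^{*}$ is the rectangle $^{q_1}\Box_r$: when $r$ is processed, $^{q_1}\Box_r$ must be arborally satisfied, for otherwise $\GRE$ would mark a point at $(\Key{q_1}, \Time{r})$, strictly between $q_1$ and $p_1$ on $x = \Key{q_1}$, contradicting the choice of $q_1$. Using the minimality of $\Key{r} - \Key{p}$ among original points in $(_p\Box^q)^\circ$, Observation~\ref{obs:nothing-above}, and once more the ``$q_1$ first above $p_1$'' property, one can check that any witnessing point $u$ for $^{q_1}\Box_r$ (other than $q_1,r$) must lie strictly inside the rectangle or on the interior of its top edge at time $\Time{q}$.

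When some $u$ lies strictly inside, setting $u^{*} := u$ already works: $\Time{u^{*}} > \Time{q_1}$ forbids a vertically adjacent point in $_{p_1}\Box^{u^{*}}$, so Lemma~\ref{lem:greedy-property} delivers a horizontally adjacent point with key at most $\Key{u^{*}} < \Key{r}$, contradicting $\Key{p_1'} = \Key{r}$. Otherwise the only witnesses are on the interior of the top edge; I pick $u$ to be the rightmost such witness (with $\Key{u} < \Key{r}$) and examine the sub-rectangle $^u\Box_r$. By the choice of $u$ and the emptiness of the interior of $^{q_1}\Box_r$, the only points of $\XX \cup \GG$ in $^u\Box_r$ at the time $r$ is processed are $u$ and $r$ themselves, so $^u\Box_r$ is unsatisfied and $\GRE$ marks a new point $m$ at $(\Key{u}, \Time{r})$. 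With $u^{*} := m$, we again have $\Time{u^{*}} = \Time{r} > \Time{q_1}$ and $\Key{u^{*}} = \Key{u} < \Key{r}$, and the same contradiction argument closes the proof.

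The main obstacle is the second case, where the only witnesses of $^{q_1}\Box_r$ being satisfied lie on its top edge. A direct application of Lemma~\ref{lem:greedy-property} to $_{p_1}\Box^u$ in this case may return $q_1$ (sitting at the top-left corner of $_{p_1}\Box^u$) rather than a useful right-adjacent point. Resolving this requires the observation that $\GRE$ manufactures a new marked point $m$ strictly below the top edge during the processing of $r$; choosing $u$ carefully, as the rightmost top-edge witness, guarantees $^u\Box_r$ is truly unsatisfied and thus forces $m$ into existence.
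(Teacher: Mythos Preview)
Your argument is correct but takes a longer route than the paper's. Both proofs pivot on the same two rectangles, $^{q_1}\Box_r$ and a rectangle with $p_1$ at the south-west corner, but exploit them in opposite orders. The paper first applies Lemma~\ref{lem:greedy-property-left} to $^{q_1}\Box_r$: the resulting point $r'$ cannot be above $r$ (since $r\in\XX$, Observation~\ref{obs:nothing-above}) and cannot sit at the bottom-left corner (else it would lie strictly between $q_1$ and $p_1$ on $x=\Key{p_1}$), so $\lrarrow{r'}{r}$ with $\Key{p_1}<\Key{r'}<\Key{r}$ and $\Time{r'}=\Time{r}>\Time{q_1}$. A single application of Lemma~\ref{lem:greedy-property} to $_{p_1}\Box^{r'}$ then yields $p_1'$ to the right of $p_1$ with $\Key{p_1'}\le\Key{r'}<\Key{r}$, and the proof is done in two steps.

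You instead apply Lemma~\ref{lem:greedy-property} to $_{p_1}\Box^r$ first, which only gives $\Key{p_1'}\le\Key{r}$; excluding equality then forces the case analysis on witnesses of $^{q_1}\Box_r$. Your top-edge case in particular reconstructs, via the marked point $m=(\Key{u},\Time{r})$, exactly the kind of bottom-edge point that Lemma~\ref{lem:greedy-property-left} would have delivered directly. So the two proofs share the same ingredients; reversing the order of the two applications would collapse your case analysis into a single line. One small remark: your claim that every witness of $^{q_1}\Box_r$ lies in the interior or on the top edge should be read as a statement about the configuration \emph{before} $\GRE$ processes $r$ (as your argument that $\GRE$ would otherwise mark $(\Key{q_1},\Time{r})$ indicates); in the final output bottom-edge witnesses such as your own $m$ do exist, but this does not affect the validity of the argument.
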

\else	
	\end{lem:p_2exists}
\fi
\iflong
	\begin{figure}[H]
\centering
\begin{tikzpicture}
[round/.style={rectangle, rounded corners=3mm, minimum size=20mm, draw=black!50}]

\draw[step=0.25,black,opacity=0.1] (0,0) grid (3,3);

\fill [black] (1,1) circle (2pt) node[below,black,opacity=1] {\tiny{$p$}};

\fill [ blue] (1.75,1) circle (2pt) node[below,black,opacity=1] {\tiny{$p_1$}};
\fill [ blue] (2.25,1) circle (2pt) node[below,black,opacity=1] {\tiny{$p_1'$}};
\fill [ black] (2.75,2.25) circle (2pt) node[below,black,opacity=1] {\tiny{$r$}};

\fill [ blue] (1.75,2.5) circle (2pt) node[above,black,opacity=1] {\tiny{$q_1$}};
\fill [ black] (3,2.5) circle (2pt) node[below,black,opacity=1] {\tiny{$q$}};

\draw [round,fill=gray,opacity=0.2] (0.5,0) rectangle (1.5,1.5);

\end{tikzpicture}
\caption{Lemma \ref{lem:p_2exists} states that $p_1'$ exists to the right of $p_1$ such that $_{p_1'}\swnearrow^r$ }
\label{fig:mfc-setting}
\end{figure}
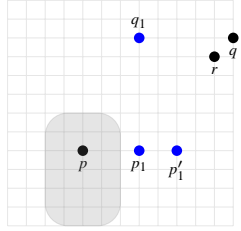
	\begin{proof}
	Consider the rectangle $^{q_1}\Box_r$. By Lemma \ref{lem:greedy-property-left},
	a point $r'$ lies either to the left or above $r$ in $^{q_1}\Box_r$.
	Since $r \in \XX$, by Observation \ref{obs:nothing-above}, $\GRE$ does not put any point above $r$. Also, since $q_1$ is 
	the first point above $p_1$, $r'$ cannot lie on the bottom-left endpoint of
	$^{q_1}\Box_r$. So $\lrarrow{r'}{r}$ and $_{p_1}\neswarrow^{r'}$.

	Now consider rectangle $_{p_1}\Box^{r'}$. Again by Lemma \ref{lem:greedy-property}, 
	a point $p_1'$ must exist either to the right or above $p_1$ in $_{p_1}\Box^{r'}$.
	Since $q_1$ is the first point above $p_1$ (and $q_1$ does not lie in $_{p_1}\Box^{r'}$),
	$p_1'$ cannot lie above $p_1$ in $_{p_1}\Box^{r'}$. So it must lie to the right of $p_1$ in $_{p_1}\Box^{r'}$.

	As $\lrarrow{r'}{r}$, we deduce that $p_1'$ cannot lie below $r$.
	\end{proof}
\fi



We are now ready to describe the construction of the partial map 
$\AMmc(\cdot)$. Its domain is $\{u \in \GG \mid \CP(u) \in \BR\}$ and its range is  
$\{v \in \GG \mid \RR(v) \in \MMC\}$.
$\AMmc(\cdot)$ is computed by the procedure in Fig.~\ref{fig:associateMMC}.
\iflong
	\begin{figure}[hpt]
\else
	\begin{figure}[!hpt]
\fi
\begin{mdframed}[style=MyFrame]
\begin{procedure}[H]
	\scriptsize
	 \If{$\CP(p_1') \in \MMC$ }{
		 $\AMmc(p_1) = p_1'$\;
	}
	\Else{
		Let $s_1'$ be the  first point above $p_1'$\;
		\If{$\CP(s_1') \in \MMC$}
		{	
			$\AMmc(p_1) = s_1'$\; 
		}
		\Else{
			$\AMmc(p_1)$ is not defined\;
			}
		}
\end{procedure}
\end{mdframed}
\caption{Procedure to compute $\AMmc(p_1)$ mapping $p_1$ with $\CP(p_1) \in \RMFC$ to a point 
with $\CP(\cdot) \in \MMC$. If $\CP(p_1) \in \LMFC$,
the procedure is symmetric.}
\label{fig:associateMMC}
\end{figure}

In the procedure for computing $\AMmc(p_1)$, if we find that $\CP(p_1') \in \MFC$, then we look for $s'_1$, the first
point above $p'_1$. 
\iflong
	We now prove some properties of $s_1'$; these will be used in Sec.~\ref{sec:bound-points-not-mapped}. 
	By Lemma~\ref{lem:p_2exists}, $p_1'$ cannot lie below $r$.
	So $_{p_1'}\neswarrow^r$ (see Fig.~\ref{fig:s_2-mfc}).
	Note that $_{s_1'}\neswarrow^r$ or $^{s_1'}\nwsearrow_r$ or $\lrarrow{s_1'}{r}$ 
	(though in Fig.~\ref{fig:s_2-mfc} we have shown the case $_{s_1'}\neswarrow^r$). 
	Since $\GRE$ has put a point below $s_1'$ while processing $p$, hence $_p\Box^{s_1'}$
	is an arborally unsatisfied rectangle.
	So, $^{q_1}\nwsearrow_{s_1'}$ as $\lrarrow{p_1}{p_1'}$ and 
	$\GRE$ marked these points due to unsatisfied rectangle $_p\Box^{q_1}$ and
	$_p\Box^{s_1'}$ respectively.  
	We make some other observation regarding $s_1'$.

	\begin{observation}
	\label{obs:s2obs1}
	There are no points to the left of $s_1'$ in $_p\Box^{s_1'}$.
	\end{observation}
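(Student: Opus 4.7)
The plan is to unpack the fact that $\GRE$ marked $p_1'$ while processing $p$ and apply the definition of arborally unsatisfied rectangle. First I would confirm that the unsatisfied rectangle forcing $\GRE$ to place $p_1'$ is precisely $_p\Box^{s_1'}$. $\GRE$ places the marked point $p_1'$ at location $(\Key{s_1'}, \Time{p})$ because of some unsatisfied rectangle $_p\Box^t$ with $\Key{t} = \Key{s_1'}$ and $\Time{t} < \Time{p}$. Since $s_1'$ is by construction the first point above $p_1'$, it is the point of $\XX \cup \GG$ at $x$-coordinate $\Key{s_1'}$ with largest $y$-coordinate strictly less than $\Time{p}$. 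If the forcing rectangle were $_p\Box^t$ with $t \ne s_1'$, then $\Time{t} < \Time{s_1'}$, so $s_1'$ would lie in $_p\Box^t$ and satisfy it, a contradiction. Hence the relevant rectangle is $_p\Box^{s_1'}$, as already noted in the text preceding the observation.

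Next I would apply the definition of arborally unsatisfied to $_p\Box^{s_1'}$: at the moment $\GRE$ processes $p$, no point of $(\XX \cup \GG) \setminus \{p, s_1'\}$ lies in $_p\Box^{s_1'}$. Suppose for contradiction that some point $t$ satisfies $\lrarrow{t}{s_1'}$ and $t \in {_p\Box^{s_1'}}$. Then $\Time{t} = \Time{s_1'} < \Time{p}$, so $t$ is already present in $\XX \cup \GG$ at the moment $\GRE$ processes $p$: indeed all points sharing the horizontal line $y = \Time{s_1'}$ are placed when $\OP(s_1')$ is processed, which happens strictly before $\Time{p}$. This $t$ would satisfy $_p\Box^{s_1'}$, contradicting its unsatisfaction. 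Hence no such $t$ exists.

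I do not anticipate any significant obstacle here; the argument reduces to chaining the construction of $s_1'$, the definition of arborally unsatisfied rectangle, and the observation that any point on the horizontal line $y = \Time{s_1'}$ exists strictly before $p$ is processed. The one step worth double-checking is the identification of $_p\Box^{s_1'}$ as the unsatisfied rectangle responsible for the marking of $p_1'$, but this is immediate from the \lq\lq first point above\rq\rq\ property of $s_1'$ together with the uniqueness of the unsatisfied rectangle whose bottom-right corner lies at column $\Key{s_1'}$.
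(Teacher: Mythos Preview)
Your proposal is correct and follows essentially the same approach as the paper. The paper dispatches the observation in one line, noting it is \lq\lq a direct consequence of the fact that while processing $p$, $\GRE$ found $_p\Box^{s_1'}$ arborally unsatisfied\rq\rq; you simply unpack this more carefully, re-deriving that $_p\Box^{s_1'}$ is indeed the relevant unsatisfied rectangle (which the paper had already recorded in the setup paragraph) and making explicit the timing argument that any point on the line $y=\Time{s_1'}$ would already be present when $p$ is processed.
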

	The above observation is a direct consequence of the fact that while
	processing $p$, $\GRE$ found $_p\Box^{s_1'}$  arborally unsatisfied 
	while processing $p$.

	\begin{observation} 
	\label{obs:s2obs2}

	There are no points in (1) $(^{q_1}\Box_{s_1'})^{\circ}$ and 
	(2) below $q_1$ or to the left of $s_1'$ in $^{q_1}\Box_{s_1'}$.
	\end{observation}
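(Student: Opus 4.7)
My plan is to treat the three assertions separately: the two about boundary edges are immediate from already-established facts, while the interior emptiness is proved by a descent argument exploiting $\GRE$'s greedy marking.

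For the left edge (part 2): any point $w$ strictly below $q_1$ on the left edge of $^{q_1}\Box_{s_1'}$ satisfies $\Key{w}=\Key{q_1}=\Key{p_1}$ and $\Time{q_1} < \Time{w} \le \Time{s_1'} < \Time{p_1}$, directly contradicting the definition of $q_1$ as the first point above $p_1$. For the bottom edge (part 3): such a point has $y$-coordinate $\Time{s_1'}$ and $x$-coordinate in $[\Key{q_1}, \Key{s_1'})$, a segment contained in the corresponding ``to the left of $s_1'$'' segment of $_p\Box^{s_1'}$ (because $\Key{p} < \Key{q_1}$), which is empty by Observation~\ref{obs:s2obs1}.

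The main obstacle is part (1). I would argue by contradiction: assume $(^{q_1}\Box_{s_1'})^\circ$ is nonempty, and let $v_0$ be a point in this interior with maximum $y$-coordinate. The pivotal sub-claim is that for every interior point $v$, the rectangle $_p\Box^v$ must have been arborally satisfied just before $\GRE$ processed $p$; otherwise $\GRE$ would place a marked point at $(\Key{v}, \Time{p})$, and since $\Key{p_1} = \Key{q_1} < \Key{v} < \Key{s_1'} = \Key{p_1'}$, this new point would fall strictly between $p_1$ and $p_1'$ on the row $y = \Time{p}$, contradicting the definition of $p_1'$ as the next point to the right of $p_1$.

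Given satisfiability, pick a witness $v' \in (\XX \cup \GG) \setminus \{p, v\}$ lying in $_p\Box^v$ with $\Time{v'} < \Time{p}$, so that $\Time{v} \le \Time{v'} < \Time{p}$ and $\Key{p} \le \Key{v'} \le \Key{v}$. I would then show $v' \in (^{q_1}\Box_{s_1'})^\circ$ by eliminating alternatives: since $\Time{v'} \ge \Time{v} > \Time{q_1}$ we have $v' \ne q_1$, and emptiness of $_p\Box^{q_1}$ (arborally unsatisfied when $\GRE$ processed $p$) then forces $\Key{v'} > \Key{q_1}$; similarly $\Key{v'} \le \Key{v} < \Key{s_1'}$ gives $v' \ne s_1'$, and emptiness of $_p\Box^{s_1'}$ forces $\Time{v'} < \Time{s_1'}$. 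Applying this construction to $v_0$ yields $v_1$ in the interior with $\Time{v_1} \ge \Time{v_0}$ and $\Key{v_1} \le \Key{v_0}$; maximality of $\Time{v_0}$ forces $\Time{v_1} = \Time{v_0}$, and $v_1 \ne v_0$ then forces $\Key{v_1} < \Key{v_0}$ strictly. Iterating produces a strictly decreasing sequence of integer $x$-coordinates of interior points, all bounded below by $\Key{q_1}+1$, which is impossible.
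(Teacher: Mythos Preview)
Your proof is correct and uses precisely the three ingredients the paper cites in its one-line justification: that $p_1,p_1'$ are consecutive, that $_p\Box^{q_1}$ and $_p\Box^{s_1'}$ were unsatisfied when $\GRE$ processed $p$, and that $q_1$ is the first point above $p_1$. Your treatment of the two boundary edges matches the intended reasoning exactly.

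For part~(1) your descent works, but it can be collapsed to a single step: among all points in $(^{q_1}\Box_{s_1'})^\circ$ choose $v$ with maximum $\Time{v}$ and, among those, minimum $\Key{v}$. Your own analysis shows that any witness $v'$ satisfying $_p\Box^{v}$ (before $p$ is processed) must again lie in $(^{q_1}\Box_{s_1'})^\circ$ with $\Time{v'}\ge\Time{v}$ and $\Key{v'}\le\Key{v}$; extremality then forces $v'=v$, a contradiction. Hence $_p\Box^{v}$ is unsatisfied and $\GRE$ would mark $(\Key{v},\Time{p})$ strictly between $p_1$ and $p_1'$, which is impossible. This avoids the infinite-descent bookkeeping but is otherwise the same argument you give.
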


	Again this is due to the facts that (1) $p_1$ and $p_1'$ are two consecutive points
	to the right of $p$, (2) $\GRE$ found $_p\Box^{q_1}$ and $_p\Box^{s_1'}$ to be
	arborally unsatisfied while processing $p$, (3) $q_1$ is the first point above $p_1$. 

	\begin{observation}
	\label{obs:s2obs3}
	$s_1' \notin \XX$ and $\lrarrow{s_1'}{\OP(s_1')}$.
	\end{observation}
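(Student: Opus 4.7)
The plan is to prove the two assertions in sequence. For $s_1'\notin\XX$, I will locate $s_1'$ strictly inside $_p\Box^q$ and then appeal to the minimality of $r$. For $\lrarrow{s_1'}{\OP(s_1')}$, I will read the position of $\OP(s_1')$ off the definition of $\MFC$ applied to $\CP(p_1')$, using that we reached $s_1'$ only inside the \textsc{Else} branch of the procedure in Fig.~\ref{fig:associateMMC}.

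To prove $s_1'\notin\XX$: since $s_1'$ is the first point above $p_1'$, $\Key{s_1'}=\Key{p_1'}$; from $\lrarrow{p}{p_1}$ and $\lrarrow{p_1}{p_1'}$ I get $\Key{p}<\Key{s_1'}$, and Lemma~\ref{lem:p_2exists} gives $_{p_1'}\neswarrow^{r}$, hence $\Key{s_1'}=\Key{p_1'}<\Key{r}<\Key{q}$. For the $y$-coordinate, the already-noted $^{q_1}\nwsearrow_{s_1'}$ yields $\Time{q}=\Time{q_1}<\Time{s_1'}$, and by construction $\Time{s_1'}<\Time{p_1'}=\Time{p}$. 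Thus $s_1'\in(_p\Box^q)^{\circ}$ with $\Key{s_1'}<\Key{r}$. Since $r$ was chosen as the original point of $(_p\Box^q)^{\circ}$ whose $x$-distance from $p$ is minimum, assuming $s_1'\in\XX$ would violate this minimality, so $s_1'\notin\XX$.

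For $\lrarrow{s_1'}{\OP(s_1')}$: in the \textsc{Else} branch we have $\CP(p_1')\notin\MMC$, and Lemma~\ref{lem:coupling-mmc-or-mfc} then forces $\CP(p_1')\in\MFC$. Because $\lrarrow{p}{p_1'}$, case~(1) of the $\MFC$ definition applies and gives the dichotomy $\abarrow{\OP(s_1')}{p_1'}$ or $_{p_1'}\neswarrow^{\OP(s_1')}$. The first alternative would give $\Key{\OP(s_1')}=\Key{p_1'}$ together with $\Time{\OP(s_1')}=\Time{s_1'}$, forcing $\OP(s_1')=s_1'\in\XX$ and contradicting the first assertion. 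Hence the second alternative holds: $\Key{p_1'}<\Key{\OP(s_1')}$, which with $\Time{\OP(s_1')}=\Time{s_1'}$ and $\Key{s_1'}=\Key{p_1'}$ produces $\lrarrow{s_1'}{\OP(s_1')}$.

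The only genuinely geometric step is bundling $\Key{s_1'}=\Key{p_1'}$, Lemma~\ref{lem:p_2exists}, and $^{q_1}\nwsearrow_{s_1'}$ into the containment $s_1'\in(_p\Box^q)^{\circ}$; once that is established, both halves of the observation are immediate --- the first from the minimality of $r$, the second from the $\MFC$-dichotomy just identified --- so I do not anticipate needing any further structural lemmas.
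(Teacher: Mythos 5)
Your proof is correct and follows the same approach as the paper; you have simply unpacked the paper's terse two-sentence argument, making explicit the geometric containment $s_1'\in(_p\Box^q)^{\circ}$ that underlies the appeal to the minimality of $r$, and spelling out the $\MFC$-case dichotomy that the paper leaves implicit.
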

	Since $r$ is an original point in 
	$_p\Box^q$ with the smallest $x$-distance to $p$, $s_1' \notin \XX$.
	Together with our assumption that $\CP(p_1') \in \MFC$,
	this implies $\lrarrow{s_1'}{\OP(s_1')}$. 

	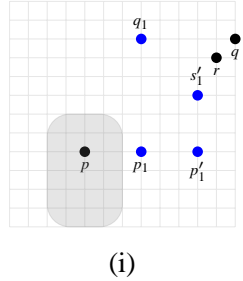
\begin{figure}[H]
\centering
\begin{tikzpicture}
[round/.style={rectangle, rounded corners=3mm, minimum size=20mm, draw=black!50}]

\draw[step=0.25,black,opacity=0.1] (0,0) grid (3,3);

\fill [black] (1,1) circle (2pt) node[below,black,opacity=1] {\tiny{$p$}};

\fill [ blue] (1.75,1) circle (2pt) node[below,black,opacity=1] {\tiny{$p_1$}};
\fill [ blue] (2.5,1) circle (2pt) node[below,black,opacity=1] {\tiny{$p_1'$}};
\fill [ black] (2.75,2.25) circle (2pt) node[below,black,opacity=1] {\tiny{$r$}};

\fill [ blue] (1.75,2.5) circle (2pt) node[above,black,opacity=1] {\tiny{$q_1$}};
\fill [ blue] (2.5,1.75) circle (2pt) node[above,black,opacity=1] {\tiny{$s_1'$}};
\fill [ black] (3,2.5) circle (2pt) node[below,black,opacity=1] {\tiny{$q$}};

\draw [round,fill=gray,opacity=0.2] (0.5,0) rectangle (1.5,1.5);
\node at (1.5,-0.5) {(i)};

\end{tikzpicture}
\caption{In $\AMmc(p_1)$, $s_1'$ is the first point above $p_1'$}
\label{fig:s_2-mfc}
\end{figure}



	\subsection{Bounding the number of points mapped by $\AMmc(\cdot)$}
	\label{sec:bound-points-mapped}
	
	Let $\CP(p_i) \in \RMFC_p$. 
	For a $q_i \in \GG$, 
	if $\AMmc(p_i)=q_i$, then $q_i \in \GG$, and either $q_i$ is the first point to the right of
	$p_i$ or $q_i$ is the first point above $q_i'$ where $q_i'$ is the first point 
	to the right of $p_i$. Hence for $q_i \in \GG$, there are at most two points $p_i$ such that 
	$\AMmc(p_i)=q_i$.
	In other words, for each $q_i \in \GG$, 
	we have $|\{ p_i \mid \AMmc(p_i) = q_i$ and $\RR(q_i) \in \MMC \}| \le 2$.
	So,

	 
	\begin{align*}
	\sum_{p \in \XX} |\{ \CP(p_i) \in \RMFC_p \mid p_i \in \mathrm{domain}(\AMmc) \}| 
	&=\sum_{q_i \in \GG } | \{ p_i \mid p_i \text{ s.t. } \AMmc(p_i) = q_i \text{ and } \RR(q_i) \in \MMC\}| \\
	&\le 2|\MMC|.
	\end{align*}

	By symmetry,\\
	\begin{align*}
	\sum_{p \in \XX} |\{ \CP(p_i) \in \LMFC_p \mid p_i \in \mathrm{domain}(\AMmc) \}| 
	&\le 2|\MMC|.
	\end{align*}
	Combining the above two inequalities we get
	\begin{lemma}
	$\sum_{p \in \XX} |\{ \CP(p_i) \in \BR_p \mid p_i \in \mathrm{domain}(\AMmc) \}| \le 4|\MMC|$.
	\end{lemma}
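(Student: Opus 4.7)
The plan is to bound the contributions of $\RMFC$ and $\LMFC$ to the sum separately, each by $2|\MMC|$, and add them.

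First I will unfold the construction of $\AMmc(\cdot)$ in Figure~\ref{fig:associateMMC}: for $p_i$ with $\CP(p_i) \in \RMFC_p$, whenever $\AMmc(p_i)$ is defined, either (a) $\AMmc(p_i) = p_i'$, the first point to the right of $p_i$ in $\XX \cup \GG$, or (b) $\AMmc(p_i) = s_i'$, the first point above $p_i'$, where $p_i'$ is again the first point to the right of $p_i$. In both cases the image lies in $\GG$ and has $\CP(\cdot) \in \MMC$.

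Next I will switch viewpoints and count preimages. Fixing any $q \in \GG$ with $\CP(q) \in \MMC$, I examine which $p_i$ with $\CP(p_i) \in \RMFC$ can satisfy $\AMmc(p_i) = q$. In case (a), such a $p_i$ is forced to be the unique point of $\XX \cup \GG$ immediately to the left of $q$ on the horizontal line $y = \Time{q}$, so at most one such $p_i$ exists. In case (b), letting $q'$ denote the unique first point below $q$, the preimage $p_i$ is forced to be the unique point immediately to the left of $q'$ on the line $y = \Time{q'}$, so again at most one exists. Hence $q$ has at most two preimages in $\RMFC \cap \mathrm{domain}(\AMmc)$. Summing over all such $q$ gives
\[
\sum_{p \in \XX} |\{ \CP(p_i) \in \RMFC_p \mid p_i \in \mathrm{domain}(\AMmc) \}| \;\le\; 2|\MMC|,
\]
and the symmetric definition of $\AMmc$ on $\LMFC$ yields the matching bound $\le 2|\MMC|$ for $\LMFC$. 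Since $\BR_p$ decomposes as the disjoint union $\RMFC_p \cup \LMFC_p$, adding the two inequalities gives the stated bound $\le 4|\MMC|$.

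The only subtle point is checking that cases (a) and (b) do not produce overlapping preimages, i.e., that no $p_i$ is counted twice for the same $q$ (or across different $q$'s). This is immediate from the deterministic branching in the procedure of Figure~\ref{fig:associateMMC}: each $p_i$ is assigned at most one image by exactly one of the two cases. I do not anticipate any real obstacle in this lemma; the substantive structural work is the construction of $\AMmc$ itself, and the present statement is pure bookkeeping on top of it.
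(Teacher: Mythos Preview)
Your proof is correct and follows essentially the same approach as the paper: both argue that any $q \in \GG$ with $\CP(q) \in \MMC$ has at most two preimages under $\AMmc$ restricted to $\RMFC$ (one from case (a), one from case (b)), yielding the bound $2|\MMC|$ on each side and $4|\MMC|$ in total. Your explicit identification of the two candidate preimages (the point immediately left of $q$, and the point immediately left of the first point below $q$) is exactly the mechanism the paper uses, just stated with a bit more detail.
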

\else
	One can show that $\sum_{p \in \XX} |\{ \CP(p_i) \in \BR_p \mid p_i \in \mathrm{domain}(\AMmc) \}| \le 4|\MMC|$
	(See Section \ref{sec:bound-points-mapped} in the full version).
	If $p_1$ is unmapped, then $\CP(p_1') \in \MFC$
	and $\CP(s_1') \in \MFC$. This fact forces 
	$\OP(s_1')$ to lie to the right of $s_1'$. 
\fi

\subsection{Bounding the number of points not mapped by $\AMmc(\cdot)$}
\label{sec:bound-points-not-mapped}
\iflong\else\vspace{-2mm}\fi
\iflong
	We continue with the setting introduced in Sec.~\ref{sec:construction-map}.
	Since $\AMmc(\cdot)$ was not able to map $p_1$, the following must be true:
	\begin{enumerate}
	\item $\CP(p_1') \in \MFC$,
	\item $\CP(s_1') \in \MFC$.
	\end{enumerate}

	Since $\RR(p_1) \in \BR$, there exists another point $r \in \XX \setminus\{p,q\}$ in 
	$_p\Box^q$. 

	Let $t_1'$ be the first point above $s_1'$. So $\CP(s_1') = (\OP(s_1'),\OP(t_1'))$. 
	Since $\CP(s_1') \in \MFC$ and $\lrarrow{s_1'}{\OP(s_1')}$ 
	(Observation \ref{obs:s2obs3}), either $\lrarrow{\OP(t_1')}{t'_1}$ 
	or $\OP(t_1')=t_1'$. 

	\begin{observation}
	\label{obs:t2obs1}
	Either $\lrarrow{\OP(t_1')}{t'_1}$ or $\OP(t_1')=t_1'$.  
	\end{observation}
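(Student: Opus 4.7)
The plan is to derive this observation directly from the definition of $\MFC$ applied to $\CP(s_1')$, combined with what we already know about the relative position of $s_1'$ and $\OP(s_1')$ from Observation~\ref{obs:s2obs3}.

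First, recall that by Observation~\ref{obs:s2obs3} we have $s_1' \in \GG$ with $\lrarrow{s_1'}{\OP(s_1')}$, and by the case we are in, $\CP(s_1') \in \MFC$. By definition of $\MFC$, since $\lrarrow{s_1'}{\OP(s_1')}$ (i.e.\ $s_1'$ is to the \emph{left} of $\OP(s_1')$), we fall into the symmetric (second) case of the $\MFC$ definition: writing $\CP(s_1') = (\OP(s_1'), \OP(t_1'))$ with $t_1'$ the first point above $s_1'$, we must have either $\abarrow{\OP(t_1')}{s_1'}$ or $^{\OP(t_1')}\nwsearrow_{s_1'}$.

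Next, I would dispatch the two subcases. If $\abarrow{\OP(t_1')}{s_1'}$, then $\Key{\OP(t_1')} = \Key{s_1'}$ and $\Time{\OP(t_1')} < \Time{s_1'}$. Since $\OP(t_1')$ lies directly above $s_1'$ and by definition $\Time{\OP(t_1')} = \Time{t_1'}$, and since $t_1'$ is itself directly above $s_1'$ with $\Key{t_1'} = \Key{s_1'}$, the two points coincide in both coordinates; hence $\OP(t_1') = t_1'$. If instead $^{\OP(t_1')}\nwsearrow_{s_1'}$, then $\Key{\OP(t_1')} < \Key{s_1'}$ while $\Key{t_1'} = \Key{s_1'}$ (since $t_1'$ is the first point above $s_1'$, it shares its $x$-coordinate) and $\Time{\OP(t_1')} = \Time{t_1'}$, yielding $\lrarrow{\OP(t_1')}{t_1'}$.

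I do not expect any real obstacle here: the observation is essentially a case analysis of the $\MFC$ definition. The only subtle point is being careful that ``first point above $s_1'$'' and ``same $y$-coordinate as $\OP(t_1')$'' together pin down $\Key{t_1'} = \Key{s_1'}$, which is what lets us translate the positional relationship between $\OP(t_1')$ and $s_1'$ into a positional relationship between $\OP(t_1')$ and $t_1'$.
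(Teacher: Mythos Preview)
Your proposal is correct and follows essentially the same approach as the paper: the paper's justification is the single sentence preceding the observation, which invokes $\CP(s_1') \in \MFC$ together with $\lrarrow{s_1'}{\OP(s_1')}$ from Observation~\ref{obs:s2obs3} to force case~(2) of the $\MFC$ definition. You have simply unpacked that case analysis in more detail, translating the positional constraint on $\OP(t_1')$ relative to $s_1'$ into one relative to $t_1'$ via $\Key{t_1'}=\Key{s_1'}$ and $\Time{\OP(t_1')}=\Time{t_1'}$.
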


	We first make some claims on the position of $t_1'$.

	\begin{lemma}
	\label{lem:t2}
	Let $t_1'$ be the first point above $s_1'$. Then
	$^{q_1}\nwsearrow_{t_1'}$.
	\end{lemma}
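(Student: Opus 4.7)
The plan is to verify the two coordinate inequalities encoded in $^{q_1}\nwsearrow_{t_1'}$. Since $t_1'$ sits on the line $x=\Key{s_1'}$ and we already have $^{q_1}\nwsearrow_{s_1'}$ from the setup, the horizontal inequality $\Key{q_1} < \Key{t_1'}$ is immediate. All of the work is in establishing $\Time{q_1} < \Time{t_1'}$.

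The first step is to apply Lemma \ref{lem:greedy-property-left} to the rectangle $^{q_1}\Box_{s_1'}$. This produces a witness $r' \in (\XX \cup \GG) \setminus \{q_1, s_1'\}$ inside the rectangle with either $\abarrow{r'}{s_1'}$ or $\lrarrow{r'}{s_1'}$. The second option would place $r'$ strictly to the left of $s_1'$ on the bottom edge (or at the bottom-left corner, which lies below $q_1$), but Observation \ref{obs:s2obs2}(2) forbids both regions. Hence $\abarrow{r'}{s_1'}$, so $r'$ lies on the right edge of $^{q_1}\Box_{s_1'}$: $\Key{r'} = \Key{s_1'}$ and $\Time{q_1} \le \Time{r'} < \Time{s_1'}$. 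Since $t_1'$ is by definition the point on the line $x=\Key{s_1'}$ with the largest $y$-coordinate strictly less than $\Time{s_1'}$, this forces $\Time{t_1'} \ge \Time{r'} \ge \Time{q_1}$.

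The second step is to rule out the boundary case $\Time{t_1'} = \Time{q_1}$. Suppose it held. Then $t_1' = (\Key{s_1'}, \Time{q_1})$ lies on the horizontal line through $q_1$ and $q$. The only original point on that line is $q$ itself, and from Lemma \ref{lem:p_2exists} together with the choice of the original point $r \in (_p\Box^q)^\circ$ we have $\Key{s_1'} = \Key{p_1'} < \Key{r} < \Key{q}$, so $t_1' \ne q$. Therefore $t_1' \in \GG$ with $\OP(t_1') = q$, giving $\CP(s_1') = (\OP(s_1'), q)$. Now invoke the unmapped-case hypothesis $\CP(s_1') \in \MFC$, combined with $\lrarrow{s_1'}{\OP(s_1')}$ from Observation \ref{obs:s2obs3}: the definition of $\MFC$ (case (2)) then requires either $\abarrow{q}{s_1'}$ or $^{q}\nwsearrow_{s_1'}$, each of which demands $\Key{q} \le \Key{s_1'}$, contradicting $\Key{q} > \Key{s_1'}$. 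Hence $\Time{t_1'} > \Time{q_1}$ strictly, and $^{q_1}\nwsearrow_{t_1'}$ follows.

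The main obstacle is precisely this boundary case. The purely geometric reasoning with Lemma \ref{lem:greedy-property-left} and Observation \ref{obs:s2obs2} yields only $\Time{t_1'} \ge \Time{q_1}$; to upgrade this to a strict inequality we must use the global structural hypothesis $\CP(s_1') \in \MFC$, which is available because $p_1$ is unmapped, and relate it to the position of $\OP(t_1')$ on the horizontal line of $q$.
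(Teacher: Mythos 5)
Your proof is correct and follows essentially the same route as the paper: applying Lemma \ref{lem:greedy-property-left} to $^{q_1}\Box_{s_1'}$ together with Observation \ref{obs:s2obs2} to place $t_1'$ in $^{q_1}\Box_{s_1'}$, then ruling out equality $\Time{t_1'}=\Time{q_1}$ by showing $\OP(t_1')=q$ would violate the $\MFC$ membership of $\CP(s_1')$. The only cosmetic difference is that the paper packages that last contradiction as Observation \ref{obs:t2obs1} while you unfold the definition of $\MFC$ directly.
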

	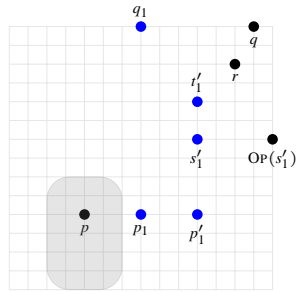
\begin{figure}[H]
\centering
\begin{tikzpicture}
[round/.style={rectangle, rounded corners=3mm, minimum size=20mm, draw=black!50}]

\draw[step=0.25,black,opacity=0.1] (0,0) grid (3.5,3.5);

\fill [black] (1,1) circle (2pt) node[below,black,opacity=1] {\tiny{$p$}};

\fill [ blue] (1.75,1) circle (2pt) node[below,black,opacity=1] {\tiny{$p_1$}};
\fill [ blue] (2.5,1) circle (2pt) node[below,black,opacity=1] {\tiny{$p_1'$}};
\fill [ black] (3,3) circle (2pt) node[below,black,opacity=1] {\tiny{$r$}};

\fill [ blue] (1.75,3.5) circle (2pt) node[above,black,opacity=1] {\tiny{$q_1$}};
\fill [ blue] (2.5,2) circle (2pt) node[below,black,opacity=1] {\tiny{$s_1'$}};
\fill [ blue] (2.5,2.5) circle (2pt) node[above,black,opacity=1] {\tiny{$t_1'$}};
\fill [ black] (3.25,3.5) circle (2pt) node[below,black,opacity=1] {\tiny{$q$}};
\fill [ black] (3.5,2) circle (2pt) node[below,black,opacity=1] {\tiny{$\OP(s_1')$}};

\draw [round,fill=gray,opacity=0.2] (0.5,0) rectangle (1.5,1.5);

\end{tikzpicture}
\caption{Lemma \ref{lem:t2} shows that there exists a point $t_1'$ such that (1) it is the first point above $s_1'$ and (2) $^{q_1}\nwsearrow_{t_1'}$}
\label{fig:t_2-inside-mfc}
\end{figure}
	\begin{proof}
	Consider the rectangle $^{q_1}\Box_{s_1'}$. 
	By Lemma \ref{lem:greedy-property-left}, there exists 
	a point $t_1'$ to the left or above $s_1'$ in $^{q_1}\Box_{s_1'}$.
	However, by 
	Observation \ref{obs:s2obs2}, there is no point in (1) $(^{q_1}\Box_{s_1'})^{\circ}$ and 
	(2) below $q_1$ or to the left of $s_1'$ in $^{q_1}\Box_{s_1'}$. This implies that 
	$\abarrow{t_1'}{s_1'}$ and $t_1' \in {^{q_1}\Box_{s_1'}}$.

	We now need to show that $^{q_1}\nwsearrow_{t_1'}$.
	For contradiction, assume that $\lrarrow{q_1}{t_1'}$.
	This implies that
	$\OP(t_1') = q$ and $\lrarrow{t_1'}{q}$ 
	(since $\Key{s'_1} < \Key{r} < \Key{q}$ and $\abarrow{t_1'}{s_1'}$). 
	But we know by Observation \ref{obs:t2obs1} that $\OP(t_1')$ cannot lie to the right of $t_1'$. 
	Thus we have arrived at 
	a contradiction. So our assumption that $\lrarrow{q_1}{t_1'}$ must be false, 
	and hence $^{q_1}\nwsearrow_{t_1'}$.




	\end{proof}

	Our next claim is regarding the position of $\OP(t_1')$.

	\begin{lemma}
	\label{lem:t2location}
	Let $t_1'$ be the first point above $s_1'$. Then,
	$\OP(t_1') \ne t_1'$, and $^{\OP(t_1')}\nwsearrow_p$, $_{\OP(t_1')}\swnearrow^{q_1}$ and $^{\OP(t_1')}\nwsearrow_{s_1'}$.
	\end{lemma}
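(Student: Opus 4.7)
The plan is to pin down the location of $\OP(t_1')$ using three forbidden regions already available from the setting: (i) $_p\Box^{q_1}$ is empty except for $p$, $q_1$, and $p_1$ (since $\GRE$ found it unsatisfied while processing $p$); (ii) by Observation~\ref{obs:s2obs2}, $^{q_1}\Box_{s_1'}$ is empty in its interior and along the portions of its left vertical and top horizontal edges strictly between the corners; and (iii) $r$ is the original point in $(_p\Box^q)^\circ$ of smallest $x$-distance to $p$, so no original point in $(_p\Box^q)^\circ$ can have key strictly between $\Key{p}$ and $\Key{r}$.

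First I would establish $\OP(t_1')\ne t_1'$. Lemma~\ref{lem:p_2exists} combined with Observation~\ref{obs:s2obs3} gives $\Key{p}<\Key{p_1'}<\Key{r}$, and since $\Key{t_1'}=\Key{s_1'}=\Key{p_1'}$, this key lies strictly in $(\Key{p},\Key{r})\subset(\Key{p},\Key{q})$; Lemma~\ref{lem:t2} gives $\Time{q}=\Time{q_1}<\Time{t_1'}$, while $\Time{t_1'}<\Time{s_1'}<\Time{p}$. Hence $t_1'\in(_p\Box^q)^\circ$, so $t_1'\in\XX$ would violate region~(iii). Next I would obtain $^{\OP(t_1')}\nwsearrow_{s_1'}$: since $\CP(s_1')\in\MFC$ and $\lrarrow{s_1'}{\OP(s_1')}$ by Observation~\ref{obs:s2obs3}, case~(2) of the $\MFC$ definition leaves only $\abarrow{\OP(t_1')}{s_1'}$ or $^{\OP(t_1')}\nwsearrow_{s_1'}$. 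The former forces $\OP(t_1')$ to be the unique original point on the vertical line $x=\Key{s_1'}$ at time $\Time{\OP(t_1')}=\Time{t_1'}\in(\Time{q},\Time{p})$; but such a point would sit in $(_p\Box^q)^\circ$ with key less than $\Key{r}$, again contradicting region~(iii).

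For $_{\OP(t_1')}\swnearrow^{q_1}$, the time inequality $\Time{q_1}<\Time{\OP(t_1')}=\Time{t_1'}$ is Lemma~\ref{lem:t2}. For the key inequality, the previous step already gives $\Key{\OP(t_1')}<\Key{s_1'}$; if in addition $\Key{q_1}\le\Key{\OP(t_1')}$, then $\OP(t_1')$ would lie in $^{q_1}\Box_{s_1'}$---either strictly inside it, or on the vertical line through $q_1$ strictly below $q_1$---both forbidden by region~(ii). For $^{\OP(t_1')}\nwsearrow_p$, note $\Time{\OP(t_1')}=\Time{t_1'}<\Time{s_1'}<\Time{p}$; combined with $\Key{\OP(t_1')}<\Key{q_1}$ just shown, if also $\Key{p}\le\Key{\OP(t_1')}$ then $\OP(t_1')$ (with time in $(\Time{q_1},\Time{p})$) would lie in $_p\Box^{q_1}$, contradicting region~(i), and distinctness of keys among original points rules out $\Key{\OP(t_1')}=\Key{p}$.

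The main obstacle, I expect, is not any single deep step but rather the bookkeeping of boundary versus interior: each of the three forbidden regions has a slightly different footprint (interior only, or interior together with certain edges), and one has to check in each case that the precise position contemplated for $\OP(t_1')$ is actually disallowed. The key enabling facts are that Observation~\ref{obs:s2obs2} explicitly forbids the ``below $q_1$'' edge of $^{q_1}\Box_{s_1'}$, and that $\OP(t_1')\in\XX$ cannot share a vertical line with $p\in\XX$, which converts the weak inequality $\le$ into the strict $<$ needed at $\Key{p}$.
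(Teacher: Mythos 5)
Your proof is correct, and it reaches the same four conclusions, but it takes a genuinely different route from the paper's. The paper establishes $^{\OP(t_1')}\nwsearrow_p$ \emph{first}, directly from the minimality of $r$: if $\OP(t_1')$ were to the north-east of $p$ then, since $\Key{\OP(t_1')} < \Key{t_1'} = \Key{p_1'} < \Key{r}$, it would be an original point in $(_p\Box^q)^\circ$ with smaller $x$-distance than $r$ (the paper also invokes Lemma~\ref{lem:upperbox} to dispose of the vertical-alignment edge case). Once $\Key{\OP(t_1')}<\Key{p}$ is in hand, the remaining two relations fall out by pure chaining of key/time inequalities: $\Key{\OP(t_1')}<\Key{p}<\Key{q_1}$ plus Lemma~\ref{lem:t2} gives $_{\OP(t_1')}\swnearrow^{q_1}$, and $\lrarrow{\OP(t_1')}{t_1'}$ plus $\abarrow{t_1'}{s_1'}$ gives $^{\OP(t_1')}\nwsearrow_{s_1'}$. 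You go in the opposite direction: you first get the weakest key bound $\Key{\OP(t_1')}<\Key{s_1'}$ from the $\MFC$ definition applied to $\CP(s_1')$, then sharpen it to $\Key{\OP(t_1')}<\Key{q_1}$ via Observation~\ref{obs:s2obs2}, then to $\Key{\OP(t_1')}<\Key{p}$ via the emptiness of $_p\Box^{q_1}$. Both work; the paper's chain is shorter because it leans almost entirely on the single fact (minimality of $r$), whereas your version distributes the load across three distinct forbidden regions and is more explicit about boundary-versus-interior bookkeeping. One tiny inaccuracy: Observation~\ref{obs:s2obs3} is not actually needed to conclude $\Key{p}<\Key{p_1'}<\Key{r}$ (Lemma~\ref{lem:p_2exists} alone gives this); you do, however, correctly use it later to pick out case~(2) of the $\MFC$ definition.
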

	\begin{proof}
	By Observation \ref{obs:t2obs1}, $\lrarrow{\OP(t_1')}{t'_1}$ or $\OP(t_1')=t_1'$. 
	However, $\OP(t_1') \ne t_1'$ because $r$ is the the original point in $_p\Box^q$
	with the smallest $x$-distance from $p$.
	This implies that $\lrarrow{\OP(t_1')}{t'_1}$.

	We have to show that $^{\OP(t_1')}\nwsearrow_p$. If $_p\neswarrow^{\OP(t_1')}$ (and $\lrarrow{\OP(t_1')}{t_1'}$), 
	it contradicts our assumption that $r$ is the the original point in $_p\Box^q$
	with the smallest $x$-distance from $p$ (recall that $\Key{r}>\Key{p'_1}=\Key{t'_1}$ by 
	Lemma~\ref{lem:p_2exists}). By Lemma \ref{lem:upperbox}, if $\RT(p)=B$, then
	there are no points 
	in $\UPB(B)$. Hence $^{\OP(t_1')}\nwsearrow_p$.


	This together with $\abarrow{t_1'}{s_1'}$ 
	implies $^{\OP(t_1')}\nwsearrow_{s_1'}$.
	Also since  $^{\OP(t_1')}\nwsearrow_p$, $_p\swnearrow^{q_1}$,
	 $^{q_1}\nwsearrow_{t_1'}$ (Lemma~\ref{lem:t2}) and $\lrarrow{\OP(t_1')}{t'}$, 
	hence $_{\OP(t_1')}\swnearrow^{q_1}$.
	\end{proof}
\else
	If $p_1$ is unmapped, then we 
	have already observed that $\lrarrow{s_1'}{\OP(s_1')}$.
	This implies that both $q$ and $\OP(s_1')$ lie to the north-east of $p$ (as shown in Figure \ref{fig:mfc-overview}).
	We now show the following lemma which is similar to Lemma \ref{lem:mmc-witness}:
	\input{fig-mfc-overview}
	\vspace{-6mm}
	\newtheorem*{lem:t2location}{Lemma \ref{lem:t2location} (Informal Version)}
	\begin{lem:t2location}
	There is a point $t \in \XX$  that lies between $q$ and $\OP(s_1')$ and to the north-west of $p$ (Fig.~\ref{fig:mfc-overview}).
	\end{lem:t2location}
	\vspace{-2mm}
\fi
We say that a point $p_1$ is an {\em observable} point if $\AMmc(p_1)$ is not defined. 

\iflong
	\begin{lemma}
	\label{lem:consecutive-observable}
	Let $p_1 \leftrightarrow p_2$ be 
	observable points such that  $\CP(p_1),\CP(p_2) \in \RMFC_p$ for $p \in \XX$. 
	Assume that $\GRE$ put these points to arborally satisfy $_p\Box^{q_1}, {_p\Box^{q_2}}$, respectively,
	while processing point $p$.
	Then there is a point $t$ such that $_{t}\swnearrow^{q_1}$ 
	and $^{t}\nwsearrow_{q_{2}}$.
	\end{lemma}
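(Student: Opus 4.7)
The plan is to reduce this lemma to a direct application of Lemma~\ref{lem:t2location}, whose setup was essentially tailor-made for this purpose. First I would translate the hypotheses of the current lemma into the notation of Section~\ref{sec:construction-map}. Since $p_1 \leftrightarrow p_2$ and $p_2$ is the immediate right neighbor of $p_1$ on the horizontal line $y = \Time{p}$, we have $p_1' = p_2$ in the notation of the observability procedure. Next, I would verify that $s_1'$ (the first point above $p_1'$) equals $q_2$: the point $q_2$ sits directly above $p_2$ on the line $x = \Key{p_2}$ by construction (since $\GRE$ marked $p_2$ at the bottom-right corner of $_p\Box^{q_2}$), and any point sitting between $q_2$ and $p_2$ on this vertical line would lie on the boundary of $_p\Box^{q_2}$, contradicting the fact that $_p\Box^{q_2}$ was arborally unsatisfied at the moment $\GRE$ processed $p$.

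Next I would confirm that the observability hypothesis on $p_1$ places us exactly in the scenario of Lemma~\ref{lem:t2location}. Observability of $p_1$ means the procedure in Figure~\ref{fig:associateMMC} falls through both \texttt{if} branches, so $\CP(p_1') = \CP(p_2) \in \MFC$ (which is automatic from $\CP(p_2) \in \RMFC_p$) and $\CP(s_1') = \CP(q_2) \in \MFC$. This is precisely the configuration in which Lemma~\ref{lem:t2location} is stated and proved.

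Applying Lemma~\ref{lem:t2location} to $p_1$, I get a point $t_1'$, defined as the first point above $s_1' = q_2$, together with the conclusion that $t := \OP(t_1')$ satisfies $^{\OP(t_1')}\nwsearrow_p$, $_{\OP(t_1')}\swnearrow^{q_1}$, and $^{\OP(t_1')}\nwsearrow_{s_1'}$. Since $s_1' = q_2$, the last two relations read $_t\swnearrow^{q_1}$ and $^t\nwsearrow_{q_2}$, which is exactly the conclusion of the present lemma.

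The proof is essentially bookkeeping; there is no real obstacle beyond the careful identification $s_1' = q_2$. Notably, the observability of $p_2$ is not needed for this particular lemma --- only $\CP(p_2) \in \RMFC_p$ and observability of $p_1$ are used --- but I would keep the hypothesis as stated, since observability of $p_2$ is presumably required by the subsequent chaining argument (analogous to Lemma~\ref{lem:finalmmc}) used to invoke $k$-decomposability and thereby bound $|\BR|$.
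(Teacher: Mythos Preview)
Your reduction to Lemma~\ref{lem:t2location} is the right idea and matches the paper's approach, but there is a genuine gap in your identification $p_1' = p_2$. The relation $\lrarrow{p_1}{p_2}$ only asserts that $p_1$ lies to the left of $p_2$ on the line $y = \Time{p}$; it does \emph{not} say that $p_2$ is the immediate right neighbor of $p_1$. Between two observable points there may well sit other marked points---for instance points whose pair lies in $\RMMC_p$, or points in $\RMFC_p$ that \emph{are} mapped by $\AMmc$. The paper's own Figure~\ref{fig:consecutive-observable} is drawn precisely to illustrate the case $p_1' \neq p_2$.

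The fix is short. Let $p_1'$ be the first point to the right of $p_1$ and $s_1'$ the first point above $p_1'$, as in the setting of Section~\ref{sec:construction-map}. If $p_1' = p_2$ then indeed $s_1' = q_2$ and your argument goes through verbatim. Otherwise $\lrarrow{\lrarrow{p_1}{p_1'}}{p_2}$, and since $\GRE$ marked $p_1'$ and $p_2$ due to the unsatisfied rectangles $_p\Box^{s_1'}$ and $_p\Box^{q_2}$ respectively, one checks that $^{s_1'}\nwsearrow_{q_2}$. Lemma~\ref{lem:t2location} (applicable because $p_1$ is observable) gives $t$ with $_t\swnearrow^{q_1}$ and $^t\nwsearrow_{s_1'}$; transitivity of the north-west relation then yields $^t\nwsearrow_{q_2}$. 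This is exactly how the paper handles it.
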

	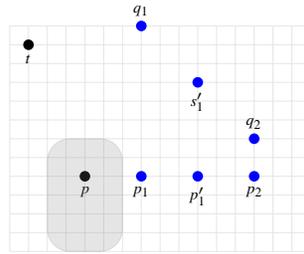
\begin{figure}[H]
\centering
\begin{tikzpicture}
[round/.style={rectangle, rounded corners=3mm, minimum size=20mm, draw=black!50}]

\draw[step=0.25,black,opacity=0.1] (0,0) grid (4,3);

\fill [black] (1,1) circle (2pt) node[below,black,opacity=1] {\tiny{$p$}};

\fill [ blue] (1.75,1) circle (2pt) node[below,black,opacity=1] {\tiny{$p_1$}};
\fill [ blue] (3.25,1) circle (2pt) node[below,black,opacity=1] {\tiny{$p_{2}$}};
\fill [ blue] (2.5,1) circle (2pt) node[below,black,opacity=1] {\tiny{$p_{1}'$}};
\fill [ blue] (2.5,2.25) circle (2pt) node[below,black,opacity=1] {\tiny{$s_{1}'$}};

\fill [ blue] (1.75,3) circle (2pt) node[above,black,opacity=1] {\tiny{$q_1$}};
\fill [ blue] (3.25,1.5) circle (2pt) node[above,black,opacity=1] {\tiny{$q_{2}$}};

\fill [black] (0.25,2.75) circle (2pt) node[below,black,opacity=1] {\tiny{$t$}};

\draw [round,fill=gray,opacity=0.2] (0.5,0) rectangle (1.5,1.5);

\end{tikzpicture}
\caption{The setting of Lemma \ref{lem:consecutive-observable}: $p_1$ and $p_{2}$ are two {\em observable} points. The above figure depicts the case when the next point to the right of $p_1$, $p_1' \ne  p_{2}$.}
\label{fig:consecutive-observable}
\end{figure}
	\begin{proof}
	One can check that $\Time{q_1} <\Time{q_2}$.
	Let $p_1'$ be the first point to the right of $p_1$. And let $s_1'$ be the first point 
	above $p_1'$ (see Fig.~\ref{fig:consecutive-observable} for an illustration). 
	Note that if $p_1' = p_{2}$, then $s_1'= q_{2}$.
	Else if  $\lrarrow{\lrarrow{p_1}{p_1'}}{p_2}$, then one can check that 
	$^{s_1'}\senwarrow_{q_{2}}$.
	$\GRE$ put $p_1'$ and $p_{2}$ due to the unsatisfied rectangles $_p\Box^{s_1'}$
	and $_p\Box^{q_{2}}$, respectively. By Lemma~\ref{lem:t2location}, there
	exists a $t$ such that $_{t}\swnearrow^{q_1}$ 
	and $^{t}\nwsearrow_{s_1'}$. We can replace $s_1'$ with $q_{2}$
	in the previous statement as either $s_1'=q_2$ or $^{s_1'}\senwarrow_{q_{2}}$.


	\end{proof}

	Continuing the setting that was used in Lemma \ref{lem:consecutive-observable},
	let $\CP(p_i), \CP(p_{i+1}) \in \RMFC_p$.
	If $\RT(p)=B$, then
	by Lemma~\ref{lem:reversetop}, $\OP(q_i)$ is in block $B_i$, where 
	$B_i \in \SB(B)$. Similarly, $\OP(q_{i+1}) \in B_{i+1}$
	such that $B_{i+1} \in \SB(B)$.
	We now prove the following lemma that is similar to Lemma \ref{lem:finalmmc}.

\else
	Using the above lemma, we can show that $q$ and $\OP(s_1')$ cannot lie in the same block.
	Again, similar to Lemma \ref{lem:finalmmc}, we show the following generalization of the above statement:
	\vspace{-1mm}
\fi

\iflong
	\begin{lemma}
	\label{lem:finalmfc}
\else
	\newtheorem*{lem:finalmfc}{Lemma \ref{lem:finalmfc}}
	\begin{lem:finalmfc}
\fi
Let $p \in \XX$ and $\RT(p) = B$.
Let $p_1 \leftrightarrow p_2 \leftrightarrow \dots \leftrightarrow p_\ell$ be 
observable points such that $\CP(p_i) \in \RMFC_p$ for $i \in [\ell]$.
Assume that $\GRE$ puts these points due to arborally unsatisfied
rectangles $_p\Box^{q_1}, {_p\Box^{q_2}}, \dots, {_p\Box^{q_\ell}}$, respectively, while processing point $p$.
By Lemma \ref{lem:reversetop}, $\OP(q_i) \in B_i$ such that $B_i \in \SB(B)$.
 Then $B_i \ne B_j$ for $i \neq j$. 
\iflong
	\end{lemma}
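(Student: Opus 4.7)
My plan is to mirror the proof of Lemma \ref{lem:finalmmc}, with Lemma \ref{lem:consecutive-observable} playing the role that Lemma \ref{lem:mmc-witness} played there, and with the $k$-decomposability of $\XX$ closing the argument in exactly the same way. Fix indices $i<j$ and assume for contradiction that $B_i=B_j$. As in the $\MMC$ case, $\lrarrow{p_i}{p_j}$ combined with the fact that neither of $_p\Box^{q_i}$, $_p\Box^{q_j}$ could have been satisfied by the top of the other at the moment $\GRE$ processed $p$ forces $\Time{q_i}<\Time{q_j}$; since $\Time{\OP(q_i)}=\Time{q_i}$ and $\Time{\OP(q_j)}=\Time{q_j}$, we get $\Time{\OP(q_i)}<\Time{\OP(q_j)}$.

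Next I apply Lemma \ref{lem:consecutive-observable} to the observable pair $(p_i,p_j)$. The statement of that lemma does not require $p_i$ and $p_j$ to be adjacent in $\GG$: its case analysis splits on whether the first point to the right of $p_1$ equals $p_2$ or lies strictly between, both of which are already handled. The lemma supplies a point $t$ with $_t\swnearrow^{q_i}$ and $^t\nwsearrow_{q_j}$; tracing the proof through Lemma \ref{lem:t2location} identifies $t$ as the \emph{original} point $\OP(t_1')$ and yields, in addition, $^t\nwsearrow_p$. Reading off coordinates, $t\in\XX$, $\Key{t}<\Key{p}$, and $\Time{\OP(q_i)}<\Time{t}<\Time{\OP(q_j)}$.

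The contradiction then splits into a key part and a time part, exactly as in Lemma \ref{lem:finalmmc}. For keys: $B_i\in\SB(B)$ and $\Key{\OP(q_i)}>\Key{p}$ force the key range of $B_i$ to lie strictly above that of $B$ (otherwise the disjoint contiguous key intervals of sibling children of $\PP(B)$ would place $\OP(q_i)$ to the left of $p$), so $\Key{t}<\Key{p}\le\Key{\MAXK(B)}<\Key{\MINK(B_i)}$ and hence $t\notin B_i$. For times: the hypothesis $B_i=B_j$ makes $B_i$'s contiguous time interval contain both $\Time{\OP(q_i)}$ and $\Time{\OP(q_j)}$, hence also $\Time{t}$; by the block property, the unique original point at time $\Time{t}$---which is $t$ itself---must have its key in $B_i$'s key range, i.e.\ $t\in B_i$, contradicting the previous sentence.

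The only delicate step is justifying the appeal to Lemma \ref{lem:consecutive-observable} for non-adjacent pairs and extracting the north-west-of-$p$ location of the witness $t$ from inside the proof of Lemma \ref{lem:t2location}; once these are acknowledged, everything else is a direct adaptation of the MMC argument using only the standard key-contiguity and time-contiguity properties of blocks in a $k$-decomposition.
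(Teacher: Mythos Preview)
Your proof is correct and follows essentially the same route as the paper's own proof: both invoke Lemma~\ref{lem:consecutive-observable} on the pair $(p_i,p_j)$ to obtain the witness $t$, reach back into Lemma~\ref{lem:t2location} to see that $t=\OP(t_1')\in\XX$ with $^t\nwsearrow_p$, and then use key-contiguity of $B_i$ to get $t\notin B_i$ followed by time-contiguity to force $B_i\neq B_j$. The two delicate points you flag---that Lemma~\ref{lem:consecutive-observable} already handles non-adjacent $p_i,p_j$, and that the location $^t\nwsearrow_p$ must be read off from Lemma~\ref{lem:t2location} rather than from the statement of Lemma~\ref{lem:consecutive-observable}---are exactly the ones the paper silently relies on as well.
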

\else	
	\end{lem:finalmfc}
\fi
\iflong
	\begin{proof}
	Consider points $p_i \leftrightarrow p_j$. 
	Since $\CP(p_i), \CP(p_{j}) \in \RMFC_p$, we have $\lrarrow{q_i}{\OP(q_i)}$ and $\lrarrow{q_j}{\OP(q_{j})}$ 
	($\OP(q_i) \ne q_i$ as otherwise $\RR(p_j) \in \GR$; similarly for $\OP(q_j)$).
	By Lemma~\ref{lem:consecutive-observable}, there is a point $t$ such that
	$_{t}\swnearrow^{q_i}$ 
	and $^{t}\nwsearrow_{q_{j}}$. This implies that $_{t}\swnearrow^{\OP(q_i)}$ 
	and $^{t}\nwsearrow_{\OP(q_{j})}$. 
	Moreover, by Lemma~\ref{lem:t2location} we have $^{t}\nwsearrow_p$. 
	Hence $\Key{t} <  \Key{p} <\Key{\OP(q_i)}$, and
	$p \notin B_i$ (since $p \in B$ and $B_i$ is the sibling of $B$). Hence, $t \notin B_i$ as all
	the points in block $B_i$ should be contiguous in their keys. 

	Since $_{t}\neswarrow^{\OP(q_i)}$ and $^{t}\nwsearrow_{\OP(q_{j})}$, we also have 
	$\Time{\OP(q_i)} <  \Time{t} < \Time{\OP(q_{j})}$. 
	Since $t \notin B_i$,  $\OP(q_i)$ and $\OP(q_{j})$ cannot be in the same
	block, as all the points in a block should be contiguous in time. Hence $B_i \neq B_j$. 
	\end{proof}
\fi

\iflong\else\vspace{-2mm}\fi
\subsection{Proof of Lemma~\ref{lem:bad-mfc-bound}}
\iflong\else\vspace{-2mm}\fi
Since there are at most $k-1$ siblings of $B$, by Lemma~\ref{lem:finalmfc} for each $p \in \XX$ we have
$|\{ \CP(p_i) \in \RMFC_p \mid p_i \text{ is an observable point}\}| \le k-1$.
By symmetry, $|\{ \CP(p_i) \in \LMFC_p \mid p_i \text{ is an observable point}\}| \le k-1$ for 
$p \in \XX$. 
So, $|\{ \CP(p_i) \in \BR_p \mid p_i$ is an observable point$\}| \le 2(k-1)$ for $p \in \XX$.
We are now ready to upper bound the number of points in $\BR$.
\begin{align*}
|\BR| &= \sum_{p \in \XX} \Big(|\{ \CP(p_i) \in \BR_p\mid p_i \in \mathrm{domain}(\AMmc) \}| 
+ |\{ \CP(p_i) \in \BR_p \mid p_i\text{ is an observable point}\}|\Big) \\
& \le 4|\MMC| + 2n(k-1).
\end{align*}

Theorem \ref{thm:mmcfinalbound} gives 
$|\MMC| \leq 2n(k-1)$. Hence $|\BR| \le  10 n(k-1)$, proving Lemma~\ref{lem:bad-mfc-bound}.

\section{Improving the bound on the number of observable points in $\MFC$}
\label{sec:mfc-better}
Improving the bound in Lemma~\ref{lem:bad-mfc-bound}, we prove 
\iflong\else\vspace{-2mm}\fi
\begin{lemma}\label{lem:bad-mfc-bound-improved}
$|\BR| \le  O(n\log{k})$.
\end{lemma}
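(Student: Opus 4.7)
The plan begins with the decomposition $|\BR| \le 4|\MMC| + |\{\text{observable points}\}|$ established in the proof of Lemma~\ref{lem:bad-mfc-bound}. Since Theorem~\ref{thm:boundmmc} already delivers $|\MMC| \le 14 n \log k$, the mapped-point term is immediately $O(n \log k)$, and the entire task reduces to sharpening the observable-point bound from $2n(k-1)$ to $O(n \log k)$.

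To achieve this, I would replay the partition-tree / key-live machinery of Section~\ref{sec:mmc-betterbound}, this time with observable $\BR$ points playing the role that $\MMC$ points played there. For each internal block $B$ of $\TD(\XX)$ with children $B_1, \ldots, B_\ell$ ordered by key, and any consecutive range split into halves $\BB_\ell = \{B_i, \ldots, B_{i+m-1}\}$ and $\BB_r = \{B_{i+m}, \ldots, B_{i+2m-1}\}$, the key sub-lemma to prove is the direct analog of Lemma~\ref{lem:partitionlemma}: the number of observable points $p_j$ with $\CP(p_j) \in \RMFC_{\Top(B_a)}$ for some $B_a \in \BB_\ell \setminus \{B\}$ and with partner $\OP(q_j)$ in some block of $\BB_r$ is $O(m)$. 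Plugging this into the same recursion $Y(B) = Y(\BB_\ell) + Y(\BB_r) + O(\ell)$ yields $O(\ell \log \ell) = O(\ell \log k)$ observables per block, and the per-ancestor charging argument of Section~\ref{sec:mmc-betterbound} gives $O(n \log k)$ overall.

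For the sub-lemma itself I would lean on Lemma~\ref{lem:consecutive-observable}: for two consecutive observable points $p_j, p_{j+1}$ with $\CP(p_j) = (p,q_j)$ and $\CP(p_{j+1}) = (p,q_{j+1})$, there is a point $t$ with $^t\nwsearrow_p$ and $\Time{q_j} < \Time{t} < \Time{q_{j+1}}$. This $t$ plays exactly the role that the witness $s$ played in the $\MMC$ analysis (Lemma~\ref{lem:mmc-witness}). The plan is to associate each crossing observable to a key-live event in $\RG(\BB_r)$: the partner $\OP(q_j) \in B_b \subset \BB_r$ forces the column $\Key{\Right(B_b)}$ to be live in $\RG(\BB_r)$ at time $\Time{p}$, and the witness $t$ ensures that distinct crossing observables pin down distinct live keys. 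The telescoping key-live accounting of Lemmas~\ref{lem:key-liveleftblocks}--\ref{lem:key-livesameblock} then carries over with only minor changes and yields the $O(m)$ bound.

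The main obstacle will be setting up this correspondence. Observable $\BR$ points live on the horizontal line $y = \Time{p}$ inside $\BOX(\PP(B))$, not inside $\RG(\BB_r)$ where the key-live machinery is formulated, so one cannot simply quote Lemma~\ref{lem:partitionlemma}. The hard part will be verifying by case analysis (using Lemma~\ref{lem:t2location} to locate the witness and $\OP(s_1')$) that crossings correspond bijectively to key-live increments or key-old decrements in the correct region, rather than being double-counted or missed, and confirming that the counterparts of Lemmas~\ref{lem:pointotherthanrel} and \ref{lem:key-newinmfc} (which control which processed points can create key-new entries) still hold in this setting. Once the correspondence is in place, the partition-tree recursion and amortization are identical to those already used in Section~\ref{sec:mmc-betterbound}, and the $O(n \log k)$ bound on observables --- and hence on $|\BR|$ --- follows.
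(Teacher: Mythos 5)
Your high-level plan matches the paper exactly: split $|\BR|$ into the mapped part ($\le 4|\MMC| = O(n\log k)$ by Theorem~\ref{thm:boundmmc}) and the observable part, then attack the observable part with an analog of Lemma~\ref{lem:partitionlemma} plugged into the same partition-tree recursion and per-block amortization. You also correctly identify Lemma~\ref{lem:consecutive-observable} (via Lemma~\ref{lem:t2location}) as the structural ingredient that makes observables on the same horizontal line ``spread out.'' So far so good.

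Where the proposal goes off track is the proposed mechanism for the observable sub-lemma. You suggest reconstructing the key-live telescoping argument for observables, claiming that $\OP(q_j)\in B_b$ ``forces the column $\Key{\Right(B_b)}$ to be live.'' That is incorrect: for a key-new observable $p_j$ in $\RG(B_k)$, the point $q_j$ above it must lie in $\BOX(B_k)$, so $\Key{p_j}$ is inside the key range of $B_k$, not at the $\Right(B_k)$ column. More importantly, the key-live telescoping is the wrong tool here; the paper uses it only to carry over the bound $\sum_{B_j\in\BB_\ell}|R_{\Top(B_j)}|\le 12m$ on the key-old points (which may also be observable), inherited verbatim from Section~\ref{sec:proofpartitionlemma}. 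For the genuinely new content --- bounding the key-new observables --- the paper's Lemma~\ref{lem:mfcpartitionlemma} uses a much crisper argument: if two observable points $p_1\leftrightarrow p_2$ land in the same $\RG(B_k)$ as key-new points, then the corresponding $q_1,q_2$ lie in $\BOX(B_k)$, hence $\OP(q_1),\OP(q_2)\in B_k$, directly contradicting Lemma~\ref{lem:finalmfc} (which forces them into \emph{distinct} siblings). This gives at most one key-new observable per block $B_k\in\BB_r$, and since each $\Top(B_j)$ can be a left-relative of at most two blocks (Corollary~\ref{cor:pointotherthanrelimplication}), the key-new contribution is at most $2m$, for a total of $14m$ per level. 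Your proposal would need this contradiction-against-Lemma~\ref{lem:finalmfc} step spelled out; without it, the ``correspondence to live keys'' you describe does not close the gap, and the worry you flag yourself in the last paragraph (double-counting, wrong region) is exactly where it would stall.
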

\iflong\else\vspace{-2mm}\fi

We again define the notations used in Section \ref{sec:mmc-betterbound}.
Let $B_1, B_2, \dots, B_l$ be the children of $B$
in $\TD(\XX)$. Let $B_i, B_{i+1},\dots,B_{i+2m-1}$ be the consecutive 
$2m$ children of $B$. 
Let $\BB_{\ell} = \{B_i,B_{i+1},\dots,B_{i+m-1}\}$ and $\BB_r = \{B_{i+m}, B_{i+m+1}, \dots, B_{i+2m-1}\}$.
Let  $N_{\Top(B_j)}$ and  $\R_{\Top(B_j)}$ be the set of key-new and key-old elements 
added by $\GRE$ in $\RG(\BB_r)$ while processing $\Top(B_j)$.
In Section \ref{sec:proofpartitionlemma}, we proved that $\sum_{B_j \in \BB_{\ell}} |R_{\Top(B_j)}| \le 12m$.
Note that these points can be in $\MMC$ or $\MFC$. In this section, we need to bound the number of
observable points in $N_{\Top(B_j)}$. To this end, we will prove the following lemma which is similar to 
lemma \ref{lem:partitionlemma}.

\begin{lemma}
\label{lem:mfcpartitionlemma}
Let $B_i,B_{i+1},\dots, B_{i+2m-1}$ be the set of consecutive children blocks of $B$ in $\TD(\XX)$.
The number of observable points added by $\{ \Top(B_{i}), \Top(B_{i+1}), \dots, \Top(B_{i+m-1}) \} \setminus \Top(B)$ 
in \\$\RG(B_{i+m}, B_{i+m+1}, \dots, B_{i+2m-1})$  is $O(m)$.
\end{lemma}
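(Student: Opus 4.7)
The plan is to split the observable points counted by Lemma~\ref{lem:mfcpartitionlemma} into two groups according to whether they are \emph{key-new} or \emph{key-old} in $\RG(\BB_r)$, and to bound each group by $O(m)$. The key-old group will be handled by directly reusing the intermediate inequality in the proof of Lemma~\ref{lem:partitionlemma}; the key-new group will require a short new structural argument combining Lemma~\ref{lem:finalmfc} with Lemma~\ref{lem:upperbox}.

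For the key-old observable points there is essentially nothing new to do: the proof of Lemma~\ref{lem:partitionlemma} already established
$\sum_{B_j\in \BB_\ell} |R_{\Top(B_j)}| \le 12m$,
and this bounds \emph{every} key-old point (in particular every observable key-old point) that a top of $\BB_\ell$ can add in $\RG(\BB_r)$.

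For the key-new observable points, the pivotal new claim is the following. Suppose $p_1\in\RG(B_k)$ for some $B_k\in\BB_r$ is an observable key-new point added by $p=\Top(B_j)$ with $B_j\in\BB_\ell$, and let $q_1$ be the first point above $p_1$; then $\OP(q_1)\in B_k$. Indeed, by the definition of $\RG(B_k)$ the key $\Key{p_1}$ is a key of $B_k$; key-newness rules out any marked point at key $\Key{p_1}$ lying in $\RG(B_k)$ strictly between $p_1$ and $\BOX(B_k)$; and Lemma~\ref{lem:upperbox} says $\UPB(B_k)$ contains no point of $\XX\cup \GG$. Hence $q_1\in \BOX(B_k)\cup B_k$, and in either subcase $\OP(q_1)\in B_k$. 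Since every observable $p_1$ in question has $\CP(p_1)\in \BR\subset \MFC$ and lies key-right of $p$ (because $B_j\in\BB_\ell$, $B_k\in\BB_r$), we have $\CP(p_1)\in \RMFC_p$, so Lemma~\ref{lem:finalmfc} applies and forces the sibling blocks holding the $\OP(q_i)$'s of the observable pairs from $p$ to be pairwise distinct. Combined with the claim above, this means that at most one observable key-new point added by $p=\Top(B_j)$ can lie in $\RG(B_k)$. Finally, Corollary~\ref{cor:pointotherthanrelimplication} restricts the originators of any key-new point in $\RG(B_k)$ to $\LeftRel(B_k)\cup\RightRel(B_k)$, and since $B_j$ lies key-left of $B_k$ only left-relativity is possible; as $|\LeftRel(B_k)|\le 2$, there are at most $2m$ admissible pairs $(\Top(B_j),B_k)$, giving at most $2m$ observable key-new points.

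Summing the two bounds yields an observable-point count of at most $12m+2m=O(m)$, as required. The main obstacle is the $\OP(q_1)\in B_k$ claim: one must carefully rule out that $q_1$ lies strictly above $\BOX(B_k)$, which is done by combining the emptiness of $\UPB(B_k)$ with the definition of key-newness. Once that local structural fact is in place, Lemma~\ref{lem:finalmfc} together with Corollary~\ref{cor:pointotherthanrelimplication} do all the remaining bookkeeping.
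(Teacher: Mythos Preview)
Your proposal is correct and follows essentially the same approach as the paper's proof: split into key-old and key-new observable points, bound the former by reusing the $\sum_{B_j\in\BB_\ell}|R_{\Top(B_j)}|\le 12m$ inequality from Lemma~\ref{lem:partitionlemma}, and for the latter argue via Lemma~\ref{lem:finalmfc} that each $\Top(B_j)$ can place at most one observable key-new point in each $\RG(B_k)$, then use Corollary~\ref{cor:pointotherthanrelimplication} and $|\LeftRel(B_k)|\le 2$ to get the $2m$ bound. Your justification that $q_1\in\BOX(B_k)$ (via key-newness plus Lemma~\ref{lem:upperbox}) is in fact more explicit than the paper's, which simply asserts this step.
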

\begin{proof}

Consider the points in $N_{\Top(B_j)}$ where $B_j \in \BB_{\ell}$. 
By Lemma \ref{lem:key-newinmfcsymmetric},
these points have $\RR(\cdot) \in \MFC$. 
Let $O_{\Top(B)}$
be the set observable points in $N_{\Top(B_j)}$. We will show that 
$\sum_{B_j \in \BB_{\ell}} |O_{\Top(B)}| \le |\{B_k \in \BB_r | \Top(B_j) \in \LeftRel(B_k) \}| $.

Consider any block $B_k \in \BB_r$. By Corollary \ref{cor:pointotherthanrelimplication}, 
only points in $\LeftRel(B_k)$ and $\RightRel(B_k)$ can
add key-new points in $\RG(B_k)$. Let us assume that $\Top(B_j) \in \LeftRel(B_k)$ and assume that 
$\GRE$ adds two observable point $\lrarrow{p_1}{p_2}$ in $\RG(B_k)$ while processing $\Top(B_j)$
due to unsatisfied rectangle $_{\Top(B_j)}\Box^{q_1}, _{\Top(B_j)}\Box^{q_2}$
respectively. Since $p_1$ and $p_2$ are key-new points, $q_1, q_2$ lie in $\BOX(B_k)$.
So $\OP(q_1), \OP(q_2) \in B_k$. However, this contradicts Lemma \ref{lem:finalmfc} which states 
that $\OP(q_1)$ and $\OP(q_2)$ are the elements of different siblings of $B_k$.

So, the number or observable points $|O_{\Top(B_j)}|$ can be bounded by: $|\{B_k \in \BB_r | \Top(B_j) \in \LeftRel(B_k) \}|$.
And, $\sum_{B_j \in \BB_{\ell}} |O_{\Top(B)}| \le \sum_{B_j \in \BB_{\ell}}|\{B_k \in \BB_r | \Top(B_j) \in \LeftRel(B_k) \}|$.
Since there are at most 2 left-relative of any block, the above inequality can be written as:
$\sum_{B_j \in \BB_{\ell}} |O_{\Top(B)}| \le \sum_{B_k \in \BB_r} 2 = 2m$. 

Since, points in $R_{\Top(\cdot)}$ may also be observable, the total number of observable points added in $\RG(\BB_r)$ while
processing points in $\{ \Top(B_i), \Top(B_{i+1}), \dots, \Top(B_{i+m-1}) \} \setminus \Top(B) = \sum_{B_j \in \BB_{\ell}} (R_{\Top(B_j)} + O_{\Top(B_j)}) \le 12m +2m = 14m$.
\end{proof}

Consider a block $B$ in $\TD(\XX)$ having children $B_1, B_2, \dots, B_{\ell} (\ell \le k)$.
Let $Y(B) := T(B_1,B_2,\dots,B_{\ell})$ be the total number of observable points added by 
$\GRE$ while processing points in $\{ \Top(B_1), \Top(B_2),\dots, \Top(B_{\ell}) \} \setminus \Top(B)$ in $\RG(B_1,B_2,\dots, B_{\ell})$. 
Then using lemma \ref{lem:mfcpartitionlemma} and its symmetric version, we can calculate $Y(B)$ as follows:
$Y(B) = T(B_1,B_2,\dots,B_{\ell})$ = $T(B_1,B_2,\dots,B_{\ell/2})$ + $T(B_{\ell/2+1},B_{\ell/2+2},\dots,B_{\ell})$ + $14 \ell$.
This would imply that $Y(B) = T(B_1,B_2,\dots,B_{\ell}) \le 14 \ell \log \ell$. 

We would charge these $O(\ell \log \ell)$ points 
to the following $\ell-1$ original points: $\{ \Top(B_1), \Top(B_2), \dots, \Top(B_{\ell}) \}\setminus {\Top(B)}$.
That is, each top point of children of block $B$ except one gets $15 \log \ell$ charge.
Similar to Theorem \ref{thm:boundmmc}, we can bound the number of observable points.

\begin{theorem}
\label{thm:boundmfc}
$\sum_{p \in \XX} |\{ \CP(p_i) \in \BR_p \mid p_i\text{ is an observable point}\}| = 16 n \log k$
\end{theorem}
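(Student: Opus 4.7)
The plan is to follow the template of Theorem~\ref{thm:boundmmc} verbatim, substituting Lemma~\ref{lem:mfcpartitionlemma} for Lemma~\ref{lem:partitionlemma}, so that the cost of observable $\MFC$-pairs is amortized over original points $r \in \XX$.

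First, for each internal block $B$ of $\TD(\XX)$ with children $B_1, \ldots, B_\ell$ (where $\ell \le k$), I would define $Y(B)$ as the total number of observable points with $\CP(\cdot) \in \MFC$ that $\GRE$ places into $\RG(B_1, \ldots, B_\ell)$ while processing the tops of its children, excluding $\Top(B)$. Applying Lemma~\ref{lem:mfcpartitionlemma} to the two halves $\BB_\ell = \{B_1,\ldots,B_{\ell/2}\}$ and $\BB_r = \{B_{\ell/2+1},\ldots,B_\ell\}$, and then once more with the roles of left and right swapped, gives the recurrence $Y(B) \le T(B_1,\ldots,B_{\ell/2}) + T(B_{\ell/2+1},\ldots,B_\ell) + 14\ell$, which unrolls to $Y(B) \le 14\ell \log \ell$.

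Second, I would distribute the bound $Y(B) \le 14 \ell \log \ell$ uniformly across the $\ell - 1$ children tops $\{\Top(B_1), \ldots, \Top(B_\ell)\} \setminus \{\Top(B)\}$, so each such top absorbs an amortized charge of at most $15 \log \ell \le 15 \log k$ observable points. The key combinatorial fact is that each original point $r \in \XX$ receives this charge from exactly one block, namely $B = \PP(\RT(r))$: the only way to have $r = \Top(B_i)$ for a child $B_i$ of $B$ with $r \neq \Top(B)$ forces $B_i = \RT(r)$, by the definition of $\RT$ as the topmost block whose top is $r$; strict ancestors of $\PP(\RT(r))$ never have $r$ as top of any child that is a proper ancestor of $\RT(r)$.

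Third, I would account for observable points added while processing $r$ that do not fall into $\BOX(B)$ for $B = \PP(\RT(r))$. By Lemma~\ref{lem:blockleftright}, any such point must sit at $x$-coordinate $\Key{\Left(B)}$ or $\Key{\Right(B)}$ on the horizontal line $y = \Time{r}$, contributing at most two observable points per $r$. Combined with the $15 \log k$ amortized charge from $Y(B)$, each $r$ contributes at most $2 + 15 \log k$ observable points, and summing over the $n$ points in $\XX$ yields $\sum_{p \in \XX} |\{\CP(p_i) \in \BR_p \mid p_i \text{ is observable}\}| \le 16 n \log k$.

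The main obstacle I expect is confirming the ``unique charging'' step cleanly: we need that every observable $\MFC$-point added in a sibling's $\RG$-region while processing $r$ is captured by $Y(B)$ for $B = \PP(\RT(r))$, and is not also double-counted by $Y(B')$ for some other ancestor $B'$. This should follow from Observation~\ref{obs:topnotinbox} (which prevents $\GRE$ from placing anything into $\BOX(\RT(r))$ while processing $r$) together with the fact that $\RT(r)$ uniquely pins down the single ancestor for which $r$ is a non-top child, but making the bookkeeping airtight across all levels of $\TD(\XX)$ is the only nontrivial verification in the proof.
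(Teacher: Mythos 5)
Your proposal follows the paper's proof essentially verbatim: define $Y(B)$, apply Lemma~\ref{lem:mfcpartitionlemma} (and its symmetric version) to derive $Y(B) \le 14\ell\log\ell$, distribute this charge over the $\ell-1$ non-top children tops, invoke Lemma~\ref{lem:blockleftright} for the extra $2$ marked points below $\Left(B)$ and $\Right(B)$, and sum $2 + 15\log k$ over all $n$ original points. The only delta is that you explicitly spell out the unique-charging step (that each $r\in\XX$ is charged in $Y(B)$ for exactly one $B$, namely $B=\PP(\RT(r))$), which the paper leaves implicit; this is a reasonable and correct elaboration, though the central fact is simply that $r=\Top(B_i)$ with $r\neq\Top(\PP(B_i))$ pins down $B_i=\RT(r)$, rather than anything needing Observation~\ref{obs:topnotinbox}.
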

\begin{proof}
Let $\RT(p) = B'$ and $\PP(B') = B$. By Lemma \ref{lem:blockleftright}, $\GRE$ can put at most two point below $\Left(B)$
and $\Right(B)$. And by the analysis above, the amortized number of observable points added by $\GRE$ in 
$\BOX(B)$ while processing $p$ is $15 \log \ell = 15 \log k$ where $\ell$ is the number of children of $B$. 
So, 
amortized number of points added by $p = 2 + 15 \log k$. So $\sum_{p \in \XX} |\{ \CP(p_i) \in \BR_p \mid p_i\text{ is an observable point}\}| \le 16 n \log k$.
\end{proof}

\iflong\else\vspace{-2mm}\fi
\subsection{Proof of Lemma~\ref{lem:bad-mfc-bound-improved}}
\iflong\else\vspace{-2mm}\fi
We are now ready to upper bound the number of points in $\BR$.
\begin{align*}
|\BR| &= \sum_{p \in \XX} \Big(|\{ \CP(p_i) \in \BR_p\mid p_i \in \mathrm{domain}(\AMmc) \}| 
+ |\{ \CP(p_i) \in \BR_p \mid p_i\text{ is an observable point}\}|\Big) \\
& \le 4|\MMC| + 16 n \log k.
\end{align*}

Theorem \ref{thm:mmcfinalbound} gives 
$|\MMC| \leq 14 n \log k$. Hence $|\BR| \le  80 n \log k$, proving Lemma~\ref{lem:bad-mfc-bound-improved}.

\section{Proof of the main result}
\label{sec:main}
With all the ingredients at hand, the proof of the main result Theorem~\ref{thm:main} is now short:
\iflong
\begin{align*}
	|\GG| &\leq  2 |\CP(\GG)|  & \text{(Corollary~\ref{cor:couplingY})}\\
		  &=  2(|\MMC| + |\MFC|) & \text{(Lemma~\ref{lem:copulingsamecardinality})}\\
		  &=  2(|\MMC| + |\BR| + |\GR|) \\
	&\leq 188 n \log k + 2|\GR|  & \text{(Theorem~\ref{thm:mmcfinalbound} and Lemma~\ref{lem:bad-mfc-bound})}\\
	&\leq 188 n \log k + 4|\XX \cup \OPT(\XX)|. & \text{(Theorem~\ref{lem:goodbound})}
	\end{align*}
\else
	\begin{align*}
	|\GG| &\leq  2 |\CP(\GG)|  & \text{(Corollary~\ref{cor:couplingY} in full version)}\\
		  &=  2(|\MMC| + |\MFC|) & \text{(Lemma~\ref{lem:copulingsamecardinality} in full version)}\\
		  &=  2(|\MMC| + |\BR| + |\GR|) \\
	&\leq 188 n \log k + 2|\GR|  & \text{(Theorem~\ref{thm:boundmmc} and Lemma~\ref{lem:bad-mfc-bound-improved})}\\
	&\leq 188 n \log k + 4|\XX \cup \OPT(\XX)|. & \text{(Theorem~\ref{lem:goodbound})}
	\end{align*}
\fi

Using Corollary 1.10 in \cite{ChalermsookG0MS15b}, namely
$|\XX \cup \OPT(\XX)| =  O(n \log k)$ for $k$-decomposable sequences, we get $|\GG| = O(n \log{k})$, which immediately gives
Theorem~\ref{thm:main}.

\bibliographystyle{plain}
\nocite{*}
\bibliography{arboral}

\begin{thebibliography}{10}

\bibitem{BoseDDF10}
Prosenjit Bose, Karim Dou{\"{\i}}eb, Vida Dujmovic, and Rolf Fagerberg.
\newblock An \emph{O}(log log \emph{n})-competitive binary search tree with
  optimal worst-case access times.
\newblock In {\em Algorithm Theory - {SWAT} 2010}, pages 38--49, 2010.

\bibitem{ChalermsookG0MS15b}
Parinya Chalermsook, Mayank Goswami, L{\'{a}}szl{\'{o}} Kozma, Kurt Mehlhorn,
  and Thatchaphol Saranurak.
\newblock Pattern-avoiding access in binary search trees.
\newblock In {\em FOCS}, volume abs/1507.06953, 2015.

\bibitem{Cole00}
Richard Cole.
\newblock On the {D}ynamic {F}inger {C}onjecture for {S}play {T}rees. {P}art
  {II:} {T}he {P}roof.
\newblock {\em {SIAM} J. Comput.}, 30(1):44--85, 2000.

\bibitem{ColeMSS00}
Richard Cole, Bud Mishra, Jeanette~P. Schmidt, and Alan Siegel.
\newblock On the {D}ynamic {F}inger {C}onjecture for {S}play {T}rees. {P}art
  {I:} {S}play {S}orting $\log n$-{B}lock {S}equences.
\newblock {\em {SIAM} J. Comput.}, 30(1):1--43, 2000.

\bibitem{DemianeHIKP09}
Erik~D. Demaine, Dion Harmon, John Iacono, Daniel~M. Kane, and Mihai Patrascu.
\newblock The geometry of binary search trees.
\newblock In {\em SODA}, pages 496--505, 2009.

\bibitem{DemaineHIP07}
Erik~D. Demaine, Dion Harmon, John Iacono, and Mihai Patrascu.
\newblock Dynamic {O}ptimality - {A}lmost.
\newblock {\em SIAM J. Comput.}, 37(1):240--251, 2007.

\bibitem{derry}
Jonathan Derryberry, Daniel~Dominic Sleator, and Chengwen~Chris Wang.
\newblock A {L}ower {B}ound {F}ramework for {B}inary {S}earch {T}rees with
  {R}otations.
\newblock {\em Tech Report, CMU}.

\bibitem{Fox11}
Kyle Fox.
\newblock Upper bounds for maximally greedy binary search trees.
\newblock In {\em Algorithms and Data Structures - 12th International
  Symposium, {WADS}}, pages 411--422, 2011.

\bibitem{Iacono13}
John Iacono.
\newblock In pursuit of the dynamic optimality conjecture.
\newblock In {\em Space-Efficient Data Structures, Streams, and Algorithms -
  Papers in Honor of J. Ian Munro on the Occasion of His 66th Birthday}, pages
  236--250, 2013.

\bibitem{Lucas88}
Joan~M. Lucas.
\newblock Canonical forms for competitive binary search tree algorithms.
\newblock {\em Tech. Rep. DCS-TR-250, Rutgers University}, 1988.

\bibitem{Munro00}
J.~Ian Munro.
\newblock On the {C}ompetitiveness of {L}inear {S}earch.
\newblock In {\em Algorithms - {ESA} 2000, 8th Annual European Symposium,
  Saarbr{\"{u}}cken, Germany, September 5-8, 2000, Proceedings}, pages
  338--345, 2000.

\bibitem{Pettie08}
Seth Pettie.
\newblock Splay trees, davenport-schinzel sequences, and the deque conjecture.
\newblock In {\em SODA}, pages 1115--1124, 2008.

\bibitem{Pettie10a}
Seth Pettie.
\newblock Applications of forbidden 0-1 matrices to search tree and path
  compression-based data structures.
\newblock In {\em SODA}, pages 1457--1467, 2010.

\bibitem{SleatorT85}
Daniel~Dominic Sleator and Robert~Endre Tarjan.
\newblock Self-{A}djusting {B}inary {S}earch {T}rees.
\newblock {\em J. ACM}, 32(3):652--686, 1985.

\bibitem{WangDS06}
Chengwen~Chris Wang, Jonathan Derryberry, and Daniel~Dominic Sleator.
\newblock ${O}(\log \log n)$-competitive dynamic binary search trees.
\newblock In {\em SODA}, pages 374--383, 2006.

\bibitem{Wilber89}
Robert~E. Wilber.
\newblock Lower {B}ounds for {A}ccessing {B}inary {S}earch {T}rees with
  {R}otations.
\newblock {\em SIAM J. Comput.}, 18(1):56--67, 1989.

\end{thebibliography}
\end{document}